
\documentclass[11pt,onecolumn]{article}
\usepackage{amsmath,amsthm,amssymb,amsfonts,epsfig,graphicx}
\usepackage[linesnumbered,ruled,vlined]{algorithm2e}
\usepackage{color}
\usepackage{dsfont,xspace}
\usepackage{relsize,accents}
\usepackage{enumitem}

\usepackage[left=1.0in,top=0.8in,right=1.05in,bottom=0.7in,nohead,textheight=10in,footskip=0.3in]{geometry}

\newtheorem{theorem}{Theorem}
\newtheorem{corollary}[theorem]{Corollary}

\newtheorem{lemma}[theorem]{Lemma}
\newtheorem{definition}[theorem]{Definition}
\newtheorem{proposition}[theorem]{Proposition}

\newtheorem{remark}{Remark}

\newtheorem{theoremRESTATED}{Theorem}
\newtheorem{lemmaRESTATED}{Lemma}
\newtheorem{propositionRESTATED}{Proposition}

\def\RR{{\mathbb R}}
\def\ZZ{{\mathbb Z}}
\def\QQ{{\mathbb Q}}

\newcommand{\Conv}{\mathop{\rm Conv} }
\newcommand{\ZeroExt}[1]{\ensuremath{{\tt 0\mbox{-}Ext}[{#1}]}}

\newcommand{\Pol}[2][]{    \ifthenelse{\equal{#1}{}}{  \ensuremath{\operatorname{Pol}(#2)}    }{    \ensuremath{\operatorname{Pol}^{(#1)}(#2)}    }          }
\newcommand{\fPolplus}[2][]{    \ifthenelse{\equal{#1}{}}{  \ensuremath{\operatorname{supp}(#2)}    }{    \ensuremath{\operatorname{supp}^{(#1)}(#2)}    }          }
\newcommand{\fPol}[2][]{    \ifthenelse{\equal{#1}{}}{  \ensuremath{\operatorname{fPol}(#2)}    }{    \ensuremath{\operatorname{fPol}^{(#1)}(#2)}    }          }

\DeclareMathOperator*{\argmin}{arg\,min}


\newcommand{\UP}{{\hspace{2.5pt} \rotatebox{90}{$\mathsmaller\triangleright$} \hspace{2pt}}}
\newcommand{\DOWN}{{\hspace{2.5pt} \rotatebox{90}{$\mathsmaller\triangleleft$} \hspace{2pt}}}
\newcommand{\DIAMOND}{{\hspace{-2.8pt} \raisebox{-3.25pt}{\rotatebox{45}{\setlength{\fboxrule}{0pt} \setlength{\fboxsep}{1.8pt} \fcolorbox{black}{black}{\null}}} \hspace{2pt}}}
\newcommand{\DELTA}[1]{\ensuremath{{#1}^{\hspace{0.4pt}\mathlarger\diamond}}}
\newcommand{\DELTANEIB}{\ensuremath{\mathlarger{\mathlarger\diamond}\xspace}}



\newcommand{\eqdef}{{\stackrel{\mbox{\tiny \tt ~def~}}{=}}}
\newcommand{\BPrel}{{\stackrel{\mbox{\tiny \tt ~Bp}}{\sqsubseteq}}}

\newcommand{\VCSP}[1]{\ensuremath{\operatorname{VCSP}(#1)}}

\DeclareMathOperator{\dom}{\textup{\texttt{dom}}}
\DeclareMathOperator{\supp}{supp}

\newcommand{\myparagraph}[1]{\noindent{\textbf{#1}}\quad} 

\renewcommand{\.}{\hskip 0.5pt}

\def\calE{{\cal E}}

\def\calG{{\cal G}}
\def\calI{{\cal I}}

\def\calL{{\cal L}}

\def\calS{{\cal S}}

\def\calO{{\cal O}}

\begin{document}

\def\myparagraph#1{\vspace{2pt}\noindent{\bf #1~~}}




\title{\Large\bf  \vspace{0pt} Generalized minimum $0$-extension problem and discrete convexity}
\author{Martin Dvorak \hspace{30pt} Vladimir Kolmogorov \\ \normalsize Institute of Science and Technology Austria \\ {\normalsize\tt $\{$martin.dvorak,vnk$\}$@ist.ac.at}}
\date{}
\maketitle

\begin{abstract}
Given a fixed finite metric space $(V,\mu)$, the {\em minimum $0$-extension problem}, denoted as $\ZeroExt{\mu}$, is equivalent to
the following optimization problem:
minimize function of the form $\min\limits_{x\in V^n} \sum_i f_i(x_i) + \sum_{ij} c_{ij}\.\mu(x_i,x_j)$
where $f_i:V\rightarrow \mathbb R$ are functions given by $f_i(x_i)=\sum_{v\in V} c_{vi}\.\mu(x_i,v)$
and $c_{ij},c_{vi}$  are given nonnegative costs.
The computational complexity of $\ZeroExt{\mu}$ has been recently established by Karzanov and by Hirai:
if metric $\mu$ is {\em orientable modular} then $\ZeroExt{\mu}$ can be solved in polynomial time, otherwise
$\ZeroExt{\mu}$ is NP-hard. To prove the tractability part, Hirai developed a theory of discrete convex functions on orientable modular graphs
generalizing several known classes of functions in discrete convex analysis, such as \hbox{$L^\natural$-convex} functions.

We consider a more general version of the problem in which unary functions $f_i(x_i)$ can additionally have terms
of the form $c_{uv;i}\.\mu(x_i,\{u,v\})$ for $\{u,\!\.\.v\}\in F$, where set $F\subseteq\binom{V}{2}$ is~fixed.
We extend the complexity classification above by providing an explicit condition on $(\mu,F)$ for the problem to be tractable.
In order to prove the tractability part, we generalize Hirai's theory and define a larger class of discrete convex functions.
It covers, in particular, another well-known class of functions, namely submodular functions on an integer lattice.

Finally, we improve the complexity of Hirai's algorithm for solving $\ZeroExt{\mu}$ on orientable modular graphs.
\end{abstract}


\section{Introduction}
Consider a metric space $(V,\mu)$ where $V$ is a finite set and $\mu$ is a nonnegative function $V\times V\rightarrow \mathbb R$
satisfying the axioms of a metric: $\mu(x,y)=0$ $\Leftrightarrow$ $x=y$, $\.\mu(x,y)=\mu(y,x)$, $\.\mu(x,y)+\mu(y,z)\ge \mu(x,z)$ for all $x,y,z\in V$.
We study optimization problems of the following form:
\begin{equation}\label{eq:multifacility}
\min_{x\in V^n} f(x),\qquad\quad f(x)=\sum_{i\in[n]} f_i(x_i) + \sum_{1\le i<j\le n}c_{ij} \mu(x_i,x_j)
\end{equation}
where weights $c_{ij}$ are nonnegative. If unary terms $f_i:V\rightarrow\mathbb R$ are allowed to be arbitrary nonnegative functions
then this is a well-studied {\em Metric Labeling Problem}~\cite{KleinbergTardos:02}.
Another important special case is when the unary terms are given by
\begin{equation}\label{eq:multifacility:unaries}
f_i(x_i)=\sum_{v\in V}c_{vi}\mu(x_i,v)
\end{equation}
with nonnegative weights $c_{vi}$. 
This is a classical facility location problem, known as {\em multifacility location problem}~\cite{Tansel:83}.
It can be interpreted as follows:
we are going to locate $n$ new facilities
in~$V$, where the facilities communicate with each other and communicate with
existing facilities in $V$. 
The cost of the communication is propositional to the distance.
The goal is to find a location of minimum total communication cost.
The multifacility location problem is also equivalent to the {\em minimum $0$-extension problem} formulated by Karzanov~\cite{Karzanov:98}.
We denote $\ZeroExt{\mu}$ to be class of problems of the form~\eqref{eq:multifacility},\eqref{eq:multifacility:unaries}.

Optimization problems of the form above have applications in computer vision and related clustering problems in machine learning~\cite{KleinbergTardos:02,Felzenszwalb:CVPR10,Blake:MRFbook,Gridchyn:ICCV13}.
$\ZeroExt{\mu}$ also includes a number of basic combinatorial optimization problems.
For example, the multiway cut problem on $k$ vertices can be obtained by setting $(V,\mu)$ to be the uniform metric on $|V|=k$ elements;
it can be solved in polynomial time (via a maximum flow algorithm) if $k=2$, and is NP-hard for $k\ge 3$.

We explore a generalization of $\ZeroExt{\mu}$ in which the unary terms are given by
\begin{equation}\label{eq:multifacility:unaries:F}
f_i(x_i)=\sum_{v\in V}c_{vi}\mu(x_i,v) + \sum_{U\in F}c_{Ui}\. \mu(x_i,U) .
\end{equation}
Here $F$ is a fixed set of subsets of $V$, $c_{vi},c_{Ui}$ are nonnegative weights, and $\mu(x_i,U)=\min_{v\in U}\mu(x_i,v)$.
We refer to this generalization as $\ZeroExt{\mu,F}$. 
In the facility location interpretation, allowing terms of the form $c_{Ui} \mu(x_i,U)$ means 
that the $i$-th facility can be ``served'' by any of the facilities in $U$,
and it can choose to communicate with the closest facility to minimize the communication cost.

Note that $\ZeroExt{\mu}=\ZeroExt{\mu,\varnothing}$. Furthermore, $\ZeroExt{\mu,2^V}$, where $2^V=\{U\:|\:U\subseteq V\}$ is the set of all subsets of $V$,
is the {\em restricted Metric Labeling Problem}~\cite{ChuzhoyNaor:06}, 
which is equivalent to the Metric Labeling Problem~\cite[Section 5.2]{chekuri04:sidma}. 

\subsection{Complexity classifications}
The computational complexity of $\ZeroExt{\mu}$  has been established in~\cite{Karzanov:98,Hirai:0ext}.
The tractability criterion is based on the properties of graph $H_\mu=(V,E,w)$
defined as the minimal undirected weighted graph whose path metric equals $\mu$.
Clearly, we have 
$$
E=\left\{ xy\in\binom{V}{2} \;\middle|\; \forall z\in V-\{x,y\} :\; \mu(x,y)<\mu(x,z)+\mu(z,y) \right\}
$$
and $w$ is the restriction of $\mu$ to $E$. 
For brevity, we usually denote the elements of $\binom{V}{2}$ as $xy$ instead of $\{x,y\}$. 

In order to state the classification of $\ZeroExt{\mu}$, we need to introduce a few definitions.

\myparagraph{Orientable modular graphs}
Let us fix metric $\mu$. For nodes $x,y\in V$ let $I(x,y)=I_\mu(x,y)$ be the {\em metric interval of $x,y$},
i.e.\ the set of points $z\in V$ satisfying $\mu(x,z)+\mu(z,y)=\mu(x,y)$.
Metric $\mu$ is called {\em modular} if for every triplet $x,y,z\in V$
the intersection $I(x,y)\cap I(y,z) \cap I(x,z)$ is non-empty.
(Points in this intersection are called {\em medians} of $x,y,z$.)
We say that graph $H$ is {\em modular} if $H=H_\mu$ for a modular metric $\mu$.

Let $o$ be an edge-orientation of graph $H$ with the relation $\rightarrow_o$ on $V\times V$.
This orientation is called {\em admissible for $H$}
if, for every 4-cycle $(x_1,x_2,x_3,x_4)$, condition $x_1\rightarrow_o x_2$
implies $x_4\rightarrow_o x_3$.
$H$ is called {\em orientable} if it has an admissible orientation. 

\begin{theorem}[\cite{Karzanov:98}]\label{th:Karzanov:NP-hard}
If $H_\mu$ is not orientable or not modular then $\ZeroExt{\mu}$ is NP-hard.
\end{theorem}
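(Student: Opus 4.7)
The plan is to prove NP-hardness by reduction from $3$-terminal multiway cut (shown NP-hard by Dahlhaus et al.), exploiting the basic observation that $\ZeroExt{\mu}$ already specializes to the multiway cut on $H_\mu$ when unary terms are used to pin one variable at each terminal: if one sets $f_i(x_i) = M \cdot \mu(x_i,v)$ for very large $M$ in~\eqref{eq:multifacility:unaries}, then $x_i = v$ is forced in any optimum, so $|V|$ such ``pin'' variables reduce the remaining optimization exactly to multiway cut on $H_\mu$ with capacities given by $\mu$. In particular, $\ZeroExt{\mu}$ is NP-hard whenever multiway cut on $H_\mu$ is NP-hard, and the problem for $\mu = $ the uniform metric on $3$ points (i.e., $H_\mu = K_3$) is already NP-hard. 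This furnishes the base cases: a shortlist of small ``forbidden'' metric spaces whose $\ZeroExt{\mu}$ is known to be NP-hard by direct reduction.

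The bulk of the argument is then structural: show that if $H_\mu$ is non-modular or non-orientable, then it contains one of the forbidden base cases as an \emph{isometric} subgraph $V' \subseteq V$ (meaning that $\mu$ restricted to $V'$ equals the intrinsic path-metric on $V'$). For the non-modular case, one picks a witness triple $x,y,z$ with $I(x,y) \cap I(y,z) \cap I(x,z) = \varnothing$ and argues that the submetric on $\{x,y,z\}$ (or a slight enlargement) already realizes a hard metric, because the missing median prevents any ``averaging'' vertex from undercutting a multiway-cut reduction. For the non-orientable but modular case, invoke standard structural results on modular graphs (à la Bandelt) to locate a local obstruction — typically an induced $K_4$ or a small bipartite configuration — that is modular yet inadmissibly orientable and whose $0$-extension is directly NP-hard.

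The last step is to lift hardness of $\ZeroExt{\mu|_{V'}}$ to $\ZeroExt{\mu}$ along the isometric embedding $V' \hookrightarrow V$. For each variable of the source instance, one adds unary terms of the form~\eqref{eq:multifacility:unaries} using large weights $c_{vi}$ at vertices $v$ far from $V'$; because the $\mu$-geometry is preserved on $V'$, the cost contributions within $V'$ are unchanged while any assignment outside $V'$ incurs an arbitrarily large penalty, so every optimum of the lifted instance has all variables in $V'$ and agrees with an optimum of the original. Crucially, this forcing must be done using only the allowed unary form~\eqref{eq:multifacility:unaries} (a weighted sum of single-vertex distances), not an arbitrary unary; this is routine using isometry, since a sufficiently large weight on any single external vertex produces an effectively-indicator penalty up to an additive constant.

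The main obstacle is the structural step for the non-orientable modular case: one must prove a dichotomy statement asserting that every non-orientable modular graph contains, as an isometric subgraph, some configuration on which $\ZeroExt{}$ reduces from multiway cut. This is combinatorially delicate because ``orientability'' is a global constraint defined on $4$-cycles, and extracting a local NP-hard witness requires a careful classification of small modular graphs. The non-modular case, by contrast, is relatively clean because the missing-median triple is already a local obstruction of bounded size.
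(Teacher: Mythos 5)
The theorem you are proving is cited from Karzanov's 1998 paper and is not proved in this paper, so there is no paper-internal proof to compare against; the paper's own NP-hardness argument (Section 6.2) handles only the $F$-orientability failure and explicitly falls back on Theorem~\ref{th:Karzanov:NP-hard} for the non-modular case. Evaluating your sketch on its own merits, the critical \emph{lifting} step does not work as stated. You claim that adding unary terms $c_{vi}\,\mu(x_i,v)$ with large $c_{vi}$ at vertices $v$ \emph{far from} $V'$ leaves the cost essentially constant on $V'$ but blows up outside $V'$. This is backwards: $\mu(\cdot,v)$ penalizes being \emph{far from} $v$, so a heavy weight on an external vertex $v$ makes assignments near $v$ (i.e.\ outside $V'$) \emph{cheaper}, not costlier. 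More generally, the allowed unary form $\sum_v c_{vi}\mu(x_i,v)$ with $c_{vi}\ge 0$ is a weighted median objective, and its minimizer set need not coincide with $V'$; a four-point path $a\!-\!b\!-\!c\!-\!d$ with $V'=\{a,d\}$ already gives a counterexample, since every nonnegative combination of distance functions is minimized in the interior $\{b,c\}$, not on $V'$. Isometric embedding is also the wrong hypothesis for such a restriction argument even if the forcing issue were fixed; you would need $V'$ to be gated (so that projecting a solution into $V'$ never increases cost), and non-modular or non-orientable graphs have no reason to contain a hard \emph{gated} subconfiguration.

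The structural step is likewise a genuine gap rather than a routine verification. You assert that any non-orientable modular, or any non-modular, $H_\mu$ contains (isometrically) a member of a finite forbidden list on which $\ZeroExt{}$ reduces from multiway cut, but no such list or classification is given, and the non-modular case is not as ``clean'' as stated: a median-less triple $\{x,y,z\}$ restricts to a perfectly modular three-point metric, so the obstruction is not local to that triple and cannot be captured by a bounded-size isometric witness in any obvious way. Karzanov's actual argument does not go through isometric subgraphs and indicator-style unary penalties; it works directly with the structure of non-minimizable metrics (and, in the framework used in this paper, would follow from failure of a binary symmetric fractional polymorphism via the Thapper--\v{Z}ivn\'y and Kolmogorov--\v{Z}ivn\'y machinery, as the paper does for the $F$ case). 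As written, your sketch does not establish NP-hardness.
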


\begin{theorem}[\cite{Hirai:0ext}]
If $H_\mu$ is orientable modular then $\ZeroExt{\mu}$ can be solved in polynomial time.~\footnote{In this result $\mu$ is implicitly assumed to be rational-valued, since $\mu$ is treated as part of the input.
The same remark applies to later results on VCSPs.}
\end{theorem}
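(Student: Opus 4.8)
The plan is to reduce $\ZeroExt{\mu}$ to the problem of minimizing a suitably \emph{discrete convex} function over the vertices of a Cartesian power of $H_\mu$, and then to minimize functions in that class in polynomial time by a steepest descent scheme. An instance \eqref{eq:multifacility}--\eqref{eq:multifacility:unaries} asks to minimize $f\colon V^n\to\mathbb R$ of the displayed form; regarding $V^n$ as the vertex set of the $n$-fold Cartesian product of $H_\mu$ (whose path metric is $\sum_i\mu(x_i,y_i)$), the first task is to isolate a class $\mathcal L$ of functions on such product graphs that (i)~contains each pairwise term $\mu(x_i,x_j)$ and each unary term $f_i$ of the form \eqref{eq:multifacility:unaries}, (ii)~is closed under nonnegative linear combinations and restrictions to ``faces'' (fixing some coordinates), so that the whole objective $f$ lies in $\mathcal L$, and (iii)~admits a polynomial-time minimization algorithm. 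This class is Hirai's class of $L$-convex functions on orientable modular graphs, which on the graph $\mathbb Z$ (a bi-infinite path) specializes to the classical $L^\natural$-convex functions.

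Second, one must set up the structure theory that makes $\mathcal L$ tractable. An admissible orientation of $H_\mu$ equips each vertex with a splitting of its neighbourhood into ``up'' and ``down'' directions and, globally, with a frame/orbit decomposition: by the median and $4$-cycle structure, an orientable modular graph is built from gated pieces that behave like products of trees and hyperoctahedra, so that along any geodesic there is a consistent notion of ascending and descending. The defining axiom of $\mathcal L$ is then a discrete midpoint-convexity inequality taken along such geodesics, generalizing the one-dimensional $L^\natural$ inequality. Membership of the $0$-extension objective follows from a direct check that $\mu(\cdot,\cdot)$ and the distance-type unary terms $v\mapsto\mu(x_i,v)$ and $v\mapsto\min_{u\in U}\mu(v,u)$ satisfy this axiom, using the triangle inequality and modularity.

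Third is the algorithm. The crux is a \emph{local-to-global optimality theorem}: a vertex $x$ is a global minimizer of $f\in\mathcal L$ iff it minimizes $f$ over a suitable bounded neighbourhood of $x$ --- the analogue of ``local optima of $L^\natural$-convex functions are global''. Given this, run the \emph{steepest descent algorithm}: at the current $x$, form the first-order change of $f$ along local directions; by the frame structure this first-order function is a \emph{submodular} function on a distributive-lattice (ring-family) structure, so the best improving move is found by one submodular function minimization, hence in polynomial time. Move $x$ accordingly and iterate. Polynomiality of the number of iterations is obtained by a distance-monotonicity/proximity argument --- a steepest-descent step never has to be undone, so the trajectory length is controlled by the diameter of $V^n$ --- combined, if necessary, with a scaling scheme to absorb large cost coefficients; the total running time is then polynomial in the input size. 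Since $\ZeroExt{\mu}$ is precisely the minimization of such an $f$, tractability follows.

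The main obstacle is the combination of step two and the per-iteration guarantee in step three. Pinning down the ``right'' definition of $\mathcal L$ --- simultaneously broad enough to contain the $0$-extension objective, closed under the operations used to assemble $f$, yet narrow enough that local minima are global and the first-order problem is submodular --- is the conceptual heart, and it rests on a delicate analysis of the combinatorics of orientable modular graphs (gated amalgams, the frame/orbit decomposition, medians). The quantitative half --- proving the steepest-descent trajectory has polynomial length --- is the other technical bottleneck, requiring the careful ``no step is ever reversed'' monotonicity argument and a scaling analysis for instances with large weights.
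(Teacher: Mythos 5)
Your high-level plan is essentially the one Hirai follows (and that this paper reviews): isolate a class of $L$-convex functions containing the $0$-extension objective, prove local-to-global optimality, and run a steepest descent that per iteration solves a tractable local minimization, with cost-scaling (or later, a combinatorial subdivision argument) controlling the iteration count. But two of the technical cruxes are misidentified in a way that would block you if you tried to carry the plan out.

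First, the defining axiom of $L$-convexity on an orientable modular graph is \emph{not} a discrete midpoint-convexity inequality along geodesics. That formulation works on $\mathbb Z^n$ (for $L^\natural$-convexity) because rounding the midpoint is available, but on a general orientable modular graph there is no analogous operation. Hirai's actual definition passes through the \emph{$2$-subdivision} $\Gamma^\ast$ of $\Gamma$ (vertices are Boolean pairs $[p,q]$ with $p\sqsubseteq q$), and requires that $f^\ast([p,q])=f(p)+f(q)$ be submodular on each valuated modular semilattice $\calL^\ast_p\subseteq\Gamma^\ast$, together with connectivity of $\dom f$ in the Boolean-pair graph $\Gamma^{\mathsmaller\sqsubset}$. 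The $2$-subdivision is the gadget that makes the definition closed under the necessary operations and, in the later strongly-polynomial version, gives a clean bound $O(\operatorname{diam}(\Gamma)^2)$ on the number of SDA steps applied to $f^\ast_\times$ on $(\Gamma^\ast)^n$. Your proposal never introduces it, so the class $\mathcal L$ you would construct from a geodesic-midpoint axiom would not be the right one.

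Second, the per-iteration subproblem is \emph{not} submodular function minimization on a ring family or distributive lattice. The local neighborhoods $\calL^\pm_x(\Gamma^n)$ are products of modular semilattices, which are strictly more general (and not distributive in general), so SFM algorithms over set rings do not apply. What makes the step tractable is a VCSP result: functions submodular on valuated modular semilattices admit a symmetric fractional polymorphism, so the Basic Linear Programming relaxation is tight and can be solved in polynomial time (Theorem~\ref{th:BLP-solves-submodular} in the paper). This is a genuinely different algorithmic tool, and your description of it as ``one submodular function minimization'' on a ring family hides the place where real structure theory is needed. With these two corrections your outline coincides with the paper's account of Hirai's proof; as stated, it has gaps exactly at the two spots the paper flags as the technical heart.
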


\myparagraph{Our results} We extend the classification above to problems  $\ZeroExt{\mu,F}$
in which all subsets $U\in F$ have cardinality 2, i.e.\ $F\subseteq\binom{V}{2}$. 
To formulate the tractability criterion, we need to introduce some definitions.
Let $o$ be an orientation of $(H,F)$, i.e.\ each edge of $H$ is assigned an orientation, and each element of $F$
is assigned an orientation.
We say that $o$ is  {\em admissible for $(H,F)$} if it is admissible for $H$ and, for every $\{x,y\}\in F$
with $x\rightarrow_o y$, the following holds: if $P$ is a shortest $x$-$y$ path in $H$
then all edges of $P$ are oriented according to~$o$. We say that $H$ is {\em $F$-orientable modular}
if it is orientable modular and $(H,F)$ admits an admissible orientation $o$.
We can now formulate the main result of this paper.

\begin{theorem}\label{th:main}
 If $H_\mu$ is $F$-orientable modular then $\ZeroExt{\mu,F}$ can be solved in polynomial time.
 Otherwise $\ZeroExt{\mu,F}$ is NP-hard.
\end{theorem}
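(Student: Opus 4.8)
\medskip
\noindent\textbf{Proof plan.} The statement has a tractability half and an NP-hardness half, which I would handle by different means.

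\emph{NP-hardness.} If $H_\mu$ is not orientable modular there is nothing new: every instance of $\ZeroExt{\mu}=\ZeroExt{\mu,\varnothing}$ is an instance of $\ZeroExt{\mu,F}$ (take all weights $c_{Ui}=0$), so Theorem~\ref{th:Karzanov:NP-hard} already yields NP-hardness. The genuinely new case is when $H_\mu$ is orientable modular but $(H_\mu,F)$ has no admissible orientation. Here the plan is to extract from this failure an explicit combinatorial obstruction --- a pair $\{x,y\}\in F$ together with a shortest $x$-$y$ path in $H_\mu$ (or two such paths) whose edges cannot be consistently oriented inside any admissible orientation of $H_\mu$ --- and to use the new unary term $\mu(x_i,\{x,y\})$, weighted heavily, as a gadget that ``folds'' $\mu$ along that path. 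One then argues, via a standard valued-CSP gadget reduction, that the resulting expressible fragment either realizes a minimum $0$-extension instance $\ZeroExt{\mu'}$ whose graph $H_{\mu'}$ is no longer orientable modular (so Theorem~\ref{th:Karzanov:NP-hard} applies), or directly encodes a known NP-hard problem such as multiway cut with at least three terminals. Isolating the obstruction and verifying the gadget is the delicate point on this side.

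\emph{Tractability.} Fix an admissible orientation $o$ of $(H_\mu,F)$. In an instance of $\ZeroExt{\mu,F}$ the objective in \eqref{eq:multifacility},\eqref{eq:multifacility:unaries:F} is a nonnegative combination of terms of three shapes: $\mu(x_i,x_j)$, $\mu(x_i,v)$, and $\mu(x_i,\{u,v\})$ with $\{u,v\}\in F$. I would imitate Hirai's strategy but over a wider class. Concretely: (1) introduce a class $\calC$ of ``discrete convex'' functions on Cartesian powers of an orientable modular graph, parametrized by the orientation data, which contains Hirai's class, contains submodular functions on integer lattices, and is closed under nonnegative addition and under the partial-minimization / substitution operations used below; (2) check that $\mu(x_i,x_j)$ and $\mu(x_i,v)$ belong to $\calC$ (essentially Hirai's computation); (3) prove the key new lemma that, when $\{u,v\}\in F$, the function $x_i\mapsto\mu(x_i,\{u,v\})$ belongs to $\calC$ --- this is exactly where admissibility enters, since ``$u\rightarrow_o v$ with the whole $u$-$v$ geodesic oriented by $o$'' makes $\{u,v\}$ a directed-geodesically convex pair, so that taking the pointwise minimum of two metric terms over it stays inside $\calC$; (4) show that any function in $\calC$ can be minimized in polynomial time, via a steepest-descent (``SDA''-type) algorithm generalizing Hirai's, whose correctness reduces to a local-to-global optimality theorem for $\calC$, together with the improved running-time analysis announced in the abstract. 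Combining (1)--(4), the objective of every $\ZeroExt{\mu,F}$ instance lies in $\calC$ and hence is minimizable in polynomial time.

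\emph{Main obstacle.} The heart of the argument is steps (1)--(4): finding the \emph{right} enlargement $\calC$ of Hirai's discrete convex functions --- broad enough to absorb the new $F$-terms and the lattice-submodular functions, yet narrow enough that a local-to-global optimality theorem, and hence a polynomial minimization procedure, survives --- and then actually proving that theorem. On the hardness side the subsidiary difficulty is to turn the mere non-existence of an admissible orientation of $(H_\mu,F)$ into a concrete forbidden configuration that powers the reduction.
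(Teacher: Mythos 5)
Your high-level decomposition (reduce the non-modular/non-orientable case to Theorem~\ref{th:Karzanov:NP-hard}; enlarge Hirai's discrete-convexity class to absorb the $F$-terms; show the new class is minimizable by an SDA-type algorithm) matches the paper's overall shape, but the two halves differ from the paper in ways that matter.

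\emph{Tractability.} Your steps (1)--(4) are the right checklist, and your intuition that admissibility of the orientation makes $\{u,v\}\in F$ behave like a ``directed-geodesically convex pair'' is on target: this is precisely what lets $\delta_{\{u,v\}}$ be absorbed. However, the paper's actual enlargement is not a soft parametrization of Hirai's class ``by the orientation data.'' Its concrete new ingredient is Definition~\ref{def:ebp}: one replaces the Boolean-pair relation $\BPrel$, which underlies Hirai's $2$-subdivision and hence his definition of $L$-convexity, by an arbitrary \emph{admissible} relation $\sqsubseteq$ satisfying conditions~(\ref{def:ebp}a)--(\ref{def:ebp}c). The resulting \emph{extended modular complexes} $(\Gamma,\sqsubseteq)$ and their $2$-subdivisions are what make the whole machinery (Theorems~\ref{th:subdivision:modular}, \ref{th:ebp-addition}, \ref{th:Lopt}, \ref{th:SDA}, \ref{th:BLPsolvesGamma}) go through. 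For $\ZeroExt{\mu,F}$ one then takes $\sqsubseteq\;=\;\preceq$, so that any $F$-oriented pair $p\preceq q$ satisfies $p\sqsubseteq q$ and $\delta_{\{p,q\}}$ is $L$-convex by Theorem~\ref{th:ebp-addition}(d)(ii). Without identifying this replacement of $\BPrel$ by $\sqsubseteq$ --- and the nontrivial verification that $\Gamma^\ast$ is still an oriented modular graph (Theorem~\ref{th:subdivision:modular}) --- the plan has a gap exactly at the ``main obstacle'' you flagged: you have the desiderata for $\calC$, not its definition.

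\emph{NP-hardness.} Here your route diverges from the paper's. You propose to extract a forbidden configuration from the failure of $F$-orientability and build a gadget reducing either to some $\ZeroExt{\mu'}$ with $H_{\mu'}$ not orientable modular, or directly to multiway cut. The paper does neither. It invokes the conservative-language machinery of Kolmogorov--\v{Z}ivn\'y: it defines conservative pairs $\calS(\Phi)$, a ``strictly submodular'' graph $\calG(\Phi)$ on them, and uses Theorem~\ref{th:conservative} --- if $\{p,\bar p\}\in\calE(\Phi)$ then $\Phi$ is NP-hard, and $\calE(\Phi)$ is closed under a triangle rule. The combinatorial step (Lemma~\ref{lemma:asdfgasdg}) shows that edges $\vec E\cup\vec F$ are conservative and that both the $4$-cycle parallelism relation $\parallel$ and the $F$-geodesic relation $\lhd$ yield edges of $\calG(\Phi)$. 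Failure of $F$-orientability then \emph{means} that some $p$ and $\bar p$ lie in the same connected component of $(\vec E\cup\vec F,\approx)$, whence $\{p,\bar p\}\in\calE(\Phi)$ by Theorem~\ref{th:conservative}(a), whence NP-hardness by Theorem~\ref{th:conservative}(b). This is considerably cleaner than fabricating a new metric $\mu'$ or a multiway-cut gadget, and it is not at all clear that your proposed route would go through: non-$F$-orientability is a parity obstruction on an odd-length alternating walk in the parallelism/$\lhd$ graph, and turning that directly into a non-modular $H_{\mu'}$ or a $\ge 3$-terminal cut gadget would require substantial additional work. You should view the conservative-VCSP theorem as the intended tool here, not a reduction to previously known hard $0$-extension instances.
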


To prove the tractability part, we define {\em $L$-convex functions on extended modular complexes},
and show that they can be minimized in polynomial time. This generalizes {\em $L$-convex functions on modular complexes}
introduced by Hirai in~\cite{Hirai:0ext,Hirai:Lconvexity}.

\subsection{Discrete convex analysis} 
As Hirai remarks, the approach in~\cite{Hirai:0ext,Hirai:Lconvexity} had been inspired by discrete convex analysis
developed in particular in~\cite{FujishigeMurota:00,Murota:98,MurotaShioura:99,MurotaTamura:04,Murota:book} and~\cite[Chapter VII]{Fujishige:book}.
This is a theory of convex functions on integer lattice $\mathbb Z^n$,
with the goal of providing a unified framework for polynomially solvable combinatorial optimization problems
including network flows, matroids, and submodular functions. Hirai's work extends this theory to more general graph structures,
in particular to orientable modular graphs, and provides a unified framework for polynomially solvable minimum 0-extension problems and related multiflow problems.

We develop a yet another generalization. 
To illustrate the relation to previous work, consider two fundamental classes of functions on the integer lattice $V=[k]=\{1,2,\ldots,k\}$:
{\em submodular functions} and {\em $L^\natural$-convex functions}. These are functions $f:[k]^n \rightarrow\overline{\mathbb R}$ satisfying conditions~\eqref{eq:SUBMODULAR}
and~\eqref{eq:LNATURAL}, respectively:
 \footnote{We use a common notation 
 $\overline{\RR}=\RR\cup\{\infty\}$,
 $\overline{\QQ}=\QQ\cup\{\infty\}$,
 $\overline{\ZZ}=\ZZ\cup\{\infty\}$
where $\infty$ is an infinity element  treated as follows: $\infty \cdot 0 = 0$, 
$x < \infty$ $(x \in \RR)$, $\infty + x = \infty$ $(x \in \overline{\RR})$,  
$x \cdot \infty = \infty$ $(a \in \RR: a > 0)$.
We also denote $\RR_+,\QQ_+,\ZZ_+$ to be the sets of nonnegative elements of $\RR,\QQ,\ZZ$, respectively. 
}
\begin{align}
f(x)+f(y) &\;\;\ge\;\; f(x\wedge y)+f(x\vee y) & \forall x,y\in [k]^n \label{eq:SUBMODULAR} \\
 f(x)+f(y) &\;\;\ge\;\;  f(\left\lceil \tfrac 12 (x+y) \right\rceil)+f(\left\lfloor \tfrac 12 (x+y) \right\rfloor)  & \forall x,y\in [k]^n \label{eq:LNATURAL}
\end{align}
where all operations are applied componentwise.

If, for example, $f(x)=\sum_i f_i(x_i) + \sum_{ij} f_{ij}(x_j-x_i)$,
then $f$ is submodular if all functions $f_{ij}$ are convex,
and $f$ is $L^\natural$-convex if all functions $f_i$ and $f_{ij}$ are convex.
The class of submodular functions on $[k]$ is strictly larger than the class of $L^\natural$-functions.
However, $L^\natural$-convex functions possess additional properties that allow more efficient minimization algorithms,
such as the Steepest Descent Algorithm~\cite{Murota:SDA:03,KolmogorovShioura:SDA:09,MurotaShioura:SDA:14}.

The theory developed in~\cite{Hirai:0ext,Hirai:Lconvexity} covers $L^\natural$-convex functions and several other function classes,
such as bisubmodular functions, {\em $k$-submodular functions}~\cite{huber12:ksub}, {\em skew-bisubmodular functions}~\cite{hkp14:sicomp}, and {\em strongly-tree submodular functions}~\cite{kolmogorov11:mfcs}.
However, it excludes submodular functions on $[k]$ for $k\ge 3$, which is a fundamental class of functions in discrete convex analysis.
This paper fills this gap by introducing a unified framework that includes all classes of functions mentioned above.

\myparagraph{Algorithms for solving $\ZeroExt{\mu}$ and $\ZeroExt{\mu,F}$} The tractability of $\ZeroExt{\mu}$ for orientable modular $\mu$ was proven in~\cite{Hirai:Lconvexity} as follows.
Given an instance $f:V^n\rightarrow\overline\RR$, Hirai defines a different instance $f^\ast_\times:(V^\ast)^n\rightarrow\overline\RR$
with the same minimum, where $|V^\ast|=O(|V|^2)$. Function $f^\ast_\times$ is then minimized using the {\em Steepest Descent Algorithm} (SDA).
This is an iterative technique that at each step computes a minimizer of $f^\ast_\times$ in a certain local neighborhood of the current iterate 
(by solving a linear programming relaxation). 
We refer to this technique as the {\em SDA$^\ast$ approach}.

We present an alternative algorithm (for tractable classes of  $\ZeroExt{\mu,F}$) that  minimizes function $f$ directly via a version of SDA that we call 
{\em $\DELTANEIB$-SDA}
(``diamond-SDA'').
Both approaches terminate after $O(|V|)$ steps.
However, one step of SDA$^\ast$ is more expensive than one step of SDA:
the LP problem involves up to $O(|V^\ast|)=O(|V|^2)$ labels per node in the former approach
compared to $O(|V|)$ labels in the latter approach.
Thus, we improve the complexity of solving $\ZeroExt{\mu}$.

\myparagraph{Orthogonal generalizations of the minimum $0$-extension problem} 
In~\cite{Hirai:Lconvexity} Hirai 
considered the minimum $0$-extension problem on {\em  swm-graphs} (that generalize orientable modular graphs),
and defined {\em \hbox{$L$-extendable} functions on swm-graphs}.
Minimizing $L$-extendable functions on swm-graphs
is an NP-hard problem (unless the graph is orientable modular); however, these functions admit a {\em discrete relaxation}
on an orientable modular graph (which is an $L$-convex function). The relaxation can be minimized in polynomial time and yields a partial optimal solution for the original function.
Based on this, Chalopin, Chepoi, Hirai and Osajda~\cite{Chalopin} obtained a 2-approximation
algorithm for the minimum 0-extension problem on swm-graphs. They also developed the theory of swm-graphs.

In \cite{HiraiMizutani} Hirai and Mizutani considered minimum $0$-extension problem for {\em directed} metrics,
and provided some partial results (including a dichotomy for directed metrics of a star graph). 


\subsection{Valued Constraint Satisfaction Problems (VCSPs)}
Results of this paper can be naturally stated in the framework of {\em Valued Constraint Satisfaction Problems} (VCSPs).
This framework is defined below.

Let us fix a finite set $D$ called a {\em domain}. A {\em cost function over $D$ of arity $n$} is a function of the form $f:D^n\rightarrow\overline{\mathbb R}$.
It is called  finite-valued if $f(x)<\infty$ for all $x\in D^n$. We denote $\dom f=\{x\in D^n\:|\:f(x)<\infty\}$.
A {\em (VCSP) language over $D$} is a (possibly infinite) set $\Phi$ of cost functions over $D$.
Language $\Phi$ is called finite-valued if all functions $f\in\Phi$ are finite-valued.

A {\em VCSP instance} $\calI$ is a function $D^n\rightarrow \overline\RR$ given
by
\begin{equation}
f_\calI(x)\ =\ \sum_{t\in T} f_t(x_{v(t,1)},\ldots,x_{v(t,n_t)}).      \label{eq:VCSPinst}
\end{equation}
It is specified by a finite set of variables $[n]$, finite set of terms
$T$, cost functions $f_t : D^{n_t}\rightarrow\overline\RR$ of arity $n_t$ and
indices $v(t, k)\in [n]$ for $t\in T , k =1,\ldots, n_t$. 
A solution to $\calI$ is a labeling $x\in [n]^V$ that minimizes $f_\calI(x)$.
Instance $\calI$ is called a {\em $\Phi$-instance} if all terms $f_t$ belong to $\Phi$.
The set of all $\Phi$-instances is denoted as $\VCSP{\Phi}$.
Language $\Phi$ with finite $|\Phi|$ is called {\em tractable} if instances in $\VCSP{\Phi}$
can be solved in polynomial time, and {\em NP-hard} if $\VCSP{\Phi}$ is NP-hard.
If $|\Phi|$ is infinite then $\Phi$ is called tractable if every finite $\Phi'\subseteq \Phi$ is tractable,
and NP-hard if there exists finite $\Phi'\subseteq \Phi$ which is NP-hard.

A key algorithmic tool in the VCSP theory is the {\em Basic Linear Programming} (BLP) relaxation of instance $\calI$.
We refer to~\cite{kolmogorov15:power} for the description of this relaxation. We say that BLP {\em solves instance $\calI$}
if this relaxation is tight, i.e.\ its optimal value equals $\min_{x\in D^n}f_\calI(x)$.

The following results are known; we refer to Section~\ref{sec:VCSP} for the definition of a ``binary symmetric fractional polymorphism''.

\begin{theorem}[{\cite{kolmogorov15:power}}]\label{th:KHDAHLKDAG}
Let $\Phi$ be a finite-valued language. Then BLP solves $\Phi$ if and only if $\Phi$ admits a binary symmetric fractional polymorphism.
If the condition holds, then an optimal solution of an $\Phi$-instance can be computed in polynomial time.
\end{theorem}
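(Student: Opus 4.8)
The plan is to prove the two implications separately and to read the algorithm off the ``if'' direction. Fix the domain $D$; for $g\colon D^2\to D$ and $u,v\in D^m$ write $g^\ast(u,v)=(g(u_1,v_1),\dots,g(u_m,v_m))$, so that a \emph{binary symmetric fractional polymorphism} of $\Phi$ is a probability distribution $\omega$ over symmetric operations $g\colon D^2\to D$ with $\sum_g\omega(g)\,f(g^\ast(u,v))\le\tfrac12\bigl(f(u)+f(v)\bigr)$ for every $f\in\Phi$ and all $u,v\in D^m$ ($m$ the arity of $f$). Two observations will be used throughout: every integral assignment yields a BLP-feasible point of the same objective, so $\BLP(\calI)\le\min_x f_\calI(x)$ always, and ``BLP solves $\calI$'' means exactly that the BLP optimum is attained integrally; and, since there are only finitely many symmetric binary operations on the fixed finite set $D$, the candidate distributions $\omega$ live in a fixed bounded polytope, so a compactness argument reduces everything to finite $\Phi$.

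For ``BLP solves $\Phi$ $\;\Rightarrow\;$ $\Phi$ has a binary symmetric fractional polymorphism'' I would argue the contrapositive by LP duality. Existence of $\omega$ is a finite linear feasibility problem in the variables $\omega(g)$; if it fails, a theorem of the alternative produces rational weights $\beta_{f,u,v}\ge0$, not all zero, and some $\varepsilon>0$, with $\sum_{f,u,v}\beta_{f,u,v}\,f(g^\ast(u,v))\ge C+\varepsilon$ for \emph{every} symmetric $g$, where $C:=\sum_{f,u,v}\beta_{f,u,v}\cdot\tfrac12\bigl(f(u)+f(v)\bigr)$. Build the VCSP instance $\calI$ whose variables are the unordered pairs $\langle a,b\rangle$ with $a,b\in D$: for each $(f,u,v)$ with $\beta_{f,u,v}>0$ include the term $f\bigl(z_{\langle u_1,v_1\rangle},\dots,z_{\langle u_m,v_m\rangle}\bigr)$ with multiplicity proportional to $\beta_{f,u,v}$ (clearing denominators to make these integers). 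Integral assignments to $\calI$ are precisely the symmetric binary operations on $D$, so $\min_z f_\calI(z)\ge C+\varepsilon$ up to the common scaling; on the other hand, the BLP point given by $\mu_{\langle a,b\rangle}=\tfrac12\delta_a+\tfrac12\delta_b$ together with the distribution $\tfrac12\delta_u+\tfrac12\delta_v$ on $D^m$ for the term of $(f,u,v)$ is feasible (on every coordinate the induced marginal matches the relevant $\mu$) and has objective $C$ (up to the same scaling), so $\BLP(\calI)< \min_z f_\calI(z)$, contradicting that BLP solves this $\Phi$-instance.

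For the converse, given a binary symmetric fractional polymorphism $\omega$ and a $\Phi$-instance $\calI$, compute an optimal BLP solution $(\mu,\lambda)$ by linear programming (it can be taken rational). The heart of the matter is a rounding lemma: any BLP-feasible $(\mu,\lambda)$ can be converted, without increasing its objective, into an integral assignment. I would prove it by an iterative procedure that repeatedly replaces the current feasible solution by its $\omega$-average with a carefully chosen second feasible solution. Here the $\omega$-average of two BLP-feasible solutions — on each variable and each term, form the product of the two distributions and push it forward along a random $g\sim\omega$ (applied coordinatewise on terms) — is again BLP-feasible, since the marginalisation constraints are preserved by products and pushforwards, and its objective is at most the average of the two objectives by the fractional-polymorphism inequality summed over all weighted terms; thus the objective never increases. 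Choosing the second solution so that an integrality potential such as $\sum_i|\supp\mu_i|$ strictly decreases, one reaches an integral assignment after polynomially many rounds, which (as $\BLP(\calI)\le\min_x f_\calI(x)$ always) is optimal for $\calI$ as well. Since solving BLP is an LP of polynomial size and the rounding uses at most $\sum_i|\supp\mu_i|\le n|D|$ rounds, each a polynomial-size computation, this gives a polynomial-time algorithm for an optimal solution.

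The soundness gadget is a routine LP-duality construction, so I expect the real obstacle to be the rounding lemma of the converse. The subtle point is precisely that a \emph{single binary} symmetric fractional polymorphism must do the job: the usual route — stacking a rational BLP optimum against a symmetric operation whose arity equals the optimum's denominator — is not available here, because binary symmetric operations do not compose into higher-arity symmetric ones (a balanced binary tree of $\omega$'s is only invariant under the automorphism group of the tree, not the full symmetric group, and a per-coordinate ``sort then combine'' operation decouples the coordinates and breaks the defining inequality). Engineering the reduction of support directly from $\omega$ is therefore where the work lies.
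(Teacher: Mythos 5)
First, note that the paper does not prove this theorem: it is cited verbatim from \cite{kolmogorov15:power} (with the finite-valued, binary-arity statement traced further to Thapper and \v{Z}ivn\'y), so there is no ``paper's own proof'' to compare against. Evaluating your argument on its merits, the necessity direction is sound: the compactness remark correctly reduces to finite $\Phi$, and the Farkas/minimax construction of a gadget instance on unordered-pair variables, together with the fractional BLP point $\mu_{\langle a,b\rangle}=\tfrac12\delta_a+\tfrac12\delta_b$, is exactly the standard way to exhibit an integrality gap when no binary symmetric fractional polymorphism exists.

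The sufficiency direction, however, has a genuine gap, and you essentially acknowledge it yourself. The $\omega$-averaging operator on BLP-feasible pairs $(\mu,\lambda)$ is well-defined, preserves marginal consistency, and does not increase the objective; that part is fine. But the proposed potential $\sum_i|\supp\mu_i|$ has no reason to decrease: the support of the averaged marginal is $\{g(a,b):a\in\supp\mu^1_i,\,b\in\supp\mu^2_i,\,g\in\supp\omega\}$, which can stay the same or even grow (self-averaging against a fixed point of the induced Markov chain changes nothing, and averaging against a semilattice-type operation typically enlarges supports). Choosing the ``second feasible solution'' to force a decrease is also not available: to keep the objective from increasing you would need the second solution to already have objective at most $\BLP(\calI)$, and the only obvious candidate with that property is an integral optimum, whose existence at value $\BLP(\calI)$ is precisely what you are trying to prove. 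So the rounding lemma is not established, and with it the whole ``if'' direction and the polynomial-time claim collapse.

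This is exactly where the actual proof does real work. The route in Thapper--\v{Z}ivn\'y and Kolmogorov--Thapper--\v{Z}ivn\'y is to first show that BLP solves $\Phi$ iff $\Phi$ admits symmetric fractional polymorphisms of \emph{every} arity $m\ge 2$ (which is the paper's Theorem~\ref{th:BLP:characterization}, and whose ``if'' direction is the clean rounding you describe, applied with an $m$-ary symmetric operation at arity $m$ equal to the BLP denominator), and then to prove the nontrivial implication that, for finite-valued $\Phi$, a binary symmetric fractional polymorphism already forces symmetric fractional polymorphisms of all arities (the content behind the paper's Theorem~\ref{th:MC}). That implication is a genuine combinatorial/algebraic argument about generating higher-arity symmetric operations, not a support-shrinking potential argument; as you correctly observe, binary symmetric operations do not stack into higher-arity symmetric ones, which is exactly why the step cannot be elementary. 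As written, your proof establishes one direction and leaves the other open.
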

\begin{theorem}[{\cite{tz16:jacm}}]\label{th:PASGA}
If a finite-valued language $\Phi$ does not satisfy the condition in Theorem~\ref{th:KHDAHLKDAG} then $\Phi$ is NP-hard.
\end{theorem}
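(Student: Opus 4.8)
I would prove the contrapositive as a hardness reduction. Assume $\Phi$ admits no binary symmetric fractional polymorphism; I would produce, inside the \emph{expressive power} of $\Phi$---its closure under nonnegative scaling, finite sums, adding constants, and minimization over auxiliary variables, operations that enlarge instances only polynomially and hence preserve polynomial-time solvability---a finite sublanguage that is already NP-hard. The source of hardness is an LP-duality certificate for the failure of the binary-symmetric condition.

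First reduce to $|\Phi|<\infty$. ``$\Phi$ admits a binary symmetric fractional polymorphism'' is precisely feasibility of an explicit rational linear program: a probability distribution $\omega$ over the finitely many commutative binary operations $g\colon D^2\to D$ with
\[
\sum_{g}\omega(g)\,f\bigl(g(x,y)\bigr)\ \le\ \tfrac12\bigl(f(x)+f(y)\bigr)\qquad\text{for all }f\in\Phi,\ x,y\in\dom f .
\]
By compactness, infeasibility for $\Phi$ implies infeasibility for some finite $\Phi'\subseteq\Phi$, so it suffices to prove $\Phi'$ NP-hard. A theorem of the alternative applied to the infeasible LP for $\Phi'$ yields nonnegative weights $\mu_t$, not all zero, on finitely many ``tests'' $(f_t,x^t,y^t)$ with $f_t\in\Phi'$ and $x^t,y^t\in\dom f_t$, such that
\[
\sum_t \mu_t\, f_t\bigl(g(x^t,y^t)\bigr)\ >\ \tfrac12\sum_t \mu_t\bigl(f_t(x^t)+f_t(y^t)\bigr)\qquad\text{for every commutative }g\colon D^2\to D,
\]
with $g$ applied coordinatewise; clearing denominators, the $\mu_t$ may be taken integral.

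Then convert the certificate into an expressed binary cost function: assemble the scaled tests into a single instance---one variable per coordinate occurring in the tuples $x^t$ (identified according to equalities among them), similarly for the tuples $y^t$---and minimize out all but two ``master'' variables $s,t$ carrying the $x$- and $y$-patterns. This yields $h\in\langle\Phi'\rangle$ of arity two for which the displayed inequality states exactly that no commutative binary operation improves $h$; specializing $g$ to relabelings of $D$ onto a suitable pair of values should give $h(a,b)+h(b,a)<h(a,a)+h(b,b)$ for two values $a\neq b$, i.e.\ $h$ is \emph{strictly non-submodular} on $\{a,b\}$. Hardness then follows by a reduction, in the model case from \textsc{Max-Cut}, for which such a function is exactly the obstruction: on a two-element domain a finite-valued language is NP-hard as soon as it contains a strictly non-submodular function (Creignou's Max-CSP dichotomy, in the finite-valued refinement of Cohen, Cooper, Creed, Jeavons and \v{Z}ivn\'y). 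Hence $\langle\Phi'\rangle$, and therefore $\Phi$, contains a finite NP-hard sublanguage, so $\Phi$ is NP-hard.

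The main obstacle is the middle step---turning the LP certificate into an honest gadget. The certificate concerns improvement of $h$ by operations on the \emph{whole} domain $D$, whereas the reduction needs hardness already on a small effective domain, so one must ``collapse'' $D$ onto two values while keeping the strict inequality alive, using that the certificate holds for \emph{every} commutative $g$, including relabelings. Furthermore, minimizing over auxiliary variables can lower values in ways that destroy non-submodularity or let a minimizer escape to an irrelevant label, so the minimizers must be controlled carefully (e.g.\ by first passing to a core and choosing judiciously which variables to project out). Making all this precise is the technical heart of \cite{tz16:jacm}.
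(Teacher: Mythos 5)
The paper does not prove this statement; it cites it to Thapper and \v{Z}ivn\'y~\cite{tz16:jacm}, so the relevant comparison is against the known structure of that argument. Your opening reductions (to a finite $\Phi'$ via LP compactness, and the theorem-of-the-alternative certificate) are fine, and the overall theme --- LP duality plus extraction of a hard expressed gadget --- is shared with the source. But the route you take through the middle is not the route that works, and the place where it breaks is exactly the place you flag and then wave at.

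The concrete gap is the step ``assemble the scaled tests into a single instance $\ldots$ and minimize out all but two master variables $s,t$ carrying the $x$- and $y$-patterns.'' The LP certificate applies $g$ \emph{coordinatewise} to high-arity tuples $x^t,y^t\in D^{n_t}$ whose entries range over all of $D$, so the resulting assignment to the instance's variables is an $n$-tuple, not a pair. There is no mechanism in $\langle\Phi\rangle$ by which two master variables ``carry'' full $x$- and $y$-patterns: minimization over auxiliaries picks the best completion, it does not clamp a whole tuple to be a function of one variable. Consequently you do not obtain a binary $h$ whose non-improvement under all commutative $g$ is equivalent to the certificate inequality, and the subsequent ``specialize $g$ to relabelings onto $\{a,b\}$'' step has nothing to act on. Even granting such an $h$, ``strictly non-submodular on $\{a,b\}$'' (i.e. $h(a,b)+h(b,a)<h(a,a)+h(b,b)$) is not the condition used in~\cite{tz16:jacm} and does not by itself license a Max-Cut reduction over the full domain $D$; you would additionally need to trap minimizers onto $\{a,b\}$, which is what the paper's (MC) condition ($\argmin f=\{(a,b),(b,a)\}$ for some binary $f\in\langle\Phi\rangle$) encodes.

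What~\cite{tz16:jacm} actually does is structurally different: it proves the \emph{other} implication --- that a finite-valued core \emph{not} satisfying (MC) admits symmetric fractional polymorphisms of every arity (this is the paper's Theorem~\ref{th:MC}) --- by an inductive symmetrization/graph-matching argument, and then deduces the hardness statement you want by contraposition together with the comparatively easy observation that (MC) yields NP-hardness by a direct Max-Cut reduction using the binary gadget whose argmin is $\{(a,b),(b,a)\}$. The LP-duality certificate you invoke does appear inside that symmetrization argument, but as a local tool, not as the direct source of a two-variable gadget. In short: your first and last steps are sound, but the central construction as written does not go through, and the acknowledgement that it is ``the technical heart'' does not discharge the obligation; the actual argument attacks the problem from the other side.
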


\myparagraph{Application to the minimum $0$-extension problem} 
Consider again a metric space $(V,\mu)$ and subset $F\subseteq\binom{V}{2}$.
For a set $U\subseteq V$ let $\delta_U:V\rightarrow \{0,\infty\}$ be the indicator function of set $U$,
with $\delta_U(v)=0$ iff $v\in U$.
For brevity, we  write $\delta_{\{u_1,\ldots, u_k\}}$ as $\delta_{u_1\ldots u_k}$. Clearly, the minimum $0$-extension problems introduced earlier can be equivalently defined 
by the  following languages over domain $D=V$: 
\vspace{-10pt}
\begin{eqnarray*}
\ZeroExt{\mu}&=&\{\mu\} \;\cup\; \{\delta_u\::\:u\in V\} \\
\ZeroExt{\mu,F}&=&\{\mu\} \;\cup\; \{\delta_u\::\:u\in V\} \;\cup\; \{\delta_U\::\:U\in F\}
\end{eqnarray*}

Note that the existence of the dichotomy given in Theorem~\ref{th:main} follows from Theorems~\ref{th:KHDAHLKDAG} and~\ref{th:PASGA}
(but not the specific criterion for tractability).\footnote{Theorems~\ref{th:KHDAHLKDAG} 
and~\ref{th:PASGA} are not directly applicable to $\ZeroExt{\mu}$ since $\ZeroExt{\mu}$ is not finite-valued.
However, we can get a finite-valued language by replacing functions $\delta_u:V\rightarrow\{0,\infty\}$
with functions $\mu_u:V\rightarrow\RR$ defined via $\mu_u(x)=\mu(x,u)$. It is not difficult to show that such transformation
does not affect the complexity of $\ZeroExt{\mu}$. A similar remark applies to $\ZeroExt{\mu,F}$.}

\subsection{Summary of contributions}
As described in earlier sections, our first contribution
is complexity classification of $\ZeroExt{\mu,F}$ for subsets $F\subseteq \binom{V}{2}$ with an explicit
criterion for tractability, which is achieved by generalizing Hirai's theory of modular complexes to
{\em extended modular complexes}.
While some of the proofs are relatively straightforward extensions of the corresponding proofs in~\cite{Hirai:0ext,Hirai:Lconvexity},
there are also a number of proofs where we use novel techniques.
We already mentioned a new $\DELTANEIB$-SDA algorithm that improves the complexity of solving the standard 0-minimum extension problem.
In order to analyze this algorithm,
we introduce new binary operations $\UP,\DOWN,\diamond$ for (extended) modular complexes and establish their properties.
Another key technical component that we use is the notion of {\em $f$-extremality} that we introduce.
We believe that these concepts deepen our understanding of orientable modular graphs.

As our last contribution, we prove that the BLP relaxation directly solves $L$-convex functions on extended modular complexes.
Previously, this was shown to hold (for standard modular complexes) only assuming that ${\tt P}\ne {\tt NP}$ (see Section 6 in~\cite{Hirai:0ext}).

\bigskip

The rest of the paper is organized as follows. Section~\ref{sec:background} reviews Hirai's theory
and defines {\em $L$-convex functions on modular complexes} and the SDA algorithm for minimizing them.
Section~\ref{sec:ebp} generalizes this to {\em $L$-convex functions on extended modular complexes}
and presents $\DELTANEIB$-SDA algorithm.
Both sections use the notion of {\em submodular functions on valuated modular semilattices},
which are formally defined in Section~\ref{sec:submodularity-on-semilattice}.
All proofs missing in Section~\ref{sec:ebp} are given in Sections~\ref{sec:proofs} and~\ref{sec:VCSP1}; this completes the proof of the tractability direction of Theorem~\ref{th:main}. The NP-hardness direction of Theorem~\ref{th:main} is proven in Section~\ref{sec:proofs:NPhardness}.
Section~\ref{sec:open} concludes the paper with a list of open problems.

\section{Background on orientable modular graphs}\label{sec:background}

\myparagraph{Notation for graphs}
If $H$ is a simple undirected graph and $o$ its edge orientation, then the pair $(H,o)$ can be viewed as a simple directed graph.
We will usually denote this graph as $\Gamma=(V_\Gamma,E_\Gamma,w_\Gamma)$.
$\Gamma$~is called {\em oriented modular}, or a {\em modular complex}, if it is an admissible orientation of an orientable modular graph~\cite{Karzanov:04,Hirai:0ext}.

We let~$\rightarrow_\Gamma$ be the edge relation of $\Gamma$,
i.e.\ condition $u\rightarrow_\Gamma v$ means that there is an edge from $u$ to $v$ in $\Gamma$.
When $\Gamma$ is clear from the context, we may omit subscript $\Gamma$ and write $V,E,w,\rightarrow$, etc.
A {\em path} $(u_0,u_1,\ldots,u_k)$ in a directed graph $\Gamma$ is defined as a path in the undirected version of $\Gamma$,
i.e.\ for each $i$ we must have either $u_i\rightarrow u_{i+1}$ or $u_{i+1}\rightarrow u_{i}$.
An $x$-$y$ path in $\Gamma$ is a path from $x\in V$ to $y\in V$.
With some abuse of notation we sometimes view $\Gamma$ as a set of its nodes, and write e.g.\ $v\in \Gamma$ to mean $v\in V$.

\myparagraph{Orbits} For an undirected graph $H=(V,E,w)$ edges $e,e'\in E$ 
are called {\em projective} if there is a sequence of edges $(e_0,e_1,\ldots,e_m)$ 
with $(e_0,e_m)=(e,e')$ such that $e_i,e_{i+1}$ are vertex-disjoint and belong to a common 4-cycle of $H$.
Clearly, projectivity is an equivalence relation on $E$.
An equivalence class of this relation in called an {\em orbit}~\cite{Karzanov:04}.
Edge weights $w:E\rightarrow\mathbb R_{>0}$ are called {\em orbit-invariant}
if $w(e)=w(e')$ for any pair of edges $e,e'$ in the same orbit (equivalently, for vertex-disjoint edges $e,e'$ belonging to a common 4-cycle).
\begin{theorem}[\cite{Bandelt:85,Karzanov:04}]\label{th:orbits}
Consider undirected graph $H=H_\mu=(V,E,w)$. 
\begin{itemize}[noitemsep,topsep=0pt]
\item [{\rm (a)}] If $\mu$ is modular then $w$ is orbit-invariant,
and  path $P$  is shortest in $H$ if and only if it is shortest in $(V,E,1)$. 
\item [{\rm (b)}] The following conditions are equivalent: (i) $H$ is (orientable) modular;
(ii) $w$ is orbit-invariant and $(V,E,1)$ is (orientable) modular.
\end{itemize}
\end{theorem}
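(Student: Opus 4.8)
The plan is to prove (a) and then obtain (b) from the same circle of ideas. The orbit-invariance of $w$ in (a) is immediate from medians: since $H_\mu$ is the minimal graph with path-metric $\mu$, every edge $xy\in E$ has $I_\mu(x,y)=\{x,y\}$, and $H_\mu$ is triangle-free (a median of three mutually adjacent vertices would lie in $\{x,y\}\cap\{y,z\}\cap\{x,z\}=\varnothing$), so every $4$-cycle of $H_\mu$ is chordless. Given a $4$-cycle $a,b,c,d$, a median of $\{a,b,c\}$ lies in $I_\mu(a,b)\cap I_\mu(b,c)=\{b\}$, hence $b\in I_\mu(a,c)$, i.e.\ $w(ab)+w(bc)=\mu(a,c)$; repeating at all four corners gives $w(ab)+w(bc)=w(ad)+w(dc)$ and $w(ab)+w(ad)=w(bc)+w(cd)$, forcing $w(ab)=w(cd)$ and $w(bc)=w(ad)$. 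So vertex-disjoint edges of a common $4$-cycle carry equal weight, which is exactly orbit-invariance.

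Write $d_1$ for the path-metric of $(V,E,1)$; note $(V,E,1)=H_{d_1}$ (every edge is indispensable even unweighted) and that $(V,E)$ is bipartite (standard for modular metrics). I would reduce everything that is left — the second half of (a) and both directions of (b) — to the single claim, call it (KL): for all $x,y\in V$, a path from $x$ to $y$ is a $w$-geodesic in $H$ if and only if it is a shortest path in $(V,E,1)$. Indeed, (KL) \emph{is} the second half of (a); it implies that all $d_1$-geodesics between fixed endpoints have $w$-weight $\mu(x,y)$ and that $I_\mu(x,y)=I_{d_1}(x,y)$, whence the median condition for $\mu$ and for $d_1$ coincide and $\mu$ is modular iff $d_1$ is; and admissibility of an edge-orientation depends only on the undirected graph $(V,E)$, which $H$ and $(V,E,1)$ share, so ``orientable'' transfers as well. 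This yields (b) in both directions, with the understanding that for (ii)$\Rightarrow$(i) one proves (KL) from the hypotheses ``$w$ orbit-invariant and $d_1$ modular'' instead of from ``$\mu$ modular''.

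The proof of (KL) is the crux. It runs a simultaneous induction establishing: (1) the quadrangle condition for the unweighted graph $(V,E)$; (2) that any two $d_1$-geodesics with the same endpoints are joined by elementary square moves (replacing two consecutive sides of a $4$-cycle by the other two), so that, using orbit-invariance of $w$, they have equal $w$-weight; and (3) that every $w$-geodesic $x=p_0,\dots,p_m=y$ is a $d_1$-geodesic. Step (3) is where the two metrics get tied together: if it failed there would be a first ``backward'' step $d_1(x,p_{i+1})=d_1(x,p_i)-1$, so the prefix $p_0\cdots p_i$ is a $d_1$-geodesic with $w(p_0\cdots p_i)=\mu(x,p_i)$, while appending the edge $p_{i+1}p_i$ to a $d_1$-geodesic from $x$ to $p_{i+1}$ yields another $d_1$-geodesic $Q$ from $x$ to $p_i$ with $w(Q)\ge\mu(x,p_{i+1})+w(p_ip_{i+1})=\mu(x,p_i)+2w(p_ip_{i+1})>\mu(x,p_i)$, contradicting (2). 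Granting (1)--(3), every $w$-geodesic is a $d_1$-geodesic by (3), and every $d_1$-geodesic has $w$-weight $\mu(x,y)$ by (2)--(3) and hence is a $w$-geodesic, which is (KL). I expect the main obstacle to be the interlocking of (1)--(3): the square-move lemma (2) needs the quadrangle condition (1), which in the hard direction (i)$\Rightarrow$(ii) is only accessible through the metric-compatibility $I_\mu=I_{d_1}$ at smaller radii, so the three statements have to be carried through the induction together; the classical treatments of Bandelt~\cite{Bandelt:85} and Karzanov~\cite{Karzanov:04} do precisely this, and the reverse direction (ii)$\Rightarrow$(i) is smoother because the quadrangle condition for $(V,E,1)$ is then a hypothesis. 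Deducing the remaining parts of (a) and of (b) from (KL), including the orientability transfer, is then routine.
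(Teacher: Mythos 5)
The paper does not prove this theorem; it is stated with a citation to Bandelt and Karzanov and used thereafter as a black box, so there is no internal proof to compare against. Judged on its own terms, your argument for orbit-invariance is complete and correct: triangle-freeness from minimality plus modularity of $\mu$, chordlessness of every $4$-cycle, the four median computations in a $4$-cycle, and the resulting linear system forcing opposite edges to have equal weight. Your reductions of the remaining claims to the geodesic-coincidence statement (KL) are also sound, including that $(V,E,1)=H_{d_1}$ holds automatically (any $z\ne x,y$ has $d_1(x,z)+d_1(z,y)\ge 2>1$) and that orientability is a property of the underlying unweighted graph $(V,E)$ alone and therefore transfers for free.

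The gap is the one you flag yourself: in the direction ``$\mu$ modular $\Rightarrow (V,E,1)$ modular'' (which also underlies the second half of part (a)), the simultaneous induction establishing the quadrangle condition for $(V,E,1)$, the square-move lemma, and the implication ``$w$-geodesic $\Rightarrow d_1$-geodesic'' is described but not executed; you defer it to the references. Your step-(3) computation is correct once (2) is available at the relevant radius, and your remark that (ii)$\Rightarrow$(i) is smoother because the quadrangle condition then comes for free from the hypothesis (via the characterization the paper quotes as Lemma~\ref{lemma:modular:characterization}) is also right. As written, though, the (i)$\Rightarrow$(ii) induction that trades the quadrangle condition at each radius against $I_\mu=I_{d_1}$ at smaller radii is a genuine missing ingredient, not a routine verification. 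Since the paper itself cites the result without proof, deferring is defensible; just be clear that what you have is a correct scaffold for the known proof, not a self-contained one.
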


\myparagraph{Metric spaces} 
For a weighted directed or undirected graph $G=(V,E,w)$ let $\mu_G$ and $d_G$ (or simply $\mu$ and $d$, when $G$ is clear) be its path metrics w.r.t.\ edge lengths $w$ and $1$, respectively.
If graph $G$ is directed then edge orientations are again ignored.

For a metric space $(V,\mu)$, subset $U\subseteq V$ is called {\em convex} if $I(x,y)\subseteq U$
for every $x,y\in U$.  
Note, if $G$ is an orientable (or oriented) modular graph then
the definitions of the metric interval $I(x,y)$ and of convex sets 
coincide for metric spaces $(V_G,\mu_G)$ and $(V_G,d_G)$ (by  Theorem~\ref{th:orbits}).

\myparagraph{Posets} 
A modular complex $\Gamma$ known to be an acyclic graph~\cite[Lemma 2.3]{Hirai:0ext},
and thus induces a partial order $\preceq$ on~$V$.
Partially ordered sets (posets) play a key role in the study of oriented modular graphs.
Below we describe basic facts about posets and terminology that we use, mainly following~\cite{Hirai:0ext,Hirai:Lconvexity}.

Consider poset $\calL$ with relation $\preceq$.
For elements $p,q$ with $p\preceq q$, the {\em interval} $[p,q]$ is the set $\{x\in\calL\:|\:p\preceq x\preceq q\}$.
A {\em chain from $p$ to $q$ of length $k$} is a sequence $u_0\prec u_1\prec\ldots\prec u_k$ with $(u_0,u_k)=(p,q)$,
where notation $a\prec b$ means that $a\preceq b$ and $a\ne b$.
The length $r[p,q]$ of interval $[p,q]$ is defined as the maximum length of a chain from $p$ to $q$.
If $\calL$ has the lowest element (denoted as $0$) then the {\em rank} $r(a)$ of element $a\in\calL$ is defined by $r(a)=r[0,a]$,
and elements of rank 1 are called {\em atoms}.

Element $q$ {\em covers} $p$ if $p\prec q$ and there is no element $u$ with $p\prec u\prec q$.
The {\em Hasse diagram of~$\calL$} is a directed graph on $\calL$ with the set of edges $\{p\rightarrow q\:|\:q\mbox{ covers }p\}$,
and the {\em covering graph of $\calL$} is the corresponding undirected graph.

A pair $x,y\in\calL$ is said to be {\em upper-bounded} (resp. {\em lower-bounded}) if $x,y$ have a common upper bound
(resp. common lower bound). The { lowest} common upper bound, if exists, is denoted by $x\vee y$ (the ``join'' of $x,y$),
and the { greatest} common lower bound, if exists, is denoted by $x\wedge y$ (the ``meet'' of $x,y$).
$\calL$ is called a {\em (meet-)semilattice} if every pair $x,y\in \calL$ has a meet,
and it is called  a {\em lattice} if every pair $x,y\in \calL$ has both a join and a meet.
If $\calL$ is a semilattice and $x,y\in\calL$ are upper-bounded then $x\vee y$ is known to exist.

A (positive) {\em valuation} of a semilattice $\calL$ is a function $v:\calL\rightarrow\mathbb R$ satisfying
\begin{subequations}\label{eq:valuation:def}
\begin{align}
v(q)-v(p) > 0 &\qquad\qquad \forall p,q\in\calL:\,p\prec q \label{eq:valuation:def:a} \\
v(p)+v(q) = v(p\wedge q)+v(p\vee q) &\qquad\qquad \forall p,q\in\calL:\,p,q\mbox{ upper-bounded}  \label{eq:valuation:def:b}
\end{align}
\end{subequations}
In particular, if $\calL$ is a lattice then~\eqref{eq:valuation:def:b} should hold for all $p,q\in\calL$.
A semilattice with valuation~$v$ will be called a {\em valuated semilattice}.
We will view the Hasse diagram of a valuated semilattice $\calL$ (and the corresponding covering graph) as weighted graphs, where the weight of $p\rightarrow q$ is given by $v(q)-v(p)>0$.

A lattice $\calL$ is called {\em modular} if for every $x,y,z \in \calL$ with $x \preceq z$ there holds $x \vee (y \wedge z) = (x \vee y) \wedge z$.
A semilattice $\calL$ is called {\em modular}~\cite{Bandelt:93}  if for every $p\in\calL$ poset $(\{x\in\calL\:|\:x\preceq p\},\preceq)$ is a modular lattice, 
and for every $x,y,z \in \calL$ the join $x \vee y \vee z$ exists provided that $x \vee y$, $y \vee z$, $z \vee x$ exist.
It is known that a lattice $\calL$ is modular if and only if its rank function $r(\cdot)$ is a valuation~\cite[Chapter III, Corollary 1]{Birkhoff:67}.
Furthermore, a (semi)lattice is modular if and only if its covering graph is modular, see~\cite[Proposition 6.2.1]{vanDeVel:book};
\cite[Theorem 5.4]{Bandelt:93}.
The Hasse diagram of a (valuated) modular semilattice is known to be oriented modular~\cite[page 13]{Hirai:0ext}.

\myparagraph{Boolean pairs}
From now on we fix a modular complex $\Gamma=(V,E,w)$.
Graph $B$ is called a {\em cube graph} if it is isomorphic to the Hasse diagram of the Boolean lattice $\{0,1\}^k$ for some $k\ge 0$.
A pair of vertices $(p,q)$ of $\Gamma$
is called a {\em Boolean pair} if~$\Gamma$ contains cube graph $B$ as a subgraph 
so that $p$ and $q$ are respectively the source and the sink of~$B$.
Let $\sqsubseteq$  be the following relation on $V$:
 $p\sqsubseteq q$ iff $(p,q)$ is a Boolean pair in $\Gamma$.
We have $p\sqsubseteq p$ for any $p\in V$, and condition $p\sqsubseteq q$ implies that $p\preceq q$. 
Graph $\Gamma$ is called {\em well-oriented} if the opposite implication holds, i.e.\ if relations $\preceq$ and $\sqsubseteq$ are the same.
Note that relation $\sqsubseteq$ is not necessarily transitive.
We write $p\sqsubset q$ to mean $p\sqsubseteq q$ and $p\ne q$.
Let $\Gamma^{\mathsmaller\sqsubset}$ be the graph with nodes $V$ and edges $\{p\rightarrow q\:|\:p,q\in V,p\sqsubset q\}$.
Clearly, $\Gamma$ is a subgraph of~$\Gamma^{\mathsmaller\sqsubset}$ (ignoring edge weights).

For a vertex $p\in V$ define the following subsets of $V$: 
\begin{subequations}\label{eq:Lneib}
\begin{align}
 \calL^\uparrow_p(\Gamma)=\{q\in V\:|\:p\preceq q\} & \qquad
 \calL^\downarrow_p(\Gamma)=\{q\in V\:|\:q\preceq p\} & 
 \\
 \calL^+_p(\Gamma)=\{q\in V\:|\:p\sqsubseteq q\} & \qquad
 \calL^-_p(\Gamma)=\{q\in V\:|\:q\sqsubseteq p\} & \qquad
 \calL^\pm_p(\Gamma)=\calL^+_p(\Gamma) \cup \calL^-_p(\Gamma)
\end{align}
\end{subequations}
When $\Gamma$ is clear from the context, we will omit it for brevity
(and if $\Gamma$ is not clear, we may write $\preceq_\Gamma$ and $\sqsubseteq_\Gamma$ instead of $\preceq$ and $\sqsubseteq$).
We view $\calL^\uparrow_p,\calL^+_p$ as posets with relation $\preceq$,
and $\calL^\downarrow_p,\calL^-_p$ as posets with the reverse of relation $\preceq$.
Note, if $\Gamma$ is well-oriented then $\calL_p^+=\calL^\uparrow_p$ and $\calL_p^-=\calL^\downarrow_p$ for every  node $p$ of~$\Gamma$.
\begin{lemma}[{\cite[Proposition 4.1, Theorem 4.2, Lemma 4.14]{Hirai:0ext}}]\label{lemma:GammaIdeal}
Let $\Gamma$ be a modular complex.
\begin{itemize}[noitemsep,topsep=0pt]
	\item [{\rm (a)}] 
If elements $a,b$ are upper-bounded then $a\vee b$ exists.
Similarly, if $a,b$ are lower-bounded then $a\wedge b$ exists. 
	\item [{\rm (b)}]
Consider elements $p,q$ with $p\preceq q$. Then 
$\calL^\uparrow_p$, $\calL^\downarrow_p$, $\calL^+_p$, $\calL^-_p$ 
are modular semilattices,
and $[p,q]=\calL^\uparrow_p\cap \calL^\downarrow_p$ is a modular lattice.
Furthermore, these (semi)lattices are convex in $\Gamma$, and
function $v(\cdot)$ defined via $v(a)=\mu_\Gamma(p,a)$
is a valid valuation of these (semi)lattices. 
\end{itemize}
\end{lemma}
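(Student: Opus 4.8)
The plan is to derive the lemma from combinatorial properties of the orientable modular graph underlying $\Gamma$ (this is essentially the content of~\cite[Section~4]{Hirai:0ext}); the one genuinely new ingredient is a description of the order $\preceq$ in terms of shortest paths, and establishing it is where the real work lies. First I would reduce to the unweighted case: by Theorem~\ref{th:orbits} the graph $(V,E,1)$ is again orientable modular, has the same shortest paths as $\Gamma$, and induces the same metric intervals and convex sets, so it suffices to work in $(V,E,1)$, where the valuation in question is $v(a)=d_\Gamma(p,a)$; write $d=d_\Gamma$. The central lemma I would then prove is: \emph{for $p\preceq q$, a $p$-$q$ path is shortest if and only if all its edges are oriented from $p$ towards $q$; consequently $I(p,q)=[p,q]$, i.e.\ $c\in I(p,q)\iff p\preceq c\preceq q$}. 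That a monotone directed path is shortest is easy, since $\Gamma$ is acyclic and each of its edges lies on a shortest path, so a non-shortest monotone path would contain a non-shortest directed path of length two. For the converse one combines admissibility with the fact that in a modular graph any two shortest paths with the same endpoints are linked by a sequence of exchanges inside $4$-cycles: admissibility transports edge orientations consistently across each such exchange, so ``all edges point the same way'' propagates from one shortest $p$-$q$ path to all of them, and acyclicity forces this common direction to agree with $p\preceq q$. I expect making the $4$-cycle-exchange argument precise to be the main obstacle.

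Granting this lemma, I would bring in the standard toolkit for orientable modular graphs — unique medians $m(x,y,z)$, the identity $I(x,y)\cap I(x,z)=I(x,m(x,y,z))$, and the fact that a convex subgraph of a modular graph is again modular — and proceed as follows. For $a,b\succeq p$ and $w\in I(a,b)$, a minimal-counterexample argument (push $a$ one step towards $w$ while staying above $p$, using the monotone-path characterisation) shows $w\succeq p$, so $\calL^\uparrow_p$ is convex; reversing all edge orientations turns $\Gamma$ into another modular complex and interchanges $\calL^\uparrow_p$ with $\calL^\downarrow_p$, so $\calL^\downarrow_p$ is convex as well, and hence $[p,q]=\calL^\uparrow_p\cap\calL^\downarrow_q=I(p,q)$ is convex. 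For upper-bounded $a,b\in\calL^\uparrow_p$, the median $m=m(p,a,b)$ lies in $[p,a]\cap[p,b]$, hence is a common lower bound, and $I(p,a)\cap I(p,b)=I(p,m)$ shows it is the greatest one, so $a\wedge b=m(p,a,b)$ and $\calL^\uparrow_p$ is a meet-semilattice.

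For the lattice and semilattice statements I would note that $[p,q]=I(p,q)$, being a convex subgraph of a modular graph, is a modular graph, that its covering graph with respect to the induced order — bottom $p$, top $q$ — equals this induced subgraph (a covering pair in $[p,q]$ is joined by a monotone shortest path, so its first edge is an edge of $\Gamma$), and that it is a lattice (meets as above, joins symmetrically); by the cited equivalence between modular lattices and lattices with modular covering graph, $[p,q]$ is a modular lattice, hence graded with rank $d(p,\cdot)$. Part~(a) and the join axioms then follow by placing the relevant finitely many elements inside a single interval: if $a,b\in\calL^\uparrow_p$ have a common upper bound $q$, their join exists in the modular lattice $[p,q]$, and for any other common upper bound $q'$ the element $q\wedge q'$ is again a common upper bound lying in $[p,q]$, so that join is $\preceq q'$ and is therefore the join in all of $V$; the triple-join axiom is obtained the same way, and each principal ideal $[p,r]\subseteq\calL^\uparrow_p$ is a modular lattice, so $\calL^\uparrow_p$ (and, by edge reversal, $\calL^\downarrow_p$) is a modular semilattice.

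It remains to verify the valuation and to treat $\calL^\pm_p$. For $v(a)=d(p,a)$ strict monotonicity is immediate from $a\in I(p,b)$ when $a\prec b$; for the modularity equation, with $m=a\wedge b$ and $j=a\vee b$ one has $p\preceq m\preceq a,b\preceq j$, so $d(p,x)=d(p,m)+d(m,x)$ for $x\in\{a,b,j\}$ while $d(m,a)+d(m,b)=d(m,j)$ in the modular lattice $[m,j]$, and summing gives $v(a)+v(b)=2d(p,m)+d(m,j)=v(m)+v(j)$. Finally, $\calL^+_p$ and $\calL^-_p$ are the subsets of $\calL^\uparrow_p$ and $\calL^\downarrow_p$ cut out by the Boolean-pair relation; I would identify $\calL^+_p$ with the union of the cube-intervals issuing from $p$ and rerun the above arguments, which for cube intervals reduces to elementary facts about Boolean lattices, and obtain $\calL^-_p$ once more by edge reversal.
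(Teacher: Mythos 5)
The paper does not contain its own proof of this lemma: it is stated as a citation to Hirai~\cite[Proposition~4.1, Theorem~4.2, Lemma~4.14]{Hirai:0ext} (together with the companion Lemma~\ref{lemma:ascending}, cited from the same source), so there is no internal argument against which to compare your reconstruction. Your overall strategy — reduce to the unweighted graph via Theorem~\ref{th:orbits}, prove that for $p\preceq q$ the ascending paths are exactly the shortest paths and hence $I(p,q)=[p,q]$, then deduce the (semi)lattice structure, convexity, and the valuation identity from medians and interval calculus — is indeed Hirai's strategy, and it is the right skeleton.

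Two steps in your sketch, however, are not justified as stated. First, your forward direction of the central lemma rests on the assertion ``a non-shortest monotone path would contain a non-shortest directed path of length two,'' presented as a consequence of acyclicity and the fact that edges lie on shortest paths. This conclusion is true, but not for those reasons: a length-$k$ path can fail to be shortest while all of its length-$2$ subpaths are shortest (this happens already in the $2$-cube along the $4$-cycle), so one needs the actual argument. That argument is an induction on $k$ which, at each step, produces via bipartiteness and the modularity quadrangle condition (Lemma~\ref{lemma:modular:characterization}(2)) a common neighbour $p^\ast$ of $p_{k-2}$ and $p_k$ closer to $p_0$, uses admissibility on the $4$-cycle $(p_{k-2},p_{k-1},p_k,p^\ast)$ to orient $p_{k-2}\to p^\ast\to p_k$, and then applies the inductive hypothesis to the shorter monotone path $(p_0,\ldots,p_{k-2},p^\ast)$ to get a contradiction. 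This is the same flavour of argument you flag as ``the main obstacle'' for the converse, and it is equally needed here; labelling the forward direction ``easy'' hides it. Second, you invoke ``the identity $I(x,y)\cap I(x,z)=I(x,m(x,y,z))$'' as part of the standard toolkit for modular graphs. This identity is false in modular graphs in general: in $K_{2,3}$ with parts $\{x,w\}$ and $\{a,b,c\}$ one has $I(a,b)\cap I(a,c)=\{a,x,w\}$, which is not $I(a,u)$ for any $u$, and the median of $a,b,c$ is not unique. You use it precisely at the point where you identify the meet in $\calL^\uparrow_p$ with the median $m(p,a,b)$, so you cannot appeal to it as a black box; you need to prove, in the special case $p\preceq a,b$ (using $I(p,\cdot)=[p,\cdot]$ from the central lemma), that $[p,a]\cap[p,b]$ has a unique maximal element and that it is a median — this is exactly the content of Hirai's Proposition~4.1 and does not come for free. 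The downstream parts of your write-up (gradedness and rank, the valuation identity on $[m,j]$, placing upper-bounded pairs in a single interval, and the reduction of $\calL^\pm_p$ to cube intervals) are sound modulo these two points.
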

We will always view 
$\calL^\uparrow_p$, $\calL^\downarrow_p$, $\calL^+_p$, $\calL^-_p$  
as {\bf valuated} semilattices, where the valuation is defined as in the lemma.

\myparagraph{$L$-convex functions} 
Next, we review the notion of {\em an $L$-convex function on a modular complex~$\Gamma$} introduced by Hirai in~\cite{Hirai:0ext,Hirai:Lconvexity}.
The definition involves the following steps.
\begin{itemize}
\item First, Hirai defines the notion of a {\em submodular function on a valuated modular semilattice~$\calL$}. These are
functions $f:\calL\rightarrow\overline{\mathbb R}$ satisfying certain linear inequalities. Formulating these inequalities is rather lengthy,
and we defer it to Section~\ref{sec:submodularity-on-semilattice}.
\item Second, Hirai defines {\em 2-subdivision} of $\Gamma$ as the directed weighted graph $\Gamma^\ast=(V^\ast,E^\ast,w^\ast)$
constructed as follows: 
\begin{itemize}
\item[(i)] set $V^\ast=\{[p,q]\:|\:p\sqsubseteq q\}$; 
\item[(ii)] for each $[p,q],[p,q']\in V^\ast$ with  $q\rightarrow_\Gamma q'$ add edge $[p,q]\rightarrow [p,q']$ to $\Gamma^\ast$
with weight $ w(qq')$; 
\item[(iii)] for each $[p',q],[p,q]\in V^\ast$ with  $p'\rightarrow_\Gamma p$ add edge $[p,q]\rightarrow [p',q]$ to $\Gamma^\ast$
with weight $ w(p'p)$. 
\end{itemize}
It can be seen that graph $(V^\ast,E^\ast)$ is the Hasse diagram of the poset $(V^\ast,\subseteq)$ (which is the definition used in~\cite{Hirai:Lconvexity}).~\footnote{
Compared to~\cite{Hirai:0ext,Hirai:Lconvexity}, we chose to scale the weights of graph $\Gamma^\ast$ by a factor of 2.
Such scaling will not affect later theorems.}

Hirai proves that graph $\Gamma^\ast$ is oriented modular (\cite[Theorem 4.3]{Hirai:0ext}) and well-oriented (\cite[Lemma 2.14]{Hirai:0ext}).
Consequently, poset $\calL_{[p,p]}^+(\Gamma^\ast)=\calL_{[p,p]}^\uparrow(\Gamma^\ast)$ is a valuated modular semilattice for each $p\in V$. For brevity, this
poset will be denoted as $\calL_{p}^\ast(\Gamma)$, or simply as $\calL_{p}^\ast$.
\item Each function $f:V\rightarrow\overline{\mathbb R}$ is extended to a function $f^\ast:V^\ast \rightarrow\overline{\mathbb R}$ via
$f^\ast([p,q])=f(p)+f(q)$.
\item Function $f:V\rightarrow\overline{\mathbb R}$ is now called {\em $L$-convex on $\Gamma$}
if (i) subset $\dom f\subseteq V$ is connected in $\Gamma^{\mathsmaller\sqsubset}$,
and (ii) for every $p\in V$, the restriction of $f^\ast$ to $\calL^\ast_{p}$ is submodular on valuated modular semilattice $\calL^\ast_{p}$.
\end{itemize}

\myparagraph{Minimum $0$-extension problem and $L$-convex functions} Next, we describe the relation between problem $\ZeroExt{\mu}$ and $L$-convex functions on $\Gamma$
(where $\mu$ is an orientable modular metric and $\Gamma$ is an admissible orientation of $H_\mu$).

If $\Gamma,\Gamma'$ are two modular complexes, then their Cartesian product $\Gamma\times \Gamma'$
is defined as the directed graph with the vertex set $V_\Gamma\times V_{\Gamma'}$ and the following 
edge set: there is an edge $(p,p')\rightarrow(q,q')$ 
iff either (i) $p=q$ and $p'\rightarrow_{\Gamma'} q'$, or (ii) $p'=q'$ and $p\rightarrow_\Gamma q$.
The weight of the edge is $w_{\Gamma'}(p'q')$ in the first case and $w_{\Gamma}(pq)$ in the second case.
The $n$-fold Cartesian product $\Gamma\times\ldots\times \Gamma$ is denoted as $\Gamma^n$.

A Cartesian product $\calL\times\calL'$ of two posets $\calL,\calL'$ is defined in a natural way
(with $(p,p')\preceq (q,q')$ iff $p\preceq p$ and $q\preceq q'$). It is straightforward to verify from definitions
that if $\calL,\calL'$ are modular semilattices then so is $\calL\times\calL'$.
If $\calL,\calL'$ are valuated modular semilattices with valuations $v_{\calL},v_{\calL'}$ then
$\calL\times\calL'$ is also assumed to be valuated with the valuation $v_{\calL\times\calL'}(p,p')=v_{\calL}(p)+v_{\calL'}(p')$.

\begin{lemma}[{\cite[Lemma 4.7]{Hirai:0ext}}]\label{lemma:bp-cartesian}
Consider modular complexes $\Gamma,\Gamma'$ and element $(p,p')\in\Gamma\times\Gamma'$. 
\begin{itemize}[noitemsep,topsep=0pt]
\item[{\rm (a)}] Graph $\Gamma\times \Gamma'$ is a modular complex (i.e.\ oriented modular). 
\item[{\rm (b)}] $\calL^\sigma_{(p,p')}(\Gamma\times\Gamma')=\calL^\sigma_{p}(\Gamma)\times \calL^\sigma_{p'}(\Gamma')$ for $\sigma\in\{-,+\}$. 
\end{itemize}
\end{lemma}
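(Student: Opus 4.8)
The plan is to handle the two parts separately; for part~(a) I would reduce, via Theorem~\ref{th:orbits}(b), to a statement about the \emph{unweighted} product graph together with orbit-invariance of the weights, and part~(b) is a purely combinatorial statement about cube subgraphs of a Cartesian product. For part~(a) the starting point is the elementary fact that the path metric of a Cartesian product is the sum of the path metrics of the factors, both for the given edge lengths ($\mu_{\Gamma\times\Gamma'}=\mu_\Gamma+\mu_{\Gamma'}$) and for unit lengths ($d_{\Gamma\times\Gamma'}=d_\Gamma+d_{\Gamma'}$). Consequently metric intervals split as products, $I((p,p'),(q,q'))=I_\Gamma(p,q)\times I_{\Gamma'}(p',q')$, and a triple of vertices has a median iff each of its two coordinate triples does, so $(V\times V',E,1)$ is modular; a short check that every edge of the product satisfies the strict triangle inequality with respect to $\mu_\Gamma+\mu_{\Gamma'}$ (using that each edge of $\Gamma$, resp.\ $\Gamma'$, does) also shows that the weighted product is exactly $H_{\mu_\Gamma+\mu_{\Gamma'}}$, so Theorem~\ref{th:orbits}(b) is applicable. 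Next I would classify the $4$-cycles of $\Gamma\times\Gamma'$: every $4$-cycle is either a copy of a $4$-cycle of $\Gamma$ inside a fibre $\Gamma\times\{p'\}$, a copy of a $4$-cycle of $\Gamma'$ inside a fibre $\{p\}\times\Gamma'$, or a ``mixed'' square spanned by an edge $ab$ of $\Gamma$ and an edge $cd$ of $\Gamma'$. In a mixed square the two opposite horizontal edges both have weight $w_\Gamma(ab)$ and the two opposite vertical edges both have weight $w_{\Gamma'}(cd)$, whereas in a fibre $4$-cycle orbit-invariance of the weights comes from orbit-invariance of $w_\Gamma$, resp.\ $w_{\Gamma'}$ (Theorem~\ref{th:orbits}(a), since $\Gamma,\Gamma'$ are modular); hence vertex-disjoint opposite edges always carry equal weight. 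Likewise, in a mixed square both horizontal edges point in the direction of $a\to_\Gamma b$ and both vertical edges in the direction of $c\to_{\Gamma'}d$, so the admissibility implication holds trivially there, and in a fibre $4$-cycle it follows from admissibility of $o_\Gamma$, resp.\ $o_{\Gamma'}$. Thus $(V\times V',E,1)$ is orientable modular, the weights are orbit-invariant, and the product orientation is admissible; by Theorem~\ref{th:orbits}(b) the underlying weighted graph of $\Gamma\times\Gamma'$ is orientable modular and the product orientation is an admissible orientation of it, i.e.\ $\Gamma\times\Gamma'$ is a modular complex.

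For part~(b) it suffices, since $\calL^\sigma_{(p,p')}$ is determined by the relation $\sqsubseteq$ (and the orders and valuations on the two sides of the claimed equality then automatically agree: the order on the product is the product order, and $v(a)=\mu(p,a)$ by Lemma~\ref{lemma:GammaIdeal}), to prove that $(p,p')\sqsubseteq(q,q')$ iff $p\sqsubseteq q$ and $p'\sqsubseteq q'$; the case $\sigma=-$ reduces to $\sigma=+$ by reversing all edges, since the reverse of a modular complex is again a modular complex and $\overline{\Gamma\times\Gamma'}=\bar\Gamma\times\bar\Gamma'$. For the ``if'' direction, if $B\subseteq\Gamma$ and $B'\subseteq\Gamma'$ are cubes realizing $p\sqsubseteq q$ and $p'\sqsubseteq q'$, with $B\cong\{0,1\}^k$ and $B'\cong\{0,1\}^{k'}$ as oriented Hasse diagrams, then $B\times B'\cong\{0,1\}^{k+k'}$ is a cube subgraph of $\Gamma\times\Gamma'$, oriented consistently with $\Gamma\times\Gamma'$ (its edges are edges of $B$ or of $B'$), with source $(p,p')$ and sink $(q,q')$. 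For the ``only if'' direction, let $B''\subseteq\Gamma\times\Gamma'$ be a cube from $(p,p')$ to $(q,q')$, and call an edge of the product \emph{horizontal} if it changes the first coordinate and \emph{vertical} otherwise. Using the $4$-cycle classification from part~(a), in each $2$-face of $B''$ opposite edges are of the same kind, so being horizontal is constant on each coordinate direction of $B''$; hence $B''\cong\{0,1\}^k\times\{0,1\}^{k'}$ with the first factor indexing the horizontal directions. On the sub-cube $\{0,1\}^k\times\{0\}^{k'}$ through the source the second coordinate is constant equal to $p'$, so the first-coordinate projection maps it injectively and orientation-preservingly onto a cube subgraph of $\Gamma$ with source $p$; its sink is the first coordinate of the vertex $(1^k,0^{k'})$ of $B''$, which equals $q$ because the path inside $B''$ from $(1^k,0^{k'})$ to the sink $(1^k,1^{k'})$ changes only vertical coordinates. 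Hence $p\sqsubseteq q$, and symmetrically $p'\sqsubseteq q'$.

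The routine metric bookkeeping and case analysis over $4$-cycles in part~(a) is the most tedious piece, but the genuinely delicate point is the ``only if'' direction of part~(b): one must know that the coordinate directions of $B''$ partition cleanly into horizontal and vertical ones, i.e.\ that no coordinate direction of $B''$ contains both a horizontal and a vertical edge of the product — which is exactly the place where one uses that no $4$-cycle of a Cartesian product has three edges changing one coordinate and one edge changing the other.
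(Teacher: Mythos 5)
The paper does not prove Lemma~\ref{lemma:bp-cartesian} itself; it imports it from \cite[Lemma~4.7]{Hirai:0ext}, so there is no in-paper proof to compare against. Your argument is correct and follows the natural route — indeed the same route the present paper takes when it proves the closely related Lemma~\ref{lemma:ebp-cartesian} and Theorem~\ref{th:subdivision:modular}: reduce modularity of the product to Lemma~\ref{lemma:modular:characterization} via Theorem~\ref{th:orbits}(b), and classify the $4$-cycles of $\Gamma\times\Gamma'$ into the two fibre types and the mixed square to read off orbit-invariance of the weights and admissibility of the orientation. The genuinely delicate point of part~(b) — that every coordinate direction of a cube subgraph $B''\subseteq\Gamma\times\Gamma'$ consists of edges belonging to only one factor — you justify correctly: in any $4$-cycle of the product there is no $3$–$1$ split between the factors because each factor is bipartite, so opposite edges of every $2$-face of $B''$ belong to the same factor, and parallel edges of a cube are linked by a chain of $2$-faces.

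Two small remarks. First, the detour through the reversed complex for $\sigma=-$ is unnecessary: once you have the equivalence $(p,p')\sqsubseteq(q,q')$ iff $p\sqsubseteq q$ and $p'\sqsubseteq q'$, both $\calL^+_{(p,p')}$ and $\calL^-_{(p,p')}$ follow by plugging in the appropriate side. Second, your remark that checking the strict-triangle inequality for edges of $\Gamma\times\Gamma'$ already gives $H_{\mu_\Gamma+\mu_{\Gamma'}}=\Gamma\times\Gamma'$ is correct, but it is worth stating the hidden half explicitly: because $\mu_\Gamma+\mu_{\Gamma'}$ is by construction the path metric of the product graph, $H_{\mu_\Gamma+\mu_{\Gamma'}}$ is automatically a subgraph of $\Gamma\times\Gamma'$, so only the containment you checked is nontrivial.
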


\begin{theorem}[{\cite[Theorem 4.8 and Lemma 4.9]{Hirai:0ext};\cite[Lemmas 4.1, 4.2 and Proposition 4.4]{Hirai:Lconvexity}}]\label{th:addition:orig}
Consider modular complexes $\Gamma,\Gamma'$ and functions $f,f':\Gamma\rightarrow \overline{\mathbb R}$ and $\tilde f:\Gamma\times\Gamma'\rightarrow \overline{\mathbb R}$. 
\begin{itemize}[noitemsep,topsep=0pt]
\item[{\rm (a)}] If $f,f'$ are $L$-convex  on $\Gamma$ then $f+f'$ and $c\cdot f$ for $c\in\mathbb R_+$ are also $L$-convex on $\Gamma$. 
\item[{\rm (b)}] If $f$ is $L$-convex on $\Gamma$ and $\tilde f(p,p')=f(p)$ for $(p,p')\in \Gamma\times\Gamma'$ then $\tilde f$ is $L$-convex on $\Gamma\times\Gamma'$. 
\item[{\rm (c)}] If $\tilde f$ is $L$-convex on $\Gamma\times\Gamma'$ and $f(p)=\tilde f(p,p')$ for fixed $p'\in\Gamma'$
then $f$ is $L$-convex on $\Gamma$. 
\item[{\rm (d)}] The indicator function $\delta_U:V\rightarrow \{0,\infty\}$ of a $d_\Gamma$-convex set $U$ is $L$-convex on $\Gamma$. 
\item[{\rm (e)}]  Function $\mu_\Gamma:\Gamma\times \Gamma\rightarrow\mathbb R_+$ is $L$-convex on $\Gamma\times \Gamma$.
\item[{\rm (f)}]  If $f$ is $L$-convex on $\Gamma$ then the restriction of $f$ to $\calL^\sigma_p(\Gamma)$ is submodular on $\calL^\sigma_p(\Gamma)$ for every
$p\in\Gamma$ and $\sigma\in\{-,+\}$.
\end{itemize}
\end{theorem}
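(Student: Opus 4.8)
\emph{Proof plan.}
The six assertions restate, in the notation of this section, the results of Hirai cited in the theorem, and the plan is to reduce each of them either to a manipulation of the $2$-subdivision together with Lemmas~\ref{lemma:GammaIdeal} and~\ref{lemma:bp-cartesian}, or to one of a short list of closure properties of submodular functions on valuated modular semilattices that will be established in Section~\ref{sec:submodularity-on-semilattice}: that they form a convex cone; that adding a constant preserves submodularity; that the cylindrical lift of a submodular function to a product $\calL\times\calL'$ is submodular; that the restriction of a submodular function to a convex sub-semilattice is submodular; and that the indicator of a convex subset of a valuated modular semilattice is submodular. Recall that $f$ is $L$-convex on $\Gamma$ precisely when $\dom f$ is connected in $\Gamma^{\mathsmaller\sqsubset}$ and, for every $p$, the function $f^\ast|_{\calL_p^\ast}$ is submodular on the valuated modular semilattice $\calL_p^\ast=\calL_{[p,p]}^\uparrow(\Gamma^\ast)$, where $f^\ast([p,q])=f(p)+f(q)$.

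The connectivity condition will be dispatched uniformly. The effective domain of an $L$-convex function on $\Gamma$ belongs to the class of sets whose indicator is $L$-convex; by~\cite{Hirai:0ext,Hirai:Lconvexity} this class contains every $d_\Gamma$-convex set and is closed under intersection, Cartesian products, and coordinate sections. Hence in (a) the domain $\dom f\cap\dom f'$, in (b) the domain $\dom f\times V_{\Gamma'}$, and in (c) the relevant coordinate section of $\dom\tilde f$ are all connected in $\Gamma^{\mathsmaller\sqsubset}$; in (d) the domain is the $d_\Gamma$-convex set $U$; and in (e), (f) it is all of the relevant vertex set. So only the submodularity half of the definition needs attention.

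Two facts about $2$-subdivision will do the bookkeeping. First, $2$-subdivision commutes with Cartesian products: if $(p,p')\sqsubseteq(q,q')$ --- equivalently, by Lemma~\ref{lemma:bp-cartesian}(b), $p\sqsubseteq q$ and $p'\sqsubseteq q'$ --- then $[(p,p'),(q,q')]=[p,q]\times[p',q']$ and the covering relations match, so $(\Gamma\times\Gamma')^\ast\cong\Gamma^\ast\times(\Gamma')^\ast$; applying Lemma~\ref{lemma:bp-cartesian}(b) inside $\Gamma^\ast\times(\Gamma')^\ast$ and using well-orientedness of $2$-subdivisions gives $\calL_{(p,p')}^\ast(\Gamma\times\Gamma')=\calL_p^\ast(\Gamma)\times\calL_{p'}^\ast(\Gamma')$ as valuated modular semilattices. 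Second, the sets $\{[p,b]:p\sqsubseteq b\}$ and $\{[a,p]:a\sqsubseteq p\}$ are convex sub-semilattices of $\calL_p^\ast(\Gamma)$, isomorphic as valuated semilattices to $\calL_p^+(\Gamma)$ and $\calL_p^-(\Gamma)$ respectively. With these, \textbf{(a)} follows from $(f+f')^\ast=f^\ast+(f')^\ast$, $(cf)^\ast=c\,f^\ast$, and the cone property; \textbf{(b)} because $\tilde f^\ast([(a,a'),(b,b')])=f(a)+f(b)=f^\ast([a,b])$ makes $\tilde f^\ast|_{\calL_{(p,p')}^\ast(\Gamma\times\Gamma')}$ the cylindrical lift of the submodular $f^\ast|_{\calL_p^\ast(\Gamma)}$; \textbf{(c)} because $f^\ast([p,q])=\tilde f^\ast([(p,p'),(q,p')])$ identifies $f^\ast|_{\calL_p^\ast(\Gamma)}$ with the restriction of the submodular $\tilde f^\ast|_{\calL_{(p,p')}^\ast}$ to the convex sub-semilattice $\calL_p^\ast(\Gamma)\times\{\bot\}$, where $\bot=[p',p']$ is the bottom of $\calL_{p'}^\ast(\Gamma')$; and \textbf{(f)} because restricting the submodular $f^\ast|_{\calL_p^\ast}$ to the two sub-semilattices above yields $b\mapsto f(p)+f(b)$ on $\calL_p^+(\Gamma)$ and $a\mapsto f(a)+f(p)$ on $\calL_p^-(\Gamma)$, which are submodular after deleting the additive constant.

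What remains --- and what carries the geometric content --- is \textbf{(d)} and \textbf{(e)}, and the hard part is \textbf{(e)}. For (d), $\delta_U^\ast|_{\calL_p^\ast}$ is the indicator of $C_p:=\{[a,b]\in\calL_p^\ast(\Gamma):a,b\in U\}$; using that $\calL_p^\downarrow$ and $\calL_p^\uparrow$ are $d_\Gamma$-convex (Lemma~\ref{lemma:GammaIdeal}(b)) and that the two endpoint projections realize $\calL_p^\ast(\Gamma)$ metrically as an $\ell_1$-subproduct of $\calL_p^\downarrow\times\calL_p^\uparrow$, one checks that $C_p$ is a convex subset of $\calL_p^\ast(\Gamma)$, whence its indicator is submodular. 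For (e) there is no formal reduction: one must verify directly that the pull-back of $\mu_\Gamma$ through $2$-subdivision, which on $\calL_p^\ast(\Gamma)\times\calL_{p'}^\ast(\Gamma)$ takes the value $\mu_\Gamma(u,u')+\mu_\Gamma(w,w')$ at $([u,w],[u',w'])$, satisfies the full system of inequalities defining submodularity on a valuated modular semilattice. This is done by expressing the local distances in terms of the two valuations and checking the inequalities; it is the technically heaviest step and amounts to~\cite[Lemma~4.9]{Hirai:0ext}.
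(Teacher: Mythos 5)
The statement is a cited theorem, and the paper does not prove it directly; the closest analogue in the paper is its own proof of Theorem~\ref{th:ebp-addition} (the extended-complex version), which uses Lemmas~\ref{lemma:ftilde}, \ref{lemma:mu-is-Lconvex}, \ref{lemma:restriction-submodular} and, crucially for connectivity, Theorem~\ref{th:domf:arrow} and Corollary~\ref{cor:domf:arrow}. Your handling of the submodularity half is in agreement with that route: the reduction to cone/cylindrical-lift/restriction properties of submodular functions, the isomorphism $(\Gamma\times\Gamma')^\ast\cong\Gamma^\ast\times(\Gamma')^\ast$ and $\calL^\ast_{(p,p')}\cong\calL^\ast_p\times\calL^\ast_{p'}$, the identification of $\calL^\pm_p$ with convex sub-semilattices of $\calL^\ast_p$ for (f), the convexity of $U^\ast\cap\calL^\ast_p$ for (d), and the observation that (e) has no formal reduction and must be verified by hand all match.

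There is, however, a genuine gap in your uniform treatment of the connectivity condition. For (a) you assert that $\dom(f+f')$ is connected in $\Gamma^{\mathsmaller\sqsubset}$ because ``the class of sets whose indicator is $L$-convex \dots is closed under intersection,'' and for (c) you invoke closure of that same class under coordinate sections. But closure of $L$-convex indicator sets under intersection \emph{is} part (a) restricted to indicator functions, and closure under coordinate sections \emph{is} part (c) restricted to indicators; so, read as a self-contained argument, the dispatch is circular, and read as a citation it merely refers back to the theorem you are supposed to prove. The non-routine content here is that $\dom f$ and $\dom f'$ are connected in $\Gamma^{\mathsmaller\sqsubset}$ for different reasons, and there is no a priori reason for a single $\Gamma^{\mathsmaller\sqsubset}$-path joining $p,q\in\dom f\cap\dom f'$ to stay in both domains. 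The mechanism the paper uses (Theorem~\ref{th:domf:arrow}, Corollary~\ref{cor:domf:arrow}) is to show that the operations $\UP$ and $\DOWN$ map $\dom g$ into itself for \emph{every} $L$-convex $g$, so the canonical path $(p_0,p_1,\dots,p_k)$ built from them is a shortest subpath lying on $I(p,q)$ that is in $\dom g$ for all such $g$ simultaneously; this gives both (a) (take $g=f$ and $g=f'$) and (c) (take $p,q$ with the same $\Gamma'$-coordinate and use $p_i\in I(p,q)$ to keep that coordinate fixed). Without this universal-path argument, or an equivalent device, the connectivity clause in (a) and (c) is unproved.
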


It follows that an instance of $\ZeroExt{\mu}$ can be reduced to the problem of minimizing function $f:\Gamma^n\rightarrow\overline{\mathbb R}$, $n\ge 1$
such that (i) $f$ is $L$-convex on $\Gamma^n$, and (ii) $f(x_1,\ldots,x_n)$ can be written as a sum of unary and pairwise terms
that are $L$-convex on $\Gamma$ and $\Gamma\times\Gamma$, respectively. This minimization problem is considered next.

\myparagraph{Minimizing $L$-convex functions}
A key property of $L$-convex functions is that local optimality implies global optimality.
\begin{theorem}[{\cite[Lemma 2.3]{Hirai:0ext}}]
Let $f:V\rightarrow\overline{\mathbb R}$ be an $L$-convex function on a modular complex~$\Gamma$.
If $p$ is a local minimizer of $f$ on $\calL^\pm_p(\Gamma)$, i.e. $f(p)=\min_{q\in \calL^\pm_p(\Gamma)}f(q)$, then it is also a global minimizer of $f$, i.e.\ 
$f(p)=\min_{q\in V}f(q)$.
\end{theorem}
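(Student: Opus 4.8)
The statement is a \emph{local-to-global} optimality principle of steepest-descent type, and I would prove it by contradiction in the style of discrete convex analysis. Assume $f(p)=\min_{q\in\calL^\pm_p(\Gamma)}f(q)$ yet $f(p)>m:=\min_{q\in V}f(q)$. Then $f(p)<\infty$, so $p\in\dom f$, and so does every global minimizer of $f$; since $\dom f$ is connected in $\Gamma^{\mathsmaller\sqsubset}$ (part of the definition of $L$-convexity), the $\Gamma^{\mathsmaller\sqsubset}$-distance $d(p,q)$ is finite for every global minimizer $q$. Fix a global minimizer $q$ minimizing $d(p,q)$. If $d(p,q)=0$ then $q=p$, contradicting $f(p)>m$; if $d(p,q)=1$ then $q\in\calL^\pm_p(\Gamma)$, so $f(q)\ge f(p)>m$, again a contradiction. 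Hence $d(p,q)\ge2$, and it suffices to exhibit a point $q'$ with $f(q')\le f(q)$ and $d(p,q')<d(p,q)$: such $q'$ is again a global minimizer, strictly closer to $p$, contradicting the choice of $q$.

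The construction of $q'$ is the heart of the matter. Let $r\ne p$ be the vertex adjacent to $p$ on a shortest $p$–$q$ path in $\Gamma^{\mathsmaller\sqsubset}$, so $r\in\calL^\pm_p(\Gamma)$ and $d(r,q)=d(p,q)-1$; by Theorem~\ref{th:orbits} one may equally work with geodesics in the unit-weight graph $(V,E,1)$. Using the median/gate structure of modular complexes (Lemma~\ref{lemma:GammaIdeal}) one selects a base point $s$ and a sign $\sigma\in\{-,+\}$ so that $p$ and a suitable ``one step of $q$ toward $p$'', call it $q'$, both lie in the valuated modular semilattice $\calL^\sigma_s(\Gamma)$, with an element $\hat p\in\calL^\pm_p(\Gamma)$ playing the role of the join (or meet) of $p$ and $q$ there. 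By Theorem~\ref{th:addition:orig}(f) the restriction of $f$ to $\calL^\sigma_s(\Gamma)$ is submodular on this valuated modular semilattice, so the submodularity inequalities (Section~\ref{sec:submodularity-on-semilattice}), applied to $p$ and $q$, give $f(p)+f(q)\ge f(\hat p)+f(q')$ with $d(p,q')<d(p,q)$. Local minimality of $p$ on $\calL^\pm_p(\Gamma)$ yields $f(\hat p)\ge f(p)$, hence $f(q')\le f(q)=m$, which is exactly the $q'$ we wanted.

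The step I expect to be genuinely delicate is this exchange inequality. On the integer lattice it is just two applications of (translation) submodularity to explicit componentwise roundings of $p$ toward $q$ and of $q$ toward $p$; on a modular complex one must use the gate/median geometry to pin down the correct local semilattice $\calL^\sigma_s$ and to verify that $\hat p$ and $q'$ really are its meet and join of $p$ and $q$, so that the (lengthy) submodularity inequalities of Section~\ref{sec:submodularity-on-semilattice} apply — this is precisely where the full force of $L$-convexity, namely submodularity of $f^\ast$ on every $\calL_p^\ast$ as packaged by Theorem~\ref{th:addition:orig}(f), gets consumed. A tidy way to organize this is to transport everything to the $2$-subdivision $\Gamma^\ast$, which is well-oriented, so there $\sqsubseteq$ and $\preceq$ agree and $\calL^\pm_{[p,p]}(\Gamma^\ast)=\calL_p^\ast(\Gamma)$; one checks that local minimality of $f$ on $\calL^\pm_p(\Gamma)$ forces local minimality of $f^\ast$ on $\calL_p^\ast(\Gamma)$ (using that the interval $[a,b]$ between the ends of a cube is itself Boolean, so $a\preceq p\preceq b$ implies $a\sqsubseteq p\sqsubseteq b$, whence $f(a),f(b)\ge f(p)$ and $f^\ast([a,b])\ge 2f(p)$), and that $\min f^\ast=2m$ (attained at $[v^\star,v^\star]$ for any minimizer $v^\star$ of $f$) while every minimizer $[a,b]$ of $f^\ast$ has $a,b$ minimizers of $f$; in the well-oriented setting the exchange above is governed directly by meets and joins in $\calL^\uparrow_{(\cdot)}$ and $\calL^\downarrow_{(\cdot)}$, which is cleaner to execute.
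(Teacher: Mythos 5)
Your high-level framing — pick a global minimizer $q$ at minimum $\Gamma^{\mathsmaller\sqsubset}$-distance from $p$, derive a contradiction by producing a global minimizer $q'$ strictly closer to $p$ — is a sound skeleton and is indeed how discrete-convex local-to-global proofs usually begin. But the step you flag as delicate is not merely delicate; as written it cannot work. You propose to apply the submodularity inequality to the pair $(p,q)$ inside some valuated modular semilattice $\calL^\sigma_s(\Gamma)$. However, $L$-convexity (via Theorem~\ref{th:addition:orig}(f), or submodularity of $f^\ast$ on $\calL^\ast_y$) only gives you submodularity on the \emph{local} semilattices $\calL^+_s$, $\calL^-_s$, $\calL^\ast_s$. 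Every element of $\calL^\sigma_s$ is a $\Gamma^{\mathsmaller\sqsubset}$-neighbour of $s$, so two points in a common $\calL^\sigma_s$ are at $\Gamma^{\mathsmaller\sqsubset}$-distance at most $2$ from each other. Since $q$ is the nearest global minimizer and can have $d^{\mathsmaller\sqsubset}(p,q)$ arbitrarily large, there is in general no $s,\sigma$ with $p,q\in\calL^\sigma_s$, and the one-shot exchange $f(p)+f(q)\ge f(\hat p)+f(q')$ simply cannot be invoked. Passing to $\Gamma^\ast$ does not repair this: $[p,p]$ and $[q,q]$ are still far apart in $(\Gamma^\ast)^{\mathsmaller\sqsubset}$, and submodularity there is again only available on the local semilattices $\calL^\ast_x$.

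The paper's proof fills exactly this hole with an intermediate structural result: connectivity of sublevel sets (Lemma~\ref{lemma:levelset:connectivity}), proven inductively via a \emph{local rerouting} of $\Gamma^{\mathsmaller\sqsubset}$-paths. One takes a shortest path $(p_0,\dots,p_k)$ in $\Gamma^{\mathsmaller\sqsubset}$ within the level set reaching a strictly smaller value, and shows $k=1$ by rerouting the last three vertices $(p_{k-2},p_{k-1},p_k)$ around $p_{k-1}$. This rerouting (Corollary~\ref{cor:ALSLFKSAHG}) is where submodularity is actually consumed, always in a semilattice centred on the \emph{middle} vertex $y=p_{k-1}$ — either a single application of Lemma~\ref{lemma:LASKFHASGASFASF} on $\calL^+_y$ when $x\sqsupset y\sqsubset z$, or the considerably subtler Lemma~\ref{lemma:ALISFHLAGSLASG} on $\calL^\ast_y$ when $x\sqsubset y\sqsubset z$ (this second case is the one that genuinely requires the 2-subdivision and the full envelope inequality~\eqref{eq:submodular:b}, not just the bounded-pair inequality your proposal implicitly relies on). So the missing idea in your proposal is that the descent step must be produced by a chain of \emph{local} submodularity applications along a $\Gamma^{\mathsmaller\sqsubset}$-path, not by one global exchange between $p$ and $q$; and the level-set connectivity lemma is what guarantees such a path exists inside $\dom f$ in the first place.
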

This theorem implies that the {\em Steepest Descent Algorithm} (SDA) given below is guaranteed to produce a global minimizer of $f$
after a finite number of steps.

\begin{algorithm}[H]
  \DontPrintSemicolon
  pick arbitrary $x\in\dom f$ \\
  \While{\tt true}
      {
        compute $y^+\in\argmin \,\{f(y)\:|\:y\in\calL^+_x(\Gamma^n)\} $ and $y^-\in\argmin \, \{f(y)\:|\:y\in\calL^-_x(\Gamma^n)\} $ \\
        pick $y\in\argmin\,\{f(y)\:|\:y\in\{y^-,y^+\}\} $ \\
        if $f(y) = f(x)$ then return $x$, otherwise update $x:= y$
      }
      \caption{Steepest Descent for minimizing $f:\Gamma^n\rightarrow\overline{\mathbb R}$ }
      \label{alg:SDA}
\end{algorithm}
By Lemma~\ref{lemma:bp-cartesian}, for $x=(x_1,\ldots,x_n)$ we have $\calL^\sigma_x(\Gamma^n)=\calL^\sigma_{x_1}(\Gamma)\times\ldots\times\calL^\sigma_{x_n}(\Gamma)$
for $\sigma\in\{-,+\}$. By Theorem~\ref{th:addition:orig}(f), function $f$ is submodular on $\calL^\sigma_x(\Gamma^n)$.
The result below thus implies that the two minimization problems in line 3 can be
solved in polynomial time assuming that $f$ is an instance of $\ZeroExt{\mu}$.
\begin{theorem}[{\cite[Theorem 3.9]{Hirai:0ext}}]\label{th:BLP-solves-submodular}
Consider VCSP instance $f:\calL_1\times\ldots\times\calL_n\rightarrow\overline{\mathbb R}$.
Suppose that $\calL_1,\ldots,\calL_n$ are valuated modular semilattices and function $f$ is submodular
on $\calL=\calL_1\times\ldots\times\calL_n$. Then BLP relaxation solves $f$.
\end{theorem}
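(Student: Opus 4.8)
The plan is to show that an optimal solution of the BLP relaxation of $f$ can be transformed into an integral one of no larger cost, so that its value equals $\min_{x\in\calL}f(x)$. The engine is a \emph{fractional meet--join} operation available on any valuated modular semilattice, which I will take from Section~\ref{sec:submodularity-on-semilattice}: for every pair $p,q\in\calL$ there is a probability distribution $\omega_{p,q}$ on $\calL\times\calL$ — the fractional analogue of $(p\wedge q,\,p\vee q)$ — such that (i) $\omega_{p,q}$ is \emph{valuation-preserving}, $\mathbb{E}_{(r,s)\sim\omega_{p,q}}[v(r)+v(s)]=v(p)+v(q)$, with every $r$ in its support satisfying $p\wedge q\preceq r$; and (ii) the submodularity axioms for a function $g$ on $\calL$ are equivalent to the single averaged inequality $g(p)+g(q)\ge\mathbb{E}_{(r,s)\sim\omega_{p,q}}[g(r)+g(s)]$ for all $p,q$. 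Since a product of valuated modular semilattices is again one, $\omega$ extends coordinatewise to $\calL=\calL_1\times\dots\times\calL_n$, and by hypothesis the resulting product operation does not increase $f$ on average; the same holds for each term of the instance on its scope.

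Given this, I would run the standard uncrossing argument. The BLP relaxation assigns to each variable $i$ a distribution $\mu_i$ on $\calL_i$ and to each term a distribution on its scope, all mutually consistent, and its value is a lower bound on $\min f$; so it suffices to produce an integral point of cost at most the BLP optimum. Among optimal BLP solutions, choose one minimizing the secondary potential $\Phi=\sum_{i}\mathbb{E}_{p,q\sim\mu_i}\big[\operatorname{dist}_i(p,q)^2\big]$, where $\operatorname{dist}_i$ is the path metric of the weighted covering graph of $\calL_i$; this $\Phi$ vanishes exactly when every $\mu_i$ is a point mass. If some $\mu_i$ is not a point mass, there are two distinct tuples in the support of some term's distribution; replace an $\varepsilon$-fraction of that crossing pair by its coordinatewise image under $\omega$, propagating the change consistently through all term distributions. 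By (ii), applied termwise, the primary objective does not increase, so optimality is preserved; by valuation-preservation and the modular-semilattice geometry, each coordinate in which the two tuples differ is ``pulled inward'', so $\Phi$ strictly decreases — a contradiction. Hence every $\mu_i$ is a point mass, the solution is integral of cost equal to the BLP optimum, and the theorem follows.

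The main obstacle is exactly the structural input packaged as (i)--(ii), i.e.\ the content deferred to Section~\ref{sec:submodularity-on-semilattice}: constructing $\omega_{p,q}$, proving it is valuation-preserving, showing that the (fairly intricate) submodularity inequalities are equivalent to the one averaged inequality, and — the delicate step for the uncrossing — verifying that the replacement strictly reduces the second-moment potential $\Phi$ while keeping the BLP solution feasible. In a distributive lattice this is the classical $(\min,\max)$ uncrossing, with $\omega_{p,q}$ the Dirac mass at $(p\wedge q,\,p\vee q)$; in the merely modular, join-incomplete setting one must invoke the modular law in full, together with the good behaviour of geodesics in valuated modular semilattices coming from Lemma~\ref{lemma:GammaIdeal} and Theorem~\ref{th:orbits}, to make $\omega_{p,q}$ well-defined, balanced, and strictly contracting. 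A secondary technical point is the bookkeeping with $\infty$: one must check that $\dom f$ is closed under the fractional meet--join, so that uncrossing never leaves it, and that the feasibility component of the BLP correctly witnesses nonemptiness of $\dom f$; both follow once the equivalence in (ii) is set up over $\overline{\RR}$.
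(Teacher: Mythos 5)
This theorem is imported into the paper as a citation (\cite[Theorem 3.9]{Hirai:0ext}); the paper itself does not prove it, so there is no ``paper's own proof'' to compare against. That said, the route the paper's machinery points to is to exhibit, for the language of submodular functions on $\calL$, a \emph{symmetric fractional polymorphism of every arity} and then invoke Theorem~\ref{th:BLP:characterization} (which, unlike Theorem~\ref{th:KHDAHLKDAG}, handles $\overline{\mathbb R}$-valued functions and therefore matches the present theorem's hypotheses). Your proposal instead tries to round the BLP directly by an uncrossing argument; that is a legitimate alternative direction, but it has several concrete gaps.

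First, your structural input (i) is false. With the paper's fractional meet--join (the distribution putting $r=p\wedge q$ deterministically and $s=p\vee_\theta q$ for $\theta$ uniform on $\Theta$), one does \emph{not} have $\mathbb{E}[v(r)+v(s)]=v(p)+v(q)$ in general. For an antipodal pair $p,q$, writing $A=v[p\wedge q,p]$, $B=v[p\wedge q,q]$ and $\theta_0=A/(A+B)$, a short computation gives $\mathbb{E}[v(r)+v(s)]=v(p)+v(q)-\tfrac{2AB}{A+B}$, which is strictly smaller whenever $p,q$ are incomparable. So the valuation is only \emph{submodular} under the operation, not preserved; the ``pulled inward'' geometric picture you use to argue that $\Phi$ shrinks cannot lean on equation (i) as stated.

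Second, the choice of potential $\Phi=\sum_i\mathbb{E}_{p,q\sim\mu_i}[\operatorname{dist}_i(p,q)^2]$ and the assertion that an uncrossing step strictly decreases it is the whole content of the argument, and it is not established. The uncrossing you describe acts on a term's joint distribution; the induced change on the \emph{marginals} $\mu_i$ can be zero even when a crossing pair is genuinely replaced (already in the distributive two-element case, replacing $(0,1),(1,0)$ by $(0,0),(1,1)$ leaves each marginal at $\tfrac12\delta_0+\tfrac12\delta_1$). A marginal-based potential therefore need not move. Moreover, the BLP only enforces termwise/pairwise consistency, not existence of a global joint distribution, so ``propagate the change consistently through all term distributions'' requires an explicit construction; this step is where the modular-semilattice combinatorics actually enter (medians, gates, the behaviour of $\vee_\theta$ under projection onto intervals), and it cannot be waved through. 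Finally, because $f$ is $\overline{\mathbb R}$-valued, even the clean indirect route needs fractional polymorphisms of \emph{all} arities $m\ge 2$ per Theorem~\ref{th:BLP:characterization}; a binary fractional meet--join alone does not suffice without an argument that it generates the higher-arity ones, and that is a nontrivial piece of Hirai's original proof that your sketch does not address.
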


To show that $\ZeroExt{\mu}$ can be solved in polynomial time, a few additional definitions and results are needed.
Elements $p,q$ of a modular complex $\Gamma$ are said to be {\em $\DELTANEIB$-neighbors} if $p,q\in[a,b]$ for some $a,b$ with $a\sqsubseteq b$.
Equivalently, $p,q$ are $\DELTANEIB$-neighbors if $p\wedge q,p\vee q$ exist and $p\wedge q\sqsubseteq p\vee q$.
Let $\DELTA{\Gamma}$ be an undirected unweighted graph on nodes $V_\Gamma$ such that $p,q$ are connected by an edge in $\DELTA{\Gamma}$ if and only if $p,q$ are $\DELTANEIB$-neighbors.
This graph is called a {\em thickening of $\Gamma$}. The distance from $p$ to $q$ in $\DELTA{\Gamma}$ will be denoted as $\DELTA{d}_\Gamma(p,q)$, or 
simply as $\DELTA{d}(p,q)$. These distances 
will give
a bound on the number of steps of SDA.

\begin{theorem}[{\cite[Theorem 4.7,Lemma 2.18]{Hirai:Lconvexity}}]\label{th:SDA:orig}
{
Suppose that modular complex $\Gamma$ is well-oriented,
and function $f:\Gamma^n\rightarrow\overline{\mathbb R}$ is $L$-convex on $\Gamma^n$.
SDA terminates after at most $2+\max\limits_{i\in [n]}\DELTA{d}_\Gamma(x_i,{\tt opt}_i(f))$ iterations
where $x$ is the initial vertex and ${\tt opt}_i(f)=\{x^\ast_i\:|\:x^\ast\in\argmin f\}\subseteq V_\Gamma$ is the set of minimizers of $f$ projected to the $i$-th component. 
}
\end{theorem}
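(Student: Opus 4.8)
The plan is to identify a nonnegative integer potential on iterates that strictly decreases on every non-terminating iteration of SDA, so that the iteration count exceeds its initial value by at most $2$. For $x=(x_1,\ldots,x_n)\in\Gamma^n$ put
\[
\Phi(x)\;=\;\max_{i\in[n]}\DELTA{d}_\Gamma\bigl(x_i,{\tt opt}_i(f)\bigr)\;\in\;\ZZ_{\ge 0}.
\]
Before the main argument I would record two facts. First, since $x\in\calL^+_x(\Gamma^n)\cap\calL^-_x(\Gamma^n)$ we always have $f(y)\le f(x)$ in line~4, and if SDA does not return then $f(y)<f(x)$; so no non-terminating iteration fails to make strict progress in $f$-value. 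Second, by Lemma~\ref{lemma:bp-cartesian} the two search spaces in line~3 factor as $\calL^\sigma_x(\Gamma^n)=\prod_i\calL^\sigma_{x_i}(\Gamma)$ for $\sigma\in\{-,+\}$; since $\Gamma$ is well-oriented these are the order filter and the order ideal of $x$ componentwise, and by Theorem~\ref{th:addition:orig}(f) the restriction of $f$ to each is submodular on a product of valuated modular semilattices, so $y^+,y^-$ are honest minimizers over those boxes (computable by Theorem~\ref{th:BLP-solves-submodular}). The theorem then reduces to two claims.

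\emph{Claim (A).} If $\Phi(x)\ge 1$, the output $y$ satisfies $\DELTA{d}_\Gamma(y_i,{\tt opt}_i(f))\le\DELTA{d}_\Gamma(x_i,{\tt opt}_i(f))$ for all $i$, with strict inequality whenever $\DELTA{d}_\Gamma(x_i,{\tt opt}_i(f))=\Phi(x)$; hence $\Phi(y)\le\Phi(x)-1$.

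\emph{Claim (B).} If $\Phi(x)=0$, then either $x\in\argmin f$, or one iteration from $x$ produces a point of $\argmin f$.

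Granting both, iterations $1,\ldots,\Phi(x^{(0)})$ drive the potential to $0$ without returning, the next iteration either returns or lands in $\argmin f$, and the one after that returns; this gives the bound $2+\Phi(x^{(0)})=2+\max_i\DELTA{d}_\Gamma(x_i,{\tt opt}_i(f))$. For Claim~(B) I would first show (or cite) that the minimizers of a submodular function on a product of valuated modular semilattices are closed under the partial meet and join of each factor; together with the local-to-global principle this yields a minimizer $x^\ast$ inside the box $\prod_i{\tt opt}_i(f)$, and after a harmless up- or down-move one may assume $x$ and $x^\ast$ are coordinatewise $\DELTANEIB$-neighbors with $x^\ast$ in $\calL^+_x(\Gamma^n)$ or $\calL^-_x(\Gamma^n)$; submodularity on that box then forces $y^+$ (resp.\ $y^-$) to be a global minimizer.

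The real work, and the step I expect to be hardest, is Claim~(A); it is the modular-complex counterpart of the proximity / steepest-descent estimate for $L^\natural$-convex functions on $\ZZ^n$. Fixing $x$ with $\Phi(x)\ge 1$, pick for each $i$ an $o_i\in{\tt opt}_i(f)$ attaining $\DELTA{d}_\Gamma(x_i,{\tt opt}_i(f))$. The goal is to show the actual output $y$ (i.e.\ $y^+$ or $y^-$, whichever has the smaller $f$-value) moves every coordinate weakly toward $o_i$ and strictly so in the coordinates where the distance equals $\Phi(x)$. I would argue this inside the valuated modular semilattices $\calL^\pm_{x_i}(\Gamma)$ — which are convex in $\Gamma$ and carry the valuation $v(\cdot)=\mu_\Gamma(x_i,\cdot)$ by Lemma~\ref{lemma:GammaIdeal} — by applying $L$-convexity of $f$ to $x$ together with the relevant optimum to exhibit, in the up-box or in the down-box, a single point that is $\DELTANEIB$-closer to all the $o_i$ simultaneously, and then transferring this to the true box-minimizer via submodularity. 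Two features make this delicate: (i) one iteration commits to a purely upward or purely downward move, yet progress is needed in \emph{all} coordinates attaining $\Phi(x)$ at once — as over $\ZZ^n$, this is handled by letting the target optima $o_i$ float (they need not be the coordinates of a single global minimizer) and by a careful analysis of the tie-break in line~4; and (ii) meets and joins are only partially defined on $\Gamma^n$, so every lattice manipulation (rounding $x$ toward the optimum, separating an ``up part'' from a ``down part'') must be localized to the $\calL^\pm_{x_i}$ and justified by the modular-semilattice exchange axioms and additivity of the valuation rather than by global operations on $\Gamma^n$.
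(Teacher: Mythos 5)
The paper itself does not prove Theorem~\ref{th:SDA:orig}; it cites it from Hirai and proves instead the analogous Theorem~\ref{th:SDA} for the $\DELTANEIB$-SDA algorithm. That proof is a \emph{forward} argument: setting $f_k=f+\delta_{\DELTA{B}_k(\bar x)}$ where $\bar x$ is the initial vertex, one shows by induction (using Lemma~\ref{lemma:NGAGASGASGA}, Corollary~\ref{cor:NLAKJDNA}, and convexity of the balls from Theorem~\ref{th:Bkball}) that after $k$ iterations the current point is a minimizer of $f_k$. Your proposal is a \emph{backward} potential argument — a distance-to-optimum potential $\Phi(x)=\max_i\DELTA{d}_\Gamma(x_i,{\tt opt}_i(f))$ that should decrease each iteration — which is a genuinely different decomposition.

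The difficulty is that your Claim~(A) is false as stated. Take $\Gamma=\{0,1\}^2$ (a well-oriented cube), $n=1$, and the submodular (hence $L$-convex) function $f(0,0)=1$, $f(1,0)=0$, $f(0,1)=2$, $f(1,1)=1$, so $\argmin f=\{(1,0)\}$. Start SDA at $x=(0,1)$, so $\Phi(x)=\DELTA{d}((0,1),(1,0))=1$. Then $y^+ = (1,1)$, $y^-=(0,0)$, both with $f$-value $1<2$, and both satisfy $\DELTA{d}(\cdot,(1,0))=1$: the potential does \emph{not} drop. (SDA still meets the stated bound here only because there is a built-in margin of $2$.) The source of the failure is structural: a single SDA step commits to moving purely upward or purely downward in~$\preceq$, while one $\DELTANEIB$-step in general requires both a meet and a join, so from an arbitrary starting point the first SDA step may only ``round'' $x$ to a monotone-optimal point without approaching the minimizers in the $\DELTA{d}$-metric. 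This is precisely the content of the Remark after Theorem~\ref{th:SDA}: when $\sqsubset=\preceq$, $\DELTANEIB$-SDA \emph{becomes} SDA only after an initial up- or down-step; the extra $+1$ in the bound $2+\max_i\DELTA{d}$ compared to $1+\max_i\DELTA{d}$ for $\DELTANEIB$-SDA accounts for exactly that initial rounding. To make your argument correct you would need to replace Claim~(A) with a weaker version that holds only once the iterate is an up- or down-box optimum, add a separate lemma for the first iteration, and crucially exploit the well-orientedness hypothesis (every interval is Boolean), which your sketch mentions only to identify $\calL^\pm_x$ with order ideals but never uses substantively. At that point the argument essentially re-derives the $\DELTANEIB$-SDA machinery, so it would likely be more economical to prove Theorem~\ref{th:SDA} directly and obtain Theorem~\ref{th:SDA:orig} as a corollary via the Remark, as the paper does.

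Separately, your Claim~(B) relies on ``the minimizers are closed under the partial meet/join of each factor.'' This is true (it follows from the submodularity/$\wedge$-convexity inequalities), but you need it in a precise form to place a genuine minimizer $x^\ast$ with $x^\ast_i\in{\tt opt}_i(f)$ for all $i$ simultaneously; the ${\tt opt}_i(f)$ are projections and a priori need not be jointly realizable, so this step also needs an explicit lemma.
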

\begin{theorem}[{\cite[Proposition 6.10]{Chalopin},\cite[Lemma 2.14]{Hirai:Lconvexity}}]\label{th:well-oriented}
If $\Gamma$ is a modular complex
then graph $\Gamma^\ast$ is well-oriented.
\end{theorem}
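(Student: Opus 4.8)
The plan is to prove directly that every $\preceq$‑comparable pair of $\Gamma^\ast$ is a Boolean pair of $\Gamma^\ast$; since the implication $p\sqsubseteq q\Rightarrow p\preceq q$ always holds, this yields $\preceq_{\Gamma^\ast}=\sqsubseteq_{\Gamma^\ast}$, which is exactly well‑orientedness. Because $(V^\ast,E^\ast)$ is the Hasse diagram of the poset $(V^\ast,\subseteq)$, a $\preceq_{\Gamma^\ast}$‑comparable pair is simply a pair $I\subseteq J$ in $V^\ast$; writing $I=[p,q]$ and $J=[p',q']$ (so $p\sqsubseteq_\Gamma q$ and $p'\sqsubseteq_\Gamma q'$), this inclusion unfolds to $p'\preceq_\Gamma p\preceq_\Gamma q\preceq_\Gamma q'$.

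The technical heart is a lattice‑theoretic reformulation of ``Boolean pair''. Fix a modular complex $\Delta$ and elements $x\preceq_\Delta y$; by Lemma~\ref{lemma:GammaIdeal}(b) the interval $[x,y]_\Delta$ is a modular lattice. I would show: $(x,y)$ is a Boolean pair of $\Delta$ if and only if $[x,y]_\Delta$ is complemented. For the forward direction, a cube subgraph $C$ of $\Delta$ with source $x$ and sink $y$ has all its edges among the covering relations of $[x,y]_\Delta$; a rank computation in the modular lattice $[x,y]_\Delta$ shows that the vertex set of $C$ is actually a Boolean sublattice isomorphic to $2^{[k]}$ whose $k$ atoms are independent atoms of $[x,y]_\Delta$ joining to $y$, and the classical fact that a modular lattice of finite length possessing a spanning independent set of atoms is complemented (indeed geometric) finishes this direction. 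For the converse, a complemented modular lattice of finite length is atomistic, and any maximal independent family of its atoms has size $r[x,y]$ and joins to $y$; the Boolean sublattice it generates is, again by a rank argument, a cube subgraph of $\Delta$ from $x$ to $y$. I will also use three standard facts: a complemented modular lattice of finite length is relatively complemented (every interval is complemented — a two‑line modular‑law argument), and the order‑dual and the direct product of complemented modular lattices are again complemented modular.

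Granting the characterization, the theorem follows quickly. From $[p',q']\in V^\ast$ we get that $[p',q']_\Gamma$ is a complemented modular lattice, hence so is each subinterval $[p',p]_\Gamma$, $[q,q']_\Gamma$, and $[r,s]_\Gamma$ for $p'\preceq r\preceq p$, $q\preceq s\preceq q'$; in particular $r\sqsubseteq_\Gamma s$ for all such $r,s$, so the membership constraint defining $V^\ast$ is automatic on this range. Consequently the interval $[I,J]$ in $(V^\ast,\subseteq)$ is
$$[I,J]_{(V^\ast,\subseteq)}=\bigl\{[r,s]\ :\ p'\preceq r\preceq p,\ q\preceq s\preceq q'\bigr\}\ \cong\ \bigl([p',p]_\Gamma\bigr)^{\mathrm{op}}\times[q,q']_\Gamma,$$
the isomorphism sending $[r,s]\mapsto(r,s)$ (inclusion of intervals reverses the first coordinate and preserves the second). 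The right‑hand side is a product of an order‑dual of a complemented modular lattice with a complemented modular lattice, hence itself complemented modular. Applying the characterization to the modular complex $\Gamma^\ast$ then gives that $(I,J)$ is a Boolean pair of $\Gamma^\ast$, i.e.\ $I\sqsubseteq_{\Gamma^\ast}J$. As $I\subseteq J$ was arbitrary, $\preceq_{\Gamma^\ast}$ and $\sqsubseteq_{\Gamma^\ast}$ coincide, so $\Gamma^\ast$ is well‑oriented.

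The main obstacle is the ``Boolean pair $\Leftrightarrow$ complemented interval'' equivalence: converting a cube subgraph into a genuine Boolean sublattice of the interval (and back), and correctly invoking the structure theory of complemented modular lattices of finite length — relative complementation, and the fact that a spanning independent set of atoms forces complementedness. Once this is in hand, the rest is bookkeeping with products and order‑duals of such lattices together with the explicit description of the intervals of $(V^\ast,\subseteq)$.
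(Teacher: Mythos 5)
The paper itself does not prove this theorem; it imports it from \cite[Proposition~6.10]{Chalopin} and \cite[Lemma~2.14]{Hirai:Lconvexity}, so there is no internal argument to compare against. Your proof is correct, and worth noting: the equivalence you re-derive as the ``technical heart'' --- that $(x,y)$ is a Boolean pair of a modular complex if and only if $[x,y]$ is a complemented modular lattice --- already appears in the paper as Proposition~\ref{prop:complemented} (quoted from \cite[Proposition~6.5]{Chalopin}), so you could simply cite it and skip the cube-to-sublattice argument, which is where all the real work in your sketch is hiding (the rank computation via Lemma~\ref{lemma:ascending} does pin down $d(x,v_S)=|S|$ for each cube vertex $v_S$, and independence of the atoms then follows, but this is a nontrivial step). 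Given that equivalence, your reduction is clean and correct: for $I=[p,q]\subseteq J=[p',q']$ in $V^\ast$, so $p'\preceq p\preceq q\preceq q'$, relative complementation of the finite-length complemented modular lattice $[p',q']_\Gamma$ forces every $[r,s]$ with $p'\preceq r\preceq p$ and $q\preceq s\preceq q'$ to be a Boolean pair and hence a vertex of $\Gamma^\ast$; the interval $[I,J]$ in $(V^\ast,\subseteq)$ is therefore order-isomorphic to $\bigl([p',p]_\Gamma\bigr)^{\mathrm{op}}\times[q,q']_\Gamma$, which is complemented modular because complemented modular lattices of finite length are closed under subintervals, order-duals, and products; and applying Proposition~\ref{prop:complemented} to the modular complex $\Gamma^\ast$ (this is \cite[Theorem~4.3]{Hirai:0ext}, restated as a special case of Theorem~\ref{th:subdivision:modular}) yields that $(I,J)$ is a Boolean pair, i.e.\ $\preceq_{\Gamma^\ast}$ and $\BPrel_{\Gamma^\ast}$ coincide. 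This is a valid self-contained proof of the cited fact.
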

\begin{theorem}[{\cite[Proposition 4.9]{Hirai:Lconvexity},\cite[Lemma 4.7]{Hirai:0ext}}]\label{th:fstar-Lconvex}
If $\Gamma$ is a modular complex
and function $f:\Gamma^n\rightarrow\overline{\mathbb R}$ is $L$-convex on $\Gamma^n$ then 
function $f^\ast_\times:(\Gamma^\ast)^n\rightarrow\overline{\mathbb R}$
defined via $f^\ast_\times([x_1,y_1],\ldots,[x_n,y_n])=f(x)+f(y)$
 is $L$-convex on $(\Gamma^\ast)^n$.
\end{theorem}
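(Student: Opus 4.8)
Write $\Delta:=\Gamma^n$ and $g:=f$, so $g$ is $L$-convex on the modular complex $\Delta$. The first step is the reduction to $n=1$ through the canonical identification $(\Gamma^n)^\ast\cong(\Gamma^\ast)^n$ of modular complexes: iterating Lemma~\ref{lemma:bp-cartesian}(b) gives $\calL^+_P(\Gamma^n)=\prod_i\calL^+_{P_i}(\Gamma)$, so $P\sqsubseteq Q$ in $\Gamma^n$ iff $P_i\sqsubseteq Q_i$ for all $i$, and as the order on a product is the product order, $[P,Q]=\prod_i[P_i,Q_i]$; comparing edges and weights then yields the isomorphism $[P,Q]\leftrightarrow([P_1,Q_1],\dots,[P_n,Q_n])$, under which $f^\ast_\times$ becomes exactly the $2$-subdivision extension $g^\ast$ of $g=f$ to $(\Gamma^n)^\ast=\Delta^\ast$. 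Thus it is enough to prove: \emph{if $g$ is $L$-convex on a modular complex $\Delta$, then $g^\ast$ is $L$-convex on $\Delta^\ast$}, which I would do by induction on $|V_\Delta|$. The connectivity condition is quick: $\dom g^\ast=\{[p,q]\in V_{\Delta^\ast}:p,q\in\dom g\}$, and since $\Delta^\ast$ is well-oriented (Theorem~\ref{th:well-oriented}), for $p\sqsubseteq_\Delta q$ we have $\{p\},\{q\}\subseteq[p,q]$, hence $[p,p]\sqsubseteq_{\Delta^\ast}[p,q]\sqsupseteq_{\Delta^\ast}[q,q]$; so every vertex of $\dom g^\ast$ is joined in $(\Delta^\ast)^\sqsubset$ to a diagonal vertex $[p,p]$, and $[p,p],[q,q]$ are joined through $[p,q]$ whenever $p,q\in\dom g$ are adjacent in $\Delta^\sqsubset$, so transporting a path witnessing connectivity of $\dom g$ in $\Delta^\sqsubset$ connects the corresponding diagonal vertices inside $\dom g^\ast$.

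The core is the submodularity condition: for each vertex $v=[a,b]$ of $\Delta^\ast$, the restriction of $(g^\ast)^\ast$ to $\calL^\ast_v(\Delta^\ast)=\calL^\uparrow_{[v,v]}((\Delta^\ast)^\ast)$ must be submodular. Here I would use the decomposition, valid for any well-oriented modular complex $\Delta'$: the map $[v',w']\mapsto(v',w')$ is an isomorphism of valuated modular semilattices $\calL^\ast_v(\Delta')\cong\calL^-_v(\Delta')\times\calL^+_v(\Delta')$ (well-orientedness makes $v'\sqsubseteq w'$ automatic given $v'\preceq v\preceq w'$, and the valuation is additive). Since $(g^\ast)^\ast([v',w'])=g^\ast(v')+g^\ast(w')$ is separable, and a separable function on a product of valuated modular semilattices is submodular iff each summand is submodular on its factor, it suffices to check that $g^\ast$ is submodular on $\calL^+_{[a,b]}(\Delta^\ast)$ and on $\calL^-_{[a,b]}(\Delta^\ast)$. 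One verifies $\calL^+_{[a,b]}(\Delta^\ast)=\calL^\ast_a(\Delta)\cap\calL^\ast_b(\Delta)=\calL^\uparrow_{[a,b]}(\calL^\ast_a(\Delta))$, a principal filter of the valuated modular semilattice $\calL^\ast_a(\Delta)$; since $g$ is $L$-convex on $\Delta$, $g^\ast$ is submodular on $\calL^\ast_a(\Delta)$, and submodularity passes to principal filters (closed under $\wedge$ and under existing $\vee$). For the other factor, $a\sqsubseteq_\Delta b$ forces $[a,b]_\Delta$ to be a cube, so $\calL^-_{[a,b]}(\Delta^\ast)$ is the poset of subintervals of $[a,b]_\Delta$ carrying $g^\ast([a',b'])=g(a')+g(b')$; this is controlled by $g|_{[a,b]_\Delta}$, which is $L$-convex on the cube $[a,b]_\Delta$ (it is the restriction of an $L$-convex function to the convex subcomplex $[a,b]_\Delta\subseteq\calL^+_a(\Delta)$), whence the claim by the inductive hypothesis when $[a,b]_\Delta\subsetneq\Delta$ and by a direct check of the cube case otherwise.

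The main obstacle, I expect, is carrying out the submodularity bookkeeping rigorously: the assertions that submodularity splits over separable functions on products of valuated modular semilattices, that it is inherited by principal filters, and that it behaves well under the doubly subdivided complex $(\Delta^\ast)^\ast$ must be proved from the explicit (and lengthy) inequality system defining submodularity on a valuated modular semilattice, rather than from ordinary lattice submodularity — as must the fact that restriction of an $L$-convex function to a convex subcomplex stays $L$-convex (using Theorem~\ref{th:orbits} to compare the relevant metrics) and the base/cube case. A parallel route that avoids the double subdivision: the reversed complex $\Gamma^{\mathrm{op}}$ is again oriented modular and $L$-convexity is reversal-invariant (in fact $(\Delta^{\mathrm{op}})^\ast=\Delta^\ast$ with the same orientation), so $(\mathbf x,\mathbf y)\mapsto f(\mathbf x)+f(\mathbf y)$ is $L$-convex on $\Gamma^n\times(\Gamma^{\mathrm{op}})^n$ by Theorem~\ref{th:addition:orig}(a),(b); one then identifies $(\Gamma^\ast)^n$ via $([x_i,y_i])_i\mapsto((y_i)_i,(x_i)_i)$ with the subcomplex induced on $W=\{(\mathbf y,\mathbf x):x_i\sqsubseteq_\Gamma y_i\text{ for all }i\}$, on which that function restricts to $f^\ast_\times$, and concludes by adding the indicator $\delta_W$ of the (convex) set $W$ via Theorem~\ref{th:addition:orig}(d),(a). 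This route instead hinges on proving $W$ convex (equivalently that $\{(x,y):x\sqsubseteq_\Gamma y\}$ is convex in $\Gamma\times\Gamma^{\mathrm{op}}$) and that restriction to a convex subcomplex preserves $L$-convexity.
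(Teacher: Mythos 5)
The paper does not give its own proof of Theorem~\ref{th:fstar-Lconvex}; it is quoted from \cite{Hirai:Lconvexity,Hirai:0ext} and used as a black box (and the remark following the definition of $L$-convexity on extended modular complexes even stresses that the naive analogue fails there). So there is nothing in-paper to compare against, and the proposal must be judged on its own. The reduction to $n=1$ via $(\Gamma^n)^\ast\cong(\Gamma^\ast)^n$, the connectivity argument, the well-orientedness splitting $\calL^\ast_v(\Delta^\ast)\cong\calL^-_v(\Delta^\ast)\times\calL^+_v(\Delta^\ast)$, and the identification of $\calL^+_{[a,b]}(\Delta^\ast)$ with the principal filter $\calL^\uparrow_{[a,b]}(\calL^\ast_a(\Delta))$ are all sound ideas. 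But the remaining factor is exactly where the real content lies, and both of your attempts to close it contain false claims rather than just unworked bookkeeping.

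In Route~1, the assertion that $a\sqsubseteq_\Delta b$ ``forces $[a,b]_\Delta$ to be a cube'' is wrong. By Proposition~\ref{prop:complemented}, $a\BPrel_\Delta b$ is equivalent to $[a,b]$ being a \emph{complemented} modular lattice, and these need not be Boolean: in $M_3=\{0,u,v,w,1\}$ (three incomparable atoms between $0$ and $1$), $(0,1)$ is a Boolean pair yet $[0,1]=M_3$ is a $5$-element non-cube lattice. Consequently the base case of your induction ($[a,b]_\Delta=\Delta$) is not ``the cube case'', your proposed ``direct check'' does not cover it, and the induction does not bottom out. In Route~2, the set $W=\{(\mathbf y,\mathbf x):x_i\sqsubseteq_\Gamma y_i\}$ is in general \emph{not} convex in $\Gamma^n\times(\Gamma^{\mathrm{op}})^n$, already for $n=1$: take $\Gamma$ to be the chain $a\to b\to c\to d$, so that $x\sqsubseteq_\Gamma y$ iff $x\preceq y$ and $d(x,y)\le 1$. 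Then $(b,a),(d,c)\in W$ and $(b,c)$ lies in the product interval $I_\Gamma(b,d)\times I_{\Gamma^{\mathrm{op}}}(a,c)=\{b,c,d\}\times\{a,b,c\}$, yet $c\not\sqsubseteq_\Gamma b$, so $(b,c)\notin W$. Thus $\delta_W$ is not $L$-convex and Theorem~\ref{th:addition:orig}(d) cannot be invoked, which kills that route as stated. (This is essentially the same phenomenon as Lemma~\ref{lemma:d:star}: the embedding $\Gamma^\ast\hookrightarrow\Gamma\times\Gamma$ is isometric but not convex.) Finally, even if $W$ were convex, both routes lean on ``restriction of an $L$-convex function to a convex subcomplex is $L$-convex'', which is not among the paper's stated closure properties and would itself require a careful argument about how $\calL^\ast_p(\Gamma[U])$ sits inside $\calL^\ast_p(\Gamma)$.
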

From the results above one can now conclude that 
any instance $f:\Gamma^n\rightarrow\overline{\mathbb R}$ of $\ZeroExt{\mu}$ for an oriented modular metric $\mu$ can be solved in polynomial time.
Indeed, apply SDA to minimize function $f^\ast_\times:(\Gamma^\ast)^n\rightarrow\overline{\mathbb R}$.
It will produce a minimizer $([x_1,y_1],\ldots,[x_n,y_n])$ after at most ${\tt diam}(\DELTA{(\Gamma^\ast)})+2$ iterations;
both $x$ and $y$ are minimizers of $f$.
Hirai gives the bound ${\tt diam}(\DELTA{(\Gamma^\ast)})\le |V|^2$
in~\cite[Theorem 5.7]{Hirai:Lconvexity}. 
As pointed out by an anonymous reviewer,
this bound can be improved to ${\tt diam}(\DELTA{(\Gamma^\ast)})\le 2 \cdot {\tt diam}(\DELTA{\Gamma}) \le 2\cdot |V|$.
 
Note that in the earlier paper~\cite{Hirai:0ext} Hirai proved that $\ZeroExt{\mu}$ can be solved in weakly polynomial time
via a different algorithm, namely the Steepest Descent Algorithm (applied to $f:\Gamma^n\rightarrow\overline{\mathbb R}$)
but combined with a cost-scaling approach.

\begin{remark}
Our terminology is slightly different from that of~\cite{Hirai:Lconvexity}.
In particular, $\DELTANEIB$-neighbors were called {\em $\Delta$-neighbors} in~\cite{Hirai:Lconvexity},
and a path in $\Gamma^{\mathsmaller\sqsubset}$ was called a {\em $\Delta'$-path}.
\end{remark}


\section{Extended modular complexes}\label{sec:ebp}
To prove our main tractability result from Theorem~\ref{th:main}, we will introduce the following definition.
\begin{definition}\label{def:ebp}
Let $\Gamma$ be a modular complex. Binary relation $\sqsubseteq$ on $V=V_\Gamma$ is called {\em admissible}
if it coarsens $\preceq$ (i.e.\ $p\sqsubseteq q$ implies $p\preceq q$),
$p\sqsubseteq p$ for every $p\in V$, $p\sqsubseteq q$ for every edge $p\rightarrow q$,
and the following conditions hold.
\begin{itemize}[noitemsep,topsep=0pt,itemindent=10pt]
\item[{\em (\ref{def:ebp}a)}] Suppose that $p\sqsubseteq q$, $a\preceq b$ and $a,b\in[p,q]$. Then $a\sqsubseteq b$.
\item[{\em (\ref{def:ebp}b)}] Suppose that $p\sqsubseteq q_1$, $p\sqsubseteq q_2$ and $q_1\vee q_2$ exists. Then $p\sqsubseteq q_1\vee q_2$.
\item[{\em (\ref{def:ebp}c)}] Suppose that $p_1\sqsubseteq q$, $p_2\sqsubseteq q$ and $p_1\wedge p_2$ exists. Then $p_1\wedge p_2\sqsubseteq q$.
\end{itemize}
\end{definition}

A pair $(\Gamma,\sqsubseteq)$ where $\Gamma$ is a modular complex and $\sqsubseteq$ is admissible will be called an {\em extended modular complex}.
With some abuse of notation, we will use letter $\Gamma$ for an extended modular complex, and treat it as an oriented modular graph when needed.
Note that previously we used notation $\sqsubseteq$ for Boolean pairs in $\Gamma$.
From now on we will write $p\,\BPrel\, q$ if $(p,q)$ is a Boolean pair in~$\Gamma$, and reserve notation $\sqsubseteq$ 
(or $\sqsubseteq_\Gamma$) for the binary relation that comes with an extended modular complex $\Gamma$.
As before, we write $p \sqsubset q$ to mean $p \sqsubseteq q$ and $p\ne q$.

\begin{proposition}\label{prop:ebp:admissible}
Let $\Gamma$ be a modular complex. (a) Relations $\BPrel$ and $\preceq$ are admissible for $\Gamma$.
(b)~If relation $\sqsubseteq$ is admissible for $\Gamma$ then $p\, \BPrel\, q$ implies $p \sqsubseteq q$.
\end{proposition}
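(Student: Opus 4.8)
The plan is to verify the two assertions separately, starting with part (a) for the relation $\preceq$, which is essentially immediate, and then handling the relation $\BPrel$, which requires the structural results about cube graphs inside modular complexes.

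For $\preceq$: the relation $\preceq$ coarsens itself trivially, $p\preceq p$ holds by reflexivity, $p\preceq q$ holds for every edge $p\rightarrow q$ by definition of the partial order induced by the acyclic graph $\Gamma$, and (\ref{def:ebp}a) is immediate since $a\preceq b$ is assumed outright. Conditions (\ref{def:ebp}b) and (\ref{def:ebp}c) become: if $p\preceq q_1$ and $p\preceq q_2$ and $q_1\vee q_2$ exists then $p\preceq q_1\vee q_2$; and dually for meets. Both follow directly from the definition of join/meet as least upper bound / greatest lower bound: $q_1\vee q_2$ dominates $q_1\succeq p$, so $p\preceq q_1\vee q_2$; similarly $p_1\wedge p_2\preceq p_1\preceq q$.

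For $\BPrel$: First I would recall that $p\,\BPrel\, p$ holds via the trivial ($k=0$) cube, and $p\,\BPrel\, q$ for an edge $p\rightarrow q$ via the $1$-dimensional cube; also $p\,\BPrel\, q$ implies $p\preceq q$ since the source of a cube graph precedes its sink in $\preceq$ (stated in the excerpt in the "Boolean pairs" paragraph). The substantive content is checking (\ref{def:ebp}a)--(\ref{def:ebp}c) for $\BPrel$. For (\ref{def:ebp}a): if $p\,\BPrel\, q$ then $[p,q]$ carries a cube graph $B$ with source $p$ and sink $q$; I expect that $[p,q]$ is in fact (isomorphic to) a Boolean lattice, so every subinterval $[a,b]$ with $p\preceq a\preceq b\preceq q$ is again a Boolean lattice, hence $a\,\BPrel\, b$. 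This should follow from Lemma~\ref{lemma:GammaIdeal}(b), which says $[p,q]$ is a modular lattice, combined with the fact that a modular lattice whose covering graph is a cube graph is Boolean. For (\ref{def:ebp}b): given cubes witnessing $p\,\BPrel\, q_1$ and $p\,\BPrel\, q_2$, and assuming $q_1\vee q_2$ exists, I would argue that inside the modular semilattice $\calL^\uparrow_p$ (valuated, by Lemma~\ref{lemma:GammaIdeal}(b)) the interval $[p,q_1\vee q_2]$ is a Boolean lattice: the sublattice generated by two cubes sharing the bottom element $p$ within a modular lattice is itself a cube, because the rank function is a valuation (so $r(q_1)+r(q_2)=r(q_1\wedge q_2)+r(q_1\vee q_2)$) and the atoms below $q_1$ together with those below $q_2$ freely generate a Boolean lattice. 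Hence $p\,\BPrel\, q_1\vee q_2$. Condition (\ref{def:ebp}c) is the order-dual statement, using $\calL^\downarrow_q$ in place of $\calL^\uparrow_p$, and follows by the symmetric argument.

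The main obstacle is the cube-generation claim needed for (\ref{def:ebp}b) and (\ref{def:ebp}c): that within a modular (semi)lattice, the join of two Boolean sublattices sharing a common bottom, when it exists, spans a Boolean interval. The cleanest route is to work in the modular lattice $[p, q_1\vee q_2]$ (which exists and is modular by Lemma~\ref{lemma:GammaIdeal}(b) once we know $p\preceq q_1\vee q_2$), use that its rank function is a valuation, and invoke the characterization that a modular lattice in which the interval $[\hat0,q_i]$ is complemented/Boolean for the two "generators" and which is generated by those generators is distributive, hence Boolean since it is also complemented. I would cite the standard facts about modular lattices already referenced in the excerpt (Birkhoff, and the covering-graph characterization) rather than reprove them, so the argument reduces to an algebraic observation about joins of Boolean intervals in a modular lattice. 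If a fully self-contained treatment is wanted, one can instead argue directly with cube graphs and $4$-cycles in $\Gamma$, pushing a cube for $q_1$ "up" along the cube for $q_2$ edge by edge using the $4$-cycle condition and median properties of modular graphs; this is more elementary but longer.
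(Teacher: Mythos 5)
Your proposal only covers part (a) of the proposition. Part (b) — that $\BPrel$ is the coarsest admissible relation, i.e., $p\,\BPrel\,q$ implies $p\sqsubseteq q$ for every admissible $\sqsubseteq$ — is not addressed at all. In the paper this is a short induction on $d(p,q)$: for $d(p,q)\ge 2$, Proposition~\ref{prop:complemented} lets one write $q=a\vee b$ with $a,b\in[p,q]\setminus\{q\}$; then $p\,\BPrel\,a$ and $p\,\BPrel\,b$ by part (a), so $p\sqsubseteq a$ and $p\sqsubseteq b$ by induction, and condition~(\ref{def:ebp}b) gives $p\sqsubseteq q$.

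For part (a), your overall decomposition ($\preceq$ trivial; $\BPrel$ via structure of intervals in a modular complex) matches the paper's, but the intermediate claim you lean on is too strong. You assert that $p\,\BPrel\,q$ forces $[p,q]$ to be a Boolean lattice. That is false: the correct characterization (Proposition~\ref{prop:complemented}, due to Chalopin et al.) is that $p\,\BPrel\,q$ if and only if $[p,q]$ is a \emph{complemented} modular lattice, which may be strictly larger than a cube. The diamond $M_3$ (rank two, three atoms) is a complemented modular lattice that arises as $[p,q]$ in a perfectly good modular complex, contains a $2$-cube as a spanning subgraph of a sublattice, but is not Boolean and has covering graph $K_{2,3}$ rather than the $4$-cycle. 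This breaks your argument for~(\ref{def:ebp}b): with $q_1,q_2$ two of the atoms of $M_3$, the sublattice generated by $p,q_1,q_2$ is a $2$-cube, yet $[p,q_1\vee q_2]=M_3$ — so $[p,q_1\vee q_2]$ is neither Boolean nor generated by the two given cubes, and ``freely generate a Boolean lattice'' fails. The fix is to replace ``Boolean'' by ``complemented modular'' throughout: for~(\ref{def:ebp}a), intervals of complemented modular lattices are complemented modular; for~(\ref{def:ebp}b), write each of $q_1,q_2$ as a join of atoms (Proposition~\ref{prop:complemented} again), observe that $q_1\vee q_2$ is then a join of atoms of $[p,q_1\vee q_2]$, hence $[p,q_1\vee q_2]$ is complemented, hence $p\,\BPrel\,q_1\vee q_2$. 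This is exactly the paper's route; the rank-function/valuation machinery you invoke is not needed.
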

This proposition shows that a modular complex is a special case of an extended modular complex
(obtained by setting $\sqsubseteq$ to be $\BPrel$). Also, $\BPrel$ and $\preceq$ are respectively the coarsest and the finest admissible relations.
Next, we will show that many of the results in Section~\ref{sec:background} still hold if relation $\BPrel$ is replaced with an arbitrary admissible relation $\sqsubseteq$.

From now on we fix an extended modular complex $\Gamma$.
We define graphs  $\Gamma^{\mathsmaller\sqsubset}$ and $\DELTA{\Gamma}$ as in Section~\ref{sec:background},
and for $p\in V$ define posets $(\calL^+_p,\calL^-_p,\calL^\pm_p)=(\calL^+_p(\Gamma),\calL^-_p(\Gamma),\calL^\pm_p(\Gamma))$
as in eq.~\eqref{eq:Lneib}. Note, in all cases $\sqsubseteq$ now has the new meaning (it is the relation that comes with $\Gamma$).
\begin{lemma}\label{lemma:GammaIdeal:ebp}
Let $\Gamma$ be an extended modular complex, and $p$ be its element. 
Then 
$\calL^+_p,\calL^-_p$ are modular semilattices that are convex in $\Gamma$.
Furthermore, function $v(\cdot)$ defined via $v(a)=\mu_\Gamma(p,a)$
is a valid valuation of these (semi)lattices. 
\end{lemma}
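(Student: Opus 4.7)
The strategy is to reduce to Hirai's Lemma~\ref{lemma:GammaIdeal} by noticing that by Proposition~\ref{prop:ebp:admissible}(a), the partial order $\preceq$ is itself admissible, and under this choice the set $\calL^+_p$ coincides with $\calL^\uparrow_p$. Thus Lemma~\ref{lemma:GammaIdeal}(b) already gives that $\calL^\uparrow_p$ is a valuated modular semilattice, convex in $\Gamma$, with valuation $v(x) = \mu_\Gamma(p,x)$. Since every admissible relation coarsens $\preceq$, we have $\calL^+_p \subseteq \calL^\uparrow_p$ for the arbitrary admissible relation $\sqsubseteq$ under consideration, and the plan is to transfer each property of $\calL^\uparrow_p$ down to $\calL^+_p$ using the axioms of Definition~\ref{def:ebp}. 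The case of $\calL^-_p$ follows by reversing the orientation of $\Gamma$: a quick check shows that the three admissibility conditions are self-dual under reversing $\preceq$ and swapping meets with joins.

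For the sub-semilattice and modularity part, I would first show that $\calL^+_p$ is closed under the operations inherited from $\calL^\uparrow_p$. Given $q_1, q_2 \in \calL^+_p$, the meet $q_1 \wedge q_2$ exists in $\calL^\uparrow_p$ (they have common lower bound $p$; Lemma~\ref{lemma:GammaIdeal}(a)), and axiom (\ref{def:ebp}a) applied inside $[p, q_1]$ to the pair $p \preceq q_1 \wedge q_2$ yields $p \sqsubseteq q_1 \wedge q_2$. If $q_1, q_2$ are upper-bounded, axiom (\ref{def:ebp}b) immediately places $q_1 \vee q_2$ in $\calL^+_p$. Axiom (\ref{def:ebp}a) also shows that every principal ideal $\{x \in \calL^+_p : x \preceq q\}$ coincides with the whole interval $[p, q]$, which is a modular lattice by Lemma~\ref{lemma:GammaIdeal}(b). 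Triple joins of pairwise-upper-bounded elements are obtained by iterating (\ref{def:ebp}b) on joins of pairs, so $\calL^+_p$ satisfies the definition of a modular semilattice.

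For convexity in $\Gamma$, take $a, b \in \calL^+_p$ and $c \in I_\Gamma(a, b)$. Since $\calL^\uparrow_p$ is $\Gamma$-convex, $c \in \calL^\uparrow_p$, so $p \preceq c$. When $a, b$ admit a common upper bound in $\calL^\uparrow_p$, the join $a \vee b$ exists and lies in $\calL^+_p$ by axiom (\ref{def:ebp}b), and the standard structure of metric intervals in a valuated modular semilattice puts $c$ inside $[a \wedge b, a \vee b] \subseteq [p, a \vee b]$; axiom (\ref{def:ebp}a) then yields $p \sqsubseteq c$. When $a, b$ are not upper-bounded in $\calL^\uparrow_p$, the same geodesic analysis forces each shortest $a$-$b$ path in $\Gamma$ (equivalently, in $\calL^\uparrow_p$) to factor through $a \wedge b$, so $c$ lies in $[p, a] \cup [p, b]$; a final application of (\ref{def:ebp}a) gives $p \sqsubseteq c$. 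The valuation property of $v$ on $\calL^+_p$ is inherited verbatim from that on $\calL^\uparrow_p$, using that meets and (existing) joins in $\calL^+_p$ coincide with those in $\calL^\uparrow_p$.

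The main obstacle is the convexity step: it rests on the geodesic structure of the valuated modular semilattice $\calL^\uparrow_p$, namely that $I_\Gamma(a,b)$ sits inside $[a \wedge b, a \vee b]$ when $a \vee b$ exists and decomposes through $a \wedge b$ otherwise. This is a known property of valuated modular semilattices that is implicit in Hirai's treatment but needs to be invoked carefully here; everything else is a mechanical application of the three axioms of Definition~\ref{def:ebp} on top of the already-established facts about $\calL^\uparrow_p$.
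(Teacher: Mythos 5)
Your handling of the semilattice, modularity, and valuation claims matches the paper's proof exactly: observe that $\calL^+_p\subseteq\calL^\uparrow_p$ is closed under meets and existing joins (axioms~(\ref{def:ebp}a) and~(\ref{def:ebp}b)) and import the structure from Lemma~\ref{lemma:GammaIdeal}(b). The convexity argument, however, contains a genuine gap. In the case where $a,b\in\calL^+_p$ are not upper-bounded, you assert that the shortest $a$-$b$ paths ``factor through $a\wedge b$, so $c$ lies in $[p,a]\cup[p,b]$.'' Neither half of this is correct in a general modular semilattice: not every shortest $a$-$b$ path passes through $a\wedge b$, and $I(a,b)$ is typically strictly larger than $[a\wedge b,a]\cup[a\wedge b,b]$. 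Concretely, take the modular semilattice with bottom $0$, atoms $x_1,x_2,y_1,y_2$, and rank-$2$ elements $a=x_1\vee x_2$, $b=y_1\vee y_2$, $c=x_2\vee y_1$, with no further joins. Here $a,b$ are not upper-bounded and $a\wedge b=0$, yet $c\in I(a,b)$ while $c\notin[0,a]\cup[0,b]$, and $(a,x_2,c,y_1,b)$ is a shortest path avoiding $0$. So the claimed containment, and hence the ``final application of~(\ref{def:ebp}a),'' does not go through.

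The fix is easy and removes the case split entirely: by Lemma~\ref{lemma:semilattice:technical}(b) every $c\in I(a,b)$ has the form $c=a'\vee b'$ with $a'\in[a\wedge b,a]\subseteq[p,a]$ and $b'\in[a\wedge b,b]\subseteq[p,b]$; then axiom~(\ref{def:ebp}a) gives $p\sqsubseteq a'$ and $p\sqsubseteq b'$, and axiom~(\ref{def:ebp}b) gives $p\sqsubseteq a'\vee b'=c$. Note, for comparison, that the paper sidesteps interval geometry altogether by reducing convexity to pairs at $d_\Gamma$-distance~$2$ via Lemma~\ref{lemma:convex:gated} and then invoking the short orientation case analysis of Lemma~\ref{lemma:common-neighbor}; your (corrected) route is a legitimate alternative but relies on the global description of $I(a,b)$, which must be cited rather than paraphrased as a path-factoring statement.
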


\begin{figure}[t]
\begin{center}
		\begin{tabular}{@{\hspace{15pt}}c@{\hspace{35pt}}c@{\hspace{35pt}}c} 
			  \includegraphics[scale=0.5,angle=-90,trim=164pt 280pt 190pt 480pt,clip]{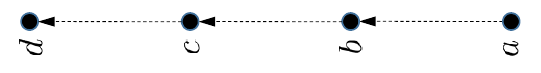}   &
			  \includegraphics[scale=0.5,angle=-90,trim=155pt 200pt 190pt 300pt,clip]{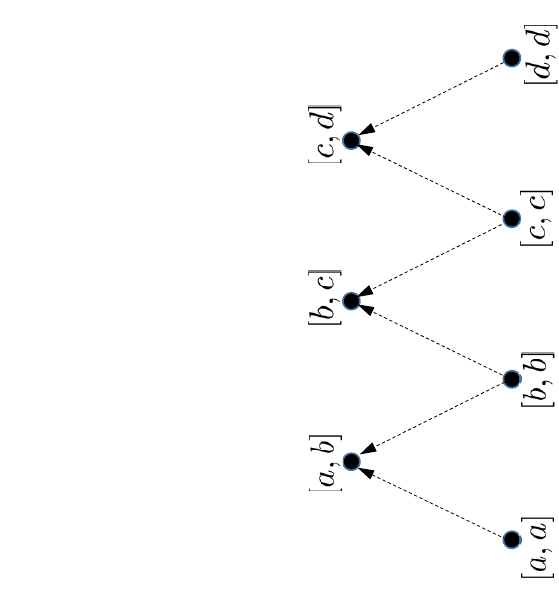} &
			  \includegraphics[scale=0.5,angle=-90,trim=155pt 200pt 190pt 300pt,clip]{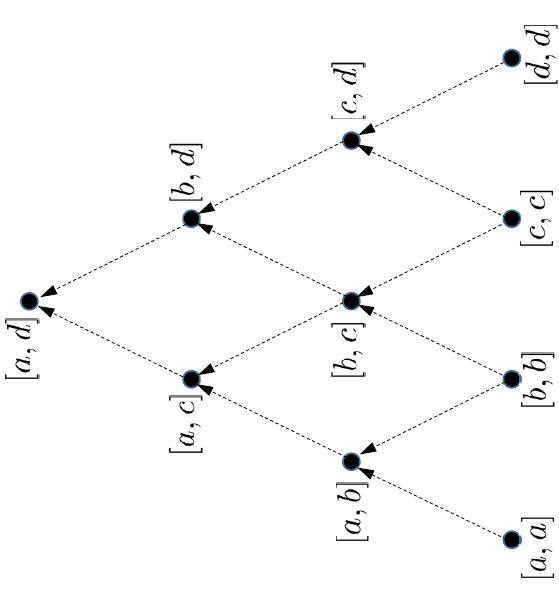}  
~ \vspace{10pt} \\
\small (a): original graph & \small (b): 2-subdivision with $\sqsubseteq\,=\BPrel$ & \small (c): 2-subdivision with $\sqsubseteq\,=\,\preceq$  \vspace{-14pt}
		\end{tabular}
	\end{center}
	\caption{Different 2-subdivisions for the chain $a\prec b\prec c \prec d$.}
\label{fig:subdivisions}
\end{figure}

We define a {\em 2-subdivision of $\Gamma$} exactly as in Section~\ref{sec:background};
it is the graph $\Gamma^\ast=(V^\ast,E^\ast,w^\ast)$
where $V^\ast=\{[p,q]\::\:p,q\in\Gamma,p\sqsubseteq_\Gamma q\}$.
Using condition~(\ref{def:ebp}a), it is straightforward to verify
that $(V^\ast,E^\ast)$ is the Hasse diagram of the poset $(V^\ast,\subseteq)$.
We will show the following result.
\begin{theorem}\label{th:subdivision:modular}
If $\Gamma$ is an extended modular complex then  $(V^\ast,E^\ast,w^\ast)$ is a modular complex
(i.e.\ an oriented 
modular
graph).
\end{theorem}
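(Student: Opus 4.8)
The plan is to reduce Theorem~\ref{th:subdivision:modular} to the already-known special case (the classical $2$-subdivision $\Gamma^\ast$ built from the Boolean-pair relation $\BPrel$, which is oriented modular by \cite[Theorem 4.3]{Hirai:0ext}) and, more importantly, to exploit the characterization of oriented modular graphs via modular semilattices recorded in Section~\ref{sec:background}: the Hasse diagram of a (valuated) modular semilattice is oriented modular, a modular complex is locally described by the semilattices $\calL_p^+,\calL_p^-$, and Theorem~\ref{th:orbits}(b) says that orientable modularity of a weighted graph reduces to orbit-invariance of the weights plus orientable modularity of the underlying unweighted graph. So the first step is to verify orbit-invariance of $w^\ast$: edges of $\Gamma^\ast$ of type (ii) carry weight $w(qq')$ and of type (iii) weight $w(p'p)$, and one checks that vertex-disjoint edges of $\Gamma^\ast$ lying on a common $4$-cycle correspond to projective edges of $\Gamma$ (or to a ``commuting square'' pairing a type-(ii) edge with a type-(iii) edge over a common endpoint, where both weights are $w$ of the same underlying edge); orbit-invariance of $w$ in $\Gamma$ (Theorem~\ref{th:orbits}(a), since $\Gamma$ is modular) then gives orbit-invariance of $w^\ast$. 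This is a routine but slightly fiddly case analysis on $4$-cycles in the Hasse diagram of $(V^\ast,\subseteq)$.

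The heart of the argument is to show that $(V^\ast,E^\ast)$, as an unweighted directed graph, is oriented modular. Since by the remark after Definition~\ref{def:ebp} (and the observation that $(V^\ast,E^\ast)$ is the Hasse diagram of the poset $(V^\ast,\subseteq)$ by condition~(\ref{def:ebp}a)) we already have an acyclic orientation coming from a poset, it suffices to show this poset is a modular semilattice and invoke ``the Hasse diagram of a modular semilattice is oriented modular'' from Section~\ref{sec:background}. Concretely I would show: (1) $(V^\ast,\subseteq)$ is a meet-semilattice, with $[p,q]\wedge[p',q'] = [p\vee p',\, q\wedge q']$ whenever the right-hand side makes sense, using Lemma~\ref{lemma:GammaIdeal} for existence of joins/meets in $\Gamma$ together with admissibility conditions (\ref{def:ebp}b),(\ref{def:ebp}c) to ensure $p\vee p' \sqsubseteq q\wedge q'$ so that $[p\vee p', q\wedge q']\in V^\ast$; (2) each principal ideal $\{[p',q']\in V^\ast : [p',q']\subseteq[p,q]\}$ is a modular lattice — here one uses that, inside the fixed interval $[p,q]$ of $\Gamma$ (a modular lattice by Lemma~\ref{lemma:GammaIdeal}(b)), condition~(\ref{def:ebp}a) makes {\em every} subinterval $[a,b]$ with $a\preceq b$ land in $V^\ast$, so this ideal is isomorphic to the interval-lattice of the modular lattice $[p,q]$, which is again modular; (3) triples with pairwise joins have a join, transferring the analogous property of $\Gamma$ (part of it being a modular semilattice) through (\ref{def:ebp}b),(\ref{def:ebp}c). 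Finally, a valuation on $(V^\ast,\subseteq)$ is obtained by $v^\ast([p,q]) = \mu_\Gamma(p',p) + \mu_\Gamma(p',q)$ relative to a base point, or more robustly by checking that the weights $w^\ast$ defined in (ii),(iii) satisfy the local valuation identity on every $4$-cycle, which follows from orbit-invariance already established in step one together with the valuation property of $v(\cdot)=\mu_\Gamma(p,\cdot)$ on the semilattices $\calL_p^{\pm}$ from Lemma~\ref{lemma:GammaIdeal:ebp}.

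The main obstacle I anticipate is step~(2)/(3): proving that $(V^\ast,\subseteq)$ is a {\em modular} semilattice (not merely a semilattice) globally, rather than just checking each local piece. The admissibility axioms (\ref{def:ebp}a)--(\ref{def:ebp}c) are precisely engineered so that the four operations ``restrict to a subinterval'', ``enlarge the top'', ``enlarge the bottom'', and ``combine'' behave compatibly, but stitching these into the semilattice-modularity of the whole poset $(V^\ast,\subseteq)$ requires care: one must verify the covering-graph criterion (a semilattice is modular iff its covering graph is modular, \cite[Theorem 5.4]{Bandelt:93}) or directly the interval/join conditions, and show that no new ``bad'' $4$-cycles or $K_4$-minus-an-edge configurations are created by the coarsening. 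An alternative, possibly cleaner route is to realize $\Gamma^\ast$ as a retract or a convex subgraph of the classical subdivision of a suitably expanded modular complex, and transport modularity along the retraction using Theorem~\ref{th:orbits}(b); but absent such a construction in the excerpt, I would carry out the direct semilattice verification above, with conditions (\ref{def:ebp}a)--(\ref{def:ebp}c) invoked exactly at the points indicated.
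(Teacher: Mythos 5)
The core of your strategy is to show that the poset $(V^\ast,\subseteq)$ is a modular \emph{semilattice} and then invoke the fact from Section~\ref{sec:background} that the Hasse diagram of a modular semilattice is oriented modular. This step fails: $(V^\ast,\subseteq)$ is generally not a semilattice at all. For the meet $[p,q]\wedge[p',q']$ to exist one would need $p\vee p'$ and $q\wedge q'$ to exist in $\Gamma$ and to satisfy $p\vee p'\sqsubseteq q\wedge q'$, and none of this is guaranteed. Already for $\Gamma$ a single edge $a\to b$ one has $V^\ast=\{[a,a],[a,b],[b,b]\}$, and $[a,a],[b,b]$ have no common lower bound in $(V^\ast,\subseteq)$, so there is no meet; yet $\Gamma^\ast$ (a two-edge path) is of course oriented modular. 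So the conclusion you want is true in this example, but your proposed route to it is blocked: the implication ``Hasse diagram of a modular semilattice $\Rightarrow$ oriented modular'' is one-directional and cannot be applied because the hypothesis fails. Your hedging ``whenever the right-hand side makes sense'' does not repair this: a semilattice requires \emph{all} pairs to have a meet.

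The paper takes a genuinely different route that avoids any global poset structure on $V^\ast$. It verifies the graph-theoretic characterization of modularity directly (Lemma~\ref{lemma:modular:characterization}: bipartite plus the quadrangle condition), and the decisive technical ingredient is the distance formula $d_{\Gamma^\ast}([p,q],[p',q'])=d_\Gamma(p,p')+d_\Gamma(q,q')$ of Lemma~\ref{lemma:d:star}, which itself has a nontrivial induction using the admissibility conditions~(\ref{def:ebp}a)--(\ref{def:ebp}c), Lemma~\ref{lemma:romb}, and Proposition~\ref{prop:isometric-rectangle}. Your step on orbit-invariance of $w^\ast$ via the classification of $4$-cycles in $\Gamma^\ast$ does match the opening of the paper's proof, and your instinct that conditions~(\ref{def:ebp}a)--(\ref{def:ebp}c) must enter precisely when enlarging/combining intervals is on target (they appear in the case analysis for the quadrangle condition, via Lemma~\ref{lemma:common-neighbor}, Corollary~\ref{cor:projection-preserves-ebp}, and Lemma~\ref{lemma:pq-pq-prime}). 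But without the distance lemma and with the false semilattice claim in its place, the proposal does not close.
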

Recall that if $\Gamma$ is a modular complex (equivalently, an extended modular complex with the relation $\sqsubseteq_\Gamma=\BPrel_\Gamma$)
then $\Gamma^\ast$ is well-oriented, i.e.\ $\BPrel_{\Gamma^\ast}=\,\preceq_{\Gamma^\ast}$.
For extended modular complexes this is not necessarily the case,
as e.g.\ for the example in Fig.~\ref{fig:subdivisions}(c).
We will treat $\Gamma^\ast$ as an extended modular complex with the relation $\sqsubseteq_{\Gamma^\ast}\,=\,\preceq_{\Gamma^\ast}$,
so that we have $\calL_{[p,p]}^+(\Gamma^\ast)=\calL_{[p,p]}^\uparrow(\Gamma^\ast)$ by construction.
Let us set $\calL^\ast_p=\calL^\ast_p(\Gamma)=\calL_{[p,p]}^+(\Gamma^\ast)=\calL_{[p,p]}^\uparrow(\Gamma^\ast)$.
We can now define $L$-convex functions exactly as in the previous section:
\begin{itemize}
\item Function $f:V\rightarrow\overline{\mathbb R}$ is called {\em $L$-convex on extended modular complex $\Gamma$}
if (i) subset $\dom f\subseteq V$ is connected in $\Gamma^{\mathsmaller\sqsubset}$,
and (ii) for every $p\in V$, the restriction of $f^\ast$ to $\calL^\ast_{p}=\calL^\uparrow_{[p,p]}(\Gamma^\ast)$ is submodular on valuated
modular semilattice $\calL^\ast_{p}$.
\end{itemize}

\myparagraph{Cartesian products}
If $(\Gamma,\sqsubseteq),(\Gamma',\sqsubseteq')$ are two extended modular complexes, then their Cartesian product $(\Gamma,\sqsubseteq)\times(\Gamma',\sqsubseteq')$
is defined as the pair $(\Gamma\times\Gamma',\sqsubseteq_{\times})$
where $(p,p') \sqsubseteq_{\times} (q,q')$ iff $p \sqsubseteq q$ and $p' \sqsubseteq' q'$.

\begin{lemma}\label{lemma:ebp-cartesian}
Consider extended modular complexes $\Gamma,\Gamma'$ and element $(p,p')\in\Gamma\times\Gamma'$. 
\begin{itemize}[noitemsep,topsep=0pt]
\item[{\rm (a)}] $\Gamma\times \Gamma'$ is an extended modular complex. 
\item[{\rm (b)}] $\calL^\sigma_{(p,p')}(\Gamma\times\Gamma')=\calL^\sigma_{p}(\Gamma)\times \calL^\sigma_{p'}(\Gamma')$ for $\sigma\in\{-,+\}$. 
\end{itemize}
\end{lemma}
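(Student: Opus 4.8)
The plan is to prove (a) and (b) in that order, leaning heavily on the corresponding statement for ordinary modular complexes, Lemma~\ref{lemma:bp-cartesian}, together with the definition of the product relation $\sqsubseteq_\times$. For part (a), the first observation is that $\Gamma\times\Gamma'$ is already known to be a modular complex by Lemma~\ref{lemma:bp-cartesian}(a), so the only thing to check is that $\sqsubseteq_\times$ is an admissible relation on $V_\Gamma\times V_{\Gamma'}$ in the sense of Definition~\ref{def:ebp}. The ``easy'' axioms are immediate componentwise: $\sqsubseteq_\times$ coarsens $\preceq$ on the product poset because $\sqsubseteq$ coarsens $\preceq$ on each factor; $(p,p')\sqsubseteq_\times(p,p')$ holds since $p\sqsubseteq p$ and $p'\sqsubseteq' p'$; and for an edge $(p,p')\rightarrow(q,q')$ of $\Gamma\times\Gamma'$ exactly one coordinate changes along an edge and the other stays fixed, so $p\sqsubseteq q$ (resp. $p'\sqsubseteq' q'$) holds by the edge axiom in that factor while the unchanged coordinate satisfies reflexivity — hence $(p,p')\sqsubseteq_\times(q,q')$.

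For the three numbered conditions (\ref{def:ebp}a)--(\ref{def:ebp}c), the key structural facts are that intervals, meets, and joins in a product poset are computed componentwise: $[(p,p'),(q,q')]=[p,q]\times[p',q']$, and $(x,x')\wedge(y,y')=(x\wedge y,x'\wedge y')$, $(x,x')\vee(y,y')=(x\vee y,x'\vee y')$ whenever the relevant meets/joins exist in each factor (existence in the product is equivalent to existence in both factors, using Lemma~\ref{lemma:GammaIdeal}(a) in each factor to guarantee joins of upper-bounded pairs etc.). Granting this, each of (\ref{def:ebp}a)--(\ref{def:ebp}c) for $(\Gamma\times\Gamma',\sqsubseteq_\times)$ reduces to the conjunction of the same condition for $(\Gamma,\sqsubseteq)$ and for $(\Gamma',\sqsubseteq')$. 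For instance, for (\ref{def:ebp}b): if $(p,p')\sqsubseteq_\times(q_1,q_1')$ and $(p,p')\sqsubseteq_\times(q_2,q_2')$ and $(q_1,q_1')\vee(q_2,q_2')$ exists, then $q_1\vee q_2$ and $q_1'\vee q_2'$ exist, $p\sqsubseteq q_1$, $p\sqsubseteq q_2$, $p'\sqsubseteq' q_1'$, $p'\sqsubseteq' q_2'$; applying (\ref{def:ebp}b) in each factor gives $p\sqsubseteq q_1\vee q_2$ and $p'\sqsubseteq' q_1'\vee q_2'$, i.e. $(p,p')\sqsubseteq_\times(q_1\vee q_2,\,q_1'\vee q_2')=(q_1,q_1')\vee(q_2,q_2')$. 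Conditions (\ref{def:ebp}a) and (\ref{def:ebp}c) are handled identically, replacing joins by intervals and by meets respectively.

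Part (b) then follows directly from the definitions: by definition of $\sqsubseteq_\times$, for $\sigma=+$ we have $(q,q')\in\calL^+_{(p,p')}(\Gamma\times\Gamma')$ iff $(p,p')\sqsubseteq_\times(q,q')$ iff $p\sqsubseteq q$ and $p'\sqsubseteq' q'$ iff $q\in\calL^+_p(\Gamma)$ and $q'\in\calL^+_{p'}(\Gamma')$; the case $\sigma=-$ is symmetric. This gives the equality as sets of vertices, and as before the partial order on the product poset $\calL^+_p(\Gamma)\times\calL^+_{p'}(\Gamma')$ is componentwise, matching the partial order induced on $\calL^+_{(p,p')}(\Gamma\times\Gamma')$. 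Finally, one checks that the valuations agree: on each factor the valuation of $\calL^\sigma$ is $v(a)=\mu_\Gamma(p,a)$ by Lemma~\ref{lemma:GammaIdeal:ebp}, and since $\mu_{\Gamma\times\Gamma'}((p,p'),(a,a'))=\mu_\Gamma(p,a)+\mu_{\Gamma'}(p',a')$ this is exactly the product valuation $v_{\calL\times\calL'}$ defined earlier, so the identification is one of valuated modular semilattices. I expect the main obstacle to be purely bookkeeping rather than conceptual: carefully justifying that meets, joins, and intervals in $\Gamma\times\Gamma'$ really are computed coordinatewise (including the existence statements, where one must invoke Lemma~\ref{lemma:GammaIdeal}(a)/Lemma~\ref{lemma:GammaIdeal:ebp} in each factor), after which all six verifications are immediate.
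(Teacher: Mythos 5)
Your proof is correct and follows essentially the same route as the paper: part~(a) reduces to Lemma~\ref{lemma:bp-cartesian}(a) plus a componentwise verification of the admissibility axioms for $\sqsubseteq_\times$, and part~(b) is a direct identification of underlying sets, partial orders, and valuations. The paper simply declares the admissibility checks "mechanical" and omits them, while you spell out the coordinatewise meet/join/interval facts and one representative condition in detail — that extra care is sound and fills in exactly what the paper leaves to the reader.
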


\begin{theorem}\label{th:ebp-addition}
Consider extended modular complexes $\Gamma,\Gamma'$ and functions $f,f':\Gamma\rightarrow \overline{\mathbb R}$ and $\tilde f:\Gamma\times\Gamma'\rightarrow \overline{\mathbb R}$. 
\begin{itemize}[noitemsep,topsep=0pt]
\item[{\rm (a)}] If $f,f'$ are $L$-convex  on $\Gamma$ then $f+f'$ and $c\cdot f$ for $c\in\mathbb R_+$ are also $L$-convex on $\Gamma$. 
\item[{\rm (b)}] If $f$ is $L$-convex on $\Gamma$ and $\tilde f(p,p')=f(p)$ for $(p,p')\in \Gamma\times\Gamma'$ then $\tilde f$ is $L$-convex on $\Gamma\times\Gamma'$. 
\item[{\rm (c)}] If $\tilde f$ is $L$-convex on $\Gamma\times\Gamma'$ and $f(p)=\tilde f(p,p')$ for fixed $p'\in\Gamma'$
then $f$ is $L$-convex on $\Gamma$. 
\item[{\rm (d)}]  The indicator function $\delta_U:V\rightarrow \{0,\infty\}$  is $L$-convex on $\Gamma$
in the following cases: (i) $U$ is a $d_\Gamma$-convex set; (ii) $U=\{p,q\}$ for elements $p,q$ with $p\sqsubseteq q$.
\item[{\rm (e)}]  Function $\mu_\Gamma:\Gamma\times \Gamma\rightarrow\mathbb R_+$ is $L$-convex on $\Gamma\times \Gamma$.
\item[{\rm (f)}]  If $f$ is $L$-convex on $\Gamma$ then the restriction of $f$ to $\calL^\sigma_p(\Gamma)$ is submodular on $\calL^\sigma_p(\Gamma)$ for every
$p\in\Gamma$ and $\sigma\in\{-,+\}$.
\end{itemize}
\end{theorem}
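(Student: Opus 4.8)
The plan is to mimic the structure of the proof of Theorem~\ref{th:addition:orig} from \cite{Hirai:0ext,Hirai:Lconvexity}, checking at each step that the arguments only use the axioms (\ref{def:ebp}a)--(\ref{def:ebp}c) of an admissible relation rather than specific properties of $\BPrel$. The definition of $L$-convexity on $\Gamma$ has two ingredients: connectedness of $\dom f$ in $\Gamma^{\mathsmaller\sqsubset}$, and submodularity of $f^\ast$ restricted to each $\calL^\ast_p=\calL^\uparrow_{[p,p]}(\Gamma^\ast)$. For the second ingredient I would rely on Theorem~\ref{th:subdivision:modular} (so that $\Gamma^\ast$ is a genuine modular complex and each $\calL^\ast_p$ is a valuated modular semilattice), together with the closure properties of submodular functions on valuated modular semilattices that will be established in Section~\ref{sec:submodularity-on-semilattice} (these are the analogues of the facts Hirai proves about submodular functions on semilattices: closure under sums, nonnegative scaling, restriction to subsemilattices, and the fact that $\mu$ and convex-set indicators are submodular).

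Here is the order I would carry out the cases. Part~(a): $\dom(f+f')=\dom f\cap\dom f'$; connectedness of the intersection in $\Gamma^{\mathsmaller\sqsubset}$ follows from the fact that $\dom f,\dom f'$ are connected and (crucially) \emph{convex} in $\Gamma$ — this convexity is itself part of what one shows for $L$-convex domains, exactly as in Hirai's argument, and it carries over unchanged. Submodularity of $(f+f')^\ast=f^\ast+f'^\ast$ on each $\calL^\ast_p$ is immediate from closure of semilattice-submodular functions under addition; $c\cdot f$ is handled the same way using nonnegative scaling. Parts~(b) and~(c): these are pullback/restriction statements along the Cartesian product $\Gamma\times\Gamma'$; I would invoke Lemma~\ref{lemma:ebp-cartesian} to identify $\calL^\sigma_{(p,p')}(\Gamma\times\Gamma')$ with $\calL^\sigma_p(\Gamma)\times\calL^\sigma_{p'}(\Gamma')$, note that $(\Gamma\times\Gamma')^\ast$ decomposes compatibly (a small lemma to state, following from the definition of $\sqsubseteq_\times$), and then use that submodularity on a valuated modular semilattice is preserved under taking a "constant extension" in extra coordinates and under restriction to a fiber (a coordinate slice), which are standard facts about the semilattice-submodular class. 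For~(d)(i), reuse exactly Hirai's proof that a $d_\Gamma$-convex set has $L$-convex indicator: $\delta_U$ is connected in $\Gamma^{\mathsmaller\sqsubset}$ on $U$ (again using convexity), and $\delta_U^\ast$ restricted to $\calL^\ast_p$ is the indicator of a convex subsemilattice, hence submodular. For~(d)(ii), the new case $U=\{p,q\}$ with $p\sqsubseteq q$: connectedness of $\{p,q\}$ in $\Gamma^{\mathsmaller\sqsubset}$ is automatic since $p\sqsubset q$ is an edge there; for submodularity I would show that for each $r\in V$ the set $\{[x,y]\in\calL^\ast_r : x,y\in\{p,q\}\}$ is either empty or a face of $\calL^\ast_r$ whose indicator is submodular — here axiom~(\ref{def:ebp}a) (intervals inside $[p,q]$ inherit $\sqsubseteq$) is precisely what makes $[p,q]$ itself a vertex of $\Gamma^\ast$ and makes this slice well-behaved. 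Part~(e) is the metric case: $\mu_\Gamma$ on $\Gamma\times\Gamma$ is $L$-convex — this is Hirai's argument verbatim once we know $(\Gamma\times\Gamma)^\ast$ is modular and $\mu$ is semilattice-submodular on each $\calL^\ast_{(p,p)}$; the extended relation does not enter. Part~(f): this is a projection/restriction statement, deriving semilattice-submodularity of $f$ on $\calL^\sigma_p(\Gamma)$ from $L$-convexity; I would follow Hirai's proof, which factors through the observation that $\calL^\sigma_p(\Gamma)$ embeds into $\calL^\ast_p$ via $a\mapsto[a,a]$ (when $\sigma=+$; dually for $\sigma=-$), intersected with the modular-semilattice structure, using Lemma~\ref{lemma:GammaIdeal:ebp} to know $\calL^\sigma_p(\Gamma)$ is a convex valuated modular semilattice.

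The main obstacle I expect is part~(d)(ii) and, to a lesser extent, the compatibility of $\ast$-subdivision with Cartesian products needed in~(b),(c),(e). In the non-extended theory, $\Gamma^\ast$ is well-oriented and $\calL^\ast_p=\calL^\uparrow_{[p,p]}(\Gamma^\ast)$ is controlled by the Boolean-pair structure; here $\Gamma^\ast$ need not be well-oriented, so one cannot simply quote "$\calL^+=\calL^\uparrow$" and must argue directly about the poset $\calL^\uparrow_{[p,p]}(\Gamma^\ast)$ and which intervals $[x,y]$ (with $x,y\in\{p,q\}$) land in it. Verifying that the resulting slice is a subsemilattice on which the $\{0,\infty\}$-indicator satisfies the (lengthy) semilattice-submodularity inequalities is where axioms~(\ref{def:ebp}b),(\ref{def:ebp}c) earn their keep: they guarantee that joins and meets of elements $\sqsubseteq q$ (resp.\ $\sqsupseteq p$) stay $\sqsubseteq q$ (resp.\ $\sqsupseteq p$), which is exactly the closure property a convex subsemilattice indicator needs. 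I would isolate these verifications as lemmas (deferred to Section~\ref{sec:proofs}) so that the proof of Theorem~\ref{th:ebp-addition} itself reduces to assembling the pieces exactly as in the non-extended case.
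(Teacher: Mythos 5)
Your outline matches the paper's broad strategy for parts~(b)--(f): factor the 2-subdivision through Cartesian products (the paper records this as an explicit lemma that $(\Gamma\times\Gamma')^\ast\cong\Gamma^\ast\times{\Gamma'}^\ast$, then uses a wrapper Lemma~\ref{lemma:ftilde}), reduce~(d) to convexity of $U^\ast$ in $\Gamma^\ast$, reduce~(e) to Corollary~\ref{cor:mu:star} plus the known submodularity of $\mu_\Lambda$ on modular complexes, and reduce~(f) to the isomorphism $\calL^{\ast+}_p\cong\calL^+_p$ via $[p,q]\mapsto q$ (you wrote $a\mapsto[a,a]$, which is not the right embedding, but that is a slip). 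However, there is a genuine gap in your handling of part~(a), which propagates to~(c).

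You claim connectedness of $\dom f\cap\dom f'$ in $\Gamma^{\mathsmaller\sqsubset}$ follows because $\dom f$ and $\dom f'$ are ``connected and (crucially) convex in $\Gamma$.'' This is false on two counts. First, the effective domain of an $L$-convex function on an extended modular complex is \emph{not} convex in $\Gamma$ in general: part~(d)(ii) of the very theorem you are proving asserts that $\delta_{\{p,q\}}$ is $L$-convex whenever $p\sqsubseteq q$, and if $d_\Gamma(p,q)\ge 2$ (which happens for any admissible relation strictly coarser than the edge relation, e.g.\ $\sqsubseteq\,=\,\preceq$), the two-point set $\{p,q\}$ is not $d_\Gamma$-convex. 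Second, even if both domains were connected, an intersection of connected sets is not connected, so one cannot deduce connectedness of $\dom f\cap\dom f'$ from connectedness of the factors alone. The paper gets around both problems by establishing a much stronger \emph{uniform} path property (Corollary~\ref{cor:domf:arrow}, resting on Theorem~\ref{th:domf:arrow}: if $p,q\in\dom f$ then $p\UP q,p\DOWN q\in\dom f$): for any $p,q$ there is a single path $(p_0,\ldots,p_k)$ in $\Gamma^{\mathsmaller\sqsubset}$, determined only by $p,q,\Gamma$, that lies in the domain of \emph{every} $L$-convex function containing $p$ and $q$. Applied simultaneously to $f$ and $f'$, this path lies in $\dom f\cap\dom f'$; and in part~(c) the same corollary lets one pick a path inside $I((p,p'),(q,p'))$, hence with constant $\Gamma'$-coordinate, so the projection to $\Gamma$ stays in $\dom f$. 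Your proposal does not supply any substitute for this machinery (it is a nontrivial chunk of Section~\ref{sec:proofs}, built from Lemmas~\ref{lemma:GALKFAGAS}--\ref{lemma:GHAKDJGHALKSDFG}), and the convexity assertion you fall back on is simply not available.
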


\begin{theorem}\label{th:Lopt}
Let $f:V\rightarrow\overline{\mathbb R}$ be an $L$-convex function on an extended modular complex~$\Gamma$.
If $p$ is a local minimizer of $f$ on $\calL^\pm_p(\Gamma)$, i.e. $f(p)=\min_{q\in \calL^\pm_p(\Gamma)}f(q)$, then it is also a global minimizer of~$f$, i.e.\ 
$f(p)=\min_{q\in V}f(q)$.
\end{theorem}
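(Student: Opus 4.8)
The plan is to follow Hirai's proof of the corresponding statement for modular complexes, \cite[Lemma~2.3]{Hirai:0ext}, replacing Boolean pairs $\BPrel$ by the abstract admissible relation $\sqsubseteq$ throughout and supplying from axioms~(\ref{def:ebp}a)--(\ref{def:ebp}c) those facts about $\sqsubseteq$ that were previously special properties of $\BPrel$. Assume $f$ is $L$-convex on $\Gamma$, $p\in\dom f$, and $f(p)=\min_{q\in\calL^\pm_p(\Gamma)}f(q)$, and suppose for contradiction that $f(r)<f(p)$ for some $r\in V$. Since the thickening $\DELTA{\Gamma}$ contains the underlying graph of $\Gamma$ it is connected, so $\DELTA{d}_\Gamma(p,r)<\infty$; fix $r$ with $f(r)<f(p)$ minimizing $k:=\DELTA{d}_\Gamma(p,r)\ge1$. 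I will show that $k$ cannot be minimal: for $k=1$ the hypothesis $f(p)=\min_{\calL^\pm_p}f$ is contradicted directly, and for $k\ge2$ a point $r'$ with $f(r')<f(p)$ and $\DELTA{d}_\Gamma(p,r')<k$ is constructed.

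For the base case $k=1$, the elements $p,r$ are $\DELTANEIB$-neighbors, so $p\wedge r$ and $p\vee r$ exist and $p\wedge r\sqsubseteq p\vee r$. Applying axiom~(\ref{def:ebp}a) to $p\wedge r\sqsubseteq p\vee r$ and the sub-pairs $p\wedge r\preceq p$, $p\wedge r\preceq r$, $p\preceq p\vee r$ (all lying in $[p\wedge r,p\vee r]$) yields $p\wedge r\sqsubseteq p$, $p\wedge r\sqsubseteq r$ and $p\sqsubseteq p\vee r$. Hence $p,r\in\calL^+_{p\wedge r}(\Gamma)$, which by Lemma~\ref{lemma:GammaIdeal:ebp} is a valuated modular semilattice in which $p,r$ are upper-bounded by $p\vee r$, with meet $p\wedge r$ and join $p\vee r$. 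By $L$-convexity of $f$ and Theorem~\ref{th:ebp-addition}(f), $f$ is submodular on $\calL^+_{p\wedge r}(\Gamma)$, which (by the definition of submodularity in Section~\ref{sec:submodularity-on-semilattice}) implies $f(p)+f(r)\ge f(p\wedge r)+f(p\vee r)$. As $f(r)<f(p)$, the right-hand side is $<2f(p)$, so $\min\{f(p\wedge r),f(p\vee r)\}<f(p)$; but $p\wedge r\in\calL^-_p(\Gamma)$ (since $p\wedge r\sqsubseteq p$) and $p\vee r\in\calL^+_p(\Gamma)$ (since $p\sqsubseteq p\vee r$), so one of them witnesses $\min_{\calL^\pm_p}f<f(p)$, a contradiction. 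The only fact about submodularity used here beyond the bare definition is that a function $g$ submodular on a valuated modular semilattice satisfies $g(x)+g(y)\ge g(x\wedge y)+g(x\vee y)$ for every upper-bounded pair $x,y$; this is routine and best recorded in Section~\ref{sec:submodularity-on-semilattice}.

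For $k\ge2$ the meet $p\wedge r$ need not exist, so one instead carries out Hirai's descent step inside the 2-subdivision $\Gamma^\ast$, which is again a modular complex by Theorem~\ref{th:subdivision:modular}: exploiting submodularity of $f^\ast$ on the valuated modular semilattices $\calL^\ast_q$ together with the median (modular) structure of $\Gamma$, one applies a submodularity estimate to suitable elements lying between $p$ and $r$ to obtain a point $r'$ with $f(r')<f(p)$ and $\DELTA{d}_\Gamma(p,r')<k$, contradicting the choice of $r$. I expect this to be where essentially all of the work lies --- not because a new idea is required, but because of the bookkeeping: wherever Hirai's argument uses that a given interval $[a,b]$ is a vertex of $\Gamma^\ast$, that $\BPrel$ passes to sub-intervals, or that $\BPrel$ is closed under the join of two elements with a common $\BPrel$-lower bound (resp.\ under the meet with a common $\BPrel$-upper bound), one must re-derive the statement for an arbitrary admissible $\sqsubseteq$, and these are precisely axioms~(\ref{def:ebp}a), (\ref{def:ebp}b) and (\ref{def:ebp}c); Lemma~\ref{lemma:GammaIdeal:ebp} then provides the valuated modular semilattice structure on $\calL^+_p,\calL^-_p$ that lets the submodularity estimates go through verbatim.
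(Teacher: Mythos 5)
Your base case ($k=1$) is correct and complete: the argument that $p,r$ being $\DELTANEIB$-neighbors puts them in the modular semilattice $\calL^+_{p\wedge r}$ with $p\wedge r\sqsubseteq p, r\sqsubseteq p\vee r$ (via~(\ref{def:ebp}a)), so that the bounded-pair submodularity inequality $f(p)+f(r)\ge f(p\wedge r)+f(p\vee r)$ forces $f(p\wedge r)<f(p)$ or $f(p\vee r)<f(p)$ with $p\wedge r\in\calL^-_p$, $p\vee r\in\calL^+_p$, is exactly right. (The bounded-pair inequality is already recorded: it is condition~(2) of Theorem~\ref{th:submodularity-characterization}, eq.~\eqref{eq:submodular:bounded}.)

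The inductive step for $k\ge 2$, however, is a genuine gap, and the reason it feels like ``just bookkeeping'' is that you have misidentified the structure of Hirai's argument. The paper (following~\cite{Hirai:0ext}) does \emph{not} induct on the thickening distance $\DELTA{d}$; it proves that for every $\alpha\ge\alpha^*$ the sublevel set $\Gamma^{\mathsmaller\sqsubset}_{\alpha}$ is \emph{connected in $\Gamma^{\mathsmaller\sqsubset}$} (Lemma~\ref{lemma:levelset:connectivity}), and then extracts the descent step from a path in $\Gamma^{\mathsmaller\sqsubset}$ of minimal \emph{$\Gamma^{\mathsmaller\sqsubset}$-length}. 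The engine behind that connectivity is Corollary~\ref{cor:ALSLFKSAHG}, which combines Lemma~\ref{lemma:LASKFHASGASFASF} (submodular level-set connectivity on a single semilattice) with the new Lemma~\ref{lemma:ALISFHLAGSLASG}: given a length-2 path $x\sqsubset y\sqsubset z$ in $\Gamma^{\mathsmaller\sqsubset}$ with $x\not\sqsubseteq z$, $f(x)\le\alpha$, $f(z)<\alpha$, one recursively constructs a replacement $\Gamma^{\mathsmaller\sqsubset}$-path from $x$ to $z$ whose interior stays below $\alpha$, by applying inequality~\eqref{eq:submodular:b} to the pair $([x,y],[y,z])$ in $\calL^\ast_y$ and doing a nontrivial induction on $d(x,z)+d(x,y)$. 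Nothing in that argument produces a point at strictly smaller $\DELTA{d}$ from $p$. The $\DELTA{d}$-descent you are envisioning (``find $r'$ with $f(r')<f(p)$ and $\DELTA{d}(p,r')<k$'') is in fact established in the paper --- but only much later, via the $f$-extremality theory (Lemma~\ref{lemma:f-extremal} and Lemma~\ref{lemma:f-extremal:1} in Section~\ref{sec:f-extremal}), and that development rests on Theorem~\ref{th:domf:arrow} and Lemma~\ref{lemma:Lopt:pq}, both of which already invoke Theorem~\ref{th:Lopt}. So filling in your step~$k\ge2$ by that route would be circular; filling it in some other way requires a concrete construction that you have not supplied and that is not simply Hirai's argument with $\BPrel$ replaced by $\sqsubseteq$.
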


For extended modular complex $\Gamma$
let $\Phi_\Gamma$ be the language over domain $D=V_\Gamma$ that consists of all functions $f:D^n\rightarrow\overline{\mathbb R}$
such that $f$ is $L$-convex on $\Gamma^n$. The results above imply that any instance $\calI$ of $\Phi_\Gamma$
can be solved by the Steepest Descent Algorithm (Algorithm~\ref{alg:SDA}), and each subproblem in line 2 can be solved
in polynomial time. However, we do not know whether the number of steps would be polynomially bounded.
To get a polynomial bound, we introduce an  alternative algorithm, which we denote {\em $\DELTANEIB$-SDA}.

\begin{algorithm}[H]
  \DontPrintSemicolon
  pick arbitrary $x\in\dom f$ \\
  \While{\tt true}
      {
        compute $x^-\in\argmin \,\{f(y)\:|\:y\in\calL^-_x(\Gamma^n)\} $ and $x^+\in\argmin \,\{f(y)\:|\:y\in\calL^+_x(\Gamma^n)\} $ \\
        compute $\DELTA{x}\in\argmin \,\{f(y)\:|\:y\in\calL^+_{x^-}(\Gamma^n)\cap\calL^-_{x^+}(\Gamma^n)\} $  \\
        if $f(\DELTA{x}) = f(x)$ then return $x$, otherwise update $x:= \DELTA{x}$
      }
      \caption{$\DELTANEIB$-SDA for minimizing $f:\Gamma^n\rightarrow\overline{\mathbb R}$ }
      \label{alg:zigzagSDA}
\end{algorithm}
From Theorems~\ref{th:BLP-solves-submodular} and~\ref{th:ebp-addition}(f) we can conclude that points $x^-,x^+,\DELTA{x}$ in lines 3 and 4 can be computed in polynomial time via the BLP relaxation.
To see this for $\DELTA{x}$, observe that function $f+\delta_U$ for convex set $U=\calL^+_{x^-}\cap\calL^-_{x^+}$ is $L$-convex
on $\Gamma$ by Theorem~\ref{th:ebp-addition}(a,d), and thus its restriction to $\calL^+_{x^-}$ is submodular on $\calL^+_{x^-}$
by Theorem~\ref{th:ebp-addition}(f).
(Alternatively, the claim can be deduced from Theorem~\ref{th:BLPsolvesGamma} given later.)

\begin{theorem}\label{th:SDA}
Let $\Gamma$ be an extended modular complex and 
 $f:\Gamma^n\rightarrow\overline\RR$ be an $L$-convex function on $\Gamma^n$.
$\DELTANEIB$-SDA algorithm applied to function $f$ terminates after generating exactly
 $1+\max\limits_{i\in [n]}\DELTA{d}_\Gamma(x_i,{\tt opt}_i(f))$ distinct points,
where~$x$ is the initial vertex and ${\tt opt}_i(f)$ is as defined in Theorem~\ref{th:SDA:orig}.
\end{theorem}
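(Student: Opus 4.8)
The plan is to argue in two directions: (i) the algorithm terminates within the claimed number of distinct points, and (ii) it cannot terminate earlier, i.e.\ it generates \emph{exactly} that many. For the termination bound, I would first reduce to the one-dimensional picture via Lemma~\ref{lemma:ebp-cartesian}(b): $\calL^\sigma_x(\Gamma^n)=\prod_i\calL^\sigma_{x_i}(\Gamma)$, so the neighborhood $\calL^+_{x^-}(\Gamma^n)\cap\calL^-_{x^+}(\Gamma^n)$ in line~4 is a product of sets $\calL^+_{x_i^-}(\Gamma)\cap\calL^-_{x_i^+}(\Gamma)=[x_i^-,x_i^+]$, each of which is a $\DELTANEIB$-neighborhood of $x_i$ (since $x_i^-\sqsubseteq x_i\sqsubseteq x_i^+$ would need to be checked — more precisely $x_i^-\sqsubseteq x_i^+$, which follows because $x_i^-\in\calL^-_{x_i}$, $x_i^+\in\calL^+_{x_i}$ are $\sqsubseteq$-comparable to $x_i$ and one invokes the appropriate admissibility/semilattice property). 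Thus each iterate moves each coordinate to a $\DELTANEIB$-neighbor, so $\DELTA{d}_\Gamma(x_i,\DELTA{x}_i)\le 1$. The heart of the matter is a \emph{progress lemma}: if $x$ is the current iterate and $x^\ast$ is any minimizer, then the new iterate $\DELTA{x}$ satisfies $\DELTA{d}_\Gamma(\DELTA{x}_i,{\tt opt}_i(f))\le \DELTA{d}_\Gamma(x_i,{\tt opt}_i(f))-1$ for every $i$ (unless already $f(x)=\min f$, in which case the algorithm stops). This is the analogue of the contraction estimate in Hirai's proof of Theorem~\ref{th:SDA:orig}, and I would model the argument on \cite[Theorem 4.7]{Hirai:Lconvexity}: use $L$-convexity of $f$ on $\Gamma^n$ to produce, from $x$ and $x^\ast$, a point in the $\DELTANEIB$-neighborhood of $x$ lying strictly closer to $x^\ast$ along every coordinate (using the submodular/median-type inequalities on $\calL^\ast_p$), and then use local-to-global optimality (Theorem~\ref{th:Lopt}) to conclude that the minimum over $\calL^+_{x^-}\cap\calL^-_{x^+}$ is at least as good as that point.

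For the ``exactly'' part, I would observe that each iteration before termination strictly decreases $f$ (by the stopping rule $f(\DELTA{x})\ne f(x)$ combined with $f(\DELTA{x})\le f(x)$, since $x\in\calL^+_{x^-}\cap\calL^-_{x^+}$), so all generated points are distinct; combined with the contraction lemma this gives termination in at most $1+\max_i\DELTA{d}_\Gamma(x_i,{\tt opt}_i(f))$ points. Conversely, one must show the algorithm does not \emph{overshoot}: I would show that $\DELTA{d}_\Gamma(\DELTA{x}_i,{\tt opt}_i(f))$ decreases by \emph{exactly} one (not zero) in each non-terminal step, equivalently that $\max_i\DELTA{d}_\Gamma(\DELTA{x}_i,{\tt opt}_i(f))=\max_i\DELTA{d}_\Gamma(x_i,{\tt opt}_i(f))-1$. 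The $\ge$ direction here is the point that needs care: a priori the steepest-descent step in line~4 might be ``too aggressive'' in one coordinate and thereby fail to be optimal, but in fact since $\DELTA{x}\in\calL^+_{x^-}\cap\calL^-_{x^+}$ each coordinate moves to a $\DELTANEIB$-neighbor of $x_i$, and $\DELTA{d}$ is a graph distance so it can drop by at most $1$ per step; the contraction lemma gives the matching lower bound of $1$. So after $k:=\max_i\DELTA{d}_\Gamma(x_i,{\tt opt}_i(f))$ successful steps we reach a point with all coordinates in ${\tt opt}_i(f)$; this point is a local minimizer of $f$ on its $\calL^\pm$-neighborhood (it need not literally be a global minimizer, but its $\DELTANEIB$-closure contains one), whence $f$ equals its minimum there by Theorem~\ref{th:Lopt}, and the $(k+1)$-st evaluation returns. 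Counting the initial point, that is $k+1$ distinct points.

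I expect the main obstacle to be the contraction lemma: proving that the steepest step in line~4 strictly reduces the $\DELTA{d}_\Gamma$-distance to the set of optimizers in \emph{every} coordinate simultaneously. In Hirai's well-oriented setting this rests on structural facts about $\Gamma^\ast$ and on the interplay of $\calL^+$, $\calL^-$ with the optimizer; in the extended setting I would need the analogues established earlier in this section — Theorem~\ref{th:subdivision:modular} ($\Gamma^\ast$ oriented modular), Theorem~\ref{th:ebp-addition}(f) (submodularity of restrictions to $\calL^\sigma_p$), and Theorem~\ref{th:Lopt} (local-global optimality) — together with the admissibility axioms (\ref{def:ebp}a)--(\ref{def:ebp}c) to control how $[x^-,x^+]$ sits relative to a fixed optimizer. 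Concretely I would: pick an optimizer $x^\ast$ minimizing $\sum_i\DELTA{d}_\Gamma(x_i,x^\ast_i)$ among all optimizers; show $\calL^-_x$ contains a point $x^-$ with $x^-_i$ on a $\DELTANEIB$-geodesic from $x_i$ towards $x^\ast_i$ ``on the down side'', similarly $x^+$ on the ``up side'' (using that medians/meets and joins exist, Lemma~\ref{lemma:GammaIdeal:ebp}); then show the candidate point obtained by taking, coordinatewise, the $\DELTANEIB$-neighbor of $x_i$ one step closer to $x^\ast_i$ lies in $[x^-,x^+]=\calL^+_{x^-}\cap\calL^-_{x^+}$; finally, since this candidate is no better than $\DELTA{x}$ by the choice of $\DELTA{x}$ in line~4, transfer the distance decrease to $\DELTA{x}$ — here one uses that along a $\DELTANEIB$-geodesic $L$-convexity forbids $\DELTA{x}$ from being farther. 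The remaining steps (product decomposition, strict $f$-decrease, distinctness, and the final termination test) are routine.
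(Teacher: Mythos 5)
Your outline correctly identifies the two directions (upper bound on the number of iterations, and exactness), the reduction to the case $n=1$, and the distinctness of iterates via strict decrease of $f$.  The product reduction should be done via Lemma~\ref{lemma:max-d} ($\DELTA{d}_{\Gamma^n}(x,y)=\max_i\DELTA{d}_\Gamma(x_i,y_i)$), not merely via Lemma~\ref{lemma:ebp-cartesian}(b), though that is a minor point.  Two more substantial issues:

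First, the progress lemma as you phrase it is incorrect: ``$\DELTA{d}_\Gamma(\DELTA{x}_i,{\tt opt}_i(f))\le\DELTA{d}_\Gamma(x_i,{\tt opt}_i(f))-1$ for every $i$'' fails already when some coordinate is already optimal ($\DELTA{d}_\Gamma(x_i,{\tt opt}_i(f))=0$).  What one really needs (after reducing to $n=1$) is that a single $\DELTA{d}_\Gamma$ quantity decreases by exactly one per non-terminal step.  More importantly, you propose to prove this by tracking distance to a fixed optimizer $x^\ast$ and showing the algorithm step moves toward it, and you concede this is ``the main obstacle'' without closing it.  The difficulty is exactly the one you note: $\DELTA{x}$ is chosen to minimize $f$ over $\calL^+_{x^-}\cap\calL^-_{x^+}$, not to decrease distance to $x^\ast$; having $f(\DELTA{x})\le f(\mathrm{candidate})$ does not by itself control $\DELTA{d}(\DELTA{x},x^\ast)$, and ``$L$-convexity forbids $\DELTA{x}$ from being farther'' is an assertion, not an argument.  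Also, your claim that each set $\calL^+_{x_i^-}\cap\calL^-_{x_i^+}$ is the full $\DELTANEIB$-ball around $x_i$ is not right and not needed; it is only a \emph{subset} of $\DELTA{B}_1(x_i)$, and the crux is to show that the split-step in lines~3--4 nonetheless reaches an optimum of $f$ over all of $\DELTA{B}_1(x)$.

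Second, and relatedly, the actual proof does \emph{not} argue about distance to a fixed optimizer at all.  It works with the level-set functions $f_k=f+\delta_{\DELTA{B}_k(\bar x)}$ centered at the \emph{initial} point $\bar x$ (these are $L$-convex by Corollary~\ref{cor:NAGLKGLAKSGH} and Theorem~\ref{th:Bkball}), and shows by induction that the $k$-th iterate lies in $\argmin f_k$.  The two key ingredients are Lemma~\ref{lemma:NGAGASGASGA} (for every $q\in\argmin f_{k-1}$ there is a $p\in\argmin f_k$ with $\DELTA{d}(p,q)\le 1$) and Lemma~\ref{lemma:GALJSGAKSG} together with Corollary~\ref{cor:NLAKJDNA}, which establish that the two-stage update $x\mapsto(x^-,x^+)\mapsto\DELTA{x}$ genuinely finds $\argmin\{f(y):y\in\DELTA{B}_1(x)\}$; in particular Lemma~\ref{lemma:GALJSGAKSG} proves $x^-\sqsubseteq p\sqsubseteq x^+$ for the nearest $\DELTA{B}_1(x)$-minimizer $p$.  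These lemmas in turn rest on the $f$-extremality theory of Section~\ref{sec:f-extremal} (pairs $(p,q)$ with $f(p)=\min_{I(p,q)}f$; Lemmas~\ref{lemma:OAUSGAISFHASGALSFHAKJSGHAKSFALGHAKJDSHAKDJFH}, \ref{lemma:f-extremal}, \ref{lemma:f-extremal:1}, \ref{lemma:f-extremal:2}) and on the operations $\UP,\DOWN,\diamond$ and normal $\DELTANEIB$-paths from Section~\ref{sec:NormalPath} (Theorems~\ref{th:NormalPath} and~\ref{th:Bkball}).  Your proposal uses none of this machinery and instead models the argument on Hirai's analysis of SDA on $\Gamma^\ast$ (Theorem~\ref{th:SDA:orig}), which the paper explicitly says does not transfer to this setting.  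So the gap is not a detail to be filled: the central technical development that makes the bound go through is absent from the plan.
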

\begin{remark}
Suppose that $\sqsubset_\Gamma\,=\,\preceq_\Gamma$, and the initial vertex $x$ in $\DELTANEIB$-SDA is computed
as follows: pick some $x_0\in\dom f$ and then set either $x\in\argmin\{f(y)\:|\:y\in\calL^-_{x_0}\}$
or $x\in\argmin\{f(y)\:|\:y\in\calL^+_{x_0}\}$. It can be seen
that in that case $\DELTANEIB$-SDA becomes equivalent to SDA:
we will have $(x^-,x^+,\DELTA{x})=(x,x^+,x^+)$ at even steps and $(x^-,x^+,\DELTA{x})=(x^-,x,x^-)$ at odd steps,
or vice versa.
Thus, Theorem~\ref{th:SDA} generalizes Theorem~\ref{th:SDA:orig} in two ways:
from modular complexes to extended modular complexes, and by allowing relations $\sqsubseteq_\Gamma$ and $\preceq_\Gamma$ to be distinct.

Note that the algorithm for $\ZeroExt{\mu}$ described in the previous section required applying SDA on 2-subdivision $\Gamma^\ast$.
This blows up the size of the graph and the size of LPs that need to be solved at each iteration by an up to a quadratic factor.
$\DELTANEIB$-SDA provides an alternative that avoids such blow-up.  
\end{remark}

\begin{remark}
Algorithm~\ref{alg:zigzagSDA} can also be specialized for minimizing $L^\natural$-convex functions (i.e.\ when $\Gamma$ a directed path on consecutive integers and $a\sqsubset b$ iff $b=a+1$).
In this case $\calL^-_x(\Gamma^n)=[x-{\bf 1},x]$, $\calL^+_x(\Gamma^n)=[x,x+{\bf 1}]$
and $\calL^+_{x^-}(\Gamma^n)\cap\calL^-_{x^+}(\Gamma^n)=U_1\times\ldots\times U_n$
where 
$$U_i=\begin{cases}
\{x_i-1,x_i\} & \mbox{if }(x^-_i,x^+_i)=(x_i-1,x_i) \\
\{x_i,x_i+1\} & \mbox{if }(x^-_i,x^+_i)=(x_i,x_i+1) \\
\{x_i\} & \mbox{otherwise}
\end{cases}.
$$
Note that the resulting algorithm is different from previously proposed versions of SDA~\cite{Murota:SDA:03,KolmogorovShioura:SDA:09,MurotaShioura:SDA:14}.
Thus, $\DELTANEIB$-SDA adds to the toolbox for $L^\natural$-convex minimization.
\end{remark}

We can now show the tractability part of Theorem~\ref{th:main}.
Suppose that graph $H_\mu$ is $F$-orientable modular for a metric space $(V,\mu)$ and subset $F\subseteq \binom{V}2$.
Choose an admissible orientation of $(H_\mu,F)$, and let $\Gamma$ be the corresponding extended modular complex 
with the relation $\sqsubseteq\;=\;\preceq$. Clearly, for any $\{x,y\}\in F$ we have either $x\preceq y$ or $y\preceq x$.
From Theorem~\ref{th:ebp-addition} we conclude that $\ZeroExt{\mu,F}\subseteq \Phi_\Gamma$, and so $\ZeroExt{\mu,F}$ can be solved in polynomial time by the $\DELTANEIB$-SDA algorithm.

More generally, this shows tractability of $\VCSP{\Phi_\Gamma}$ for an extended modular complex  $\Gamma$ assuming that
 a feasible solution of any $\Phi_\Gamma$-instance can be computed in polynomial time.
Our last result shows that $\VCSP{\Phi_\Gamma}$ is tractable even without this assumption.
\begin{theorem}\label{th:BLPsolvesGamma}
If $\Gamma$ is an extended modular complex then BLP relaxation solves $\Phi_\Gamma$, and an optimal solution of any $\Phi_\Gamma$-instance can be computed in polynomial time.
\end{theorem}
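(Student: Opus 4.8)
The plan is to establish Theorem~\ref{th:BLPsolvesGamma} by combining two ingredients: the local-to-global optimality property of $L$-convex functions (Theorem~\ref{th:Lopt}) together with the submodularity of restrictions to $\calL^\pm_x$ (Theorem~\ref{th:ebp-addition}(f)), and the fact that BLP solves submodular functions on products of valuated modular semilattices (Theorem~\ref{th:BLP-solves-submodular}). The key observation is that the BLP relaxation is closed under a natural ``convex combination'' or ``averaging'' operation on feasible solutions, and that one can use the existence of a binary symmetric fractional polymorphism as a certificate of BLP-tightness in the spirit of Theorem~\ref{th:KHDAHLKDAG} --- but here we must be careful since $\Phi_\Gamma$ is not finite-valued (it contains indicator functions $\delta_U$). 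So instead I would argue more directly via the SDA machinery.

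\textbf{Step 1: Reduce to the feasibility question.} I would first show that BLP applied to a $\Phi_\Gamma$-instance $\calI$ with $f=f_\calI:\Gamma^n\to\overline\RR$ is never ``weakly infeasible'': if $\dom f\ne\varnothing$ then the BLP relaxation has a finite optimum equal to $\min f$, and if $\dom f=\varnothing$ then BLP detects this (its optimum is $+\infty$). The hard direction is producing an actual point $x\in\dom f$ in polynomial time when BLP is finite. Here I would use the standard rounding/conditioning idea for BLP: the support of an optimal BLP solution, restricted appropriately, stays within the convex sets $\calL^+_x\cap\calL^-_{x'}$, so one can iteratively condition variables (fix them to values in the BLP support) while keeping the relaxation feasible, using the structural closure properties (\ref{def:ebp}a)--(\ref{def:ebp}c) of the admissible relation to guarantee that the resulting fixed labeling lies in $\dom f$.

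\textbf{Step 2: Run $\DELTANEIB$-SDA with BLP as the subroutine.} Once a feasible $x\in\dom f$ is available, I would invoke Theorem~\ref{th:SDA}: $\DELTANEIB$-SDA terminates in $1+\max_i\DELTA d_\Gamma(x_i,{\tt opt}_i(f))\le 1+{\tt diam}(\DELTA\Gamma)$ iterations, which is polynomial in $|V_\Gamma|$. Each iteration solves three minimization subproblems, and by the discussion following Algorithm~\ref{alg:zigzagSDA} --- using Theorem~\ref{th:ebp-addition}(a,d,f) --- each is a minimization of a function that is submodular on a product of valuated modular semilattices, so by Theorem~\ref{th:BLP-solves-submodular} BLP solves each subproblem exactly in polynomial time. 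Chaining these, $\DELTANEIB$-SDA outputs an exact minimizer of $f$ in polynomial time.

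\textbf{Step 3: Deduce BLP-tightness for $f$ itself.} Finally, to conclude that the \emph{original} BLP relaxation of $\calI$ (not just the subproblem relaxations) is tight, I would argue that the value returned by $\DELTANEIB$-SDA, being both an upper bound ($=f(x^\ast)$ for the computed minimizer) and --- via the BLP subproblem at the final iteration, where the local neighborhood certifies global optimality by Theorem~\ref{th:Lopt} --- a lower bound on the BLP optimum of $\calI$, sandwiches $\mathrm{BLP}(\calI)=\min f$. Concretely: at termination $x^\ast$ is a local minimizer on $\calL^\pm_{x^\ast}$, the BLP solution of the last subproblem extends (by zero-padding outside the neighborhood) to a feasible BLP solution of $\calI$ of the same value $f(x^\ast)$, and since BLP optima are always lower bounds, $\mathrm{BLP}(\calI)\le f(x^\ast)=\min f\le \mathrm{BLP}(\calI)$. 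The main obstacle I anticipate is Step~1 --- the polynomial-time computation of a feasible point purely from BLP, since $\Phi_\Gamma$ is infinite-valued and one cannot directly cite Theorem~\ref{th:KHDAHLKDAG}; this requires a careful conditioning argument exploiting that the BLP support of a $\Phi_\Gamma$-instance is always contained in a convex (in $d_\Gamma$) substructure, which is itself closed under the relevant meet/join operations by Definition~\ref{def:ebp}.
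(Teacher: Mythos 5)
Your proposal takes a genuinely different route from the paper, but it contains a fatal gap in Step~3, and Step~1 is also undeveloped in a way that probably cannot be repaired as stated.

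The paper proves Theorem~\ref{th:BLPsolvesGamma} non-algorithmically: it first shows that $\Phi_\Gamma$ does not satisfy condition (MC), using the connectivity of sublevel sets of $L$-convex functions in $\Gamma^{\mathsmaller\sqsubset}$ (Lemma~\ref{lemma:levelset:connectivity}); it then invokes Theorem~\ref{th:MC} on the finite-valued sublanguage $\Phi^\circ_\Gamma=\{f\in\langle\Phi_\Gamma\rangle : f\mbox{ finite-valued}\}$ to get symmetric fractional polymorphisms of every arity; and finally it lifts these to $\Phi_\Gamma$ by a limiting argument with $f_C(x)=\min_y(f(y)+C\mu_n(x,y))$ and concludes by Theorem~\ref{th:BLP:characterization}. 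No SDA run, no rounding.

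The core error in your Step~3 is that the implication goes the wrong way. Zero-padding the optimal BLP solution of the \emph{restricted} subproblem ($f$ plus the indicator of $\calL^\pm_{x^\ast}$) does yield a feasible point of the BLP polytope for the unrestricted instance $\calI$, of value $f(x^\ast)=\min f$. But a feasible point of a \emph{minimization} relaxation certifies only an \emph{upper} bound: it shows $\mathrm{BLP}(\calI)\le \min f$, which is the trivial direction (it already follows from $x^\ast$ being an integral feasible point). Your chain ``$\mathrm{BLP}(\calI)\le f(x^\ast)=\min f\le \mathrm{BLP}(\calI)$'' asserts the nontrivial inequality $\min f \le \mathrm{BLP}(\calI)$ without any justification; restricting the BLP to the local neighborhood only \emph{raises} the LP value, so tightness of the restricted BLP tells you nothing about tightness of the original one. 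More fundamentally, Steps~1--2, even if made rigorous, would only establish the ``polynomial-time solvable'' half of the theorem; BLP-tightness is a logically separate claim that an exact algorithm (via $\DELTANEIB$-SDA) does not, by itself, imply. Step~1 also leans on an unproved structural claim --- that BLP supports of $\Phi_\Gamma$-instances lie in a $d_\Gamma$-convex set --- which does not follow from Definition~\ref{def:ebp} and would itself need a proof at least as hard as the result; the paper sidesteps feasibility entirely by routing through the fractional-polymorphism characterization (Theorem~\ref{th:BLP:characterization}), which already bundles in the polynomial-time recovery of an optimal solution.
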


Note that previously Hirai remarked that BLP directly solves  $\ZeroExt{\mu}$ for orientable modular metrics
only assuming that ${\tt P}\ne {\tt NP}$, as a consequence of VCSP dichotomy for finite-valued
languages (see Section 6 in~\cite{Hirai:0ext}). Theorem~\ref{th:BLPsolvesGamma} now establishes this fact unconditionally.

All proofs are given in Sections~\ref{sec:proofs} and~\ref{sec:VCSP}. For Lemmas~\ref{lemma:GammaIdeal:ebp}, \ref{lemma:ebp-cartesian}
and Theorems~\ref{th:subdivision:modular}, \ref{th:ebp-addition},  \ref{th:Lopt}
we mostly follow the proofs of the corresponding claims in~\cite{Hirai:0ext,Chalopin,Hirai:Lconvexity}
(replacing properties of relation $\BPrel$ with the properties of an admissible relation~$\sqsubseteq$).
To analyze $\DELTANEIB$-SDA (Theorem~\ref{th:SDA}), we introduce new concepts, such as binary operations $\UP,\DOWN,\diamond$
and the notion of $f$-extremality (see Sections~\ref{sec:NormalPath} and~\ref{sec:f-extremal}).
 Theorem~\ref{th:BLPsolvesGamma} does not have an analogue in~\cite{Hirai:0ext,Chalopin,Hirai:Lconvexity}.


\section{Submodular functions on a valuated modular semilattice}\label{sec:submodularity-on-semilattice}
Let $\calL$ be a valuated modular semilattice with valuation $v$.
This section gives the definition of a {\em submodular function on $\calL$},
and thus completes the definition of $L$-convex functions.

Let $\Gamma=(V,E,w)$ be the Hasse diagram of $\calL$ where edge $p\rightarrow q$ is assigned weight $v(q)-v(p)>0$.
As discussed in Section~\ref{sec:background}, graph $\Gamma$ is oriented modular.
Denote $\mu=\mu_\Gamma$ and $d=d_\Gamma$.
 Recall that $\calL$ is viewed as a metric space with the metric $\mu$,
and the definitions of the metric interval $I(p,q)$, convex sets, etc would be the same for $(\calL,\mu)$ and $(\calL,d)$.

For $p\preceq q$ let us denote $v[p,q]=v(q)-v(p)$. The following facts are known.
\begin{lemma}[{\cite[Lemma 2.15]{Hirai:0ext}}]\label{lemma:semilattice:technical}
The following holds for each $p,q\in\calL$ with $s=p\wedge q$. \\
{\rm (a)} $\mu(p,q)=\mu(s,p)+\mu(s,q)=v[s,p]+v[s,q]$. \\
{\rm (b)} The metric interval $I(p,q)$ is equal to the set of elements $u$ that is represented as $u = a \vee b$ 
    for some $(a,b) \in [s,p] \times [s,q]$, where such a representation is unique, and $(a,b)$ equals $(u \wedge p, u \wedge q)$. \\
{\rm (c)} For $u,u'\in I(p,q)$ there holds $u\wedge u'=(u\wedge u'\wedge p)\vee (u\wedge u'\wedge q)$; in particular $u\wedge u'\in I(p,q)$.
\end{lemma}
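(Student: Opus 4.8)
The strategy is to prove (a), then (b), then (c), using throughout two elementary facts. First, by~\eqref{eq:valuation:def:a} every covering edge $x\to y$ of the Hasse diagram $\Gamma$ carries weight $v(y)-v(x)>0$, so $\mu(x,y)\ge|v(x)-v(y)|$ for all $x,y$, with equality whenever $x\preceq y$ (use any saturated chain in the modular lattice $[x,y]$). Second, the hypotheses force all ``cross meets'' to collapse to $s$: if $a\preceq p$ and $b\preceq q$ with $p\wedge q=s$, then $a\wedge b=p\wedge b=a\wedge q=s$, and similarly for the other meets appearing below; combined with~\eqref{eq:valuation:def:b} this converts the relevant joins and meets into plain arithmetic of $v$-values, which we then compare against the target $\mu(p,q)=v[s,p]+v[s,q]$.

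For (a), induct on $d_\Gamma(p,q)$. The bound $\mu(p,q)\le\mu(p,s)+\mu(s,q)=v[s,p]+v[s,q]$ is immediate. For the reverse, pick a neighbour $p_1$ of $p$ on a shortest $p$-$q$ path, so $d_\Gamma(p_1,q)=d_\Gamma(p,q)-1$, and apply the inductive hypothesis to $(p_1,q)$ with $s_1:=p_1\wedge q$. If $p_1\prec p$ then $s_1\preceq s$, and adding the weight $v(p)-v(p_1)$ of the edge $p_1\to p$ to the formula for $\mu(p_1,q)$ gives $\mu(p,q)=v(p)+v(q)-2v(s_1)$; comparison with the upper bound forces $v(s_1)=v(s)$, hence $s_1=s$ by~\eqref{eq:valuation:def:a}. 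If $p\prec p_1$ then $s\preceq s_1$, $p\wedge s_1=s$, and since $p\prec p_1$ is a covering one gets $p\vee s_1=p_1$, so~\eqref{eq:valuation:def:b} yields $v[s_1,p_1]=v[s,p]$; substituting this back again gives the formula. Equivalently, this says $s\in I(p,q)$.

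For (b), the inclusion $\supseteq$ is a direct computation: for $(a,b)\in[s,p]\times[s,q]$ for which $a\vee b$ exists, the modular law gives $p\wedge(a\vee b)=a$ and $q\wedge(a\vee b)=b$, so applying (a) to $(p,a\vee b)$ and $(q,a\vee b)$ and using~\eqref{eq:valuation:def:b} on $a,b$ gives $\mu(p,a\vee b)+\mu(a\vee b,q)=\mu(p,q)$; the same computation shows $(a,b)=\big((a\vee b)\wedge p,(a\vee b)\wedge q\big)$, which gives the ``unique representation'' claim. For $\subseteq$, let $u\in I(p,q)$. Since $\Gamma$ is a modular complex and $\calL^\uparrow_s$ is convex in $\Gamma$ (Lemma~\ref{lemma:GammaIdeal}) and $p,q\in\calL^\uparrow_s$, we get $u\succeq s$; hence $a:=u\wedge p\in[s,p]$, $b:=u\wedge q\in[s,q]$, and $a\vee b\preceq u$ exists. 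Applying (a) to $(p,u)$ and $(u,q)$ and using $u\in I(p,q)$ together with~\eqref{eq:valuation:def:b} forces $v(a\vee b)=v(u)$, hence $a\vee b=u$ by~\eqref{eq:valuation:def:a}.

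Finally, (c). Write $u=a\vee b$ and $u'=a'\vee b'$ with $a=u\wedge p$, $b=u\wedge q$, $a'=u'\wedge p$, $b'=u'\wedge q$ as in (b); since $u,u'\succeq s$ the meet $m:=u\wedge u'$ exists with $m\succeq s$, and a short computation with the partial order alone gives $m\wedge p=a\wedge a'\in[s,p]$ and $m\wedge q=b\wedge b'\in[s,q]$. By (b) it therefore suffices to show $m=(a\wedge a')\vee(b\wedge b')$, i.e.\ $m\in I(p,q)$, and this is the delicate point: unlike (a) and (b) it is a statement about the modular-lattice structure rather than valuation bookkeeping. Working inside the modular lattice $\calL^\downarrow_u$ and writing $m\vee a=a\vee\big(b\wedge(m\vee a)\big)$ by the modular law, one uses $m\preceq u'$ to show $b\wedge(m\vee a)\preceq m$, whence $m\preceq a\vee(b\wedge b')$; symmetrically $m\preceq a'\vee(b\wedge b')$; and a last application of the modular law identifies the meet of these two elements with $(a\wedge a')\vee(b\wedge b')$, giving $m\preceq(a\wedge a')\vee(b\wedge b')$ and hence equality. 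This argument tacitly assumes $u\vee u'$ exists (so that the ambient modular lattice is available); the remaining case, where $u,u'$ have no common upper bound, must be handled separately—there $m$ is forced down toward $s$ and the identity is immediate—or one avoids the issue altogether by instead inducting on $d_\Gamma(p,q)$: if $p\ne s$, shrink $p$ to some $p_1\prec p$ in the modular lattice $[s,p]$, note $p_1\wedge q=s$ and $I(p_1,q)\subseteq I(p,q)$, and reduce. I expect this establishing of $u\wedge u'\in I(p,q)$ to be the main obstacle.
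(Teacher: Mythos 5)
The paper treats this lemma as a cited background result (Hirai, Lemma 2.15) and does not reprove it, so there is no in-paper proof to compare against. Judged on its own merits, your argument for (a) and (b) is sound (a small remark: invoking Lemma~\ref{lemma:GammaIdeal} to get $u\succeq s$ in (b) risks circularity in Hirai's exposition, where Section~4 results postdate Lemma~2.15; but this can be bypassed—once (a) is known, applying it to $(p,u)$ and $(u,q)$, using~\eqref{eq:valuation:def:b} on $p\wedge u,\,u\wedge q$, and then the strict monotonicity~\eqref{eq:valuation:def:a} directly forces $s\wedge u=s$ and $(p\wedge u)\vee(u\wedge q)=u$, giving both $u\succeq s$ and the decomposition in one stroke).

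Part (c) is where the argument breaks. Your plan is: establish $b\wedge(m\vee a)\preceq m$, deduce $m\preceq a\vee d$ with $d=b\wedge b'$, get $m\preceq a'\vee d$ symmetrically, and finish with $(a\vee d)\wedge(a'\vee d)=(a\wedge a')\vee d$. The final modular-law step and the deduction ``$b\wedge(m\vee a)\preceq m \Rightarrow m\preceq a\vee d$'' are both fine and, contrary to your worry, live entirely inside $\calL^\downarrow_u$ and $\calL^\downarrow_{u'}$; no ambient $u\vee u'$ is ever needed. The problem is the assertion $b\wedge(m\vee a)\preceq m$ itself: you write ``one uses $m\preceq u'$ to show'' it, but $m\preceq u'$ does not yield it, since $a\not\preceq u'$ in general and $b\wedge(m\vee a)\preceq m\vee a\preceq u$ gives no leverage toward $\preceq u'$. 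Worse, this step is \emph{equivalent} to the whole of (c): in $\calL^\downarrow_u$ the modular law gives $(a\vee d)\wedge m=d\vee(a\wedge m)=d\vee c$, so $m\preceq a\vee d$ iff $m=c\vee d$, which is precisely the conclusion. Put differently (via the valuation arithmetic), the claim reduces to $v(m)\le v(c)+v(d)-v(s)$, i.e.\ $\mu(u,u')\ge\mu(a,a')+\mu(b,b')$; one direction follows from the non-expansiveness of the gate maps onto $[s,p]$ and $[s,q]$, but only componentwise, and the additivity is genuinely nontrivial. Your fallback induction on $d_\Gamma(p,q)$ by shrinking $p$ to $p_1\prec p$ also does not reduce the problem, because $u,u'\in I(p,q)$ need not lie in $I(p_1,q)$, so the inductive hypothesis does not apply to the pair $(u,u')$. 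The gap in (c) is therefore real and is the entire content of that part of the lemma.
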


The construction in~\cite{Hirai:0ext} can be described as follows
(see Figure~\ref{fig:Ipq}(a) for a conceptual diagram).

\begin{figure}[t]
\begin{center}
		\begin{tabular}{@{\hspace{25pt}}c@{\hspace{35pt}}c@{\hspace{35pt}}c@{\hspace{35pt}}c} 
			\includegraphics[scale=0.5,angle=90]{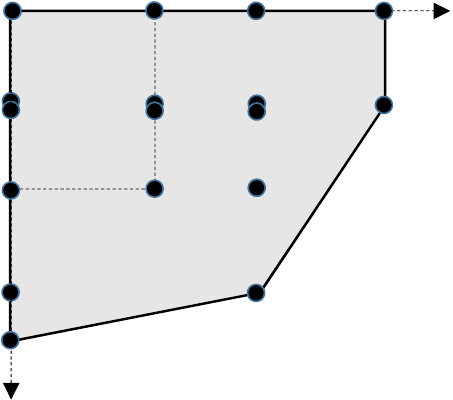} 
			\begin{picture}(0, 0)
				\put(-20,-6){\footnotesize $p$} 
				\put(-108,-6){\footnotesize $p\wedge q$} 
				\put(-106,91){\footnotesize $q$} 

				\put(-57,41){\footnotesize $u$} 
				\put(-64,-6){\footnotesize $p\wedge u$} 
				\put(-122,35){\footnotesize $q\wedge u$} 

				\put(-17,6){\footnotesize $u_0$} 
				\put(-28,64){\footnotesize $u_1$} 
				\put(-82,85){\footnotesize $\ldots$} 
				\put(-95,96){\footnotesize $u_k$} 
			\end{picture}
			\raisebox{26pt}{\hspace{-46pt} \includegraphics[scale=0.5,trim=105pt 80pt 88pt 80pt,clip]{diagram1-eps-converted-to.pdf}}
&
			\includegraphics[scale=0.5,angle=90]{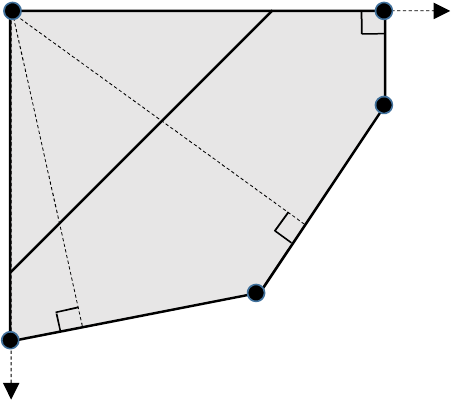} 
			\begin{picture}(0, 0)
				\put(-17,6){\footnotesize $u_0$} 
				\put(-28,64){\footnotesize $u_1$} 
				\put(-82,85){\footnotesize $\ldots$} 
				\put(-95,96){\footnotesize $u_k$} 
				\put(-40,-3){\footnotesize $\alpha_{-1}$} 
				\put(-48,18){\footnotesize $\alpha_0$} 
				\put(-109,63){\footnotesize $\alpha_k$} 
				\put(-96,66){\footnotesize $\alpha_{k-1}$} 
			\end{picture}
&
			\includegraphics[scale=0.5,angle=90]{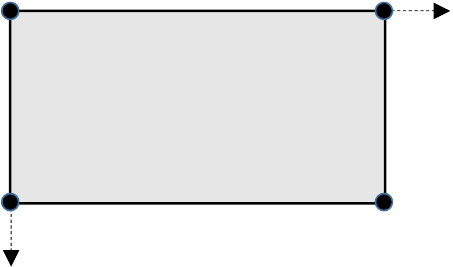} 
			\begin{picture}(0, 0)
				\put(-21,-6){\footnotesize $p$} 
				\put(-73,91){\footnotesize $q$} 
				\put(-16,90){\footnotesize $p\vee q$} 
				\put(-75,-6){\footnotesize $p\wedge q$} 
			\end{picture}
&
			\includegraphics[scale=0.5,angle=90]{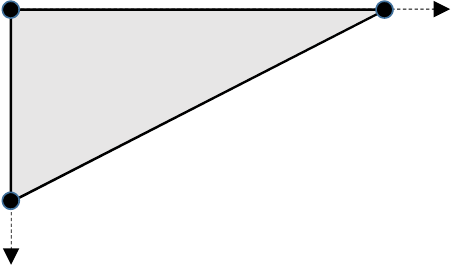} 
			\begin{picture}(0, 0)
				\put(-21,-6){\footnotesize $p$} 
				\put(-73,91){\footnotesize $q$} 
				\put(-75,-6){\footnotesize $p\wedge q$} 
			\end{picture}
~ \vspace{10pt} \\
\small (a) & \small (b) & \small (c) & \small (d) \vspace{-14pt}
		\end{tabular}
	\end{center}
	\caption{(a) Each point in $I(p,q)$ is assigned a coordinate in $\mathbb R^2$. The convex hull of these coordinates ($\Conv I(p,q)$) is in gray.
	Distinct points may have the same coordinates (as some of the points shown in the interior of the gray region), but the coordinates of points in $\calE(p,q)=\{u_0,u_1,\ldots,u_k\}$ are guaranteed to be unique. 
	(b)~Definition of $\{\theta_i\}_i$ and $p\vee_\theta q$. First, define points $\alpha_{-1},\alpha_0,\ldots,\alpha_k$ in $\mathbb R^2$ as follows:
	set $\alpha_{-1}=(\sqrt{2}/2,0)$, $\alpha_{k}=(0,\sqrt{2}/2)$ (so that $||\alpha_k-\alpha_{-1}||=1$),
	and for $i\in[k-1]$ let $\alpha_i$ be the intersection of segment $[\alpha_{-1},\alpha_k]$ and the line that goes through the origin and is perpendicular to the line
	passing through points $v_{pq}(u_{k})$ and $v_{pq}(u_{k+1})$.
	Then $\theta_i=||\alpha_i-\alpha_{-1}||$ and $p\wedge_\theta q=u_i$ for each $i\in[0,k]$ and $\theta\in(\theta_{i-1},\theta_i)$.
	(c)~Bounded pair $(p,q)$. 
	(d)~Antipodal pair $(p,q)$. 
        }
\label{fig:Ipq}
\end{figure}

\begin{itemize}
	\item[{\rm (i)}] For $u \in I(p,q)$, let $v_{pq}(u)$ be the vector in $\mathbb R^2_+$ defined by
	\begin{equation}\label{eq:vpqu}
	v_{pq}(u) = (v[s,u \wedge p], v[s,u \wedge q])\qquad\mbox{where~~~~} s=p\wedge q
	\end{equation}
	\item[{\rm (ii)}] Let $\Conv I(p,q)$ denote 
	the convex hull of vectors $v_{pq}(u)$ for all $u \in I(p,q)$.
	\item[{\rm (iii)}] Let $\calE(p,q)$ be the set of elements $u$ in $I(p,q)$
	such that $v_{pq}(u)$ is a maximal extreme point of $\Conv I(p,q)$. (This set is called ``$(p,q)$-envelope'').
	Note that $p,q\in\calE(p,q)$. Hirai proves that elements of $\calE(p,q)$ receive unique coordinates  \cite[Lemma 3.1]{Hirai:0ext}:
	\begin{equation}\label{eq:v:uniqueness}
	v_{pq}(u)\ne v_{pq}(u') \qquad \forall u\in\calE(p,q),u'\in I(p,q)-\{u\}
	\end{equation}
	\item[{\rm (iv)}] For $\theta\in[0,1]$ define vector $c_\theta=(1-\theta,\theta)$. For points $p,q\in\calL$ let $p\vee_\theta q$
	be the point $u\in I(p,q)$ that maximizes $\langle c_\theta,v_{pq}(u)\rangle$, assuming that the maximizer is unique.
	If a maximizer is not unique then instead set $p\vee_\theta q=\perp$ (``undefined'').
	\item[{\rm (v)}] By the property in eq.~\eqref{eq:v:uniqueness}, there are only a finite number of values $\theta\in[0,1]$
	such that $p\vee_\theta q=\perp$ for some $p,q\in\calL$. Let $\Theta$ be the set of values $\theta\in[0,1]$ such that $p\vee_\theta q\ne \perp$ for all $p,q\in\calL$,
	then $\Theta\subseteq[0,1]$ has measure 1.
\end{itemize}

We are now ready to define submodular functions on $\calL$.
A function $f:\calL \rightarrow \overline{\mathbb R}$ is called {\em submodular} if
every $p,q\in\calL$ satisfies 
\begin{subequations}\label{eq:submodular}
\begin{equation}\label{eq:submodular:a}
f(p) + f(q) \ge f(p \wedge q) + \int_{\theta \in \Theta} f(p\vee_\theta q)d\theta
\end{equation}
This is equivalent to the inequality
\begin{equation}\label{eq:submodular:b}
f(p) + f(q) \ge f(p \wedge q) + \sum_{i=0}^k (\theta_i-\theta_{i-1}) f(u_i)
\end{equation}
\end{subequations}
where $u_0,\ldots,u_k$ are the sorted elements of $\calE(p,q)$ with $(u_0,u_k)=(p,q)$ and $(\theta_{-1},\theta_k)=(0,1)$,
$$
\theta_i=\frac{v[s_{i},u_i]}{v[s_i,u_i]+v[s_i,u_{i+1}]}\qquad\mbox{where~~~} s_i=u_i\wedge u_{i+1}
$$
for $i\in[0,k-1]$. (See Fig.~\ref{fig:Ipq}(b) for a geometric interpretation of values $\theta_i$).

Hirai also gives a simplified characterization of submodularity.
Pair $(p,q)$ is called {\em bounded} if $p\vee q$ exists, implying $\calE(p,q)=\{p,q,p\vee q\}$
(Fig.~\ref{fig:Ipq}(c)).
Pair $(p,q)$ is called {\em antipodal} if $\calE(p,q)=\{p,q\}$ (Fig.~\ref{fig:Ipq}(d)).
Equivalently, $(p,q)$ is antipodal if and only if 
\begin{equation}\label{eq:antipodal:def}
v[a,p] v[b,q] \ge v[p \wedge q, a] v[p \wedge q, b]
\qquad \forall (a,b):p \succeq a \succeq p \wedge q \preceq b \preceq q,a\vee b\mbox{ exists}
\end{equation}
We say that $(p,q)$ is {\em special} if it is either bounded or antipodal.
For special pairs inequality~\eqref{eq:submodular} reduces to
\begin{subequations}\label{eq:submodular:special}
\begin{align}
f(p) + f(q) &\ge f(p \wedge q) + f(p\vee q)  && \mbox{if $(p,q)$ bounded} \label{eq:submodular:bounded} \\
v[p \wedge q,q] f(p) +  v[p \wedge q,p] f(q) &\ge (v[p \wedge q,p] + v[p \wedge q,q])f(p \wedge q) && \mbox{if $(p,q)$ antipodal} \label{eq:submodular:antipodal}
\end{align}
These inequalities are called respectively {\em submodularity} and {\em $\wedge$-convexity} inequalities for $(p,q)$.
\end{subequations}
\begin{theorem}[{\cite[Theorem 3.5]{Hirai:0ext}}]\label{th:submodularity-characterization}
Function $f:\calL\rightarrow\overline{\mathbb R}$ is submodular if and only if it satisfies
\begin{itemize}
\item [{\rm (1)}] $\calE(p,q) \subseteq \dom f$ for $p,q \in \dom f$;
\item [{\rm (2)}] the submodularity inequality for every bounded pair $(p,q)$;
\item [{\rm (3)}] the $\wedge$-convexity inequality for every antipodal pair $(p,q)$.
\end{itemize}
\end{theorem}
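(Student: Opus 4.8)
The plan is to prove both directions. The ``only if'' direction is nearly immediate: assume $f$ is submodular, i.e.\ satisfies~\eqref{eq:submodular} for all $p,q$. For condition~(1), observe that if $p,q\in\dom f$ but some $u_i\in\calE(p,q)$ has $f(u_i)=\infty$, then the right-hand side of~\eqref{eq:submodular:b} is $\infty$ (since $\theta_i-\theta_{i-1}>0$ and $\infty\cdot 0 = 0$ only when the coefficient is exactly $0$), contradicting finiteness of the left-hand side; hence $\calE(p,q)\subseteq\dom f$. Conditions~(2) and~(3) are just~\eqref{eq:submodular:b} specialized to bounded pairs (where $\calE(p,q)=\{p,q,p\vee q\}$, so the sum collapses to $f(p\vee q)$, using that the two ``$\theta$-weights'' on $p$ and $q$ sum to the full interval and the weight on $p\vee q$ accounts for the rest — actually when $(p,q)$ is bounded one checks directly that~\eqref{eq:submodular:b} becomes exactly~\eqref{eq:submodular:bounded}) and to antipodal pairs (where $\calE(p,q)=\{p,q\}$, $k=1$, and $\theta_0 = v[s,p]/(v[s,p]+v[s,q])$ with $s = p\wedge q$, so~\eqref{eq:submodular:b} rearranges to~\eqref{eq:submodular:antipodal}). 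So this direction reduces to routine bookkeeping with the definitions of $\theta_i$ and $v[\cdot,\cdot]$.

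The substantive direction is ``if''. Here I would argue by induction on $k = |\calE(p,q)| - 1$, i.e.\ on the number of ``segments'' in the envelope of the pair $(p,q)$ we are trying to verify~\eqref{eq:submodular:b} for. The base case $k\le 1$ is exactly hypotheses~(2) (when $(p,q)$ is bounded, so $p\vee_\theta q = p\vee q$ for all $\theta\in(0,1)$) and~(3) (when $(p,q)$ is antipodal). For the inductive step with $k\ge 2$: pick an interior envelope point, say $u_1$ (or a convenient $u_i$), and try to split the pair $(p,q)$ into two pairs with strictly smaller envelopes whose submodularity inequalities can be summed to give the one for $(p,q)$. The natural candidates are the pairs $(p, u_1)$ and $(u_1, q)$, or perhaps $(u_1 \wedge p, \ldots)$-type pairs arising from Lemma~\ref{lemma:semilattice:technical}. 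One needs: (i) that $\calE(p,u_1)$ and $\calE(u_1,q)$ are ``sub-envelopes'' of $\calE(p,q)$ — geometrically this is the statement that splitting the piecewise-linear boundary curve $\Conv I(p,q)$ at an extreme point yields the two sub-curves, which should follow from eq.~\eqref{eq:v:uniqueness} and convexity; (ii) that the meet structure is compatible, e.g.\ $p\wedge u_1$, $u_1 \wedge q$ relate correctly to $p\wedge q$ (using Lemma~\ref{lemma:semilattice:technical}(c), since all $u_i\in I(p,q)$); and (iii) a $\theta$-reparametrization lemma showing the weighted combination $\sum(\theta_i-\theta_{i-1})f(u_i)$ over the big envelope equals the appropriate convex combination of the analogous sums over the two sub-envelopes, plus a correction term at $u_1$ that telescopes. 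The weights $\theta_i$ are exactly the breakpoints of the supporting-line construction in Figure~\ref{fig:Ipq}(b), so this is a statement about how those breakpoints behave under restriction to a sub-interval of the envelope.

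The main obstacle I anticipate is step~(iii) together with verifying that $(p,u_1)$ and $(u_1,q)$ — or whatever sub-pairs I choose — indeed have \emph{strictly} smaller envelopes and that their envelopes nest inside $\calE(p,q)$ so the induction hypothesis applies and so that condition~(1) (applied recursively) keeps all the relevant function values finite. A cleaner alternative, which I would fall back on if the direct split is messy, is to use~\eqref{eq:submodular:a} instead of~\eqref{eq:submodular:b}: show that for almost every $\theta$, the point $u = p\vee_\theta q$ lies on an edge of the envelope, write $f(p)+f(q) \ge f(p\wedge q) + f(u) + (\text{something})$ by combining the inequalities for the bounded/antipodal sub-pairs $(p,u)$ and $(u,q)$ — note $(p,u)$ and $(u,q)$ are each bounded whenever $u\in I(p,q)\setminus\{p,q\}$, since $u = (u\wedge p)\vee(u\wedge q)$ by Lemma~\ref{lemma:semilattice:technical}(b), giving $p\vee u$ and $u\vee q$ directly — and then integrate over $\theta$. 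This route trades the combinatorial induction for an integration argument but still needs the key geometric fact that the envelope of $(p,q)$ is covered by the envelopes of the sub-pairs as $u$ ranges over $I(p,q)$. Either way, the crux is the geometry of $\Conv I(p,q)$ and how it decomposes, which is where I would spend most of the effort; the algebraic manipulations with $v[\cdot,\cdot]$ and the $\theta_i$'s, while tedious, are mechanical once the decomposition is set up.
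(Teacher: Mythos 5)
This theorem is quoted verbatim from Hirai~\cite[Theorem~3.5]{Hirai:0ext}; the present paper does not reprove it, so there is no proof in the paper to compare against line by line. I will therefore assess your proposal on its own terms, using the machinery the paper does develop (Lemma~\ref{lemma:semilattice:technical}, Lemma~\ref{lemma:shortest-subpath}, and the proof of the related Theorem~\ref{th:submodularity-strong-characterization} in Appendix~\ref{sec:th:submodularity-strong-characterization}).

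Your ``only if'' direction is essentially fine, but needs one small repair. For condition~(1) you need the coefficient $\theta_i-\theta_{i-1}$ in~\eqref{eq:submodular:b} to be strictly positive at the point you are trying to force into $\dom f$. That is not automatic: for a bounded pair the sorted envelope is $(p, p\vee q, q)$, and a direct computation gives $\theta_0=0$ and $\theta_1=1$, so $p$ and $q$ receive coefficient $0$. The correct observation is that the coefficient is $0$ only at $u_0=p$ or $u_k=q$ (this follows from the supporting-line construction: interior breakpoints $\alpha_{i-1}\ne\alpha_i$ because consecutive envelope segments have distinct slopes), and since $p,q\in\dom f$ is assumed, the argument goes through. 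You should say this explicitly rather than asserting $\theta_i-\theta_{i-1}>0$ across the board.

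The ``if'' direction is where the real gap is. Your fallback route hinges on the claim that ``$(p,u)$ and $(u,q)$ are each bounded whenever $u\in I(p,q)\setminus\{p,q\}$, since $u=(u\wedge p)\vee(u\wedge q)$ \ldots giving $p\vee u$ and $u\vee q$ directly.'' This inference is not valid: knowing $u$ is a join of something below $p$ and something below $q$ says nothing about $p$ and $u$ having a common upper bound. Concretely, take $\calL=F_3\times\{0,1\}$ where $F_3$ is the fan $\{0,a,b,c\}$ with three incomparable atoms, and set $p=(a,1)$, $q=(b,0)$. Then $u=(b,1)\in\calE(p,q)-\{p,q\}$ (its coordinate $v_{pq}(u)=(1,1)$ lies strictly above the segment from $v_{pq}(p)=(2,0)$ to $v_{pq}(q)=(0,1)$), yet $p\vee u$ does not exist because $a\vee b$ does not exist in $F_3$. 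So the bounded-pair inequality for $(p,u)$ is unavailable, and the integration argument you sketch cannot be run as stated.

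The primary route — induction on $|\calE(p,q)|-1$, peeling off an envelope point — is closer to what one actually needs, but the split you propose (summing the envelope inequalities for $(p,u_1)$ and $(u_1,q)$) does not, by itself, produce the inequality for $(p,q)$: the meets $p\wedge u_1$ and $u_1\wedge q$ do not match up with $p\wedge q$, and the $f(u_1)$-terms do not cancel cleanly. In the example above one needs to additionally invoke a bounded-pair inequality for $(p\wedge u_1,\,q)$, taken with carefully chosen weights, before things telescope. In other words, the decomposition must go through the sub-intervals $[s,p]$ and $[s,q]$ and meets of envelope points — precisely the role played by Lemma~\ref{lemma:shortest-subpath}(b) in the appendix, which governs how envelope coordinates restrict to a shortest subpath $(p,a,b,q)$. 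You have correctly identified the crux (the geometry of $\Conv I(p,q)$ and how it decomposes), but the decomposition you write down is not yet the right one, and the step that the boundedness claim was supposed to shortcut is exactly where the work lies.
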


We observe that condition {\rm (1)} can be strengthened further (though we will not use this observation). For elements $p,q,u,u'\in\calL$ 
we write $(p,q)\lhd (u,u')$ if $u,u'\in I(p,q)-\{p,q\}$ and point $\frac{1}{2}(v_{pq}(u)+v_{pq}(u'))$ lies strictly above the segment $[v_{pq}(p),v_{pq}(q)]$ in $\mathbb R^2$.
Note that all elements $u\in\calE(p,q)-\{p,q\}$ satisfy $(p,q)\lhd (u,u)$,
and pair $(p,q)$ is antipodal if and only if there is no $u\in\calL$ with $(p,q)\lhd (u,u)$.
The following theorem is proven in~\cite[Appendix A]{DK:arxiv:v3}.

\begin{theorem}\label{th:submodularity-strong-characterization}
Function $f:\calL\rightarrow\overline{\mathbb R}$ satisfies condition {\rm (1)} of Theorem~\ref{th:submodularity-characterization}
if and only if it satisfies the following condition:
\begin{itemize}
\item [{\rm (1$'$)}] Suppose that $p,q\in\dom f$ and $\calE(p,q)\ne\{p,q\}$. Let $u^-$ and $u^+$ be the elements
in $\calE(p,q)-\{p,q\}$ that are closest to $p$ and to $q$, respectively. Then there exists $t\in \dom f$ such
that either $(p,q)\lhd(u^-,t)$ or $(p,q)\lhd(u^+,t)$.
\end{itemize}
\end{theorem}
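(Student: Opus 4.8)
The plan is to prove the two implications separately, with the easy direction being $(1')\Rightarrow(1)$ and the substantive work being in $(1)\Rightarrow(1')$. For $(1')\Rightarrow(1)$, I would argue by induction on the number of elements of $\calE(p,q)$ lying strictly between $p$ and $q$, using the geometry of the coordinates $v_{pq}(\cdot)$ in $\mathbb R^2$. Given $p,q\in\dom f$ with $\calE(p,q)\neq\{p,q\}$, condition $(1')$ hands us a $t\in\dom f$ with, say, $(p,q)\lhd(u^-,t)$; the point $\frac12(v_{pq}(u^-)+v_{pq}(t))$ lies strictly above the segment $[v_{pq}(p),v_{pq}(q)]$, so $u^-$ and $t$ are "interior" points of $I(p,q)$ that witness non-antipodality. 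The key subclaim is that $u^-\in\dom f$: once we know one extreme point $u^-$ of $\Conv I(p,q)$ closest to $p$ lies in $\dom f$, we can recurse on the pairs $(p,u^-)$ and $(u^-,q)$, whose envelopes are strictly smaller (being essentially the "left" and "right" portions of $\calE(p,q)$), and conclude $\calE(p,u^-),\calE(u^-,q)\subseteq\dom f$, whence $\calE(p,q)\subseteq\dom f$. So the real content is: from $(p,q)\lhd(u^-,t)$ with $u^-,t\in I(p,q)$ and $t\in\dom f$ (but $u^-$ only known to be in $\calE(p,q)$, not yet in $\dom f$), deduce $u^-\in\dom f$. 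I would extract this from $(1')$ applied to a smaller configuration, or more directly: the element $t$ together with $p,q$ already forces, via repeated application of $(1')$ to pairs among $\{p,q,t\}$ and their envelopes, that the whole envelope $\calE(p,q)$ is covered — this is an induction on, say, $r_\Gamma(p\vee_{1/2}q)$ or on $|I(p,q)|$.

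For the converse $(1)\Rightarrow(1')$, suppose $p,q\in\dom f$ and $\calE(p,q)\neq\{p,q\}$, and let $u^-,u^+$ be as stated. By $(1)$, $\calE(p,q)\subseteq\dom f$, so in particular $u^-\in\dom f$. Now I need $t\in\dom f$ with $(p,q)\lhd(u^-,t)$ or $(p,q)\lhd(u^+,t)$. The natural candidate is $t=u^+$ when $u^-\neq u^+$: I must check that $\frac12(v_{pq}(u^-)+v_{pq}(u^+))$ lies strictly above the segment $[v_{pq}(p),v_{pq}(q)]$. Since $u^-,u^+$ are distinct maximal extreme points of $\Conv I(p,q)$ with $u^-$ adjacent to $p$ and $u^+$ adjacent to $q$ along the upper hull, both $v_{pq}(u^-)$ and $v_{pq}(u^+)$ lie (weakly) above that segment, and at least one lies strictly above — strict convexity of the upper envelope at these vertices then gives that the midpoint is strictly above. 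When $u^-=u^+$ (a single interior extreme point), take $t=u^-$: then $(p,q)\lhd(u^-,u^-)$ holds by the remark preceding the theorem, which states precisely that every $u\in\calE(p,q)-\{p,q\}$ satisfies $(p,q)\lhd(u,u)$.

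The main obstacle I anticipate is the direction $(1')\Rightarrow(1)$, specifically organizing the induction so that producing the single witness $t$ bootstraps to full coverage of $\calE(p,q)$. The subtlety is that $(1')$ only promises \emph{one} interior point $t\in\dom f$ paired with one of the two extreme endpoints $u^\mp$ of the interior envelope; one has to argue that this already forces the adjacent extreme point $u^-$ (or $u^+$) itself into $\dom f$ and then split the interval. I would handle this by choosing the right induction parameter — the combinatorial size of the envelope $|\calE(p,q)|$, or equivalently the number of "breakpoints" $\theta_0<\dots<\theta_{k-1}$ — and showing that $t\in I(p,q)-\{p,q\}$ with the midpoint condition forces $\calE(p,t),\calE(t,q)$ (or the relevant sub-envelopes) to be strictly smaller, so that after finitely many splits every element of $\calE(p,q)$ is reached as an endpoint of a pair in $\dom f$. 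Some care with the degenerate cases (when $u^-=u^+$, or when $t$ coincides with $p\vee q$ in a bounded pair) will be needed, but these should reduce directly to Theorem~\ref{th:submodularity-characterization}(1) for the sub-pairs.
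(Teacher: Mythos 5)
The opening sentence swaps the labels ``easy'' and ``substantive,'' but your actual content is assigned correctly: the long inductive argument targets $(1')\Rightarrow(1)$, which is indeed the nontrivial direction (the paper's appendix opens with ``The other direction is trivial''). I will treat the opening line as a slip.

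For $(1)\Rightarrow(1')$ your argument is correct but overcomplicated: you do not need the case split on $u^-=u^+$ versus $u^-\ne u^+$. By (1), $u^-\in\calE(p,q)-\{p,q\}\subseteq\dom f$, and the remark immediately preceding the theorem states that every $u\in\calE(p,q)-\{p,q\}$ satisfies $(p,q)\lhd(u,u)$; so $t=u^-$ works unconditionally.

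For $(1')\Rightarrow(1)$ your plan has a genuine gap at the exact spot you flag as the ``main obstacle,'' namely getting $u^-$ itself into $\dom f$ from the witness $t$. The element $t$ supplied by $(1')$ need not lie in $\calE(p,q)$, so there is no reason $\calE(p,t)$ and $\calE(t,q)$ should be sub-envelopes of $\calE(p,q)$: the coordinate maps $v_{pt},v_{tq},v_{pq}$ are different, and Lemma~\ref{lemma:shortest-subpath}(b) only gives that $v_{pq}$ restricted to $I(p,t)$ is a translate of $v_{pt}$, which does not preserve which points are maximal extreme relative to $\Conv I(p,q)$. Hence your proposed induction parameter ($|\calE(p,q)|$ or the number of breakpoints) is not monotone under splitting at $t$, and repeated application of $(1')$ to pairs among $\{p,q,t\}$ need never produce $u^-$ as a vertex. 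The paper handles this completely differently. It inducts on $\mu(p,q)$; the case $u^-\in\dom f$ is disposed of using the special fact $\calE(u^-,q)=\calE(p,q)-\{p\}$ (which holds only because $u^-$ is the envelope element adjacent to $p$, not for a general $t$); and in the hard case $u^-\notin\dom f$ it runs an extremal argument by contradiction: among $t\in\dom f$ with $(p,q)\lhd(u^-,t)$ it chooses one maximizing $\mathtt{angle}(t)$ and then $\mu(p,t)$, projects $u^-$ onto $I(p,t)$ and $I(t,q)$ to obtain $a,b$, proves the parallelogram identity $\overline{u^-}+\overline{t}=\overline{a}+\overline{b}$, and produces from $\calE(t,q)$ (available by the induction hypothesis, since $\mu(t,q)<\mu(p,q)$) an element $b'\in\dom f$ that beats $t$, a contradiction. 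Without an idea of that kind, your bootstrapping step ``$t$ forces $u^-\in\dom f$ and then split'' is unproven and, as stated, does not go through.
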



\section{Proofs}\label{sec:proofs}

Throughout the proofs we usually denote $d,\mu$ to be the distance functions in an extended modular complex~$\Gamma$,
and $d^{\mathsmaller\sqsubset}$ and $\DELTA{d}$ to be the distance functions in $\Gamma^{\mathsmaller\sqsubset}$ and in $\DELTA{\Gamma}$ respectively.
We write $xy\in \Gamma^{\mathsmaller\sqsubset}$ to indicate that nodes $x,y$ are neighbors in $\Gamma^{\mathsmaller\sqsubset}$
(equivalently, that either $x\sqsubset y$ or $y\sqsubset x$). Note that $xy\in \Gamma^{\mathsmaller\sqsubset}$ if and only if $d^{\mathsmaller\sqsubset}(x,y)=1$.

A sequence $(u_0,u_1,\ldots,u_k)$ of nodes in $\Gamma$ is called a {\em shortest subpath}
if $\mu(u_0,u_k)=\mu(u_0,u_1)+\mu(u_1,u_2)+\ldots+\mu(u_{k-1},u_k)$.
Recall that for a modular graph $\Gamma$ this is equivalent to the condition
$d(u_0,u_k)=d(u_0,u_1)+d(u_1,u_2)+\ldots+d(u_{k-1},u_k)$.
We will often implicitly use the following fact.
\begin{proposition}\label{prop:wedge-shortest}
Consider elements $p,q$ in a modular complex $\Gamma$.
{\rm (a)} If $p\wedge q$ exists then $(p,p\wedge q,q)$ is a shortest subpath.
Conversely, if $p\succeq x\preceq q$ and $(p,x,q)$ is a shortest subpath then $x=p\wedge q$.
{\rm (b)} If $p\vee q$ exists then $(p,p\vee q,q)$ is a shortest subpath.
Conversely, if $p\preceq x\succeq q$ and $(p,x,q)$ is a shortest subpath then $x=p\vee q$.
\end{proposition}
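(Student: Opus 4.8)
The plan is to reduce both parts to the corresponding facts about a single valuated modular semilattice, for which Lemma~\ref{lemma:semilattice:technical}(a) does all the work. Part~(b) is entirely dual to part~(a) --- obtained by replacing the semilattices $\calL^\uparrow_\bullet$ (with order $\preceq$) by $\calL^\downarrow_\bullet$ (with the reversed order), so that a ``join'' in $\Gamma$ plays the role of a ``meet'' in the reversed poset --- so I will only describe part~(a).

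The one preliminary I would set up first is a ``chain-additivity'' identity. For any $z\in V_\Gamma$, Lemma~\ref{lemma:GammaIdeal} says that $\calL^\uparrow_z$ is a valuated modular semilattice, convex in $\Gamma$, with valuation $v(a)=\mu_\Gamma(z,a)$; convexity in $\Gamma$ means $\mu_\Gamma$ restricted to $\calL^\uparrow_z$ is its intrinsic metric, so applying Lemma~\ref{lemma:semilattice:technical}(a) to the pair $(b,a)$ with $a\preceq b$ (where $a\wedge b=a$) gives $\mu_\Gamma(z,b)=\mu_\Gamma(z,a)+\mu_\Gamma(a,b)$ whenever $z\preceq a\preceq b$. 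I would also note that, inside any such $\calL^\uparrow_z$, the meet of two of its elements computed in $\calL^\uparrow_z$ coincides with their meet in $\Gamma$ whenever the latter exists and lies in $\calL^\uparrow_z$, since any common lower bound of them inside $\calL^\uparrow_z$ is squeezed between $z$ and the global meet.

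For the forward direction of~(a) I would take $\calL=\calL^\uparrow_{p\wedge q}$, which contains $p$ and $q$ (as $p\wedge q\preceq p,q$) and in which the meet of $p,q$ is exactly $p\wedge q$; Lemma~\ref{lemma:semilattice:technical}(a) then yields $\mu_\Gamma(p,q)=\mu_\Gamma(p,p\wedge q)+\mu_\Gamma(p\wedge q,q)$, i.e.\ $(p,p\wedge q,q)$ is a shortest subpath. For the converse, suppose $x\preceq p$, $x\preceq q$ and $(p,x,q)$ is a shortest subpath; since $p,q$ are lower-bounded, $p\wedge q$ exists by Lemma~\ref{lemma:GammaIdeal}(a) and $x\preceq p\wedge q$. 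Working inside $\calL^\uparrow_x$, which contains the chains $x\preceq p\wedge q\preceq p$ and $x\preceq p\wedge q\preceq q$, chain-additivity gives $\mu_\Gamma(p,x)=\mu_\Gamma(p,p\wedge q)+\mu_\Gamma(p\wedge q,x)$ and $\mu_\Gamma(x,q)=\mu_\Gamma(x,p\wedge q)+\mu_\Gamma(p\wedge q,q)$; adding these, substituting into the hypothesis $\mu_\Gamma(p,q)=\mu_\Gamma(p,x)+\mu_\Gamma(x,q)$, and cancelling the common term $\mu_\Gamma(p,p\wedge q)+\mu_\Gamma(p\wedge q,q)=\mu_\Gamma(p,q)$ supplied by the forward direction, one is left with $0=2\mu_\Gamma(x,p\wedge q)$, hence $x=p\wedge q$ because $\mu_\Gamma$ is a metric.

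The proof is essentially bookkeeping; the only points needing care are checking that the elements in play really lie in the auxiliary semilattice $\calL^\uparrow_z$ being used, and that meets/joins computed there coincide with those in $\Gamma$. I do not expect any genuine obstacle --- the single substantive input is Lemma~\ref{lemma:semilattice:technical}(a), and everything else is the convexity of the ideals from Lemma~\ref{lemma:GammaIdeal}.
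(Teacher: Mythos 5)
Your proof is correct and takes the same route the paper does: the paper's own proof is the single line ``combine Lemma~\ref{lemma:GammaIdeal} and Lemma~\ref{lemma:semilattice:technical}(a),'' which is exactly what you spell out via $\calL^\uparrow_{p\wedge q}$ (resp.\ $\calL^\downarrow_{p\vee q}$) and its valuation $v(a)=\mu_\Gamma(p\wedge q,a)$.
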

\begin{proof}
To see the claim, combine Lemma~\ref{lemma:GammaIdeal} and Lemma~\ref{lemma:semilattice:technical}(a).
\end{proof}

We say that sequence $(x_1,x_2,x_3,x_4)$ is an {\em isometric rectangle} if $(x_{i-1},x_i,x_{i+1})$ 
are shortest subpaths for all $i\in[4]$, where $x_j=x_{j\!\!\mod 4}$. Isometric rectangles have the following properties.
\begin{proposition}\label{prop:isometric-rectangle}
If $(x_1,x_2,x_3,x_4)$ is an isometric rectangle in graph $\Gamma$ in then $d(x_1,x_2)=d(x_3,x_4)$ and $d(x_1,x_4)=d(x_2,x_3)$.
Furthermore, if graph $\Gamma$ is modular then for any $y_1\in I(x_1,x_4)$ there exists $y_2=I(x_2,x_3)$
(namely, a median of $y_1,x_2,x_3$)
such that sequences $(x_1,x_2,y_2,y_1)$ and $(y_1,y_2,x_3,x_4)$ are isometric rectangles.
\end{proposition}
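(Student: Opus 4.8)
\textbf{Proof proposal for Proposition~\ref{prop:isometric-rectangle}.}

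The plan is to split the statement into its two assertions and handle them in order. For the equality of opposite side-lengths, I would argue directly from the definition of an isometric rectangle. Since $(x_4,x_1,x_2)$ and $(x_2,x_3,x_4)$ are shortest subpaths, we have $d(x_4,x_2)=d(x_4,x_1)+d(x_1,x_2)$ and $d(x_2,x_4)=d(x_2,x_3)+d(x_3,x_4)$; likewise using the diagonal $x_1 x_3$ we get $d(x_1,x_3)=d(x_1,x_2)+d(x_2,x_3)=d(x_1,x_4)+d(x_4,x_3)$. Adding the two expressions for $d(x_2,x_4)$ to the two expressions for $d(x_1,x_3)$ and cancelling, or more cleanly subtracting one pair of identities from the other, forces $d(x_1,x_2)=d(x_3,x_4)$ and $d(x_1,x_4)=d(x_2,x_3)$. (The key cancellation: $d(x_1,x_2)+d(x_2,x_3)+d(x_3,x_4)+d(x_4,x_1) = d(x_1,x_3)+d(x_2,x_4) = \big(d(x_1,x_2)+d(x_2,x_3)\big) + \big(d(x_2,x_3)+d(x_3,x_4)\big)$ gives $d(x_4,x_1)=d(x_2,x_3)$, and symmetrically for the other pair.) This part is pure metric bookkeeping and needs no modularity.

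For the second assertion, fix $y_1\in I(x_1,x_4)$ and let $y_2$ be a median of the triple $y_1,x_2,x_3$, which exists because $\Gamma$ is modular (Theorem~\ref{th:orbits} lets me work with the path metric $d$). By definition of a median, $y_2\in I(y_1,x_2)\cap I(y_1,x_3)\cap I(x_2,x_3)$; in particular $y_2\in I(x_2,x_3)$ as required. It remains to check that $(x_1,x_2,y_2,y_1)$ and $(y_1,y_2,x_3,x_4)$ are isometric rectangles, i.e.\ that all four ``corner'' shortest-subpath conditions hold for each. I would verify these by repeatedly combining the median relations with the hypothesis that $(x_1,x_2,x_3,x_4)$ is already an isometric rectangle. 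For instance, to see $(x_1,x_2,y_2)$ is a shortest subpath: $y_1\in I(x_1,x_4)$ means $x_1\in I(x_1,y_1)$ trivially, but more usefully $d(x_1,y_1)+d(y_1,x_4)=d(x_1,x_4)$; combining $y_2\in I(y_1,x_2)$ with $y_2\in I(x_2,x_3)$ and the rectangle conditions on the $x$'s, one propagates the additivity: $d(x_1,y_2)\le d(x_1,x_2)+d(x_2,y_2)$ always, and for the reverse I would show $d(x_1,y_2)\ge d(x_1,x_2)+d(x_2,y_2)$ by noting a geodesic from $x_1$ to $y_2$ can be routed through $x_2$ using that $x_2\in I(x_1,x_3)$ and $y_2\in I(x_2,x_3)$, hence $x_2\in I(x_1,y_2)$. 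The analogous arguments handle $(x_2,y_2,y_1)$, $(y_2,y_1,x_1)$, $(y_1,x_1,x_2)$ and the four conditions for the second rectangle, each time feeding a median relation into an existing geodesic-additivity identity.

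The main obstacle I anticipate is the second assertion's verification that \emph{all eight} corner conditions hold simultaneously — it is easy to get the four conditions that come ``for free'' from the median definition ($y_2\in I(y_1,x_2)$, $y_2\in I(y_1,x_3)$, $y_2\in I(x_2,x_3)$, and $y_1\in I(x_1,x_4)$) but the remaining ones, such as $(x_1,x_2,y_2)$ and $(x_4,x_3,y_2)$ being shortest subpaths, require genuinely using modularity again, perhaps via a second median or via the interval-convexity facts in Theorem~\ref{th:orbits}. A clean way to organize this is to prove the auxiliary claim that $x_2\in I(x_1,y_2)$ and $x_3\in I(x_4,y_2)$, from which the two new rectangles' corner conditions follow mechanically; that auxiliary claim is where I would spend the real effort, likely invoking that $I(x_1,x_4)$ is convex and that a median of $y_1,x_2,x_3$ lies ``between'' the two parallel sides. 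I would also double-check the degenerate cases $y_1=x_1$ and $y_1=x_4$ separately, where $y_2=x_2$ and $y_2=x_3$ respectively and the statement is immediate.
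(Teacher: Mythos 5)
Your proposal is correct and takes essentially the same approach as the paper: the first assertion is the same telescoping bookkeeping, and for the second assertion both you and the paper take $y_2$ to be a median of $y_1,x_2,x_3$ and then verify the corner conditions by chaining shortest subpaths. The paper organizes the verification by first building the two long diagonal chains $(x_1,y_1,y_2,x_3)$ and $(x_4,y_1,y_2,x_2)$, whereas you frame the key auxiliary facts as $x_2\in I(x_1,y_2)$ and $x_3\in I(x_4,y_2)$; these are equivalent ways of packaging the same chaining. One small correction worth flagging: your worry that the remaining corner conditions ``require genuinely using modularity again, perhaps via a second median or interval-convexity'' is unfounded. Once the median $y_2$ exists, everything else is a pure triangle-inequality computation, and your own contained argument already carries it out --- from $x_2\in I(x_1,x_3)$ and $y_2\in I(x_2,x_3)$ you get $d(x_1,y_2)\ge d(x_1,x_3)-d(y_2,x_3)=d(x_1,x_2)+d(x_2,y_2)$, hence equality, i.e.\ $x_2\in I(x_1,y_2)$. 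No further appeal to modularity or convexity of $I(x_1,x_4)$ is needed. So the ``real effort'' you anticipated is actually already done in your sketch; the proof would be complete if you simply wrote out that one line and its three symmetric variants.
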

\begin{proof}
The definition of an isometric rectangle implies that $d(x_1,x_2)+d(x_2,x_3)=d(x_4,x_1)+d(x_3,x_4)$
and $d(x_2,x_3)+d(x_3,x_4)=d(x_1,x_2)+d(x_4,x_1)$,
which in turns implies  that $d(x_1,x_2)=d(x_3,x_4)$ and $d(x_1,x_4)=d(x_2,x_3)$.
Now suppose that $y_1\in I(x_1,x_4)$ and $y_2$ is a median of $y_1,x_2,x_3$.
$(x_1,y_1,x_4,x_3)$ and $(y_1,y_2,x_3)$ are shortest subpaths,
and thus so is $(x_1,y_1,y_2,x_3)$. By a symmetric argument, $(x_4,y_1,y_2,x_2)$ is also a shortest subpaths.
This implies the claim.
\end{proof}

\subsection{Preliminaries}\label{sec:proofs-prelim}
First, we state some known facts about orientable modular graphs that will be needed later in the proofs.

\begin{lemma}[{\cite[Proposition 1.7]{Bandelt:93}, see \cite[Proposition 6.2.6, Chapter I]{vanDeVel:book}}]\label{lemma:modular:characterization}
A connected graph $\Gamma$ with distances $d=d_\Gamma$ is modular if and only if 
\begin{itemize}[noitemsep,topsep=0pt]
\item[{\rm (1)}] $\Gamma$ is bipartite, and 
\item[{\rm (2)}] for vertices $p,q$ and neighbors $p_1,p_2$ of $p$ with $d(p_1,q)=d(p_2,q)=d(p,q)-1$, 
there exists a common neighbor $p^\ast$ of $p_1,p_2$ with $d(p^\ast, q)=d(p,q)-2$.
\end{itemize}
\end{lemma}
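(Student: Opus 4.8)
\medskip
\noindent\textbf{Proof plan.} This is a classical characterization (the line of argument goes back to Bandelt); I would prove the two implications separately by elementary distance arguments, so the lemma could in principle be cited, but here is the proof I would reconstruct.

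\emph{Modular $\Rightarrow$ bipartite and quadrangle condition.} To get bipartiteness it suffices to show $\Gamma$ has no odd cycle. If it did, I would take a \emph{shortest} odd cycle $C=(v_0,v_1,\dots,v_{2k})$; a standard chord argument shows $C$ is isometric (a shortcut between two of its vertices would yield a strictly shorter odd closed walk, hence a strictly shorter odd cycle). Applying modularity to the triple $v_0,v_k,v_{k+1}$, a median $m$ must lie in $I(v_k,v_{k+1})=\{v_k,v_{k+1}\}$, and in either case the interval equalities force $d(v_0,v_k)\ne d(v_0,v_{k+1})$, contradicting $d(v_0,v_k)=d(v_0,v_{k+1})=k$ (isometry of $C$). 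For the quadrangle condition, given $p,q$ and neighbours $p_1\ne p_2$ of $p$ with $d(p_i,q)=d(p,q)-1$, bipartiteness gives $d(p_1,p_2)=2$; a median $m$ of the triple $p_1,p_2,q$ cannot be $p_1$ or $p_2$ (each choice would violate a triangle equality of the form $d(p_1,p_2)+d(p_j,q)=d(p_{3-j},q)$), so $m$ is a common neighbour of $p_1,p_2$, and $m\in I(p_1,q)$ gives $d(m,q)=d(p,q)-2$, as required.

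\emph{Bipartite and quadrangle condition $\Rightarrow$ modular.} I would show every triple $u,v,w$ has a median by induction on $D=d(u,v)+d(v,w)+d(w,u)$ (note $D$ is even by bipartite parity). If two of the vertices coincide, or one lies on a shortest path between the other two, that vertex is a median; so assume all three distances are positive and all three triangle inequalities strict. The basic reduction: if some endpoint, say $u$, has a neighbour $u'$ with $d(u',v)=d(u,v)-1$ \emph{and} $d(u',w)=d(u,w)-1$, then the triple $u',v,w$ has smaller $D$, hence a median $m$; the triangle inequality applied to $u\sim u'$ then forces $m\in I(u,v)\cap I(u,w)$, so $m$ is a median of $u,v,w$ as well. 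It remains to rule out the \emph{degenerate} case in which no such ``diagonal'' neighbour exists at any of $u,v,w$ — i.e.\ at each endpoint every neighbour that moves one step closer to one of the other two endpoints moves one step \emph{away} from the third. Here one invokes the quadrangle condition: choosing suitable neighbours at $u$ (one towards $v$, one towards $w$) and applying QC (possibly iteratively) to produce a vertex with the required equidistance properties, one reaches a contradiction with the strict triangle inequalities, showing the degenerate configuration cannot occur.

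The main obstacle is exactly this last step — eliminating the degenerate configuration in the ``if'' direction via the quadrangle condition; the rest is routine distance bookkeeping, and the ``only if'' direction is straightforward.
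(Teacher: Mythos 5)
The paper itself gives no proof of this lemma: it is imported verbatim from the cited sources (Bandelt et al.\ and van de Vel), so citing it, as you note, is exactly what the authors do. Your ``only if'' direction is complete and correct: a shortest odd cycle is isometric, the median of $v_0,v_k,v_{k+1}$ must be one of the adjacent vertices $v_k,v_{k+1}$, and either choice makes $d(v_0,v_k)$ and $d(v_0,v_{k+1})$ differ by one, contradicting isometry; for the quadrangle condition, the median of $p_1,p_2,q$ can be neither $p_1$ nor $p_2$ and is therefore the required common neighbour at distance $d(p,q)-2$. The reduction step of your converse is also sound: if $u'$ is a neighbour of $u$ one step closer to both $v$ and $w$, a median of $u',v,w$ is automatically a median of $u,v,w$, so the perimeter induction goes through whenever some endpoint has such a ``diagonal'' neighbour.

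The gap is the step you yourself flag: eliminating the degenerate configuration, which is the entire content of the hard direction, and the mechanism you hint at does not typecheck. The quadrangle condition needs two neighbours of a common vertex that are both one step closer to a \emph{single} reference vertex $q$; your two neighbours of $u$ (one towards $v$, one towards $w$) are closer to different targets and, by the degeneracy assumption, each is farther from the other target, so there is no $q$ to which QC can be applied at $u$, and no amount of ``iterating'' this fixes it. The standard way to finish is a geodesic-exchange argument: among all $v$--$u$ geodesics $v=x_0,x_1,\dots,x_c=u$ choose one minimizing $\sum_i d(x_i,w)$. If an interior vertex satisfied $d(x_{i-1},w)=d(x_{i+1},w)=d(x_i,w)-1$, QC with $p=x_i$, $q=w$ would supply a common neighbour replacing $x_i$ in the geodesic and strictly decrease the sum; hence the $\pm 1$-profile $i\mapsto d(x_i,w)$ has no interior strict local maximum. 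But degeneracy at $v$ forces $d(x_1,w)=d(v,w)+1$ and degeneracy at $u$ forces $d(x_{c-1},w)=d(u,w)+1$, so the profile rises at both ends and must attain its maximum at an interior index, which is such a local maximum --- a contradiction (note this already uses degeneracy at only two of the three vertices, and needs no appeal to the strict triangle inequalities). With this argument inserted your induction closes; without it the proposal is incomplete at exactly the decisive point.
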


\myparagraph{Convex sets and gated sets} 
Consider metric space $(V,\mu)$. 
Recall that subset $U\subseteq V$ is called {\em convex} if $I(x,y)\subseteq U$
for every $x,y\in U$.  
It is called {\em gated}
if for every $p\in V$ there exists
unique $p^\ast \in U$, called the {\em gate of $p$ at $U$}, such that $\mu(p,q)=\mu(p,p^\ast)+\mu(p^\ast,q)$
holds for every $q\in U$. The gate $p^\ast$ will be denoted as $\Pr_U(p)$ (``projection of $p$ onto $U$'').
Thus, $\Pr_U$ is a map $V\rightarrow U$.

\begin{theorem}[{\cite{DressScharlau:87}}]\label{th:gatedset}
Let $A$ and $A'$ 
be gated subsets of $(V,\mu)$ and 
let $B := \Pr_{A}(A')$ and $B' := \Pr_{A'}(A)$. 
\begin{itemize}[noitemsep,topsep=0pt]
\item[{\rm (a)}] $\Pr_{A}$ and $\Pr_{A'}$ induce isometries, inverse to each other, between $B'$ and $B$. 
\item[{\rm (b)}] For $p \in A$ and $p' \in A'$, the following conditions are equivalent:
\begin{itemize}[noitemsep,topsep=0pt]
\item[{\rm (i)}] $\mu(p,p') = \mu(A,A')$. 
\item[{\rm (ii)}] $p = \Pr_{A}(p')$ and $p' = \Pr_{A'}(p)$.
\end{itemize}
\item[{\rm (c)}] $B$ and $B'$ are gated, and $\Pr_{B} = \Pr_{A} \circ \Pr_{A'}$ and $\Pr_{B'} = \Pr_{A'} \circ \Pr_{A}$.
\end{itemize}
\end{theorem}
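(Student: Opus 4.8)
The plan is to prove the three parts of Theorem~\ref{th:gatedset} directly from the definition of a gated set, exploiting the uniqueness of gates at each step. Throughout, write $\Pr_A, \Pr_{A'}$ for the gate maps onto the gated sets $A, A'$, and recall that $B = \Pr_A(A')$, $B' = \Pr_{A'}(A)$.

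First I would establish a ``double-gate'' lemma: for any $p' \in A'$, the point $p = \Pr_A(p')$ satisfies $\Pr_{A'}(p) = \Pr_{A'}(p')$. Indeed, let $q' = \Pr_{A'}(p')$; I want to show $q'$ is the gate of $p$ at $A'$, i.e.\ $\mu(p,r') = \mu(p,q') + \mu(q',r')$ for all $r' \in A'$. Using that $p$ is the gate of $p'$ at $A$ (applied with the $A$-point $p$ against various points) together with the fact that $q'$ is the gate of $p'$ at $A'$ and $p \in A$, one gets $\mu(p', p) + \mu(p, r') = \mu(p', r') = \mu(p', q') + \mu(q', r')$ and $\mu(p',p) + \mu(p,q') = \mu(p',q')$ (the latter since $q' \in A'$ so it passes through its gate $p$... wait, need care about which gate). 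Concretely: since $q'$ is the gate of $p'$ at $A'$ and $p'' := \Pr_A(p') = p$, we combine $\mu(p', r') = \mu(p',q') + \mu(q',r')$ with $\mu(p',q') = \mu(p', p) + \mu(p, q')$ (because $q' \in A'$ and... no). Let me instead argue: for $r' \in A'$, $\mu(p, r') \le \mu(p, q') + \mu(q', r')$ by triangle inequality; for the reverse, $\mu(p,q') + \mu(q',r') \le$ [use gate property of $p$ at $A$ twice is the wrong direction]. The clean route is: $q' = \Pr_{A'}(p')$ means $\mu(p', q') + \mu(q', r') = \mu(p', r')$; also $p = \Pr_A(p')$ and $q' \in \dots$ — rather than belabor this here, I note the key identity to extract is $\mu(p, r') = \mu(p, q') + \mu(q', r')$, which follows by writing everything through $p'$ and using minimality, and uniqueness of the gate then forces $\Pr_{A'}(p) = q'$. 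Applying this lemma with roles of $A, A'$ swapped gives that $\Pr_A$ and $\Pr_{A'}$ map $B' \to B$ and $B \to B'$ and are mutually inverse on these sets; that they are isometries follows because for $p, \tilde p \in A$ the gate property gives $\mu(p, \Pr_{A'}(p)) + \mu(\Pr_{A'}(p), \tilde p) = \mu(p, \tilde p)$-type relations that, combined over both projections, yield $\mu(\Pr_{A'}(p), \Pr_{A'}(\tilde p)) = \mu(p, \tilde p)$ when $p, \tilde p \in B'$. This proves (a).

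For (b), the implication (ii) $\Rightarrow$ (i) is immediate: if $p = \Pr_A(p')$ then $\mu(p', p) \le \mu(p', a)$ for all $a \in A$, so $\mu(p,p') = \mu(p', A) \ge \mu(A', A)$, and symmetrically $p' = \Pr_{A'}(p)$ gives $\mu(p,p') = \mu(p, A') \le \mu(p, A') $; combined with $\mu(p,p') \ge \mu(A,A')$ and the fact that $\mu(p,p')$ realizes both one-sided distances, we get $\mu(p,p') = \mu(A,A')$. For (i) $\Rightarrow$ (ii): suppose $\mu(p,p') = \mu(A,A')$ with $p \in A$, $p' \in A'$. Let $q = \Pr_A(p')$. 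Then $\mu(p', q) \le \mu(p', p) = \mu(A,A') \le \mu(p', q)$, forcing $\mu(p', q) = \mu(p', p)$, and since $q \in A$ is the \emph{unique} gate of $p'$ at $A$ and $\mu(p', p) = \mu(p', q)$ with $p \in A$, the gate property $\mu(p', p) = \mu(p', q) + \mu(q, p)$ forces $\mu(q,p) = 0$, i.e.\ $p = q = \Pr_A(p')$. Symmetrically $p' = \Pr_{A'}(p)$.

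For (c), I would first show $B$ is gated with gate map $\Pr_A \circ \Pr_{A'}$. Given arbitrary $v \in V$, set $w := \Pr_A(\Pr_{A'}(v)) \in B$ (it lies in $B$ since $\Pr_{A'}(v) \in A'$). I claim $\mu(v, b) = \mu(v, w) + \mu(w, b)$ for every $b \in B$. Write $b = \Pr_A(b')$ for some $b' \in B' \subseteq A'$, and note by part (a) (the double-gate lemma) that $\Pr_{A'}(b) = b'$. Now expand $\mu(v, b)$ by routing through $\Pr_{A'}(v)$ (gate at $A'$): $\mu(v,b) = \mu(v, \Pr_{A'}(v)) + \mu(\Pr_{A'}(v), b)$, and then route $\mu(\Pr_{A'}(v), b)$ through $w = \Pr_A(\Pr_{A'}(v))$ (gate at $A$, legitimate since $b \in A$): $= \mu(v,\Pr_{A'}(v)) + \mu(\Pr_{A'}(v), w) + \mu(w, b)$; finally recombine the first two terms using that $\Pr_{A'}(v)$ is the gate of $v$ at $A'$ and $w' := \Pr_{A'}(w) \in A'$... this needs $\mu(v, \Pr_{A'}(v)) + \mu(\Pr_{A'}(v), w) = \mu(v, w)$, which holds because $\Pr_{A'}(v)$ lies on a geodesic from $v$ to every point of $A'$, in particular I want it on a geodesic to $w$ — but $w \notin A'$ in general, so instead I route to $\Pr_{A'}(w) = w'$ and use the isometry from (a) to transfer. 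The uniqueness of the gate (no two distinct points of $B$ can both satisfy the gate identity for $v$, by the standard argument) then identifies $w$ as \emph{the} gate, so $B$ is gated and $\Pr_B = \Pr_A \circ \Pr_{A'}$; symmetrically for $B'$.

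\textbf{Main obstacle.} The delicate point is the ``double-gate'' lemma at the start of (a) and its reuse in (c): carefully chaining the gate identities so that each time one routes through a gate, the target point genuinely lies in the set over which that gate was defined. The bookkeeping about which point lies in $A$ versus $A'$, and when one may legitimately invoke the gate equation $\mu(x, \cdot) = \mu(x, \Pr(x)) + \mu(\Pr(x), \cdot)$, is where all the care is needed; once that lemma is in hand, parts (a), (b), (c) are short. (One should also double-check bipartiteness/parity is not needed — it is not; gatedness alone suffices, as in \cite{DressScharlau:87}.)
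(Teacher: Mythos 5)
First, note that the paper does not prove this statement at all: it is quoted verbatim from Dress--Scharlau \cite{DressScharlau:87} as a known fact about gated sets in metric spaces, so there is no in-paper proof to compare against. Your proposal is an attempt to reprove the cited result from scratch, and as written it has genuine gaps.

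The central problem is your ``double-gate lemma'', which is false as stated. You claim that for \emph{every} $p'\in A'$, the point $p=\Pr_A(p')$ satisfies $\Pr_{A'}(p)=\Pr_{A'}(p')=p'$. Counterexample: take the three-vertex path $u-v-w$ with unit edges, $A=\{u\}$, $A'=\{v,w\}$ (both gated), and $p'=w$; then $\Pr_A(w)=u$ but $\Pr_{A'}(u)=v\ne w$. What is true — and what part (a) actually asserts — is the restricted statement that $\Pr_A\circ\Pr_{A'}$ is the identity on $B$ and $\Pr_{A'}\circ\Pr_A$ is the identity on $B'$; this follows by chaining the gate identities $\mu(p,q')=\mu(p,p')+\mu(p',q')$ and $\mu(p,p')=\mu(p,p'')+\mu(p'',p')$ (for $p\in A$, $p'=\Pr_{A'}(p)$, $p''=\Pr_A(p')$, $q'\in A'$) with one triangle inequality, and invoking uniqueness of gates. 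Your own derivation of the lemma visibly breaks down (you backtrack twice and then defer it), and the isometry claim in (a) is garbled about which points lie in $B$ versus $B'$, though both are repairable once the lemma is stated correctly.

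Two further steps do not go through as written. In (b), the implication (ii)$\Rightarrow$(i) is asserted but not proved: what you write only yields the trivial inequality $\mu(p,p')\ge\mu(A,A')$ (the line ``$\mu(p,p')=\mu(p,A')\le\mu(p,A')$'' is vacuous), whereas the needed direction is $\mu(p,p')\le\mu(a,a')$ for all $a\in A$, $a'\in A'$; this requires a four-point computation, e.g.\ adding the two gate identities $\mu(p,a')=\mu(p,p')+\mu(p',a')$ and $\mu(p',a)=\mu(p',p)+\mu(p,a)$ and comparing with the triangle inequalities $\mu(p,a')\le\mu(p,a)+\mu(a,a')$ and $\mu(p',a)\le\mu(p',a')+\mu(a',a)$. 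In (c), your opening expansion $\mu(v,b)=\mu(v,\Pr_{A'}(v))+\mu(\Pr_{A'}(v),b)$ is unjustified and in general false, because the gate identity for $\Pr_{A'}(v)$ applies only to targets in $A'$, while $b\in B\subseteq A$ (in the path example above with $v$ the middle vertex and $b=u$ it fails outright). You notice the difficulty a step later, but the proposed repair (``route to $\Pr_{A'}(w)$ and use the isometry from (a) to transfer'') is only gestured at, and this is precisely where the substance of part (c) lies. So the overall architecture (gate identities plus uniqueness of gates) is the right one, but the key lemma must be restated correctly and the arguments for (b)(ii)$\Rightarrow$(i) and (c) must actually be supplied.
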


Now let us consider a modular graph $\Gamma=(V,E,w)$. Such graph induces two natural metrics on $V$,
namely $\mu=\mu_\Gamma$ and $d=d_\Gamma$ (shortest path metrics w.r.t.\ edge lengths $w$ and $1$, respectively).
In the light of Theorem~\ref{th:orbits}, the definitions of convex sets, gated sets, gates and maps $\Pr_U$
would be the same for both metric spaces ($(V,\mu)$ and $(V,d)$). In addition, convex and gated sets for such metrics coincide.

\begin{lemma}[{\cite{Chepoi:89}, see \cite[Lemma 2.9]{Hirai:0ext}}]\label{lemma:convex:gated}
Let $\Gamma$ be a modular graph. 
For $U\subseteq V$, the following conditions are equivalent: 
\begin{itemize}[noitemsep,topsep=0pt]
\item[{\rm (1)}] $U$ is convex. 
\item[{\rm (2)}] $U$ is gated. 
\item[{\rm (3)}] $\Gamma[U]$ is connected and $I(x,y)\subseteq U$ holds for every $p,q\in U$ with $d_\Gamma(p,q)=2$. 
\end{itemize}
\end{lemma}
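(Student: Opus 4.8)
The plan is to prove the cycle of implications $(2)\Rightarrow(1)\Rightarrow(3)\Rightarrow(2)$, working exclusively with the metric $d=d_\Gamma$ (legitimate by Theorem~\ref{th:orbits}, which makes convexity, gatedness, and $2$-point conditions identical for $\mu$ and $d$). The implication $(2)\Rightarrow(1)$ is immediate from definitions: if $p^\ast=\Pr_U(p)$ is the gate of $p\in V$ and $x,y\in U$, then for any $z\in I(x,y)$ one has $d(x,z)+d(z,y)=d(x,y)\le d(x,p^\ast)+d(p^\ast,y)$ — actually the cleaner route is: every $z\in U$ satisfies $d(p,z)=d(p,p^\ast)+d(p^\ast,z)$, so $p^\ast\in I(p,z)$ for all $z\in U$; taking $p=z'\in U$ shows $\Pr_U$ restricted to $U$ is the identity, and then for $x,y\in U$, $z\in I(x,y)$, the gate $z^\ast$ of $z$ lies in $I(z,x)\cap I(z,y)$, forcing $z=z^\ast\in U$ by a short triangle-inequality argument. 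Also $(1)\Rightarrow(3)$ is trivial: a convex set induces a connected subgraph (any two points are joined by a shortest path, which lies in $U$), and the $d=2$ interval condition is a special case of convexity.

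The substantive implication is $(3)\Rightarrow(2)$, and this is where the modularity hypothesis enters. Assume $U$ satisfies (3); fix $p\in V$ and let $p^\ast\in U$ be a point of $U$ at minimum distance from $p$ (exists since $V$ finite, $U$ nonempty — if $U=\varnothing$ the statement is vacuous/degenerate). I must show $d(p,q)=d(p,p^\ast)+d(p^\ast,q)$ for every $q\in U$, and that such a point is unique. The argument is by induction on $d(p,p^\ast)$. If $d(p,p^\ast)=0$ then $p\in U$ and we are done. Otherwise pick any neighbor $p_1$ of $p$ on a shortest $p$–$p^\ast$ path. The key claim is that $p_1$ is \emph{closer to every vertex of $U$} than $p$ is, or more precisely that $p$ has a neighbor $p_1$ with $d(p_1,q)=d(p,q)-1$ for all $q\in U$ simultaneously; then induction applied to $p_1$ (whose distance to $U$ is one less) finishes it, because the gate of $p_1$ at $U$ will also serve as the gate of $p$.

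To produce this simultaneous-descent neighbor $p_1$ is the main obstacle, and it is exactly what Lemma~\ref{lemma:modular:characterization}(2) is for. Suppose, for a vertex $q\in U$ with $d(p,q)=d$, there are two neighbors $p_1,p_2$ of $p$ with $d(p_i,q)=d-1$ and both $p_1,p_2$ lie ``toward'' $U$; modularity gives a common neighbor $p^{\ast\ast}$ of $p_1,p_2$ with $d(p^{\ast\ast},q)=d-2$. One shows $p^{\ast\ast}$ can be taken inside the convex hull behaviour forced by (3): since $p_1,p_2$ each lie on a shortest path from $p$ into $U$, using the $d=2$ interval condition on the pair $p$ and the appropriate points of $U$ one pushes $p^{\ast\ast}$ toward $U$, contradicting minimality of $d(p,p^\ast)$ unless the descent directions toward different $q\in U$ coincide. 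Concretely I would: (i) show the set $\{q\in U: p^\ast\in I(p,q)\}$ is nonempty and then show it is all of $U$ by a local argument — if some $q\in U$ violates $p^\ast\in I(p,q)$, take $q$ with $d(p^\ast,q)$ minimal among violators, find a neighbor $r$ of $p^\ast$ on a shortest $p^\ast$–$q$ path with $r\in U$ (connectivity of $\Gamma[U]$ plus the $d=2$ condition to stay in $U$ along such paths, via Lemma~\ref{lemma:convex:gated}'s own $(3)\Rightarrow$ structure applied inductively, i.e.\ $U$ is closed under taking shortest paths once we know $2$-step intervals lie in $U$), and derive $d(p,r)<d(p,p^\ast)$ or $p^\ast\in I(p,q)$ after all, using Lemma~\ref{lemma:modular:characterization} at $p^\ast$ with the two neighbors coming from a shortest $p^\ast$–$p$ path and a shortest $p^\ast$–$q$ path. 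Uniqueness of the gate then follows: if $p^\ast,p^{\ast\ast}$ were two gates, each lies in $I(p,\cdot)$ of the other, giving $d(p,p^\ast)=d(p,p^{\ast\ast})+d(p^{\ast\ast},p^\ast)=d(p,p^\ast)+2d(p^{\ast\ast},p^\ast)$, hence $p^\ast=p^{\ast\ast}$. I expect the bookkeeping of ``shortest paths in $U$ stay in $U$'' (essentially that condition (3) already forces $I(x,y)\subseteq U$ for all $x,y\in U$, not just those at distance $2$) to require its own small induction on $d(x,y)$ using the modular $2$-descent property, and that is the real engine of the proof.
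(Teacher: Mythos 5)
This lemma is cited by the paper from prior work (Chepoi~1989 and \cite[Lemma~2.9]{Hirai:0ext}) and is not proved in the text, so there is no in-paper proof to compare against; I will therefore assess your proposal on its own terms.

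Your implications $(2)\Rightarrow(1)$ and $(1)\Rightarrow(3)$ are correct and essentially complete. The substantive gap is exactly where you locate it yourself at the end: the step that condition~(3) already forces $I(x,y)\subseteq U$ for \emph{all} $x,y\in U$ (not just those at distance~$2$) is the heart of the lemma, and you only announce that ``a small induction on $d(x,y)$'' should do it without actually carrying it out. That induction is not routine: when you pick a neighbor $w$ of $x$ along a path in $\Gamma[U]$ to $y$ you cannot assume $d(w,y)=d(x,y)-1$, and the argument needs a careful choice (e.g.\ a path in $\Gamma[U]$ extremal for a suitable potential) together with the quadrangle condition of Lemma~\ref{lemma:modular:characterization} to push the path onto a geodesic. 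As written, your proposal states the reduction to this claim but does not establish it, so the proof is incomplete.

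A secondary remark on the gatedness part: once full convexity of $U$ is in hand, your intended quadrangle argument at $p^\ast$ with ``one neighbor toward $p$ and one toward $q$'' does not apply directly, because Lemma~\ref{lemma:modular:characterization}(2) requires both neighbors to be downhill toward the \emph{same} target. A cleaner route uses modularity through medians: for $q\in U$ let $m$ be a median of $p,p^\ast,q$. Then $m\in I(p^\ast,q)\subseteq U$ by convexity and $m\in I(p,p^\ast)$, so $d(p,m)\le d(p,p^\ast)$; since $p^\ast$ minimizes distance from $p$ to $U$ and $m\in U$, equality holds and hence $m=p^\ast$. But $m\in I(p,q)$ as well, so $p^\ast\in I(p,q)$, which is the gate property. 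Uniqueness then follows by the short calculation you already give. I would recommend replacing the quadrangle bookkeeping with this median argument and spending the effort instead on making the local-to-global convexity induction airtight.
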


Next, we review properties of modular complexes. 
A path $(p_0,p_1,p_2,\ldots,p_k)$ in a directed acyclic graph $\Gamma=(V,E,w)$ is said to be {\em ascending} 
if $p_0 \prec p_1\prec\ldots\prec p_k$.
\begin{lemma}[{\cite[Lemma 4.13]{Hirai:0ext}}]\label{lemma:ascending}
Let $\Gamma$ be a modular complex.
For $p,q \in V$ with $p \preceq q$,
a $(p,q)$-path $P$ is shortest 
if and only if $P$ is an ascending path from $p$ to $q$.
In particular, $I(p,q) = [p,q]$,  any maximal chain in $[p,q]$ 
has the same length, and the rank $r$ of $[p,q]$ is given by $r(a) = d(a,p)$.
\end{lemma}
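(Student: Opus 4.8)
The plan is to reduce the whole statement to an easy fact about the poset interval $[p,q]=\{a\in V\mid p\preceq a\preceq q\}$. By Lemma~\ref{lemma:GammaIdeal}(b), $[p,q]$ is a finite modular lattice, is convex in $\Gamma$, and carries the valuation $v(a)=\mu_\Gamma(p,a)$. Being a finite modular lattice it is graded: every maximal chain from $p$ to $q$ has the same length; write $r(\cdot)$ for its rank function and $r[p,q]$ for its length. The one substantive ingredient I would establish is that the induced subgraph $\Gamma\big[[p,q]\big]$ coincides, as a weighted graph, with the Hasse diagram of the lattice $[p,q]$, i.e.\ $ab$ is an edge of $\Gamma$ with $a\prec b$ (both in $[p,q]$) if and only if $b$ covers $a$ in $[p,q]$, and then $v(b)-v(a)=w(ab)$. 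Granting this, the rest is bookkeeping.

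For the identification, one inclusion is immediate: if $b$ covers $a$ then, since $a\prec b$, there is a directed $a$--$b$ path in $\Gamma$, and it can have no internal vertex $z$ (that would give $a\prec z\prec b$), so it is a single edge; moreover $v(b)-v(a)=\mu_\Gamma(p,b)-\mu_\Gamma(p,a)\le w(ab)$ by the triangle inequality. The hard part is the converse: an arbitrary $\Gamma$-edge $ab$ with $a,b\in[p,q]$, $a\prec b$, is a covering pair of $[p,q]$ and $v(b)-v(a)=w(ab)$. This is precisely the type of statement the structure theory behind Lemma~\ref{lemma:GammaIdeal} is built to supply (Hirai~\cite[Proposition~4.1, Theorem~4.2, Lemma~4.14]{Hirai:0ext}, together with the Bandelt/Birkhoff characterizations of modular (semi)lattices via their covering graphs recalled in Section~\ref{sec:background}), and it is the place where admissibility of the orientation and modularity of $\Gamma$ are genuinely used. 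I expect this identification to be the main obstacle in a fully self-contained write-up; with it in hand everything else is elementary.

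Now assume the identification. Since $[p,q]$ is convex, $\Gamma\big[[p,q]\big]$ is an isometric subgraph of $\Gamma$, so for $a,b\in[p,q]$ the distance $d_\Gamma(a,b)$ is computed inside the covering graph of $[p,q]$; in particular $d_\Gamma(p,q)$ is the distance from the bottom $p$ to the top $q$ in the covering graph of the graded lattice $[p,q]$, which is $r[p,q]$. Recall also that, by Theorem~\ref{th:orbits}(a), being shortest with respect to $\mu$ and with respect to $d$ coincide, so I may work with $d$. Let $P$ be any $(p,q)$-path in $\Gamma$. If $P$ is ascending, then every vertex $u_i$ of $P$ satisfies $p\preceq u_i\preceq q$ (directed subpaths $p\to\cdots\to u_i$ and $u_i\to\cdots\to q$), so $P$ lies in $[p,q]$ and, by the identification, is a maximal chain; hence $P$ has length $r[p,q]=d_\Gamma(p,q)$ and is shortest. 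Conversely, if $P$ is shortest then each of its vertices lies in $I(p,q)\subseteq[p,q]$ (by convexity), so $P$ is a shortest $p$--$q$ path in the covering graph of $[p,q]$; writing $u$ for its number of up-steps and $d$ for its number of down-steps, $u-d$ equals the rank difference $r[p,q]$ and $u+d$ equals the length $r[p,q]$, forcing $d=0$, so $P$ is a maximal chain, i.e.\ ascending. This proves the equivalence ``shortest $\iff$ ascending''.

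The ``in particular'' assertions follow. The inclusion $I(p,q)\subseteq[p,q]$ is convexity of $[p,q]$; conversely any $a\in[p,q]$ lies on a maximal chain $p\prec\cdots\prec a\prec\cdots\prec q$ of the graded lattice $[p,q]$, which is a directed $\Gamma$-path, hence a shortest $(p,q)$-path by the equivalence just proved, so $a\in I(p,q)$; thus $I(p,q)=[p,q]$. That all maximal chains of $[p,q]$ have the same length is gradedness, and this common length equals $r[p,q]=d_\Gamma(p,q)$ by the equivalence. Finally, for $a\in[p,q]$ apply the same analysis to the pair $(p,a)$, noting that $[p,a]$ is again a modular lattice convex in $\Gamma$ by Lemma~\ref{lemma:GammaIdeal}(b): this gives $d_\Gamma(a,p)=r[p,a]$, and $r[p,a]$ is exactly the rank $r(a)$ of $a$ in the graded lattice $[p,q]$, so $r(a)=d_\Gamma(a,p)$, which completes the argument.
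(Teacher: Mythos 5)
First, note that the paper does not prove this statement at all: it is imported verbatim from Hirai (\cite[Lemma 4.13]{Hirai:0ext}), so there is no in-paper proof to compare against, and your attempt has to stand on its own.

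On its own terms, your reduction is well organized and the bookkeeping after the ``identification'' is correct (convexity of $[p,q]$ gives isometry, Theorem~\ref{th:orbits}(a) lets you pass between $\mu$- and $d$-shortest, the up-step/down-step counting gives ``shortest $\Rightarrow$ ascending'', and the ``in particular'' claims follow). But the identification itself --- that every $\Gamma$-edge $a\rightarrow b$ with $a,b\in[p,q]$ is a covering pair of the order, i.e.\ that no edge ``skips'' an element $z$ with $a\prec z\prec b$ --- is not a side ingredient you may defer: it is essentially equivalent to the lemma. (Given the lemma, an ascending $a$--$z$--$b$ path would be shortest, so $\mu(a,b)=\mu(a,z)+\mu(z,b)$, contradicting $ab\in E_{H_\mu}$; conversely your whole argument rests on it.) You explicitly leave it unproved, pointing to ``the structure theory behind Lemma~\ref{lemma:GammaIdeal}'', but Lemma~\ref{lemma:GammaIdeal} as quoted in this paper only says that $[p,q]$ is a convex modular lattice with valuation $v(a)=\mu_\Gamma(p,a)$; neither convexity nor the valuation axioms exclude an edge $a\rightarrow b$ coexisting with $a\prec z\prec b$, since that situation only forces $\mu(a,z)<\mu(a,b)$ and $\mu(z,b)<\mu(a,b)$, not additivity along the chain. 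Moreover, the Hirai results you name (Proposition~4.1, Theorem~4.2, Lemma~4.14) are exactly the ones this paper cites as Lemma~\ref{lemma:GammaIdeal}, and in Hirai's development they sit alongside (and are intertwined with) Lemma~4.13, so deferring the crux to them is at best an appeal to the very source being reproved and at worst circular.

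Concretely, what is missing is an actual argument --- an induction on $d(p,q)$ using medians (modularity) and the admissibility of the orientation on 4-cycles, in the spirit of the proof of Lemma~\ref{lemma:romb} in this paper --- showing either ``every ascending path is shortest'' directly or the no-skipping property of edges; this is where orientability/modularity are genuinely used, and your write-up never engages with it. A minor additional point: in the easy direction of your identification you only obtain $v(b)-v(a)\le w(ab)$, but since your subsequent argument never uses the weight equality (only the covering property and Theorem~\ref{th:orbits}(a)), you could drop the weighted part of the claim entirely and isolate the covering property as the single statement that still needs a proof.
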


Since set $[p,q]$ is convex in $\Gamma$ by Lemma~\ref{lemma:GammaIdeal}, one can define  projection $\Pr_{[p,q]}: V \rightarrow [p,q]$.
The lemma below describes some properties of this projection.
\begin{lemma}[{\cite[Lemma 4.15]{Hirai:0ext}}]\label{lemma:pq-pq-prime}
Let $\Gamma$ be a modular complex.
For elements $p,q,p',q'$ with $p \preceq q$ and $p' \preceq q'$ define
\begin{equation}\label{eq:uvu'v'}
u = \Pr\nolimits_{[p, q]} (p') \quad
v = \Pr\nolimits_{[p,q]} (q') \quad
u' = \Pr\nolimits_{[p',q']} (p) \quad
v' = \Pr\nolimits_{[p',q']} (q)
\end{equation}
Then we have:
\begin{itemize}[noitemsep,topsep=0pt]
\item[{\rm (1)}]  $u \preceq v$, $u' \preceq v'$, $\Pr_{[p',q']}([p,q]) = [u', v']$, and $\Pr_{[p,q]}([p',q']) = [u, v]$.
\item[{\rm (2)}] $[u,v]$ is isomorphic to $[u',v']$ by map $w \mapsto \Pr_{[p',q']}(w)$. 
\end{itemize}
\end{lemma}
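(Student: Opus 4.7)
The plan is to reduce the lemma to a combination of Theorem~\ref{th:gatedset} (on gated-set projections) and one structural step: projection onto an interval of a modular complex preserves the partial order along covering edges. First, by Lemma~\ref{lemma:GammaIdeal}, the intervals $A=[p,q]$ and $A'=[p',q']$ are convex in $\Gamma$, hence gated by Lemma~\ref{lemma:convex:gated}. Theorem~\ref{th:gatedset} then yields that $B:=\Pr_{A}(A')$ and $B':=\Pr_{A'}(A)$ are gated (and hence convex), and that $\Pr_{A}$ and $\Pr_{A'}$ restrict to a pair of inverse isometries between $B'$ and $B$. Thus it suffices to identify $B=[u,v]$ and $B'=[u',v']$; part~(1) then follows immediately, and part~(2) is almost automatic (see below).

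For the identification, I would first prove the structural claim: if $w\prec w'$ is a covering edge with $w,w'\in A'$, then $r:=\Pr_A(w)$ and $r':=\Pr_A(w')$ satisfy $r\preceq r'$. From the gate property, $r\in I(w,r')$ and $r'\in I(w',r)$, while $\Pr_A$ being $1$-Lipschitz (Theorem~\ref{th:gatedset}(a)) gives $\mu(r,r')\le\mu(w,w')$. If $r=r'$ there is nothing to show, so assume $r\ne r'$; then $(w,w',r',r)$ is an isometric rectangle in the sense of Section~\ref{sec:proofs}. Taking a shortest $r$-$r'$ path and applying Proposition~\ref{prop:isometric-rectangle} repeatedly (lifting each intermediate vertex of the $r$-$r'$ path to the corresponding vertex of a parallel $w$-$w'$ path) exhibits $(w,w',r',r)$ as a ladder of unit $4$-cycles of $\Gamma$. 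Since the single edge $w\to w'$ is oriented from $w$ to $w'$, admissibility of the orientation propagates this orientation through each successive $4$-cycle, so every edge in the $r$-$r'$ path is oriented from the $r$-side to the $r'$-side. Hence the $r$-$r'$ path is ascending, i.e.\ $r\prec r'$.

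Given the claim, I would then fix a shortest ascending path $p'=w_0\prec w_1\prec\cdots\prec w_\ell=q'$ (which exists by Lemma~\ref{lemma:ascending}) and set $r_i=\Pr_A(w_i)$, so $r_0=u$ and $r_\ell=v$. The step above gives $r_i\preceq r_{i+1}$ for each $i$, hence $u\preceq v$, and each $r_i\in[u,v]$. For an arbitrary $w\in A'$, pick a shortest ascending path through $w$ (such as $p'\prec\cdots\prec w\prec\cdots\prec q'$ obtained by concatenating ascending subpaths inside $A'$): the same argument applied along this path shows $\Pr_A(w)\in[u,v]$. Thus $B\subseteq[u,v]$. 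The reverse inclusion $[u,v]\subseteq B$ is immediate: $u,v\in B$, the set $B$ is convex in $\Gamma$, so $I(u,v)\subseteq B$, and by Lemma~\ref{lemma:ascending} we have $I(u,v)=[u,v]$. Consequently $B=[u,v]$, and a symmetric argument gives $B'=[u',v']$, completing part~(1).

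For part~(2), Theorem~\ref{th:gatedset}(a) already provides that $\Pr_{[p',q']}$ restricts to an isometry $[u,v]\to[u',v']$ with inverse $\Pr_{[p,q]}$. To upgrade this to a poset isomorphism, I would observe that in a modular complex the rank function on an interval equals the graph distance to its minimum element (Lemma~\ref{lemma:ascending}); since the isometry sends $u$ to $u'$ and $v$ to $v'$, it preserves rank differences and therefore preserves the covering relation, hence $\preceq$. The main obstacle is the propagation step in paragraph two: one must verify that the ladder of $4$-cycles between the $w$-$w'$ edge and a shortest $r$-$r'$ path is genuine (no degenerate cycles collapse the argument). Bipartiteness from Lemma~\ref{lemma:modular:characterization} together with Proposition~\ref{prop:isometric-rectangle} handles this, but the bookkeeping of successively constructed medians is where the proof is most delicate.
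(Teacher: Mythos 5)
The paper cites Lemma~\ref{lemma:pq-pq-prime} from \cite[Lemma~4.15]{Hirai:0ext} without reproducing a proof, so there is no in-paper argument to compare against; I'll assess correctness on its own terms.

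Your overall plan is sound, and the structure (reduce to Theorem~\ref{th:gatedset} plus an order-preservation step for projections) is the natural one. Indeed your ``structural claim'' is exactly Lemma~\ref{lemma:romb} in the paper, specialized to $\sqsubseteq\,=\,\preceq$ (which is admissible by Proposition~\ref{prop:ebp:admissible}(a)): take $(p,q,a,b)=(w,w',r,r')$ and note that $(w,r,r')$ and $(r,r',w')$ are shortest subpaths by the gate property. So you are essentially re-deriving that lemma by the same isometric-rectangle/4-cycle-propagation technique that the paper uses there.

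Two places need tightening. First, the step ``assume $r\ne r'$; then $(w,w',r',r)$ is an isometric rectangle'' does not follow from $\mu(r,r')\le\mu(w,w')$ alone: one needs the equality $\mu(r,r')=\mu(w,w')$ to get the two remaining shortest-subpath conditions $(r,w,w')$ and $(w,w',r')$. This is where you must pass to the unweighted metric $d$ via Theorem~\ref{th:orbits}(a): $d(w,w')=1$ since $w\prec w'$ is a covering edge, the $1$-Lipschitz bound gives $d(r,r')\le 1$, and $r\ne r'$ forces $d(r,r')=1$, which then makes the two derived inequalities tight. Also note that the $1$-Lipschitz property is not what Theorem~\ref{th:gatedset}(a) states (that is the isometry between $B$ and $B'$); it is a separate elementary fact about gated sets, easily proven from the gate equations. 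Second, in part~(2), ``preserves rank differences and therefore preserves the covering relation'' is slightly too fast: one needs to combine rank preservation with adjacency preservation (both given by the isometry) and with the fact that in a modular complex the edges of $\Gamma$ are precisely the covering pairs (a consequence of Lemma~\ref{lemma:ascending}). Equivalently, one can use the cleaner characterization $x\preceq y$ in $[u,v]$ iff $d(u,x)+d(x,y)=d(u,y)$, which is manifestly preserved by an isometry sending $u\mapsto u'$. With these adjustments the proof is complete.

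One small remaining verification you gesture at but should make explicit: that the isometry $\Pr_{A'}\colon B\to B'$ indeed sends $u\mapsto u'$ and $v\mapsto v'$. This follows from Theorem~\ref{th:gatedset}(c): $\Pr_{A'}(u)=\Pr_{A'}(\Pr_A(p'))=\Pr_{B'}(p')$, which equals the minimum $u'$ of $B'=[u',v']$ because $p'\preceq u'\preceq y$ and ascending paths are shortest for all $y\in B'$.
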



\subsection{Properties of extended modular complexes}
In this section 
$\Gamma$ is always assumed to be an extended modular complex on nodes $V$ with relation~$\sqsubseteq$.

\begin{lemma}\label{lemma:common-neighbor}
Suppose that $p\sqsubseteq q_1$, $p\sqsubseteq q_2$, $q_1\ne q_2$
and $q$ is a common neighbor of $q_1,q_2$ (implying that $d(q_1,q_2)=2$).
Then $p\sqsubseteq q$.
\end{lemma}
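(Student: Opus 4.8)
\textbf{Proof plan for Lemma~\ref{lemma:common-neighbor}.}

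The plan is to realize $q$ as the meet $q_1 \wedge q_2$ (in the order $\preceq$), so that conditions $p\sqsubseteq q_1$, $p\sqsubseteq q_2$ together with the admissibility axiom (\ref{def:ebp}c) immediately give $p\sqsubseteq q$. First I would note that since $p\sqsubseteq q_1$ and $p\sqsubseteq q_2$ we have $p\preceq q_1$ and $p\preceq q_2$, so $q_1,q_2$ are lower-bounded (by $p$); hence by Lemma~\ref{lemma:GammaIdeal}(a) the meet $m:=q_1\wedge q_2$ exists. By Proposition~\ref{prop:wedge-shortest}(a), $(q_1,m,q_2)$ is a shortest subpath, so $d(q_1,m)+d(m,q_2)=d(q_1,q_2)=2$, and since $m\ne q_1$ and $m\ne q_2$ (as $q_1\ne q_2$, and if $m=q_1$ then $q_1\preceq q_2$, contradicting that $q_1,q_2$ are incomparable — they are neighbors of $q$ in a bipartite graph at distance $2$ from each other, hence equidistant-or-argued incomparable), we get $d(q_1,m)=d(m,q_2)=1$, i.e.\ $m$ is a common neighbor of $q_1$ and $q_2$.

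The main obstacle is then to argue that this common neighbor $m$ is \emph{the same as} the given common neighbor $q$; equivalently, that $q_1$ and $q_2$ have a unique common neighbor. Here I would use modularity: $\Gamma$ is bipartite (Lemma~\ref{lemma:modular:characterization}(1)), and in a modular graph the metric is modular, so any two vertices at distance $2$ have a median — but more is true, and the cleanest route is to invoke modularity of the interval directly. Since $p\preceq q_1$ and $p\preceq q_2$ with $p\preceq m\preceq q_1$ and $p\preceq m \preceq q_2$, both $q_1$ and $q_2$ lie in $\calL^\uparrow_p$, which is a modular semilattice by Lemma~\ref{lemma:GammaIdeal}(b); in a modular lattice (here the interval $[p, q_1\vee q_2]$ if the join exists, or more carefully the sublattice generated) the join of the two atoms-over-$m$ covering structure forces uniqueness. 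Concretely: $q$ is a common neighbor of $q_1,q_2$ with $q\succ q_1$ or $q\prec q_1$; by bipartiteness and $d(q_1,q_2)=2$ one shows $q_1,q_2$ cover $m$ and are covered by $q_1\vee q_2$ (which exists since $q_1,q_2$ are upper-bounded by any common neighbor above, or lower elements of a modular lattice), and in a modular lattice two distinct atoms of an interval of length $2$ have a unique common lower cover, namely the bottom. I would pin down which of "$q$ above" or "$q$ below" occurs by rank/parity (Lemma~\ref{lemma:ascending}): $d(q_1,m)=d(m,q_2)=1$ forces $m$ below both, and if instead $q$ were above both then $m,q$ would be the two "medians" of $q_1,q_2$ in the modular lattice $[m,q]$ — but an interval of length $2$ in a modular lattice has a unique bottom and unique top, so the common \emph{neighbor} is unique on each side; combined with the fact that $m$ is below $q_1,q_2$ while $q$ must be either below or above, a short case analysis using bipartiteness (neighbors of $q_1$ are all at the opposite rank-parity) shows $q$ must also be below both, hence $q=m=q_1\wedge q_2$.

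Having established $q = q_1\wedge q_2$, I finish by applying Definition~\ref{def:ebp}(c) with $p_1=q_1$, $p_2=q_2$: from $p\sqsubseteq q_1$ and $p\sqsubseteq q_2$ — wait, the axiom as stated is $p_1\sqsubseteq q$, $p_2\sqsubseteq q \Rightarrow p_1\wedge p_2\sqsubseteq q$, which is the wrong direction. Instead I would use axiom (\ref{def:ebp}a): since $p\sqsubseteq q_1$ with $p\preceq q \preceq q_1$ and $p, q\in[p,q_1]$, we get $p\sqsubseteq q$. (This only needs $q\in[p,q_1]$, i.e.\ $p\preceq q\preceq q_1$, which holds because $q = q_1\wedge q_2 \preceq q_1$ and $p\preceq q$.) Thus the crux is purely the uniqueness-of-common-neighbor argument from modularity, and once $q=q_1\wedge q_2$ is known, axiom (\ref{def:ebp}a) applied inside the interval $[p,q_1]$ closes the proof in one line.
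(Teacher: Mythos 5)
Your plan hinges on the claim that the common neighbor $q$ must equal $q_1\wedge q_2$, i.e.\ that $q_1,q_2$ have a unique common neighbor. This is false, even in a modular complex. In the Boolean cube $\{0,1\}^2$ (which is a modular complex), $q_1=(1,0)$ and $q_2=(0,1)$ have two common neighbors, $(0,0)=q_1\wedge q_2$ and $(1,1)=q_1\vee q_2$; and if $q_1\to q_2$ is a length-$2$ ascending chain then the common neighbor $q$ lies \emph{between} them with $q_1\prec q\prec q_2$. Your parity/uniqueness argument cannot rule these out because they genuinely occur, and your ``short case analysis using bipartiteness'' is asserting a conclusion ($q$ below both) that is simply untrue in general. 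Consequently the argument only covers one of the three possible edge-orientation patterns around $q$.

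\textbf{What is actually needed.} The paper's proof splits on the orientation of the two edges $\{q,q_1\}$, $\{q,q_2\}$ (up to symmetry): (i) $q_1\to q\to q_2$, where $p\preceq q\preceq q_2$ and axiom~(\ref{def:ebp}a) applied inside $[p,q_2]$ gives $p\sqsubseteq q$; (ii) $q_1\to q\leftarrow q_2$, where $q=q_1\vee q_2$ (Lemma~\ref{lemma:GammaIdeal}(a)) and axiom~(\ref{def:ebp}b) gives $p\sqsubseteq q$; (iii) $q_1\leftarrow q\to q_2$, where $q=q_1\wedge q_2$ and one uses $p\preceq q$ (as $p$ lower-bounds $q_1,q_2$) together with axiom~(\ref{def:ebp}a) in $[p,q_1]$. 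Your writeup correctly handles case (iii) at the end via (\ref{def:ebp}a), but you need cases (i) and (ii) as well --- in particular case (ii) requires axiom~(\ref{def:ebp}b), which your plan never invokes. There is no uniqueness lemma to prove; the result is a direct three-case application of the admissibility axioms.
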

\begin{proof}
Modulo symmetry, three cases are possible.
\begin{itemize}
\item $q_1\rightarrow q\rightarrow q_2$. Since $p\sqsubseteq q_2$, condition~(\ref{def:ebp}a) gives $p\sqsubseteq q$.
\item $q_1\rightarrow q\leftarrow q_2$. Then $q=q_1\vee q_2$ by Lemma~\ref{lemma:GammaIdeal}(a), and condition~(\ref{def:ebp}b) gives $p\sqsubseteq q$.
\item $q_1\leftarrow q\rightarrow q_2$.  Then $q=q_1\wedge q_2$ by Lemma~\ref{lemma:GammaIdeal}(a), implying $p\preceq q$ (since $p$ lower-bounds $q_1,q_2$). Condition~(\ref{def:ebp}a) gives $p\sqsubseteq q$. \qedhere
\end{itemize}
\end{proof}

\begin{lemma}\label{lemma:romb}
Consider elements $p,q,a,b$ such that $(p,a,b)$ and $(a,b,q)$ are shortest subpaths and $p\sqsubseteq q$. Then $a\sqsubseteq b$.
\end{lemma}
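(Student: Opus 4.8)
The plan is to reduce the claim to condition~(\ref{def:ebp}a) by finding a four-element ``isometric rectangle'' structure linking $p,q$ to $a,b$, and then to push the relation $p\sqsubseteq q$ down to $a\sqsubseteq b$ through a sequence of elementary steps, each justified by Lemma~\ref{lemma:common-neighbor} (or directly by one of (\ref{def:ebp}a)--(\ref{def:ebp}c)). The hypotheses say precisely that $(p,a,b,q)$ is a shortest subpath, so $a,b\in I(p,q)$, $a\in I(p,b)$ and $b\in I(a,q)$. The difficulty is that $a,b$ need not be comparable and need not lie in an interval of the form $[p',q']$ with $p'\sqsubseteq q'$, so (\ref{def:ebp}a) is not directly applicable; we must manufacture such an interval.

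First I would reduce to the case where $p\preceq q$. Since $p\sqsubseteq q$ we have $p\preceq q$, so $I(p,q)=[p,q]$ by Lemma~\ref{lemma:ascending}, and in particular $a,b\in[p,q]$ with $p\preceq a$ and... wait---$a,b$ lie on a shortest $p$-$q$ path, hence on an ascending chain $p\prec\cdots\prec a\prec\cdots\prec b\prec\cdots\prec q$ by Lemma~\ref{lemma:ascending}. Thus $a\preceq b$ and $a,b\in[p,q]$, and now condition~(\ref{def:ebp}a) applies directly to give $a\sqsubseteq b$. So in fact the lemma is immediate once we observe, via Lemma~\ref{lemma:ascending}, that a shortest $(p,q)$-path with $p\preceq q$ is an ascending chain through $a$ and $b$.

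The only subtlety is checking that the hypotheses genuinely force $a,b$ onto a single shortest $p$-$q$ path in the same order. We have $\mu(p,b)=\mu(p,a)+\mu(a,b)$ and $\mu(a,q)=\mu(a,b)+\mu(b,q)$; adding $\mu(b,q)$ to the first and $\mu(p,a)$ to the second and using $a\in I(p,b)$, $b\in I(a,q)$ gives $\mu(p,q)\le\mu(p,a)+\mu(a,b)+\mu(b,q)$, and the triangle inequality forces equality, so $(p,a,b,q)$ is a shortest subpath; concatenating a shortest $p$-$a$ path, the edge-path from $a$ to $b$, and a shortest $b$-$q$ path yields a shortest $(p,q)$-path. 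By Lemma~\ref{lemma:ascending} this path is ascending, whence $p\preceq a\preceq b\preceq q$ and $a,b\in[p,q]$. Then (\ref{def:ebp}a) with $p\sqsubseteq q$ gives $a\sqsubseteq b$, completing the proof.

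\textbf{Where the work really is.} I expect no genuine obstacle here: the statement is a ``monotonicity along shortest paths'' fact, and once Lemma~\ref{lemma:ascending} is invoked to identify shortest paths in $[p,q]$ with ascending chains, condition~(\ref{def:ebp}a) does all the work. If one wanted to avoid reducing through $p\preceq q$, an alternative would be to iterate Lemma~\ref{lemma:common-neighbor} along a shortest path, shrinking $(p,q)$ to $(a,b)$ two coordinates at a time while maintaining the $\sqsubseteq$-relation between the endpoints; but the interval argument above is cleaner.
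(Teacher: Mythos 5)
Your proof has a fatal gap at the very first step: you claim that $(p,a,b)$ and $(a,b,q)$ being shortest subpaths forces $(p,a,b,q)$ to be a shortest subpath, and then you say ``the triangle inequality forces equality.'' It does not. The triangle inequality gives only $\mu(p,q)\le\mu(p,a)+\mu(a,b)+\mu(b,q)$; there is nothing that upgrades this to an equality. In fact, the inequality is typically strict. Take $\Gamma$ to be the $2\times 2$ grid on $\{0,1\}^2$ with edges oriented coordinate-wise, and set $p=(0,0)$, $q=(1,0)$, $a=(0,1)$, $b=(1,1)$. Then $(p,a,b)$ and $(a,b,q)$ are both shortest subpaths, $p\sqsubseteq q$, but $d(p,q)=1$ while $d(p,a)+d(a,b)+d(b,q)=3$, so $(p,a,b,q)$ is far from geodesic, and $a,b\notin I(p,q)$. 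Your reduction to ``$a,b$ lie on an ascending chain in $[p,q]$'' therefore never gets off the ground, and condition~(\ref{def:ebp}a) cannot be invoked. The hypotheses describe a \emph{ladder-like} configuration (essentially an isometric rectangle $(p,q,b,a)$ once one normalizes), not a nesting of intervals.

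The paper's proof works because it does not pretend $a,b\in I(p,q)$. It runs an induction on $d(p,q)+d(p,a)+d(q,b)$: if $(a,p,q)$ or $(b,q,p)$ fails to be a shortest subpath, a median lets one shrink the configuration and recurse; otherwise $(p,q,b,a)$ is an isometric rectangle, and one splits the rectangle further (via Proposition~\ref{prop:isometric-rectangle}), eventually reducing to the base cases $d(p,a)=d(q,b)\in\{0,1\}$, where conditions (\ref{def:ebp}a)--(\ref{def:ebp}c) and the 4-cycle orientation rule do the work. This propagation across an isometric rectangle is exactly the content you would need and have skipped. Your alternative suggestion (``iterate Lemma~\ref{lemma:common-neighbor}'') also does not apply as stated, since that lemma requires the two elements to share a common lower bound under $\sqsubseteq$, which is not part of the rectangle picture here.
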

\begin{proof}
We use induction on $d(p,q)+d(p,a)+d(q,b)$. First, assume that $(a,p,q)$ is not a shortest subpath.
Let $p'$ be a median of $a,p,q$, then $p'\in[p,q]-\{p\}$ and so $p'\sqsubseteq q$.
The induction hypothesis for $p',q,a,b$ gives the claim. We can thus assume that $(a,p,q)$ is a shortest subpath.
By a symmetric argument we can also assume that $(b,q,p)$ is a shortest subpath.
Sequence $(p,q,b,a)$ is thus an isometric rectangle, and so $d(p,q)=d(a,b)$ and $d(p,a)=d(q,b)$.
We now consider 5 possible cases.
\begin{itemize}
\item $d(p,a)=d(q,b)=0$. The claim then holds trivially.
\item $d(p,a)=d(q,b)=1$, $q\rightarrow b$.
Then $p\preceq b$ and $(p,a,b)$ is a shortest subpath, and so $p\rightarrow a\preceq b$ by Lemma~\ref{lemma:ascending}.
We have $a\vee q\preceq b$ and $(a,b,q)$ is a shortest subpath; this implies that $a\vee q=b$.
Since $p\sqsubseteq a$ and $p\sqsubseteq q$, we obtain $p\sqsubseteq a\vee q=b$ by condition~(\ref{def:ebp}b), and thus $a\sqsubseteq b$ since $a\in[p,b]$.
\item $d(p,a)=d(q,b)=1$, $p\leftarrow a$. This case is symmetric to the previous one.
\item $d(p,a)=d(q,b)=1$, $p\rightarrow a$, $b\rightarrow q$. We will show that $a\preceq b$;
this will imply that  $(p,a,b,q)$ is a shortest subpath by Lemma~\ref{lemma:ascending}, contradicting condition $d(p,q)=d(a,b)$.

If $d(p,q)=d(a,b)=0$ then the claim is trivial. If $d(p,q)=d(a,b)=1$ then $(p,q,b,a)$ is a 4-cycle,
and so $p\rightarrow q$ implies that $a\rightarrow b$.
Suppose that $d(p,q)=d(a,b)\ge 2$.
By Proposition~\ref{prop:isometric-rectangle},
there exist $x\in I(p,q)-\{p,q\}$ and $y\in I(a,b)$ such that $(p,x,y,a)$ and $(q,x,y,b)$
are isometric rectangles. We have $p\sqsubseteq x\sqsubseteq q$,
so by the induction hypothesis  $a\sqsubseteq y\sqsubseteq b$, implying the claim.

\item $d(p,a)=d(q,b)\ge 2$. By Proposition~\ref{prop:isometric-rectangle},
there exist $p'\in I(p,a)$ and $q'\in I(q,b)$
such that $(p,q,q',p')$ and $(p',q',b,a)$ are isometric rectangles. 
The induction hypothesis for elements $p,q,p',q'$ gives $p'\sqsubseteq q'$.
The induction hypothesis for elements $p',q',a,b$ gives $a\sqsubseteq b$. \qedhere
\end{itemize}
\end{proof}

\begin{corollary}\label{cor:projection-preserves-ebp}
Suppose that $p\sqsubseteq q$ and $A'\subseteq V$ is a convex set in $\Gamma$.
Let $p'=\Pr_{A'}(p)$ and $q'=\Pr_{A'}(q)$.
Then $p'\sqsubseteq q'$.
\end{corollary}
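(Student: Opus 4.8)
The plan is to realize $p'$ and $q'$ through a "zig-zag to the gate" argument so that Lemma~\ref{lemma:romb} can be applied. Since $A'$ is convex in the modular complex~$\Gamma$, it is gated (Lemma~\ref{lemma:convex:gated}), so $p'=\Pr_{A'}(p)$ and $q'=\Pr_{A'}(q)$ are well-defined, and for every $u\in A'$ we have $\mu(p,u)=\mu(p,p')+\mu(p',u)$ and $\mu(q,u)=\mu(q,q')+\mu(q',u)$. In particular, taking $u=q'$ in the first identity and $u=p'$ in the second, the sequences $(p,p',q')$ and $(q,q',p')$ are shortest subpaths; equivalently $p'\in I(p,q')$ and $q'\in I(q,p')$. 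From these two facts one checks that $(p,p',q',q)$ is an isometric rectangle: $(p,p',q')$ and $(q,q',p')$ are shortest subpaths, and the two remaining "corner" subpaths $(q',p',p)$ and $(p',q',q)$ are shortest for the same reason (they are the reversals of the two we already have, reoriented — more precisely, $\mu(p,q')=\mu(p,p')+\mu(p',q')$ and $\mu(p',q)=\mu(p',q')+\mu(q',q)$ together give that $(p,p',q',q)$ has all four "length-two windows" shortest once we also use $\mu(p,q)=\mu(p,p')+\mu(p',q')+\mu(q',q)$, which follows by composing).

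Granting that $(p,p',q',q)$ is an isometric rectangle, we have in particular that $(p,p',q')$ is a shortest subpath and $(p',q',q)$ is a shortest subpath, i.e. exactly the hypothesis shape "$(p,a,b)$ and $(a,b,q)$ are shortest subpaths" of Lemma~\ref{lemma:romb} with $a=p'$, $b=q'$. Since $p\sqsubseteq q$ by assumption, Lemma~\ref{lemma:romb} yields $p'\sqsubseteq q'$, which is the claim.

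The one point requiring care is the verification that $(p,p',q',q)$ is genuinely an isometric rectangle, i.e. that $\mu(p,q)=\mu(p,p')+\mu(p',q')+\mu(q',q)$. This is where gatedness is used a second time, in the stronger form of Theorem~\ref{th:gatedset}: applying that result to the gated sets $A=\{p\}$ (trivially gated) — or more cleanly, invoking that $p'=\Pr_{A'}(p)$ is the gate, one gets $\mu(p,q)=\mu(p,p')+\mu(p',q)$ directly from $q\in A'$, and then $\mu(p',q)=\mu(p',q')+\mu(q',q)$ since $p'\in A'$ and $q'=\Pr_{A'}(q)$. Composing the two identities gives the decomposition and hence all four shortest-subpath conditions, completing the reduction to Lemma~\ref{lemma:romb}. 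I expect this bookkeeping with the gate identities to be the only mildly delicate step; everything else is a direct citation.
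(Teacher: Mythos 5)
Your core argument is exactly right and is the intended proof: gatedness of the convex set $A'$ (via Lemma~\ref{lemma:convex:gated}) gives, with $u=q'\in A'$, that $\mu(p,q')=\mu(p,p')+\mu(p',q')$, and with $u=p'\in A'$, that $\mu(q,p')=\mu(q,q')+\mu(q',p')$; these are precisely the two shortest-subpath hypotheses $(p,a,b)$ and $(a,b,q)$ of Lemma~\ref{lemma:romb} with $a=p'$, $b=q'$, and the lemma then yields $p'\sqsubseteq q'$.

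However, the detour through isometric rectangles is both unnecessary and contains errors. Lemma~\ref{lemma:romb} does \emph{not} require $(p,p',q',q)$ to be an isometric rectangle, nor does it require $\mu(p,q)=\mu(p,p')+\mu(p',q')+\mu(q',q)$; it only needs the two length-two windows that you already derived. Moreover, your verification of that extra claim is flawed: you write ``one gets $\mu(p,q)=\mu(p,p')+\mu(p',q)$ directly from $q\in A'$,'' but the corollary does not assume $q\in A'$ --- $p$ and $q$ are arbitrary elements of $V$ with $p\sqsubseteq q$, and in general neither lies in $A'$. In fact $(p,p',q',q)$ need not be an isometric rectangle: take $\Gamma$ a path $0\text{--}1\text{--}2\text{--}3\text{--}4$, $p=1$, $q=3$, $A'=\{0\}$; then $p'=q'=0$ and $(q',q,p)=(0,3,1)$ is not a shortest subpath. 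Drop the isometric-rectangle discussion entirely and the proof is clean: the two gate identities plus Lemma~\ref{lemma:romb} suffice.
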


\renewcommand{\thelemmaRESTATED}{\ref{lemma:GammaIdeal:ebp}}
\begin{lemmaRESTATED}[restated]
Let $\Gamma$ be an extended modular complex, and $p$ be its element. 
Then 
$\calL^+_p,\calL^-_p$ are modular semilattices that are convex in $\Gamma$.
Furthermore, function $v(\cdot)$ defined via $v(a)=\mu_\Gamma(p,a)$
is a valid valuation of these (semi)lattices. 
\end{lemmaRESTATED}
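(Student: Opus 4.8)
The plan is to reduce Lemma~\ref{lemma:GammaIdeal:ebp} for an arbitrary admissible relation $\sqsubseteq$ to the already-known case of the Boolean-pair relation $\BPrel$ (Lemma~\ref{lemma:GammaIdeal}), leveraging the ordering $\BPrel\,\subseteq\,\sqsubseteq\,\subseteq\,\preceq$ from Proposition~\ref{prop:ebp:admissible}. The cleanest route is to first identify $\calL_p^+$ with a metric interval in $\Gamma$: I claim $\calL_p^+=\{q:p\sqsubseteq q\}$ is a convex (equivalently gated, by Lemma~\ref{lemma:convex:gated}) subset of $(V,d_\Gamma)$. To prove convexity via Lemma~\ref{lemma:convex:gated}(3), it suffices to show $\Gamma[\calL_p^+]$ is connected and that $I(q_1,q_2)\subseteq\calL_p^+$ whenever $q_1,q_2\in\calL_p^+$ with $d(q_1,q_2)=2$; the latter is exactly Lemma~\ref{lemma:common-neighbor}, and connectedness follows since each $q\in\calL_p^+$ has $p\preceq q$ and any ascending path from $p$ to $q$ stays inside $\calL_p^+$ by condition~(\ref{def:ebp}a) applied along the interval $[p,q']$. (Symmetrically $\calL_p^-$ is convex.)

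Once $\calL_p^+$ is known to be convex in $\Gamma$, I would argue it is a modular semilattice by invoking the semilattice/modularity structure that $\Gamma$ already carries. Concretely: for $q_1,q_2\in\calL_p^+$, the elements $q_1,q_2$ are lower-bounded by $p$, so $q_1\wedge q_2$ exists in $\Gamma$ by Lemma~\ref{lemma:GammaIdeal}(a), and condition~(\ref{def:ebp}c) gives $q_1\wedge q_2\in\calL_p^+$; hence $\calL_p^+$ is closed under meets and is a meet-subsemilattice of $\calL^\uparrow_p$. Similarly, if $q_1,q_2\in\calL_p^+$ are upper-bounded in $\Gamma$, then $q_1\vee q_2$ exists, it lies in $[p,q_1\vee q_2]$, and condition~(\ref{def:ebp}b) puts it in $\calL_p^+$; so joins of upper-bounded pairs are inherited as well. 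For the defining modularity condition of a semilattice (every principal ideal is a modular lattice, and triple joins exist when all pairwise joins do), I would restrict the corresponding statement for $\calL^\uparrow_p$ (which is a modular semilattice by Lemma~\ref{lemma:GammaIdeal}(b)): the principal ideal of $q$ inside $\calL_p^+$ is $[p,q]$ (using (\ref{def:ebp}a) to see it sits inside $\calL_p^+$), which coincides with the principal ideal of $q$ in $\calL^\uparrow_p$ and is therefore a modular lattice; and the triple-join condition is inherited verbatim because joins, when they exist, agree with those in $\Gamma$.

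Finally, for the valuation claim: Lemma~\ref{lemma:GammaIdeal}(b) already asserts that $a\mapsto\mu_\Gamma(p,a)$ is a valid valuation on $\calL^\uparrow_p$, i.e.\ it is strictly monotone along $\prec$ and satisfies the modular identity $v(a)+v(b)=v(a\wedge b)+v(a\vee b)$ for upper-bounded $a,b$. Since $\calL_p^+$ is a sub-poset of $\calL^\uparrow_p$ closed under meets (always) and joins (when they exist in $\calL^\uparrow_p$, by the previous paragraph), both properties restrict immediately to $\calL_p^+$; strict monotonicity is automatic, and the identity holds because all four terms are computed by the same $\mu_\Gamma(p,\cdot)$ and the operations $\wedge,\vee$ on $\calL_p^+$ agree with those on $\calL^\uparrow_p$. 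The analogous statements for $\calL_p^-$ follow by the order-reversal symmetry built into the definitions in~\eqref{eq:Lneib} (viewing $\calL_p^-$ with the reverse of $\preceq$, so that meets in $\calL_p^-$ are joins in $\Gamma$ and vice versa), using conditions~(\ref{def:ebp}b,c) in swapped roles and Lemma~\ref{lemma:common-neighbor} with the downward-pointing case. The main obstacle I anticipate is being careful that the semilattice operations of $\calL_p^+$ really are the restrictions of those of $\Gamma$ rather than some coarser operation — this is what makes the modularity and valuation properties transfer for free, and it is precisely where admissibility conditions~(\ref{def:ebp}a)--(\ref{def:ebp}c) are doing the work.
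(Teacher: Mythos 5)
Your approach matches the paper's: reduce to Lemma~\ref{lemma:GammaIdeal}(b) for $\calL^\uparrow_p$ by showing $\calL^+_p$ is a sub-poset closed under meets (always) and joins (when they exist in $\Gamma$), then handle convexity via Lemma~\ref{lemma:convex:gated} and Lemma~\ref{lemma:common-neighbor}. One small correction: for closure of $\calL_p^+$ under meets you cite condition~(\ref{def:ebp}c), but that condition has the form ``$p_1\sqsubseteq q$, $p_2\sqsubseteq q$, $p_1\wedge p_2$ exists $\Rightarrow$ $p_1\wedge p_2 \sqsubseteq q$'' and so does not directly yield $p\sqsubseteq q_1\wedge q_2$; what you actually want is condition~(\ref{def:ebp}a), applied with $p\sqsubseteq q_1$ and $p, q_1\wedge q_2 \in [p,q_1]$, which gives $p\sqsubseteq q_1\wedge q_2$. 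Condition~(\ref{def:ebp}c) is the one that does the dual work for $\calL_p^-$ (showing $q_1\wedge q_2\sqsubseteq p$ for $q_1,q_2\in\calL_p^-$). With this citation fixed, the argument is complete and is in substance the same as the paper's.
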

\begin{proof}
By symmetry, it suffices to consider only $\calL^+_p$. Note that $\calL^+_p$ is a subset of $\calL^\uparrow_p$
such that for any $a,b\in\calL_p^+$ we have (i) $a\wedge b\in \calL_p^+$ and (ii) $a\vee b\in\calL_p^+$ assuming that $a\vee b$ exists (by Definition~\ref{def:ebp}).
Thus, all claims except convexity follow from the corresponding properties of $\calL^\uparrow_p$ (see Lemma~\ref{lemma:GammaIdeal}).
 Let us show the convexity.
In view of Lemma~\ref{lemma:convex:gated}, it suffices to show that for any $a,b\in\calL_p^+$ with $d(a,b)=2$ we have $I(a,b)\subseteq \calL^+_p$.
This claim follows directly from Lemma~\ref{lemma:common-neighbor}.
\end{proof}

In the next two results we use the same notation $d$ for $d_\Gamma$ and $d_{\Gamma^\ast}$ (since they can be distinguished by the arguments).
Similarly, we use $\mu$ for both $\mu_\Gamma$ and $\mu_{\Gamma^\ast}$.
\begin{lemma}\label{lemma:d:star}
Let $\Gamma^\ast$ be the 2-subdivision of $\Gamma$ on nodes $V^\ast$.
Then
$$
d([p,q],[p',q'])=d(p,p')+d(q,q')\qquad\forall [p,q],[p',q']\in V^\ast
$$
\end{lemma}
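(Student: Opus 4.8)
The plan is to prove both inequalities separately. For the inequality $d([p,q],[p',q'])\le d(p,p')+d(q,q')$, I would exhibit a path in $\Gamma^\ast$ of that length. By the construction of $\Gamma^\ast$, moving along an edge of type (ii) changes the right endpoint of an interval by one step in $\Gamma$ (keeping the left endpoint fixed), and moving along an edge of type (iii) changes the left endpoint by one step. So, starting from $[p,q]$, I first walk the right endpoint $q\to q'$ along a shortest $\Gamma$-path, then walk the left endpoint $p\to p'$ along a shortest $\Gamma$-path. The only thing to check is that every intermediate pair $[\,\cdot\,,\,\cdot\,]$ that arises is actually a valid node of $\Gamma^\ast$, i.e.\ that the relation $\sqsubseteq$ holds at each stage. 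For the first leg this is Lemma~\ref{lemma:romb} (with $a=p$, $b$ ranging over the shortest path, and the hypothesis $p\sqsubseteq q$): more precisely one applies it along consecutive triples, or one uses the convexity of $\calL^+_p$ from Lemma~\ref{lemma:GammaIdeal:ebp} together with the fact that a shortest $q$--$q'$ path stays inside $[p,q']$'s relevant neighborhood. For the second leg one uses the symmetric statement with $\calL^-_{q'}$. Thus a path of length $d(q,q')+d(p,p')$ exists and the $\le$ direction follows.

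For the reverse inequality $d([p,q],[p',q'])\ge d(p,p')+d(q,q')$, I would use the two natural projection maps $\pi_L,\pi_R:V^\ast\to V$ sending $[a,b]$ to $a$ and to $b$ respectively. From the definition of the edge set of $\Gamma^\ast$, each edge of $\Gamma^\ast$ maps under $\pi_L$ either to a single vertex or to an edge of $\Gamma$, and likewise for $\pi_R$; moreover each edge of $\Gamma^\ast$ is ``used'' by exactly one of $\pi_L,\pi_R$ (type (iii) edges move the left coordinate, type (ii) edges move the right coordinate). Hence a path of length $\ell$ in $\Gamma^\ast$ from $[p,q]$ to $[p',q']$ projects to a walk of length $\ell_L$ in $\Gamma$ from $p$ to $p'$ and a walk of length $\ell_R$ from $q$ to $q'$ with $\ell_L+\ell_R=\ell$, giving $\ell\ge d(p,p')+d(q,q')$. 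Minimizing over paths yields the claim.

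The step I expect to be the main obstacle is verifying, in the $\le$ direction, that the intermediate intervals are genuine vertices of $\Gamma^\ast$ — that is, establishing $p\sqsubseteq q''$ for each vertex $q''$ on a shortest $q$--$q'$ path and $p''\sqsubseteq q'$ for each vertex $p''$ on a shortest $p$--$p'$ path. For the second part one must be slightly careful because $p'=\Pr_{\calL^-_{q'}}(p)$ is not literally given; instead one should route through a vertex where the relation is known. A clean way around this is: first note $p\sqsubseteq q'$ is \emph{not} assumed, so rather than walking $p\to p'$ directly in $\Gamma$, walk $p\to p'$ inside the convex semilattice $\calL^-_{q'}(\Gamma)$ after first checking $p\in\calL^-_{q'}$; but that itself needs $p\sqsubseteq q'$. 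The correct fix, which I would adopt, is to choose the shortest $q$--$q'$ path and shortest $p$--$p'$ path to be ``compatible'' via an isometric-rectangle / median argument (Proposition~\ref{prop:isometric-rectangle}), so that Lemma~\ref{lemma:romb} applies to each elementary square and propagates the relation $\sqsubseteq$ along the whole grid; alternatively, and more simply, reduce to the case $d(p,p')+d(q,q')=1$ and induct, where a single application of Lemma~\ref{lemma:romb} (or Lemma~\ref{lemma:common-neighbor}) suffices. Everything else is routine bookkeeping with the edge types of $\Gamma^\ast$.
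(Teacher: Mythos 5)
Your proof of the inequality $d([p,q],[p',q'])\ge d(p,p')+d(q,q')$ via the two projections $\pi_L,\pi_R$ is correct and is exactly what the paper does. The trouble is entirely in the reverse direction, and there your account of which tools close the argument is materially incomplete.

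You are right that the naive ``slide the right endpoint from $q$ to $q'$, then the left from $p$ to $p'$'' route fails: the intermediate pair $[p,q'']$ need not satisfy $p\sqsubseteq q''$ (already for $\sqsubseteq\,=\,\BPrel$ and a chain $a\prec b\prec c$, the pair $(a,c)$ is not a Boolean pair). You propose as a fix to induct on $D=d(p,p')+d(q,q')$, peeling off one $\Gamma^\ast$-edge at a time, and you assert that a single application of Lemma~\ref{lemma:romb} or Lemma~\ref{lemma:common-neighbor} then suffices. That assertion is where the gap is. The inductive step splits into two cases depending on whether the gate $u=\Pr_{[p,q]}(p')$ differs from $p$. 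When $u\neq p$, taking an out-neighbor $a$ of $p$ inside $[p,u]$ does give $a\sqsubseteq q$ directly from condition~(\ref{def:ebp}a), and this part is indeed routine. The genuinely hard case is $(p,q,p',q')=(u,v,u',v')$, i.e.\ the rectangle $(p,q,q',p')$ is isometric. There Proposition~\ref{prop:isometric-rectangle} produces neighbors $a$ of $p$ and $b$ of $q$ with $(p,q,b,a)$ an isometric rectangle, and Lemma~\ref{lemma:romb} yields $a\sqsubseteq b$. But to make a $\Gamma^\ast$-step you need $a\sqsubseteq q$ (or $p\sqsubseteq b$), not merely $a\sqsubseteq b$, and neither Lemma~\ref{lemma:romb} nor Lemma~\ref{lemma:common-neighbor} gives that: Lemma~\ref{lemma:common-neighbor} requires $d(p,b)=2$ with a common \emph{neighbor}, whereas here $d(a,b)=d(p,q)$ can be arbitrary. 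The paper gets from $a\sqsubseteq b$ to $a\sqsubseteq q$ only after first using Lemma~\ref{lemma:ascending} to orient $b\rightarrow q$, then arguing $p\wedge b=a$, and finally invoking condition~(\ref{def:ebp}c) (the meet-closure axiom of an admissible relation) with $p\sqsubseteq q$, $b\sqsubseteq q$, and $p\wedge b=a$. It also needs a small argument ruling out the orientation $p\rightarrow a,\ b\rightarrow q$. None of this is reachable with the two lemmas you cite, so the induction as you describe it would stall precisely on the case that carries all the content. In short: right skeleton, but the inductive step needs Lemma~\ref{lemma:ascending} and axiom~(\ref{def:ebp}c), and you have not identified them.
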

\begin{proof}
If $P=([p_0,q_0],\ldots,[p_k,q_k])$ is a path in $\Gamma^\ast$ from $[p_0,q_0]=[p,q]$ to $[p_k,q_k]=[p',q']$
then the length of $P$ in $\Gamma^\ast$ equals $\sum_{i=0}^{k-1}(d(p_i,p_{i+1})+d(q_i,q_{i+1}))\ge d(p,p')+d(q,q')$;
hence $(\ge)$ holds.

To show equality, we use induction on $D=d(p,p')+d(q,q')$. The base case $D=0$ is trivial; suppose that $D\ge 1$.
It suffices to show that one of the following holds:
\begin{itemize}[noitemsep,topsep=0pt]
\item[{\rm (i)}]  $p$ has neighbor $a$ in $\Gamma$ such that $a\sqsubseteq q$ and $d(a,p')=d(p,p')-1$.
\item[{\rm (ii)}] $q$ has neighbor $b$ in $\Gamma$ such that $p\sqsubseteq b$ and $d(b,q')=d(q,q')-1$.
\end{itemize}
Indeed, if {\rm (i)} holds then
$d([a,q],[p',q'])=d(a,p')+d(q,q')=D-1$ by induction,
and hence $d([p,q],[p',q'])\le (D-1)+1=D$, as required. Case {\rm (ii)} is similar.
Alternatively, it would also suffice to show symmetrical cases when $p'$ or $q'$ have appropriate neighbors.

Define $u,v,u',v'$ by~\eqref{eq:uvu'v'}. Note that $p\preceq u\preceq v\preceq q$ and $p'\preceq u'\preceq v'\preceq q'$. Modulo symmetry, two cases are possible.
\begin{itemize}
\item $p\ne u$, implying $p\prec u$.
Let $a\in[p,u]\subseteq[p,q]$ be an out-neighbor of $p$ in $\Gamma$ (i.e.\ $p\rightarrow a\preceq u$),
then  $a\sqsubseteq q$ by condition~(\ref{def:ebp}a),
and $(p,a,p')$ is a shortest subpath since $(p,a,u)$ and $(p,u,p')$ are shortest subpaths. Thus, case {\rm (i)} holds.
\item $(p,q,p',q')=(u,v,u',v')$. Then $(p,q,q',p')$ is an isometric rectangle.
By Proposition~\ref{prop:isometric-rectangle} there exist $a\in I(p,p')$ and $b\in I(q,q')$
such that $(p,q,b,a)$ and $(p',q',b,a)$ are isometric rectangles and $d(q,b)=d(p,a)=1$.
We have $a\sqsubseteq b$ by Lemma~\ref{lemma:romb}.

If $a\rightarrow p$ then  Lemma~\ref{lemma:ascending} for elements $a\preceq q$ 
gives $a\preceq b\rightarrow q$.
By Lemma~\ref{lemma:GammaIdeal}(a), $p\wedge b$ exists. Since $(p,q,b)$ is a shortest
subpath, we have $b\notin I(p,q)=[p,q]$, therefore
 $p\not\preceq b$ and $p\wedge b=a$.
Condition~(\ref{def:ebp}c) gives $a\sqsubseteq q$, and so case {\rm (i)} holds.

If $q\rightarrow b$ then by a symmetric argument we conclude that case {\rm (ii)} holds.

The last remaining case $p\rightarrow a,b\rightarrow q$ is impossible by  Lemma~\ref{lemma:ascending} for elements $p\preceq q$.
 \qedhere
\end{itemize}
\end{proof}

\renewcommand{\thetheoremRESTATED}{\ref{th:subdivision:modular}}
\begin{theoremRESTATED}[restated]
If $\Gamma$ is an extended modular complex then  $(V^\ast,E^\ast,w^\ast)$ is a modular complex
(i.e.\ an oriented modular graph).
\end{theoremRESTATED}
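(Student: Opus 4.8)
The plan is to verify that $\Gamma^\ast = (V^\ast, E^\ast, w^\ast)$ is oriented modular by checking: (i) the underlying undirected graph is modular, (ii) the edge weights $w^\ast$ are orbit-invariant, and (iii) the orientation inherited from the poset $(V^\ast, \subseteq)$ is admissible. By Theorem~\ref{th:orbits}(b), (i) and (ii) together imply that $\Gamma^\ast$ is an orientable modular graph, and combined with (iii) this gives that $\Gamma^\ast$ is a modular complex. Since $(V^\ast, E^\ast)$ is already known to be the Hasse diagram of $(V^\ast, \subseteq)$ (noted just before the theorem statement, using condition~(\ref{def:ebp}a)), the graph is automatically acyclic and the orientation is the covering relation of a poset; the remaining work is the modularity of the covering graph and orbit-invariance.

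First I would establish modularity of the undirected graph $(V^\ast, E^\ast)$ via the characterization in Lemma~\ref{lemma:modular:characterization}. Bipartiteness is easy: by Lemma~\ref{lemma:d:star}, $d([p,q],[p',q']) = d(p,p') + d(q,q')$, so the parity of distances in $\Gamma^\ast$ is controlled by the (bipartite) graph $\Gamma$, and one can $2$-color $V^\ast$ by the parity of $d_\Gamma(p,\,\cdot\,) + d_\Gamma(q,\,\cdot\,)$ relative to a fixed basepoint. For the quadrangle condition (2), take $[p,q] \in V^\ast$ with two neighbors $[a_1,b_1], [a_2,b_2]$ each one step closer to a target $[p',q']$; using Lemma~\ref{lemma:d:star} to translate distances into the $\Gamma$-coordinates, and the fact that each $\Gamma^\ast$-edge moves exactly one of the two coordinates by one $\Gamma$-step, I would invoke modularity of $\Gamma$ itself (Lemma~\ref{lemma:modular:characterization} applied in $\Gamma$) to produce the required common neighbor $[a^\ast, b^\ast]$ at distance $d-2$. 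The one subtlety is ensuring the candidate pair $[a^\ast, b^\ast]$ actually lies in $V^\ast$, i.e.\ that $a^\ast \sqsubseteq b^\ast$; this is precisely where the admissibility axioms~(\ref{def:ebp}a)--(\ref{def:ebp}c) and the helper Lemmas~\ref{lemma:common-neighbor} and~\ref{lemma:romb} get used — they are exactly the closure properties needed to stay inside $V^\ast$ under the meet/join/interval operations produced by the quadrangle condition in $\Gamma$.

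For orbit-invariance of $w^\ast$, I would look at a $4$-cycle in $\Gamma^\ast$ and its pair of vertex-disjoint opposite edges. By construction every edge of $\Gamma^\ast$ has weight $w_\Gamma(\cdot)$ of the corresponding single-coordinate $\Gamma$-edge (parts (ii) and (iii) of the $2$-subdivision definition). A $4$-cycle in $\Gamma^\ast$ either changes a single coordinate twice (projecting to a $2$-step walk in $\Gamma$) or changes each coordinate once; in the latter case opposite edges of the $\Gamma^\ast$-square carry the \emph{same} $\Gamma$-edge weight in one coordinate versus the other, and one must show these are equal. This reduces to orbit-invariance of $w_\Gamma$, which holds since $\Gamma$ is a modular complex (Theorem~\ref{th:orbits}); the two $\Gamma$-edges in question form a vertex-disjoint opposite pair of a $4$-cycle in $\Gamma$, or are connected through a short projectivity chain. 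So this part is essentially bookkeeping once the combinatorial structure of $\Gamma^\ast$-$4$-cycles is understood.

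\textbf{The main obstacle} I anticipate is the quadrangle condition — specifically, producing the witness vertex $[a^\ast, b^\ast]$ and checking $a^\ast \sqsubseteq b^\ast$. The case analysis mirrors the proof of Lemma~\ref{lemma:d:star}: depending on whether the two neighbors $[a_1,b_1],[a_2,b_2]$ differ from $[p,q]$ in the same coordinate or in different coordinates, one gets different configurations (a genuine $\Gamma$-quadrangle, or a "corner" where one coordinate goes down and another goes up), and in each one must verify the $\sqsubseteq$-relation survives using the appropriate axiom from Definition~\ref{def:ebp} — (\ref{def:ebp}a) for intervals, (\ref{def:ebp}b) for joins, (\ref{def:ebp}c) for meets — exactly as was done piece-by-piece in Lemmas~\ref{lemma:common-neighbor}, \ref{lemma:romb}, and \ref{lemma:d:star}. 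Since this is precisely the pattern already developed in the preceding lemmas of this section, the proof should largely be an assembly of those pieces rather than genuinely new work; the excerpt even remarks that for these results "we mostly follow the proofs of the corresponding claims" in Hirai's papers with $\BPrel$ replaced by the admissible $\sqsubseteq$.
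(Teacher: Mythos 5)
Your high-level plan matches the paper's proof exactly: characterize modularity via Lemma~\ref{lemma:modular:characterization}, note bipartiteness from parity of $d_\Gamma(p,q)$ along $\Gamma^\ast$-edges, handle orbit-invariance and orientation admissibility by classifying $4$-cycles of $\Gamma^\ast$ (the paper does this in one sentence), and then split the quadrangle condition into the case where both neighbors of $[p,q]$ change the same coordinate and the case where they change different coordinates. You also correctly pinpoint that the crux is verifying the witness pair stays in $V^\ast$, i.e.\ that the $\sqsubseteq$-relation persists. The ``same coordinate'' case is handled exactly as you say, by invoking modularity of $\Gamma$ and then Lemma~\ref{lemma:common-neighbor}.

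Where your plan understates the work is in the ``different coordinates'' case, with candidate witness $[s,t]$ where $s = p_2$, $t = q_1$. This further splits by edge orientations, and the subcase $s \rightarrow p$, $q \rightarrow t$ (both the $p$-coordinate dropping and the $q$-coordinate rising) is genuinely hard and does \emph{not} reduce to an assembly of Lemmas~\ref{lemma:common-neighbor}, \ref{lemma:romb}, \ref{lemma:d:star} plus the admissibility axioms applied to meets/joins. The paper's argument there is: project $[p',q']$ onto the interval $[s,t]$ using Lemma~\ref{lemma:pq-pq-prime} to obtain $[u,v]$, use Corollary~\ref{cor:projection-preserves-ebp} to get $u\sqsubseteq v$, argue $p\not\preceq u$ (so $p\wedge u = s$) and $v\vee q = t$ from the shortest-subpath assumptions, then carefully produce an intermediate element $b$ with $a\sqsubseteq b$ and $b\vee q = t$ (where $a = u\wedge q$) so that condition~(\ref{def:ebp}b) yields $a\sqsubseteq t$, and finally apply condition~(\ref{def:ebp}c) to $p,a,s$. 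None of Lemma~\ref{lemma:pq-pq-prime} or Corollary~\ref{cor:projection-preserves-ebp} appears in your outline, and the claim that ``the proof should largely be an assembly of those pieces rather than genuinely new work'' is too optimistic — this subcase is the bulk of the real proof, and without the projection machinery the argument has a gap.
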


\begin{proof}
Any 4-cycle in $\Gamma^*$ is represented 
as $([p,q], [p,q'], [p',q'], [p',q])$ 
or $([p',q], [p',q'], [p,q'], [p,q])$
for some edges $p\rightarrow p',q\rightarrow q'$ in $\Gamma$, or 
$([p,x], [p,y], [p,z], [p,w])$ or $([x,p], [y,p], [z,p], [w,p])$ 
for 4-cycle $(x,y,z,w)$ and vertex $p$ in $\Gamma$.
This immediately implies that the orientation of $\Gamma^\ast$
is admissible and orientation 
and $w^*$ is orbit-invariant.

To show that $\Gamma^*$ is modular,
we are going to verify that $\Gamma^*$ 
satisfies the two conditions of Lemma~\ref{lemma:modular:characterization}.
If $[p,q]$ and $[p',q']$ are joined by an edge,
then $d_{\Gamma}(p,q)$ and $d_{\Gamma}(p',q')$ 
have different parity. This implies that $\Gamma^*$ is bipartite.

We next verify condition (2) of Lemma~\ref{lemma:modular:characterization}.
Take intervals $[p,q],[p',q']\in V^\ast$, and denote $D_p=d(p,p')$, $D_q=d(q,q')$, $D=d([p,q],[p',q'])=D_p+D_q$.
 Suppose further that we are given two distinct neighbors $[p_1,q_1],[p_2,q_2]$ of $[p,q]$ with
$d([p_1,q_1],[p',q'])=d([p_2,q_2],[p',q'])=D-1$.
Our goal is to show the existence of a common neighbor $[p^\ast,q^\ast]$ of $[p_1,q_1],[p_2,q_2]$
with $d([p^\ast,q^\ast],[p',q'])=D-2$.

Modulo symmetry, two cases are possible.
\begin{itemize}
\item $p_1=p=p_2$. Condition $d([p_i,q_i],[p',q'])=D-1$ implies that $d(q_i,q')=D_q-1$ for $i=1,2$.
By Lemma~\ref{lemma:modular:characterization}(2) for $\Gamma$, there is a common neighbor $q^\ast$ of $q_1,q_2$ with $d(q^\ast,q')=D_q-2$.
By Lemma~\ref{lemma:common-neighbor}, $p\sqsubseteq q^\ast$.
Thus, $[p,q^\ast]$ is a desired common neighbor of $[p,q_1],[p,q_2]$.
\item $p_1=p$, $q_2=q$.
For better readability let us denote $s=p_2$, $t=q_1$. To summarize, we know that
$(p,s,p')$ and $(q,t,q')$ are shortest subpaths, $d(p,s)=d(q,t)=1$,
 $s\sqsubseteq q$, $p\sqsubseteq t$, $p\sqsubseteq q$, $p'\sqsubseteq q'$.
It suffices to show $s\sqsubseteq t$; this will imply that $[s,t]$ is a desired common neighbor of $[p_1,q_1],[p_2,q_2]$.
Modulo symmetry, three subcases are possible.

\underline{Case 1}: $s\rightarrow p$, $t\rightarrow q$. Then $s\sqsubseteq q$ and $s\preceq p\preceq t\preceq q$,
so condition~(\ref{def:ebp}a) gives $s\sqsubseteq t$.

\underline{Case 2}: $p\rightarrow s$, $t\rightarrow q$. By Lemma~\ref{lemma:pq-pq-prime}, $\Pr_{[p,q]}([p',q'])$ is
equal to interval $[a,b]$ for $a=\Pr_{[p,q]}(p')$ and $b=\Pr_{[p,q]}(q')$.
Note that $s,t,a,b\in[p,q]$.
Since $(p,s,p')$ and $(s,a,p')$ are shortest subpaths, so is $(p,s,a,p')$ and thus  $s\preceq a$.
Similarly $b\preceq t$. Thus $p\preceq s \preceq a\preceq b\preceq t\preceq q$ and $p\sqsubseteq q$
imply $s\sqsubseteq t$ (by condition~(\ref{def:ebp}a)), as desired.

\underline{Case 3}: $s\rightarrow p$, $q\rightarrow t$.
Observe that $x\vee y$, $x\wedge y$ are defined and belong to $[s,t]$ for all $x,y\in[s,t]$ by Lemma~\ref{lemma:GammaIdeal}.

Consider set $\Pr_{[s,t]}([p',q'])$, which is equal to $[u,v]$ for $u=\Pr_{[s,t]}(p')$ and $v=\Pr_{[s,t]}(q')$ (Lemma~\ref{lemma:pq-pq-prime}).
We must have $p\not\preceq u$ (implying $p\wedge u=s$);
otherwise $(s,p,u)$, $(s,u,p')$ and thus $(s,p,u,p')$, $(s,p,p')$ 
would be shortest subpaths,
contradicting the assumption that $(p,s,p')$ is a shortest subpath and $p\ne s$.
Similarly, $v\vee q=t$.
Note that $u\sqsubseteq v$ by Corollary~\ref{cor:projection-preserves-ebp}.

Define $a=u\wedge q$. We claim that $a\sqsubseteq t$.
Indeed, we have $a\sqsubseteq q$ since $s\preceq a\preceq q$ and $s\sqsubseteq q$. 
It now suffices to show that there exists $b$ such that $a \sqsubseteq b$ and $b\vee q=t$;
the claim will then follow by condition~(\ref{def:ebp}b).
If $a=u$ then we can take $b=v$.
Otherwise, if $a\prec u$, take $b$ to be an out-neighbor of $a$ in $[a,u]$
(i.e.\ $a\rightarrow b\preceq u$); we have $b\notin[a,q]$ since $a=u\wedge q$, and hence $b\vee q=t$.

We have $p\sqsubseteq t$, $a\sqsubseteq t$ and $p\wedge a=p\wedge u \wedge q=s\wedge q=s$, so condition~(\ref{def:ebp}c) gives $s\sqsubseteq t$.
 \qedhere

\end{itemize}

\end{proof}

\noindent By combining Lemma~\ref{lemma:d:star}, Theorem~\ref{th:subdivision:modular} and Theorem~\ref{th:orbits}(a) we obtain:
\begin{corollary}\label{cor:mu:star}
Let $\Gamma^\ast$ be the 2-subdivision of $\Gamma$ on nodes $V^\ast$.
Then
$$
\mu([p,q],[p',q'])=\mu(p,p')+\mu(q,q')\qquad\forall [p,q],[p',q']\in V^\ast
$$
\end{corollary}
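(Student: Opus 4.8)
The plan is to sandwich $\mu_{\Gamma^\ast}([p,q],[p',q'])$ between the two sides, using a \emph{coordinate decomposition} of paths in $\Gamma^\ast$. The key observation, read off directly from the edge rules (ii)--(iii) defining $\Gamma^\ast$, is that every edge of $\Gamma^\ast$ either keeps the first coordinate fixed and moves the second coordinate along a single edge of $\Gamma$, or keeps the second coordinate fixed and moves the first coordinate along a single edge of $\Gamma$; in either case its $w^\ast$-weight equals the $w_\Gamma$-weight of that $\Gamma$-edge. Consequently, given any path $P=([p_0,q_0],\ldots,[p_k,q_k])$ in $\Gamma^\ast$ from $[p,q]$ to $[p',q']$, collapsing the consecutive repetitions among the $p_i$'s produces a walk in $\Gamma$ from $p$ to $p'$, collapsing the repetitions among the $q_i$'s produces a walk in $\Gamma$ from $q$ to $q'$, and the $w^\ast$-length of $P$ is exactly the sum of the $w_\Gamma$-lengths of these two walks.

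For the lower bound, this decomposition alone gives that the $w^\ast$-length of any $\Gamma^\ast$-path from $[p,q]$ to $[p',q']$ is at least $\mu_\Gamma(p,p')+\mu_\Gamma(q,q')$, whence $\mu_{\Gamma^\ast}([p,q],[p',q'])\ge\mu_\Gamma(p,p')+\mu_\Gamma(q,q')$; this direction needs nothing beyond the construction of $\Gamma^\ast$.

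For the upper bound, I would invoke Lemma~\ref{lemma:d:star} to obtain a path $Q$ in $\Gamma^\ast$ from $[p,q]$ to $[p',q']$ with exactly $D:=d_\Gamma(p,p')+d_\Gamma(q,q')$ edges. In the coordinate decomposition of $Q$, the first walk has at least $d_\Gamma(p,p')$ edges and the second at least $d_\Gamma(q,q')$, while the two edge counts sum to $D$; hence both counts are tight, so the first walk is a shortest $p$-$p'$ path and the second a shortest $q$-$q'$ path in $(V_\Gamma,E_\Gamma,1)$. Since $\Gamma$ is a modular complex (and $\Gamma^\ast$ is one too by Theorem~\ref{th:subdivision:modular}), Theorem~\ref{th:orbits}(a) applies and a path shortest with respect to unit lengths is also shortest with respect to $w_\Gamma$; thus these two walks have $w_\Gamma$-lengths $\mu_\Gamma(p,p')$ and $\mu_\Gamma(q,q')$. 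By the decomposition, the $w^\ast$-length of $Q$ equals $\mu_\Gamma(p,p')+\mu_\Gamma(q,q')$, so $\mu_{\Gamma^\ast}([p,q],[p',q'])\le\mu_\Gamma(p,p')+\mu_\Gamma(q,q')$, and the two inequalities together give the claim.

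I expect no real obstacle here: the whole argument is a short consequence of Lemma~\ref{lemma:d:star} and Theorem~\ref{th:orbits}(a), and the only thing to be careful about is the bookkeeping in the coordinate decomposition, namely verifying that the $w^\ast$-length splits additively across the two coordinate walks and that the edge counts add up to $D$. Both checks are immediate from the explicit description of the edge set of $\Gamma^\ast$, so no new idea is required.
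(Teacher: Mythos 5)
Your proof is correct and follows essentially the same route as the paper, which obtains this corollary by combining Lemma~\ref{lemma:d:star} and Theorem~\ref{th:orbits}(a) via exactly the coordinate decomposition you describe. A minor point worth noting: because you establish the lower bound directly from the decomposition rather than by invoking Theorem~\ref{th:orbits}(a) for $\Gamma^\ast$, your argument does not actually need Theorem~\ref{th:subdivision:modular} (modularity of $\Gamma^\ast$), which the paper cites; only modularity of $\Gamma$ itself enters, in the upper-bound step.
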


\begin{lemma}\label{lemma:restriction-submodular}
Let $f:\Gamma\rightarrow\overline{\mathbb R}$ be an $L$-convex function
on an extended modular complex $\Gamma$. Then for every $p\in\Gamma$
the restrictions of $f$ to $\calL^-_p$ and to $\calL^+_p$ are submodular functions.
\end{lemma}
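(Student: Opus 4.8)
The plan is to reduce submodularity of the restriction $f|_{\calL^+_p}$ on the valuated modular semilattice $\calL^+_p$ (Lemma~\ref{lemma:GammaIdeal:ebp}) to the submodularity of $f^\ast$ on $\calL^\ast_p=\calL^\uparrow_{[p,p]}(\Gamma^\ast)$ that is built into the definition of $L$-convexity, by embedding $\calL^+_p$ into $\calL^\ast_p$. The case of $\calL^-_p$ is symmetric throughout (replace the intervals $[p,a]$ by $[a,p]$).

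First I would introduce the map $\iota:\calL^+_p\to V^\ast$, $\iota(a)=[p,a]$: it is well defined because $a\in\calL^+_p$ means $p\sqsubseteq a$, so $[p,a]\in V^\ast$, and $\iota(a)\in\calL^\ast_p$ since $[p,p]\subseteq[p,a]$. I claim $\iota$ is an isomorphism of valuated modular semilattices onto $\calM:=\iota(\calL^+_p)$, where $\calM$ is a convex sub-semilattice of $\calL^\ast_p$ closed under the meet of $\calL^\ast_p$ (and under the join, when it exists). Indeed: (i) $\iota(a)\preceq_{\Gamma^\ast}\iota(b)$ iff $[p,a]\subseteq[p,b]$ iff $a\preceq b$, so $\iota$ is an order isomorphism; (ii) by Corollary~\ref{cor:mu:star}, $\mu_{\Gamma^\ast}([p,p],[p,a])=\mu_\Gamma(p,a)$, so $\iota$ identifies the valuation $\mu_\Gamma(p,\cdot)$ of $\calL^+_p$ with the restriction to $\calM$ of the valuation of $\calL^\ast_p$; (iii) again by Corollary~\ref{cor:mu:star}, $I_{\Gamma^\ast}([p,a],[p,b])=\{[p,u]:u\in I_\Gamma(a,b)\}$, which lies in $\calM$ because $\calL^+_p$ is convex in $\Gamma$ (Lemma~\ref{lemma:GammaIdeal:ebp}); (iv) a short computation with the definition of $\preceq_{\Gamma^\ast}$ (inclusion) gives $[p,a]\wedge[p,b]=[p,a\wedge b]$, and likewise $[p,a]\vee[p,b]=[p,a\vee b]$ whenever $a\vee b$ exists, the relevant interval being in $V^\ast$ by conditions~(\ref{def:ebp}a) and~(\ref{def:ebp}b). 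Consequently the coordinate vectors of~\eqref{eq:vpqu}, the hulls $\Conv I(\cdot,\cdot)$, the envelopes $\calE(\cdot,\cdot)$ and the operations $\vee_\theta$ agree whether computed inside $\calM$ or inside the ambient $\calL^\ast_p$. Finally $f^\ast([p,a])=f(p)+f(a)$ by the definition of $f^\ast$, so $f^\ast|_\calM\circ\iota=f(p)+f|_{\calL^+_p}$.

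If $p\in\dom f$, then $f(p)$ is finite, and adding a finite constant changes none of the submodularity inequalities~\eqref{eq:submodular} (the weights on the right-hand side sum to $1$). So it suffices to observe that the restriction of the submodular function $f^\ast|_{\calL^\ast_p}$ to the convex sub-semilattice $\calM$ is again submodular: by Theorem~\ref{th:submodularity-characterization}, submodularity is the conjunction of the condition that $\calE(x,y)$ lie in the domain for pairs $x,y$ in the domain, the submodularity inequality for bounded pairs, and the $\wedge$-convexity inequality for antipodal pairs, and each of these, for a pair $x,y\in\calM$, refers only to $I_{\Gamma^\ast}(x,y)$, to $\calE(x,y)$, to the valuation and to meets/joins, all of which agree in $\calM$ and in $\calL^\ast_p$ by (iii) and (iv) above. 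Hence $f|_{\calL^+_p}$ is submodular.

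The delicate step, which I expect to be the main obstacle, is the case $p\notin\dom f$: then $f(p)=\infty$ and the additive-constant argument breaks down, since a bounded or antipodal pair $a,b\in\calL^+_p\cap\dom f$ with $a\wedge b=p$ would make~\eqref{eq:submodular:special} fail. One must show $L$-convexity forbids this. The key point is that $a,b\in\calL^+_p$ forces $a\vee b\in\calL^+_p$ with $a\sqsubseteq a\vee b$ and $b\sqsubseteq a\vee b$ (by~(\ref{def:ebp}a),(\ref{def:ebp}b)), so that the submodularity inequality that would be needed for $f|_{\calL^+_p}$ at the bounded pair $(a,b)$ is precisely the one guaranteed by submodularity of $f^\ast$ on $\calL^\ast_{a\vee b}$ at the bounded pair $([a,a\vee b],[b,a\vee b])$ — whose meet there is $[a\vee b,a\vee b]$ and whose join is $[a\wedge b,a\vee b]$ — and, crucially, the domain-closure part of Theorem~\ref{th:submodularity-characterization} applied there forces $a\wedge b\in\dom f$ once $a,b,a\vee b\in\dom f$. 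I would isolate the remaining bookkeeping (antipodal pairs, and the fact that $\dom f$ is closed under existing joins, which also disposes of the possibility $a\vee b\notin\dom f$) as a preliminary lemma: for $L$-convex $f$, $\dom f$ is closed under the meet and join of any two of its elements whenever these exist; this should follow by the same base-point-in-$\Gamma^\ast$ device combined with connectivity of $\dom f$ in $\Gamma^{\mathsmaller\sqsubset}$. Granting it, the case $p\notin\dom f$ splits into the trivial sub-case $\calL^+_p\cap\dom f=\varnothing$ (where $f|_{\calL^+_p}\equiv\infty$ is vacuously submodular) and pairs all of whose relevant values are finite, where the embedding argument of the previous paragraphs applies verbatim.
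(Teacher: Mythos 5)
Your handling of the case $p\in\dom f$ is essentially the paper's own argument: the paper sets $\calL_p^{\ast+}=\{[p,q]\,:\,p\sqsubseteq q\}\subseteq\calL_p^\ast$, uses Lemma~\ref{lemma:d:star} to show it is convex in $\calL_p^\ast$, invokes \cite[Lemma 3.7(4)]{Hirai:0ext} to get submodularity of $f^\ast$ there, notes the isomorphism $[p,q]\mapsto q$ onto $\calL_p^+$, and transfers via $f^\ast([p,q])=f(p)+f(q)$. Your more detailed verification that the embedding preserves meets, joins, intervals, valuations and envelopes is just spelling out why that isomorphism works; the extra machinery (Corollary~\ref{cor:mu:star}, Theorem~\ref{th:submodularity-characterization}) gives the same conclusion by a slightly longer route.

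The place where you diverge from the paper is the case $p\notin\dom f$. You are right that the additive-constant transfer fails there: if $f(p)=\infty$ then $f^\ast$ is identically $\infty$ on $\calL_p^{\ast+}$ and so its (vacuous) submodularity says nothing about $f|_{\calL^+_p}$. The paper's proof, as written, silently skips this. Your instinct to rebase the argument at $a\vee b$ — using submodularity of $f^\ast$ on $\calL^\ast_{a\vee b}$ at the pair $([a,a\vee b],[b,a\vee b])$, whose meet is $[a\vee b,a\vee b]$ and whose join is $[a\wedge b,a\vee b]$ — is sound and does yield exactly the inequality $f(a)+f(b)\ge f(a\wedge b)+f(a\vee b)$ for bounded pairs. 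However, this route requires $a\vee b\in\dom f$ before you can invoke anything, and the ``preliminary lemma'' you lean on ($\dom f$ closed under existing meets and joins) is stated but not proved — you only gesture that it ``should follow'' from $\Gamma^\ast$ plus connectivity of $\dom f$. That closure property is not a consequence of Theorem~\ref{th:domf:arrow}/Corollary~\ref{cor:domf:arrow} alone (those give $\UP$, $\DOWN$ and a $\Gamma^{\mathsmaller\sqsubset}$-path, not the lattice operations), nor do you handle antipodal pairs, where there is no join to rebase at. So for the subcase you correctly flag as delicate, your proposal stops short of a proof. Since the paper only ever applies this lemma at base points $p\in\dom f$, this is an inherited ambiguity rather than an error you introduced, but as a self-contained write-up the $p\notin\dom f$ branch remains a genuine gap.
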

\begin{proof}
Let us define $\calL_p^{\ast +}=\{[p,q]\::\:p\sqsubseteq q\}\subseteq\calL_p^\ast$. 
By Lemma~\ref{lemma:d:star},
any vertex in any shortest path between $[p,q]$ and $[p,q']$ is of the form $[p,u]$.
Hence $\calL_p^{\ast +}$ is convex in $\Gamma$ and 
in $\calL_p^{\ast}$. 
Therefore, submodularity of $f^\ast$ on $\calL_p^\ast$ implies
$f^\ast$ is submodular on $\calL_p^{\ast +}$
(by~\cite[Lemma 3.7(4)]{Hirai:0ext}).
Obviously $\calL_p^{\ast +}$ is isomorphic to $\calL_p^+$ by $[p,q] \mapsto q$.
By using relation $f(q) = f^\ast([p,q]) - f(p)$  
$(q \in \calL^+_p)$, 
we see the submodularity of $f$ on $\calL^+_p$.
The proof of submodularity of $f$ on $\calL^-_p$ is symmetric.
\end{proof}


\subsection{Proof of Proposition~\ref{prop:ebp:admissible}}

\renewcommand{\thepropositionRESTATED}{\ref{prop:ebp:admissible}}
\begin{propositionRESTATED}[restated]
Let $\Gamma$ be a modular complex. (a) Relations $\BPrel$ and $\preceq$ are admissible for $\Gamma$.
(b)~If relation $\sqsubseteq$ is admissible for $\Gamma$ then $p\, \BPrel\, q$ implies $p \sqsubseteq q$.
\end{propositionRESTATED}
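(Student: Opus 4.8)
The approach is to verify, in each case, the defining conditions of an admissible relation (Definition~\ref{def:ebp}). For the relation $\preceq$ everything is immediate: $\preceq$ coarsens itself, is reflexive, contains every edge of the acyclic graph $\Gamma$, and (\ref{def:ebp}a)--(\ref{def:ebp}c) hold because $q_1\vee q_2\succeq q_i$, $p_1\wedge p_2\preceq p_i$, and $\preceq$ is transitive. So all the content is in part~(a) for $\BPrel$ and in part~(b).

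For part~(b) I would induct on the dimension $k$ of a cube graph $B\subseteq\Gamma$ witnessing $p\,\BPrel\,q$. For $k=0$ we have $p=q$ and $p\sqsubseteq p$ by reflexivity; for $k=1$, $p\to q$ is an edge, so $p\sqsubseteq q$. For $k\ge2$ pick two distinct out-neighbours $a_1,a_2$ of $p$ inside $B$; the part of $B$ lying above $a_i$ is a $(k{-}1)$-cube from $a_i$ to $q$, so $a_i\,\BPrel\,q$ and hence $a_i\sqsubseteq q$ by the induction hypothesis. Since $a_1,a_2$ are lower-bounded by $p$, $a_1\wedge a_2$ exists (Lemma~\ref{lemma:GammaIdeal}(a)); and $p\preceq a_1\wedge a_2\preceq a_1$ with $d(p,a_1)=1$ and $a_1\ne a_2$ forces $a_1\wedge a_2=p$. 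Condition (\ref{def:ebp}c) for the admissible relation $\sqsubseteq$ now yields $p=a_1\wedge a_2\sqsubseteq q$.

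For part~(a) with $\BPrel$, the requirements ``coarsens $\preceq$'', reflexivity, and edge-containment are the elementary facts about Boolean pairs recorded right after the definition of $\sqsubseteq$ in Section~\ref{sec:background} ($0$- and $1$-cubes). Conditions (\ref{def:ebp}b) and (\ref{def:ebp}c) follow from Lemma~\ref{lemma:GammaIdeal}: there $\calL^+_p$ and $\calL^-_q$ (with $\sqsubseteq$ understood as $\BPrel$) are modular semilattices that are convex in $\Gamma$, and since a meet or join of two elements of a modular complex lies on a geodesic between them (Proposition~\ref{prop:wedge-shortest}), for $q_1,q_2\in\calL^+_p$ with $q_1\vee q_2$ defined we get $q_1\vee q_2\in I(q_1,q_2)\subseteq\calL^+_p$, i.e.\ $p\,\BPrel\,q_1\vee q_2$; condition (\ref{def:ebp}c) is dual.

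The main obstacle is condition (\ref{def:ebp}a) for $\BPrel$: from $p\,\BPrel\,q$ and $a\preceq b$ with $a,b\in[p,q]$ I must produce a cube from $a$ to $b$ (note $[a,b]\subseteq[p,q]$ is a modular lattice by Lemmas~\ref{lemma:GammaIdeal} and~\ref{lemma:ascending}, and its order is the restriction of $\preceq$). I would work with the lattice description: call $A_1,\dots,A_m$ a \emph{frame} of an interval $[a,b]$ if they are atoms of $[a,b]$ with $A_1\vee\cdots\vee A_m=b$ and $\bigvee_{i\in S}A_i$ of rank $|S|$ for all $S$, and first show that $(a,b)$ is a Boolean pair iff $[a,b]$ admits a frame (necessarily $m=r[a,b]$): for ``$\Rightarrow$'', a rank--modularity computation shows that the map $\{0,1\}^k\hookrightarrow[p,q]$ coming from a cube is a lattice embedding, so the images of the coordinate vectors form a frame; for ``$\Leftarrow$'', a frame generates a cover-preserving Boolean sublattice whose Hasse diagram is a cube subgraph of $\Gamma$. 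The crux is then the lemma: \emph{if $[p,q]$ has a frame $A_1,\dots,A_k$ and $x$ is an atom of $[p,q]$, then $[x,q]$ has a frame} --- if $x=A_i$ take $\{A_i\vee A_j:j\ne i\}$; otherwise $x\wedge A_i=p$ for every $i$, and since $\bigwedge_j\bigl(\bigvee_{i\ne j}A_i\bigr)=p$ (meet in the Boolean sublattice generated by the frame) there is a $j$ with $x\not\preceq\bigvee_{i\ne j}A_i$, whereupon a short rank computation shows $\{x\vee A_i:i\ne j\}$ is a frame of $[x,q]$. Reversing all edge orientations (which again gives a modular complex) yields the dual statement for a coatom $y$ of $[p,q]$ and the interval $[p,y]$. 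Finally (\ref{def:ebp}a) follows by induction on $r[p,q]$: if $a\ne p$, replace $p$ by an atom of $[p,q]$ below $a$; if $a=p$ but $b\ne q$, replace $q$ by a coatom of $[p,q]$ above $b$ --- each replacement strictly shrinks the interval while keeping it a Boolean pair that still contains $a$ and $b$ --- until the interval is exactly $[a,b]$.
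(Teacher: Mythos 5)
Your proof is correct, but it takes a genuinely different route from the paper's. The paper leans entirely on Proposition~\ref{prop:complemented} (due to Chalopin et al.), which identifies $p\,\BPrel\,q$ with $[p,q]$ being a complemented modular lattice, and invokes it for condition~(\ref{def:ebp}a), for conditions~(\ref{def:ebp}b)/(\ref{def:ebp}c) (via the atom decomposition of complements), and for part~(b) (by writing $q=a\vee b$ with $a,b\in[p,q]\setminus\{q\}$ and applying (\ref{def:ebp}b)). You bypass that proposition entirely. For (\ref{def:ebp}b)/(\ref{def:ebp}c) you instead use that $\calL^+_p$, $\calL^-_q$ are convex (Lemma~\ref{lemma:GammaIdeal}) together with the fact that an existing join or meet lies on a geodesic (Proposition~\ref{prop:wedge-shortest}) --- a cleaner, more geometric argument. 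For part~(b) your argument is dual to the paper's: you decompose $p$ as the meet of two out-neighbours inside the cube and invoke (\ref{def:ebp}c), which has the side benefit of not depending on part~(a) at all. The price you pay is in (\ref{def:ebp}a), where you essentially re-derive (one direction of) Proposition~\ref{prop:complemented} from scratch via your notion of frames; this is considerably longer than the paper's one-line appeal to the fact that intervals of cubes are cubes. It is worth noting, though, that the paper's one-liner is stated a bit loosely: it only explicitly treats $a,b$ inside the cube subgraph $B$, whereas (\ref{def:ebp}a) quantifies over all $a,b\in[p,q]_\Gamma$, which can be strictly larger than $V(B)$ (e.g.\ when $[p,q]$ contains an $M_3$); a rigorous version needs the classical fact that intervals of complemented modular lattices are complemented modular, which is precisely what your frame argument supplies. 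So your approach trades brevity for self-containedness and an argument that is visibly complete.
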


To prove this proposition, we will need the following result.
A modular lattice $\calL$ is called {\em complemented} if the maximal element $1_\calL$ is a join of atoms.
(This is one possible characterization of complemented modular lattices, see~\cite[Chapter IV, Theorem 4.1]{Birkhoff:67}).
\begin{proposition}[{\cite[Proposition 6.5]{Chalopin}}]\label{prop:complemented}
Consider elements $p,q$ in a modular complex $\Gamma$ with $p\preceq q$.
Then $p\,\BPrel\, q$ if and only if $[p,q]$ is a complemented modular lattice.
\end{proposition}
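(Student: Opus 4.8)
The plan is to prove the two implications separately, using throughout that by Lemma~\ref{lemma:GammaIdeal}(b) the interval $[p,q]$ is a modular lattice that is convex in $\Gamma$, so (using Lemma~\ref{lemma:ascending}, which says an ascending path is shortest and no proper directed path exists between a covering pair) the covering graph of $[p,q]$ is exactly the subgraph $\Gamma[\hspace{1pt}[p,q]\hspace{1pt}]$; that by \cite{Birkhoff:67} the rank function $r$ of $[p,q]$ is a valuation, hence $r(x\vee y)+r(x\wedge y)=r(x)+r(y)$ and $r(x\vee y)\le r(x)+r(y)$; and that by Lemma~\ref{lemma:ascending} $r(a)=d_\Gamma(a,p)$. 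Recall the paper's convention that ``$[p,q]$ is complemented'' means precisely that $q=1_{[p,q]}$ is a join of atoms of $[p,q]$.

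\textbf{The direction ``$[p,q]$ complemented $\Rightarrow p\BPrel q$''.} First I would reduce a representation $q=b_1\vee\cdots\vee b_m$ by atoms to an \emph{independent} family: pick a maximal sub-family $a_1,\dots,a_k$ with $r(a_1\vee\cdots\vee a_k)=k$. A short modular-law computation shows that this join equals $q$ (otherwise some remaining atom $a_j$ satisfies $a_j\wedge(a_1\vee\cdots\vee a_k)=0$ since it is not below it, so $r(a_j\vee a_1\vee\cdots\vee a_k)=k+1$, contradicting maximality), and that independence forces $r\big(\bigvee_{i\in S}a_i\big)=|S|$ for all $S\subseteq[k]$ (apply subadditivity to $S$ and its complement). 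Then the map $S\mapsto v_S:=\bigvee_{i\in S}a_i$ is injective, $v_\emptyset=p$, $v_{[k]}=q$, and for $S\subset S'$ with $|S'\setminus S|=1$ we get $v_S\prec v_{S'}$ with $r(v_{S'})=r(v_S)+1$, so $v_{S'}$ covers $v_S$ in $[p,q]$, i.e.\ $v_S\to v_{S'}$ is an edge of $\Gamma$. Hence $\{v_S\}_{S\subseteq[k]}$ with these edges is a subgraph of $\Gamma$ isomorphic to the Hasse diagram of $\{0,1\}^k$ with source $p$ and sink $q$; that is, $(p,q)$ is a Boolean pair.

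\textbf{The direction ``$p\BPrel q\Rightarrow[p,q]$ complemented''.} Start from a cube subgraph with vertices $\{v_S:S\subseteq[k]\}$, edges $v_S\to v_{S\cup\{i\}}$, source $v_\emptyset=p$, sink $v_{[k]}=q$. For any $S$, concatenating the monotone cube edges from $v_\emptyset$ to $v_S$ is an ascending, hence (Lemma~\ref{lemma:ascending}) shortest, path of length $|S|$; thus $v_S\in[p,q]$ and $r(v_S)=d_\Gamma(v_S,p)=|S|$. The key step is then: monotone cube paths give $v_{S\cap S'}\preceq v_S\wedge v_{S'}$ and $v_S\vee v_{S'}\preceq v_{S\cup S'}$, and since $r$ is a valuation, $r(v_S\wedge v_{S'})+r(v_S\vee v_{S'})=|S|+|S'|=r(v_{S\cap S'})+r(v_{S\cup S'})$; comparing ranks of the two nested pairs forces $v_S\wedge v_{S'}=v_{S\cap S'}$ and $v_S\vee v_{S'}=v_{S\cup S'}$. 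So $S\mapsto v_S$ is a lattice embedding of $\{0,1\}^k$ into $[p,q]$; in particular each $v_{\{i\}}$ is an atom of $[p,q]$ (it covers $p$ and lies below $q$) and $v_{\{1\}}\vee\cdots\vee v_{\{k\}}=v_{[k]}=q$, so $q$ is a join of atoms and $[p,q]$ is complemented.

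The step I expect to require the most care is the forward direction: the mere existence of a cube \emph{subgraph} of $\Gamma$ does not obviously constrain the lattice $[p,q]$, since that cube could sit inside $[p,q]$ with extra chords or ``collapses'' such as $v_{\{i\}}\preceq v_{\{1,\dots,t\}}$. The device that rules this out is the rank/valuation bookkeeping for $[p,q]$ together with the fact that ascending paths are shortest, which pins down $r(v_S)=|S|$ and thereby upgrades the graph embedding to a lattice embedding. Finiteness of $V$, hence of all ranks, is used implicitly throughout.
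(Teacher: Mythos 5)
You should note at the outset that the paper itself gives no proof of this proposition: it is imported verbatim from \cite[Proposition 6.5]{Chalopin}, so your argument can only be assessed on its own terms. Your direction ``$[p,q]$ complemented $\Rightarrow p\BPrel q$'' is fine: extracting a maximal independent subfamily of atoms, the rank computation $r\bigl(\bigvee_{i\in S}a_i\bigr)=|S|$ via subadditivity applied to $S$ and its complement, injectivity of $S\mapsto v_S$, and the observation that a covering pair of $[p,q]$ is an edge of $\Gamma$ (using Lemma~\ref{lemma:ascending}) all check out.

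The converse direction, however, has a genuine gap at exactly the step you single out as key. From $v_{S\cap S'}\preceq v_S\wedge v_{S'}$, $v_S\vee v_{S'}\preceq v_{S\cup S'}$ and the valuation identity $r(v_S\wedge v_{S'})+r(v_S\vee v_{S'})=|S|+|S'|=r(v_{S\cap S'})+r(v_{S\cup S'})$ you conclude both containments are equalities. But the two rank comparisons point the same way: writing $a=r(v_S\wedge v_{S'})\ge A=|S\cap S'|$ and $b=r(v_S\vee v_{S'})\le B=|S\cup S'|$, the identity only yields $a-A=B-b\ge 0$, and nothing in this bookkeeping forces the common excess to vanish (the numbers $A=0$, $a=1$, $b=1$, $B=2$ satisfy all your constraints). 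In other words, the rank accounting alone does not exclude ``collapses'' such as $v_{\{i\}}\preceq v_T$ with $i\notin T$, which is precisely what must be ruled out to obtain $v_{\{1\}}\vee\cdots\vee v_{\{k\}}=q$. The conclusion is nevertheless true, and your preparation ($r(v_S)=|S|$ plus injectivity of $S\mapsto v_S$) suffices once you add an induction over the faces of the cube: prove $\bigvee_{i\in T}v_{\{i\}}=v_T$ by induction on $|T|$. In the step $T=T'\cup\{i\}$, since $v_T$ covers $v_{T'}$, the only failure mode is $v_{\{i\}}\preceq v_{T'}$; but then for any $j\in T'$ the induction hypothesis applied to $(T'\setminus\{j\})\cup\{i\}$ and to $T'\setminus\{j\}$ gives $v_{(T'\setminus\{j\})\cup\{i\}}=v_{T'\setminus\{j\}}\vee v_{\{i\}}\preceq v_{T'}$, and equality of ranks then forces $v_{(T'\setminus\{j\})\cup\{i\}}=v_{T'}$, contradicting injectivity. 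With that repair the backward direction is complete.
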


We now proceed with the proof of Proposition~\ref{prop:ebp:admissible}.

\myparagraph{(a)}
Checking admissibility of $\preceq$ is straightforward. Clearly, if $B$ is a cube graph and $p,q$ are elements of $B$ with $p\preceq q$
then the subgraph of $B$ induced by $[p,q]$ is also a cube graph; this implies that $\BPrel$ satisfies
condition~(\ref{def:ebp}a). Let us show that condition~(\ref{def:ebp}b) holds (condition~(\ref{def:ebp}c) is symmetric).
Suppose that $p\,\BPrel\, a$, $p\,\BPrel\, b$ and $a\vee b$ exists.
Let $a_1,\ldots,a_k$ be the atoms of $a$ and $b_1,\ldots,b_\ell$  be the atoms of $b$,
then $a=a_1\vee\ldots\vee a_k$ and $b=b_1\vee\ldots\vee b_\ell$ by Proposition~\ref{prop:complemented}. Thus, $a\vee b=a_1\vee\ldots\vee a_k\vee b=b_1\vee\ldots\vee b_\ell$,
and so $a\vee b$ is a join of atoms of $[p,a\vee b]$. Proposition~\ref{prop:complemented} gives that $p\,\BPrel\, a\vee b$.

\myparagraph{(b)} We use induction on $d(p,q)$.
Consider elements $p,q$ with  $p\, \BPrel\, q$, $d(p,q)\ge 2$.
Using Proposition~\ref{prop:complemented}, we conclude that $q=a\vee b$ for some $a,b\in[p,q]-\{q\}$.
We have $p\,\BPrel\, a$ and $p\,\BPrel\, b$ by part (a), and so $p\sqsubseteq a$ and $p\sqsubseteq b$ by the induction hypothesis.
Condition~(\ref{def:ebp}b) gives $p\sqsubseteq q$.


\subsection{Properties of the Cartesian product and $L$-convexity of the metric function} 
In this section we prove several properties related to the Cartesian product of extended modular complexes $\Gamma\times\Gamma'$.
We denote $(V_\Gamma,V_{\Gamma'},V_{\Gamma\times\Gamma'})=(V,V',V_\times)$ for brevity,
and use similar notation for other objects (e.g.\ relations $\preceq,\sqsubseteq,\rightarrow$, distances $d,\mu$, etc).
Recall that $(p,p') \sqsubseteq_\times (q,q')$ iff $p \sqsubseteq q$ and $p' \sqsubseteq' q'$.

\renewcommand{\thelemmaRESTATED}{\ref{lemma:ebp-cartesian}}
\begin{lemmaRESTATED}[restated]
Consider extended modular complexes $\Gamma,\Gamma'$ and element $(p,p')\in\Gamma\times\Gamma'$. 
\begin{itemize}[noitemsep,topsep=0pt]
\item[{\rm (a)}] $\Gamma\times \Gamma'$ is an extended modular complex. 
\item[{\rm (b)}] $\calL^\sigma_{(p,p')}(\Gamma\times\Gamma')=\calL^\sigma_{p}(\Gamma)\times \calL^\sigma_{p'}(\Gamma')$ for $\sigma\in\{-,+\}$. 
\end{itemize}
\end{lemmaRESTATED}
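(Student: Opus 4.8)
The plan is to push everything through to componentwise reasoning, relying on the fact that $\Gamma\times\Gamma'$ is already a modular complex by Lemma~\ref{lemma:bp-cartesian}(a), so that only the admissibility of the relation $\sqsubseteq_\times$ remains to be checked. I would first record one elementary order-theoretic fact about a product poset $P\times Q$: the interval $[(p,p'),(q,q')]$ equals $[p,q]\times[p',q']$, and the pair $(q_1,q_1'),(q_2,q_2')$ has a join (resp.\ meet) in $P\times Q$ if and only if $q_1\vee q_2$ and $q_1'\vee q_2'$ (resp.\ $q_1\wedge q_2$ and $q_1'\wedge q_2'$) exist in $P$ and $Q$, in which case the product join (meet) is computed componentwise. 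The ``only if'' direction is the one point needing care: if $(c,d)$ is the least upper bound of the two pairs and $c'$ is any upper bound of $q_1,q_2$ in $P$, then $(c',d)$ is again an upper bound of the pairs, whence $c\preceq c'$ by minimality, so $c=q_1\vee q_2$, and symmetrically $d=q_1'\vee q_2'$.

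With this in hand, part (a) is routine. The three ``simple'' requirements in Definition~\ref{def:ebp} --- that $\sqsubseteq_\times$ coarsens $\preceq_\times$, that it is reflexive, and that it contains every edge of $\Gamma\times\Gamma'$ --- follow immediately from the componentwise definition $(p,p')\sqsubseteq_\times(q,q') \Leftrightarrow (p\sqsubseteq q\ \text{and}\ p'\sqsubseteq' q')$ together with the corresponding properties of $\sqsubseteq$ and $\sqsubseteq'$ and the description of the edge set of $\Gamma\times\Gamma'$ (an edge changes only one coordinate, and in that coordinate it is an edge of $\Gamma$ or of $\Gamma'$). For each of (\ref{def:ebp}a)--(\ref{def:ebp}c) I would then apply the matching condition coordinatewise: e.g.\ for (\ref{def:ebp}b), if $(p,p')\sqsubseteq_\times(q_i,q_i')$ for $i=1,2$ and $(q_1,q_1')\vee(q_2,q_2')$ exists, then by the order-theoretic fact $q_1\vee q_2$ and $q_1'\vee q_2'$ exist, (\ref{def:ebp}b) for $\Gamma$ and for $\Gamma'$ gives $p\sqsubseteq q_1\vee q_2$ and $p'\sqsubseteq' q_1'\vee q_2'$, and hence $(p,p')\sqsubseteq_\times(q_1\vee q_2,q_1'\vee q_2')=(q_1,q_1')\vee(q_2,q_2')$; conditions (\ref{def:ebp}a) and (\ref{def:ebp}c) are handled the same way using $[(p,p'),(q,q')]=[p,q]\times[p',q']$ and the componentwise meet, respectively.

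For part (b), the set-level identity is immediate: unwinding the definitions, $\calL^+_{(p,p')}(\Gamma\times\Gamma')=\{(q,q')\:|\:(p,p')\sqsubseteq_\times(q,q')\}=\{(q,q')\:|\:p\sqsubseteq q,\ p'\sqsubseteq' q'\}=\calL^+_p(\Gamma)\times\calL^+_{p'}(\Gamma')$, and symmetrically $\calL^-_{(p,p')}(\Gamma\times\Gamma')=\calL^-_p(\Gamma)\times\calL^-_{p'}(\Gamma')$. To see that this is also an identification of valuated modular semilattices, note that the semilattice operation on $\calL^\sigma_{(p,p')}(\Gamma\times\Gamma')$ is componentwise by the fact from part (a), and that the valuation $a\mapsto\mu_{\Gamma\times\Gamma'}((p,p'),a)$ of Lemma~\ref{lemma:GammaIdeal:ebp} splits as the sum of the factor valuations because the Cartesian product metric is additive, $\mu_{\Gamma\times\Gamma'}((p,p'),(q,q'))=\mu_\Gamma(p,q)+\mu_{\Gamma'}(p',q')$ (a shortest path may be taken to alter one coordinate at a time). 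No genuine obstacle arises; the only step one must not skip is the minimality argument showing that a join or meet in the product forces the corresponding join or meet in each factor, which is exactly what licenses the coordinatewise use of (\ref{def:ebp}b) and (\ref{def:ebp}c).
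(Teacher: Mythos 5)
Your proof is correct and follows essentially the same route as the paper's: reduce (a) to checking the admissibility conditions of Definition~\ref{def:ebp} for $\sqsubseteq_\times$ (since Lemma~\ref{lemma:bp-cartesian}(a) already gives that $\Gamma\times\Gamma'$ is a modular complex), and read off (b) directly from the componentwise definitions, noting that the valuation is additive across factors. The paper simply lists the five conditions and dismisses the verification as ``mechanical, and is omitted''; you fill in the one nontrivial ingredient it takes for granted --- that a join or meet in the product poset exists iff the componentwise joins or meets exist in the factors and is then computed componentwise --- which is exactly what licenses applying (\ref{def:ebp}b)/(\ref{def:ebp}c) coordinate by coordinate.
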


\begin{proof}
\myparagraph{(a)} In the light of Lemma~\ref{lemma:bp-cartesian}(a), it suffices to verify that relation $\sqsubseteq_{\times}$
is admissible, i.e.\ satisfies the properties in Definition~\ref{def:ebp}.
We need to show the following:
\begin{itemize}[noitemsep,topsep=0pt,leftmargin=14pt]
\item $(p,p')\sqsubseteq_\times (q,q')$ implies $(p,p')\preceq_\times (q,q')$.
\item $(p,p')\sqsubseteq_\times (p,p')$ for every $(p,p')\in V_\Gamma\times V_{\Gamma'}$.
\item $(p,p')\sqsubseteq_\times (q,q')$ for every edge $(p,p')\rightarrow_\times (q,q')$.
\item If $(p,p')\sqsubseteq_\times (q,q')$, $(a,a')\preceq_\times (b,b')$ and $(a,a'),(b,b')\in[(p,p'),(q,q')]$, then $(a,a')\sqsubseteq_\times (b,b')$.
\item If $(p,p')\sqsubseteq (q_1,q'_1)$, $(p,p')\sqsubseteq (q_2,q'_2)$ and $(q_1,q'_1)\vee (q_2,q'_2)$ exists, then $(p,p')\sqsubseteq_\times (q_1,q'_1)\vee (q_2, q'_2)$.
\end{itemize}
Checking each property is mechanical, and is omitted.

\myparagraph{(b)} 
From definitions,  $\calL^+_{(p,p')}(\Gamma\times\Gamma')=\{(q,q')\::\:p\sqsubseteq q,p'\sqsubseteq' q'\}=\calL^+_{p}(\Gamma)\times \calL^+_{p'}(\Gamma')$
as sets. Furthermore, the partial order is the same in both cases, and so
$\calL^+_{(p,p')}(\Gamma\times\Gamma')$ and $\calL^+_{p}(\Gamma)\times \calL^+_{p'}(\Gamma')$
also equal as posets (which are modular semilattices by Lemma~\ref{lemma:GammaIdeal:ebp}). 
Finally, both semilattices are assigned the same valuation, namely $v_\times(q,q')=\mu_\times((p,p'),(q,q'))=\mu(p,q)+\mu'(p',q')$.
 The case $\sigma=-$ is symmetric.
 \qedhere

\end{proof}

\begin{lemma}
Consider extended modular complexes $\Gamma,\Gamma'$.
Graphs $(\Gamma\times\Gamma')^\ast$ and $\Gamma^\ast\times{\Gamma'}^\ast$ 
are isomorphic with the isomorphism given by $[(p,p'),(q,q')]\mapsto ([p,q],[p',q'])$
for $p,q\in \Gamma,p',q'\in\Gamma'$ with $p\sqsubseteq q,p'\sqsubseteq' q'$.
Consequently, $\calL^\ast_{(p,p')}(\Gamma\times\Gamma')$ and $\calL^\ast_p(\Gamma)\times \calL^\ast_{p'}(\Gamma')$
are isomorphic valuated modular semilattices for any $(p,p')\in\Gamma\times\Gamma'$.
\end{lemma}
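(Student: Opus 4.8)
The plan is to build the candidate map $\phi\colon [(p,p'),(q,q')]\mapsto([p,q],[p',q'])$ and verify it in layers: first as a bijection of vertex sets, then as an isomorphism of the underlying (unweighted) directed graphs via the interval posets, then as an isomorphism of weighted graphs, and finally to read off the semilattice statement by restricting $\phi$.

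First I would check that $\phi$ is a well-defined bijection: a vertex of $(\Gamma\times\Gamma')^\ast$ is an interval $[(p,p'),(q,q')]$ with $(p,p')\sqsubseteq_\times(q,q')$, which by the definition of $\sqsubseteq_\times$ means exactly $p\sqsubseteq q$ and $p'\sqsubseteq' q'$, i.e.\ $[p,q]\in V^\ast_\Gamma$ and $[p',q']\in V^\ast_{\Gamma'}$; the inverse map is obvious. Next I would note that for two such vertices the inclusion $[(p,p'),(q,q')]\subseteq[(p_1,p_1'),(q_1,q_1')]$ (as subsets of $V_{\Gamma\times\Gamma'}$) holds iff $(p_1,p_1')\preceq_\times(p,p')\preceq_\times(q,q')\preceq_\times(q_1,q_1')$, which is a componentwise condition, hence equivalent to $[p,q]\subseteq[p_1,q_1]$ together with $[p',q']\subseteq[p_1',q_1']$. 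So $\phi$ is an isomorphism of posets from $(V^\ast_{\Gamma\times\Gamma'},\subseteq)$ onto the product poset $(V^\ast_\Gamma,\subseteq)\times(V^\ast_{\Gamma'},\subseteq)$. Since the Hasse diagram of a product of two posets is the Cartesian product of their Hasse diagrams (a cover in a product being a cover in exactly one coordinate with equality in the other), and since — as observed after Definition~\ref{def:ebp} — the edge set of any $2$-subdivision is precisely the Hasse diagram of its interval poset, $\phi$ is an isomorphism of the underlying unweighted directed graphs $(\Gamma\times\Gamma')^\ast\to\Gamma^\ast\times{\Gamma'}^\ast$.

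It remains to match weights, and here Corollary~\ref{cor:mu:star} does the work. In any $2$-subdivision $\Lambda^\ast$ the weight of an edge $(u,v)$ equals $\mu_{\Lambda^\ast}(u,v)$: comparing the explicit weights in steps (ii),(iii) of the construction with $\mu_{\Lambda^\ast}([p,q],[p,q'])=\mu_\Lambda(p,p)+\mu_\Lambda(q,q')=w_\Lambda(qq')$ and the symmetric identity gives this. Moreover $\phi$ is an isometry: applying Corollary~\ref{cor:mu:star} to the extended modular complex $\Gamma\times\Gamma'$ (which is one by Lemma~\ref{lemma:ebp-cartesian}(a)) on one side, and additivity of path metrics under Cartesian products of weighted graphs on the other, the distance between two vertices of $(\Gamma\times\Gamma')^\ast$ and the distance between their $\phi$-images both equal $\mu_\Gamma(p,p_1)+\mu_\Gamma(q,q_1)+\mu_{\Gamma'}(p',p_1')+\mu_{\Gamma'}(q',q_1')$. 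Combining these two facts, $\phi$ sends each edge to an edge of equal weight, so $\phi$ is an isomorphism of weighted directed graphs. For the consequence, restrict $\phi$: as a poset isomorphism sending $[(p,p'),(p,p')]$ to $([p,p],[p',p'])$, it maps $\calL^\ast_{(p,p')}(\Gamma\times\Gamma')=\calL^\uparrow_{[(p,p'),(p,p')]}((\Gamma\times\Gamma')^\ast)$ bijectively and order-isomorphically onto $\calL^\uparrow_{([p,p],[p',p'])}(\Gamma^\ast\times{\Gamma'}^\ast)$. This up-set equals $\calL^\uparrow_{[p,p]}(\Gamma^\ast)\times\calL^\uparrow_{[p',p']}({\Gamma'}^\ast)=\calL^\ast_p(\Gamma)\times\calL^\ast_{p'}(\Gamma')$ as posets, since reachability in a Cartesian product of directed acyclic graphs is componentwise (the $\calL^\uparrow$-analogue of Lemma~\ref{lemma:bp-cartesian}(b)); all three are valuated modular semilattices by Lemma~\ref{lemma:GammaIdeal}(b) applied to the modular complexes $(\Gamma\times\Gamma')^\ast$ and $\Gamma^\ast,{\Gamma'}^\ast$ (Theorem~\ref{th:subdivision:modular}). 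Finally the valuations match: on $\calL^\ast_{(p,p')}(\Gamma\times\Gamma')$ the valuation of $[(p,p'),(q,q')]$ is $\mu_{(\Gamma\times\Gamma')^\ast}([(p,p'),(p,p')],[(p,p'),(q,q')])=\mu_\Gamma(p,q)+\mu_{\Gamma'}(p',q')$ by Corollary~\ref{cor:mu:star}, while on the product the convention for valuated products together with Corollary~\ref{cor:mu:star} for each factor gives $\mu_\Gamma(p,q)+\mu_{\Gamma'}(p',q')$ as well.

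I do not foresee a substantive obstacle; the two points that need care are (i) tracking edge weights, which the uniform identity $w^\ast(u,v)=\mu_{\Lambda^\ast}(u,v)$ extracted from Corollary~\ref{cor:mu:star} reduces to bookkeeping, and (ii) the fact that for a general extended modular complex $\Gamma$ the $2$-subdivision $\Gamma^\ast$ need not be well-oriented, so one must work with $\calL^\uparrow_{[p,p]}$ rather than $\calL^+_{[p,p]}$ throughout and hence cannot invoke Lemma~\ref{lemma:bp-cartesian}(b) verbatim — but its $\calL^\uparrow$-version is immediate from componentwise reachability in Cartesian products of directed graphs.
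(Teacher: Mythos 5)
Your proof is correct and takes essentially the same route as the paper: the paper's own proof simply asserts that the map is a bijection on vertices and that "checking that this bijection preserves edges and edge weights is mechanical," and you are filling in exactly those mechanical checks (via the poset-isomorphism viewpoint, the fact that covers in a product poset are componentwise, and the isometry from Corollary~\ref{cor:mu:star}). Your two flagged caveats — keeping track of weights through the $\mu$ identity, and using $\calL^\uparrow$ rather than $\calL^+$ because $\Gamma^\ast$ need not be well-oriented in the extended setting — are both handled correctly and match the paper's conventions.
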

\begin{proof}
Clearly, the mapping defined in the lemma is a bijection between the nodes of 
$(\Gamma\times\Gamma')^\ast$ and the nodes of $\Gamma^\ast\times{\Gamma'}^\ast$.
Checking that this bijection preserves edges and edge weights is mechanical.
\end{proof}

\begin{lemma}\label{lemma:ftilde}
Consider extended modular complexes $\Gamma,\Gamma'$ and functions $f,f':\Gamma\rightarrow \overline{\mathbb R}$ and $\tilde f:\Gamma\times\Gamma'\rightarrow \overline{\mathbb R}$
such that  $\dom f$ is connected in $\Gamma^{\mathsmaller\sqsubset}$.
\begin{itemize}[noitemsep,topsep=0pt]
\item[{\rm (a)}] If $f$ is $L$-convex on $\Gamma$ and $\tilde f(p,p')=f(p)$ for $(p,p')\in \Gamma\times\Gamma'$ then $\tilde f$ is $L$-convex on $\Gamma\times\Gamma'$. 
\item[{\rm (b)}] If $\tilde f$ is $L$-convex on $\Gamma\times\Gamma'$ and $f(p)=\tilde f(p,p')$ for fixed $p'\in\Gamma'$  then $f$ is $L$-convex on $\Gamma$. 
\end{itemize}
\end{lemma}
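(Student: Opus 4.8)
The plan is to reduce both statements to the corresponding facts for Hirai's (non-extended) complexes, exploiting the fact that the definition of $L$-convexity on an extended modular complex is literally the same sentence as in the non-extended case: connectivity of $\dom f$ in $\Gamma^{\mathsmaller\sqsubset}$, plus submodularity of $f^\ast$ restricted to each $\calL^\ast_p = \calL^\uparrow_{[p,p]}(\Gamma^\ast)$, where now $\Gamma^\ast$ is the 2-subdivision built from the \emph{admissible relation} $\sqsubseteq$ rather than from $\BPrel$. Since by Theorem \ref{th:subdivision:modular} the graph $\Gamma^\ast$ is itself a (genuine) modular complex, the submodular part of the statement only ever talks about submodular functions on valuated modular semilattices arising inside modular complexes, and for those objects the algebra in \cite{Hirai:0ext,Hirai:Lconvexity} applies verbatim. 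So the two bullets should follow the proofs of Theorem \ref{th:addition:orig}(b,c) almost word for word, with the only genuinely new ingredient being the handling of the connectivity condition.

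For part (a): the connectivity hypothesis on $\dom \tilde f = \dom f \times V_{\Gamma'}$ in $(\Gamma\times\Gamma')^{\mathsmaller\sqsubset}$ is where I would spend real care. Using the previous lemma (the isomorphism $(\Gamma\times\Gamma')^\ast \cong \Gamma^\ast\times{\Gamma'}^\ast$ and $\calL^\ast_{(p,p')}(\Gamma\times\Gamma') \cong \calL^\ast_p(\Gamma)\times\calL^\ast_{p'}(\Gamma')$), the restriction of $\tilde f^\ast$ to $\calL^\ast_{(p,p')}$ is, under this isomorphism, the function $([x,y],[x',y'])\mapsto f(x)+f(y)$, i.e. $f^\ast$ composed with the projection onto the first factor $\calL^\ast_p(\Gamma)$. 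Projection of a valuated modular semilattice onto a factor of a product is itself a valuation-compatible semilattice homomorphism, and submodularity is preserved under pulling back along such a projection (this is exactly the mechanism used in Hirai's proof of Theorem \ref{th:addition:orig}(b); one checks that bounded pairs map to bounded pairs, antipodal pairs to antipodal pairs or to pairs lying in a single slice, and the envelope $\calE$ behaves correctly). Hence submodularity of $\tilde f^\ast$ on each $\calL^\ast_{(p,p')}$ reduces to submodularity of $f^\ast$ on $\calL^\ast_p(\Gamma)$, which holds since $f$ is $L$-convex. For connectivity: if $P$ is a path in $\Gamma^{\mathsmaller\sqsubset}$ joining $p,q\in\dom f$, then for any fixed $r'\in V_{\Gamma'}$ the path $P\times\{r'\}$ lies in $(\Gamma\times\Gamma')^{\mathsmaller\sqsubset}$ because $u\sqsubset v$ in $\Gamma$ implies $(u,r')\sqsubset_\times (v,r')$; and any two elements $(p,p'),(p,q')$ of a single slice $\{p\}\times V_{\Gamma'}$ are joined similarly using a path in ${\Gamma'}^{\mathsmaller\sqsubset}$, which exists because $\Gamma'$ is connected and every edge $u'\rightarrow v'$ of $\Gamma'$ gives $u'\sqsubset' v'$. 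Concatenating shows $\dom\tilde f$ is connected in $(\Gamma\times\Gamma')^{\mathsmaller\sqsubset}$.

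For part (b): fixing $p'\in\Gamma'$, the slice $V_\Gamma\times\{p'\}$ is a convex subset of $\Gamma\times\Gamma'$ (it is a product of the convex sets $V_\Gamma$ and $\{p'\}$), and it is isomorphic to $\Gamma$ as an extended modular complex, with $\calL^\ast_{(p,p')}$ restricted to the slice isomorphic to $\calL^\ast_p(\Gamma)$. Since restriction of a submodular function on a valuated modular semilattice to a convex sub-semilattice is again submodular (\cite[Lemma 3.7]{Hirai:0ext}, already invoked in the proof of Lemma \ref{lemma:restriction-submodular}), submodularity of $\tilde f^\ast$ on each $\calL^\ast_{(p,p')}(\Gamma\times\Gamma')$ yields submodularity of $f^\ast$ on each $\calL^\ast_p(\Gamma)$. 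The connectivity of $\dom f$ in $\Gamma^{\mathsmaller\sqsubset}$ is assumed outright in the hypothesis of the lemma, so nothing is needed there. I expect the main obstacle to be purely bookkeeping: making the identifications $\calL^\ast_{(p,p')}(\Gamma\times\Gamma')\cong\calL^\ast_p(\Gamma)\times\calL^\ast_{p'}(\Gamma')$ and the projection/restriction maps fully precise so that "submodularity pulls back along a semilattice projection" and "submodularity restricts to a convex sub-semilattice" can be cited cleanly from \cite{Hirai:0ext}; once those are set up the argument is the same as in the non-extended case.
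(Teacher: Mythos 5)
Your proof is correct and follows essentially the same route as the paper: pass through the isomorphism $(\Gamma\times\Gamma')^\ast\cong\Gamma^\ast\times{\Gamma'}^\ast$, observe that the restriction of $\tilde f^\ast$ to a $\calL^\ast$-neighborhood becomes $f^\ast$ pulled back along (resp.\ restricted to a fixed-coordinate slice of) the semilattice product, and quote the relevant parts of \cite[Lemma~3.7]{Hirai:0ext}. Your handling of connectivity of $\dom\tilde f$ in part~(a) is in fact more explicit than the paper's, which merely declares it straightforward.
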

\begin{proof}
Since $(\Gamma\times\Gamma')^\ast$ and $\Gamma^\ast\times {\Gamma'}^\ast$ are isomorphic,
for a function ${\mbox{$\tilde f$}}^{\,\ast}:(\Gamma\times\Gamma')^\ast\rightarrow\overline\RR$
we can define function  ${\mbox{$\tilde f$}}^{\,\ast}_\times:\Gamma^\ast\times {\Gamma'}^\ast\rightarrow\overline\RR$
via ${\mbox{$\tilde f$}}^{\,\ast}_\times([p,q],[p',q'])={\mbox{$\tilde f$}}^{\,\ast}([(p,p'),(q,q')])$.
Clearly, ${\mbox{$\tilde f$}}^{\,\ast}$ is submodular on 
$\calL^\ast_{(x,x')}(\Gamma\times\Gamma')$ 
if and only if ${\mbox{$\tilde f$}}^{\,\ast}_\times$ is submodular on 
 $\calL^\ast_x(\Gamma)\times \calL^\ast_{x'}(\Gamma')$.

\myparagraph{(a)}
Checking connectivity of $\dom \tilde f$ in $\Gamma^{\mathsmaller\sqsubset}$ is straightforward.
We can write
${\mbox{$\tilde f$}}^{\,\ast}_\times([p,q],[p',q'])=\tilde f(p,p')+\tilde f(q,q')=f(p)+f(q)= f^\ast([p,q])$. 
$L$-convexity of $f$ means that $f^\ast$ is submodular on $\calL^\ast_x$ for any $x\in\Gamma$.
Therefore, by~\cite[Lemma 3.7(2)]{Hirai:0ext}, function ${\mbox{$\tilde f$}}^{\,\ast}_\times$
is submodular on $\calL^\ast_x(\Gamma)\times\calL^\ast_{x'}(\Gamma')$ for any $(x,x')\in\Gamma\times\Gamma'$,
and thus ${\mbox{$\tilde f$}}^{\,\ast}$ is submodular on $\calL^\ast_{(x,x')}(\Gamma\times\Gamma')$.

\myparagraph{(b)}
$f^\ast([p,q])=f(p)+f(q)=\tilde f(p,p')+\tilde f(q,p')={\mbox{$\tilde f$}}^{\,\ast}([(p,p'),(q,p')])={\mbox{$\tilde f$}}^{\,\ast}_\times([p,q],[p',p'])$. 
$\mbox{$L$-convexity}$ of $\tilde f$ means that ${\mbox{$\tilde f$}}^{\,\ast}_\times$ is submodular on $\calL^\ast_x(\Gamma)\times \calL^\ast_{p'}(\Gamma')$.
Therefore, by~\cite[Lemma 3.7(3)]{Hirai:0ext}, function $f^\ast$ is submodular on $\calL^\ast_{x}(\Gamma)$
for any $x\in\Gamma$.
 \qedhere

\end{proof}

\begin{lemma}[{\cite[Lemma 4.18]{Hirai:0ext}}]\label{lemma:mu-is-Lconvex-orig}
Let $\Lambda$ be a modular complex.
The distance function $\mu_\Lambda$ is submodular on $\calL_{(a,b)}^\uparrow(\Lambda\times\Lambda)=\calL_{a}^\uparrow(\Lambda)\times \calL_{b}^\uparrow(\Lambda)$
for every $(a,b)\in\Lambda\times\Lambda$.
\end{lemma}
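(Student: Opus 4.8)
The plan is to check the definition of submodularity directly for the function $f(p,q)=\mu_\Lambda(p,q)$ on the valuated modular semilattice $\calL:=\calL^\uparrow_a(\Lambda)\times\calL^\uparrow_b(\Lambda)$, which by Lemma~\ref{lemma:GammaIdeal} and the product construction is a valuated modular semilattice with valuation $v(p,q)=\mu_\Lambda(a,p)+\mu_\Lambda(b,q)$. We would invoke Theorem~\ref{th:submodularity-characterization}. Its condition~(1) (that $\calE(X,Y)\subseteq\dom f$ for $X,Y\in\dom f$) is automatic here: $\mu_\Lambda$ is finite everywhere, so $\dom f=\calL$, while $\calE(X,Y)\subseteq I(X,Y)\subseteq\calL$ because $\calL^\uparrow_a(\Lambda)$ and $\calL^\uparrow_b(\Lambda)$ are convex in $\Lambda$ (Lemma~\ref{lemma:GammaIdeal}). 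So it remains to verify the submodularity inequality~\eqref{eq:submodular:bounded} for every bounded pair of $\calL$ and the $\wedge$-convexity inequality~\eqref{eq:submodular:antipodal} for every antipodal pair of $\calL$.

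First we spell out how $\calE$, $\wedge$ and the operations $\vee_\theta$ behave in a product. Write $X=(p,q)$, $Y=(p',q')$ and set $\alpha=\mu_\Lambda(p\wedge p',p)$, $\alpha'=\mu_\Lambda(p\wedge p',p')$, $\beta=\mu_\Lambda(q\wedge q',q)$, $\beta'=\mu_\Lambda(q\wedge q',q')$, so that $v[X\wedge Y,X]=\alpha+\beta$ and $v[X\wedge Y,Y]=\alpha'+\beta'$ by Lemma~\ref{lemma:ascending}. Since $\calL^\uparrow_a,\calL^\uparrow_b$ are convex, $I(X,Y)=I(p,p')\times I(q,q')$, and a short computation from Lemma~\ref{lemma:semilattice:technical} shows $v_{XY}(u_1,u_2)=v_{pp'}(u_1)+v_{qq'}(u_2)$, so that $\Conv I(X,Y)$ is the Minkowski sum $\Conv I(p,p')+\Conv I(q,q')$. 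Maximising $\langle c_\theta,v_{XY}(\cdot)\rangle$ thus decouples over the two coordinates, giving $X\wedge Y=(p\wedge p',\,q\wedge q')$, $X\vee_\theta Y=(p\vee_\theta p',\,q\vee_\theta q')$, and $X\vee Y=(p\vee p',\,q\vee q')$ whenever the join exists; moreover $(X,Y)$ is bounded iff $p\vee p'$ and $q\vee q'$ both exist, and antipodal iff $(p,p')$ and $(q,q')$ are both antipodal and $\alpha\beta'=\alpha'\beta$ (parallel base segments). Plugging this into~\eqref{eq:submodular:special}, what remains is to prove
\[
\mu_\Lambda(p,q)+\mu_\Lambda(p',q')\ \ge\ \mu_\Lambda(p\wedge p',\,q\wedge q')+\mu_\Lambda(p\vee p',\,q\vee q')
\]
whenever $p\vee p'$ and $q\vee q'$ exist, together with
\[
(\alpha'+\beta')\,\mu_\Lambda(p,q)+(\alpha+\beta)\,\mu_\Lambda(p',q')\ \ge\ (\alpha+\alpha'+\beta+\beta')\,\mu_\Lambda(p\wedge p',\,q\wedge q')
\]
when $(p,p'),(q,q')$ are antipodal with $\alpha\beta'=\alpha'\beta$.

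For the first inequality, observe that $p\wedge p',p,p',p\vee p'$ all lie in the modular lattice $L_1:=[p\wedge p',p\vee p']$ (Lemma~\ref{lemma:GammaIdeal}, Lemma~\ref{lemma:ascending}) and similarly $q\wedge q',q,q',q\vee q'\in L_2:=[q\wedge q',q\vee q']$. As $L_1,L_2$ are convex, hence gated (Lemma~\ref{lemma:convex:gated}), Lemma~\ref{lemma:pq-pq-prime} describes the imprints $\Pr_{L_1}(L_2)\subseteq L_1$ and $\Pr_{L_2}(L_1)\subseteq L_2$ as intervals that are mutually order-isomorphic and isometric via the gate maps, and for $x\in L_1$, $y\in L_2$ one has $\mu_\Lambda(x,y)=\mu_\Lambda(x,\Pr_{L_1}(y))+\mu_\Lambda(\Pr_{L_1}(y),y)$. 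Substituting this decomposition into the target inequality and using Lemma~\ref{lemma:pq-pq-prime} to transport everything into the two lattices $L_1,L_2$, the inequality reduces to valuation arithmetic: within a single modular lattice $\mu_\Lambda(x,y)=v(x)+v(y)-2v(x\wedge y)$ by Lemma~\ref{lemma:semilattice:technical}(a), and the resulting statement follows from the valuation axioms~\eqref{eq:valuation:def} — for instance $v(x\wedge x'\wedge y\wedge y')+v((x\vee x')\wedge(y\vee y'))\ge v(x\wedge y)+v(x'\wedge y')$ is a consequence of the modularity identity~\eqref{eq:valuation:def:b} together with monotonicity~\eqref{eq:valuation:def:a}.

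The second (antipodal) inequality is treated along the same lines but is the delicate part. Since $p\vee p'$ need not exist, one replaces $L_1$ by the ``product interval'' $I(p,p')$, which by Lemma~\ref{lemma:semilattice:technical}(b) is poset-isomorphic to $[p\wedge p',p]\times[p\wedge p',p']$ (and likewise for $I(q,q')$), and iterates Lemma~\ref{lemma:pq-pq-prime} to decompose the $\mu_\Lambda$-distances across the gated sets $I(p,p')$ and $I(q,q')$; one is then again reduced to valuation arithmetic. The main obstacle is the bookkeeping at this stage: one must show that the gate maps occurring in the decomposition preserve the exact ratios $\alpha:\alpha'$ and $\beta:\beta'$ and the ``parallelism'' $\alpha\beta'=\alpha'\beta$ that defines antipodal pairs, and that they intertwine the operations $\vee_\theta$ (so that the componentwise identity $X\vee_\theta Y=(p\vee_\theta p',q\vee_\theta q')$ is respected after projection). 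This — leaning on Lemma~\ref{lemma:pq-pq-prime} and the gated-set structure of Theorem~\ref{th:gatedset} — is the technically heaviest step; the remaining manipulations are formal consequences of the valuation inequalities~\eqref{eq:valuation:def}.
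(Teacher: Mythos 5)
The paper does not prove this lemma itself --- it is imported verbatim from Hirai's paper (his Lemma~4.18) --- so there is no ``paper's proof'' to compare against; I can only assess whether your argument stands on its own.

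Your high-level plan is sensible: since $v_{XY}(u_1,u_2)=v_{pp'}(u_1)+v_{qq'}(u_2)$ and $I(X,Y)=I(p,p')\times I(q,q')$, the polytope $\Conv I(X,Y)$ is the Minkowski sum of the two factor polytopes, which does give $X\wedge Y$, $X\vee_\theta Y$, boundedness, and (with the slope-matching condition $\alpha\beta'=\alpha'\beta$) antipodality componentwise, and hence reduces the claim to the two displayed inequalities. That part I believe.

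The genuine gap is in the step ``using Lemma~\ref{lemma:pq-pq-prime} to transport everything into the two lattices $L_1,L_2$, the inequality reduces to valuation arithmetic.'' Writing $\pi=\Pr_{L_1}$, the decomposition $\mu(x,y)=\mu(x,\pi(y))+\mu(\pi(y),y)$ is correct, but to split the target inequality into a piece internal to $L_1$ and a piece internal to $L_2$, and to land on the valuation inequality you write down, you need $\pi$ to be a lattice homomorphism from $L_2=[q\wedge q',\,q\vee q']$ into $L_1$ --- i.e.\ $\pi(q\wedge q')=\pi(q)\wedge\pi(q')$ and $\pi(q\vee q')=\pi(q)\vee\pi(q')$. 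Lemma~\ref{lemma:pq-pq-prime} gives you much less: it says $\Pr_{L_1}(L_2)=[u,v]$ is an interval and that $\Pr_{L_1}$ restricted to the \emph{opposite imprint} $[u',v']=\Pr_{L_2}(L_1)$ is an isomorphism onto $[u,v]$. For $y\in L_2$ in general one has $\pi(y)=\Pr_{L_1}(\Pr_{[u',v']}(y))$, and the inner gate projection $y\mapsto(y\vee u')\wedge v'$ onto an interval of a modular lattice does \emph{not} preserve joins and meets --- already in $M_3$ (the diamond lattice $0<a,b,c<1$) one has $\Pr_{[0,a]}(b)=\Pr_{[0,a]}(c)=0$ while $\Pr_{[0,a]}(b\vee c)=a$. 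So the cross-terms $v(\pi(q\wedge q'))+v(\pi(q\vee q'))$ do not cancel against $v(\pi(q))+v(\pi(q'))$ and the ``valuation arithmetic'' endpoint (which, taken in isolation, is indeed a one-line consequence of~\eqref{eq:valuation:def:b} and monotonicity) is never actually reached. In short, the inequality you exhibit is true, but you have not connected it to the target.

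Beyond this, the antipodal inequality --- the harder half, precisely because $p\vee p'$ and $q\vee q'$ need not exist --- is explicitly left as ``the technically heaviest step'' with only a description of the bookkeeping that would be needed. As written, the proposal is a plausible programme rather than a proof; the crucial transport step fails as stated, and the $\wedge$-convexity case is not carried out.
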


\begin{lemma}\label{lemma:mu-is-Lconvex}
 Let $\Gamma$ be an extended modular complex.
{\rm (a)} Function $\mu_\Gamma:\Gamma\times \Gamma\rightarrow\mathbb R_+$ is $L$-convex on $\Gamma\times \Gamma$.
{\rm (b)} For each $p\in\Gamma$, function $\mu_{\Gamma,p}:\Gamma\rightarrow\mathbb R_+$ defined via $\mu_{\Gamma,p}(x)=\mu(x,p)$ is $L$-convex on~$\Gamma$.
\end{lemma}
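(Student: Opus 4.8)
The plan is to reduce the statement for an extended modular complex to the already-known statement for ordinary modular complexes (Lemma~\ref{lemma:mu-is-Lconvex-orig}), using the machinery of the $2$-subdivision that has been set up in this section. For part~(a), recall that $\mu_\Gamma$ is $L$-convex on $\Gamma\times\Gamma$ iff (i) $\dom\mu_\Gamma$ is connected in $(\Gamma\times\Gamma)^{\mathsmaller\sqsubset}$ — which is automatic since $\mu_\Gamma$ is everywhere finite and $\Gamma\times\Gamma$ is connected (every edge $p\rightarrow q$ gives $p\sqsubseteq q$) — and (ii) for every $(a,b)\in\Gamma\times\Gamma$ the extension $\mu_\Gamma^\ast$ is submodular on $\calL^\ast_{(a,b)}(\Gamma\times\Gamma)$. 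So the whole content is in verifying (ii).

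First I would identify the valuated modular semilattice $\calL^\ast_{(a,b)}(\Gamma\times\Gamma)$ concretely. By the isomorphism lemma proved just above (the one sending $[(p,p'),(q,q')]\mapsto([p,q],[p',q'])$), $\calL^\ast_{(a,b)}(\Gamma\times\Gamma)\cong \calL^\ast_a(\Gamma)\times\calL^\ast_b(\Gamma)$ as valuated modular semilattices, where $\calL^\ast_a(\Gamma)=\calL^\uparrow_{[a,a]}(\Gamma^\ast)$ sits inside the modular complex $\Gamma^\ast$ (which is oriented modular by Theorem~\ref{th:subdivision:modular}, hence a genuine modular complex to which Lemma~\ref{lemma:mu-is-Lconvex-orig} applies). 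Under this identification, a point of $\calL^\ast_{(a,b)}(\Gamma\times\Gamma)$ is a pair $([x_1,y_1],[x_2,y_2])$ with $x_1\sqsubseteq y_1$, $x_2\sqsubseteq y_2$, $a\preceq x_1\preceq y_1$... — but more importantly, its value under $\mu_\Gamma^\ast$ is
\begin{equation*}
\mu_\Gamma^\ast([(x_1,x_2),(y_1,y_2)]) = \mu_\Gamma(x_1,x_2)+\mu_\Gamma(y_1,y_2).
\end{equation*}
The key observation is that this equals $\mu_{\Gamma^\ast}\bigl([x_1,x_2],[y_1,y_2]\bigr)$ by Corollary~\ref{cor:mu:star}. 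In other words, $\mu_\Gamma^\ast$, transported across the isomorphism, is exactly the distance function $\mu_{\Gamma^\ast}$ of the modular complex $\Lambda=\Gamma^\ast$, restricted to $\calL^\uparrow_a(\Lambda)\times\calL^\uparrow_b(\Lambda)$ (using $\calL^\ast_a(\Gamma)=\calL^\uparrow_{[a,a]}(\Gamma^\ast)$). Hence submodularity of $\mu_\Gamma^\ast$ on $\calL^\ast_{(a,b)}(\Gamma\times\Gamma)$ is precisely the content of Lemma~\ref{lemma:mu-is-Lconvex-orig} applied to $\Lambda=\Gamma^\ast$ with the pair $([a,a],[b,b])$. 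This gives part~(a).

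For part~(b), $\mu_{\Gamma,p}(x)=\mu_\Gamma(x,p)$ is the restriction of $\mu_\Gamma:\Gamma\times\Gamma\rightarrow\RR_+$ obtained by fixing the second coordinate to the constant $p$; since $\mu_\Gamma$ is $L$-convex on $\Gamma\times\Gamma$ by part~(a), Theorem~\ref{th:ebp-addition}(c) (restriction along a fixed coordinate) immediately yields that $\mu_{\Gamma,p}$ is $L$-convex on $\Gamma$. The main obstacle I anticipate is not any single hard step but bookkeeping: making sure the chain of identifications — $\calL^\ast_{(a,b)}(\Gamma\times\Gamma)\cong\calL^\ast_a(\Gamma)\times\calL^\ast_b(\Gamma)$, then $\calL^\ast_a(\Gamma)=\calL^\uparrow_{[a,a]}(\Gamma^\ast)$, then Corollary~\ref{cor:mu:star} turning the sum $\mu_\Gamma(x_1,x_2)+\mu_\Gamma(y_1,y_2)$ into $\mu_{\Gamma^\ast}$ — is compatible with the valuations and partial orders involved, so that ``submodular'' really transfers. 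Once one checks that $\mu_\Gamma^\ast$ corresponds to $\mu_{\Gamma^\ast}$ under the isomorphism, the rest is an invocation of Lemma~\ref{lemma:mu-is-Lconvex-orig} and Theorem~\ref{th:ebp-addition}(c).
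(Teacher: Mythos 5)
Your plan is correct and follows essentially the same route as the paper: transport $\mu_\Gamma^\ast$ across the isomorphism $\calL^\ast_{(a,b)}(\Gamma\times\Gamma)\cong\calL^\ast_a(\Gamma)\times\calL^\ast_b(\Gamma)$, use Corollary~\ref{cor:mu:star} to recognize it as $\mu_{\Gamma^\ast}$, and then invoke Lemma~\ref{lemma:mu-is-Lconvex-orig} on the ordinary modular complex $\Lambda=\Gamma^\ast$. One bookkeeping fix is needed in part~(b): cite Lemma~\ref{lemma:ftilde}(b) directly rather than Theorem~\ref{th:ebp-addition}(c), since the latter is proved later and its proof goes through Corollary~\ref{cor:domf:arrow} and Lemma~\ref{lemma:Lopt:pq}, which in turn use the present lemma, so a literal citation would be circular; the connectivity hypothesis of Lemma~\ref{lemma:ftilde} holds trivially here because $\mu_{\Gamma,p}$ is everywhere finite.
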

\begin{proof}
It suffices to prove {\rm (a)}; claim {\rm (b)} will then follow by Lemma~\ref{lemma:ftilde}(b).
To be consistent with the notation in Lemma~\ref{lemma:ftilde}, denote $\Gamma'=\Gamma$ and $\tilde f=\mu_\Gamma$.
By Corollary~\ref{cor:mu:star}, for each $[p,q],[p',q']\in V^\ast$ we have 
${\mbox{$\tilde f$}}^{\,\ast}_\times([p,q],[p',q'])={\mbox{$\tilde f$}}^{\,\ast}([(p,p'),(q,q')])=
\mu_\Gamma(p,p')+\mu_\Gamma(q,q')=\mu_{\Gamma^\ast}([p,q],[p',q'])$,
i.e.\ ${\mbox{$\tilde f$}}^{\,\ast}_\times=\mu_{\Gamma^\ast}$. 
Consider $(x,y)\in\Gamma\times\Gamma$.
Lemma~\ref{lemma:mu-is-Lconvex-orig} for $\Lambda=\Gamma^\ast$ and $a=[x,x]$, $b=[y,y]$ gives that ${\mbox{$\tilde f$}}^{\,\ast}_\times=\mu_{\Gamma^\ast}$ is submodular on 
$\calL^\uparrow_{[x,x]}(\Gamma^\ast)\times\calL^\uparrow_{[y,y]}(\Gamma^\ast)=\calL^\ast_x(\Gamma)\times \calL^\ast_y(\Gamma)$,
and therefore ${\mbox{$\tilde f$}}^{\,\ast}$ is submodular on $\calL^\ast_{(x,y)}(\Gamma\times\Gamma)$.
\end{proof}




\subsection{Proof of Theorem~\ref{th:Lopt}: Local optimality implies global optimality}

In this section we prove the following result.
\renewcommand{\thetheoremRESTATED}{\ref{th:Lopt}}
\begin{theoremRESTATED}[equivalent formulation]
Let $f:V\rightarrow\overline{\mathbb R}$ be an $L$-convex function on an extended modular complex~$\Gamma$.
Consider element $p$ with  $\min_{q\in \Gamma}f(q)<f(p)<\infty$. There exists element $u$ with $f(u)<f(p)$ and $pu\in \Gamma^{\mathsmaller\sqsubset}$.
\end{theoremRESTATED}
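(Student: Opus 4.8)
The statement is the contrapositive of Theorem~\ref{th:Lopt}, so equivalently I must show that an $L$-convex $f$ has no vertex $p$ with $f(p)<\infty$, $f(p)>\min_{q}f(q)$, and $f(p)\le f(u)$ for every $u$ with $pu\in\Gamma^{\mathsmaller\sqsubset}$. The plan has two parts: first, reduce the hypothesis ``$f(p)\le f(u)$ for all $\Gamma^{\mathsmaller\sqsubset}$-neighbours $u$'' to the weaker-looking ``$f(p)\le f(q)$ for all $\DELTANEIB$-neighbours $q$''; second, show the latter already forces $p$ to be a global minimizer.

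The reduction rests on a discrete midpoint inequality: if $p,q\in\dom f$ are $\DELTANEIB$-neighbours, then $p\wedge q\in\calL^-_p$, $p\vee q\in\calL^+_p$, and $f(p)+f(q)\ge f(p\wedge q)+f(p\vee q)$. To prove it, set $a=p\wedge q$, $b=p\vee q$; by definition of a $\DELTANEIB$-neighbour $a\sqsubseteq b$, so by condition~(\ref{def:ebp}a) the intervals $[a,a],[a,p],[a,q],[a,b]$ all lie in $V^\ast$ and in $\calL^\ast_a$, the pair $([a,p],[a,q])$ is bounded there with meet $[a,a]$ and join $[a,b]$ (a routine check using Lemma~\ref{lemma:semilattice:technical}), and $p\wedge q\in\calL^-_p$, $p\vee q\in\calL^+_p$ again by condition~(\ref{def:ebp}a). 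Submodularity of $f^\ast$ on $\calL^\ast_a$ then yields~\eqref{eq:submodular:bounded} for this bounded pair, i.e.\ $f^\ast([a,p])+f^\ast([a,q])\ge f^\ast([a,a])+f^\ast([a,b])$; substituting $f^\ast([x,y])=f(x)+f(y)$ and cancelling gives the inequality. Consequently, if some $\DELTANEIB$-neighbour $q$ of $p$ has $f(q)<f(p)$, then $f(p\wedge q)+f(p\vee q)<2f(p)$, so one of $p\wedge q,p\vee q$ has value below $f(p)$; that element lies in $\calL^\pm_p$ and differs from $p$, hence is a $\Gamma^{\mathsmaller\sqsubset}$-neighbour of $p$ beating $f(p)$. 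Thus it suffices to prove that a $\DELTANEIB$-local minimizer of an $L$-convex function is a global minimizer.

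For this second part I would mirror Hirai's proof of the analogous statement for modular complexes (\cite[Lemma 2.3]{Hirai:0ext}): assuming $p\in\dom f$ is a $\DELTANEIB$-local minimizer but not global, pick $q\in\dom f$ with $f(q)<f(p)$ minimizing $\DELTA{d}_\Gamma(p,q)$ (so $\DELTA{d}_\Gamma(p,q)\ge 2$, since $\DELTANEIB$-neighbours of $p$ do not beat $f(p)$), and derive a contradiction by producing $q'\in\dom f$ with $f(q')<f(p)$ and $\DELTA{d}_\Gamma(p,q')<\DELTA{d}_\Gamma(p,q)$. The cleanest route is to upgrade the midpoint inequality above to a \emph{global} form: for arbitrary $x,y\in\dom f$, exhibit a ``discrete midpoint'' pair $x',y'$ — obtained by identifying $I_{\Gamma^\ast}([x,x],[y,y])$ with pairs drawn from $I_\Gamma(x,y)$ via Corollary~\ref{cor:mu:star} and choosing $x',y'$ roughly halfway along — satisfying $f(x)+f(y)\ge f(x')+f(y')$, with both $x',y'$ strictly closer to $x$ in $\DELTA{d}_\Gamma$ whenever $y\ne x$. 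Granting such an inequality, applying it with $(x,y)=(p,q)$ and inducting on $\DELTA{d}_\Gamma(p,q)$ immediately gives $f(q)\ge f(p)$, the desired contradiction. The main obstacle is establishing this global midpoint inequality together with the required distance bound: it needs gatedness of intervals and the projection lemma (Theorem~\ref{th:gatedset}, Lemma~\ref{lemma:pq-pq-prime}) to control $I_\Gamma(x,y)$ and $I_{\Gamma^\ast}([x,x],[y,y])$, isometric-rectangle splitting (Proposition~\ref{prop:isometric-rectangle}) to decompose the pair, and submodularity of $f^\ast$ at the local semilattices $\calL^\ast_z$ encountered along the way — essentially the full structural machinery of Section~\ref{sec:proofs}. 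Throughout, one must also keep the auxiliary points inside $\dom f$, using connectivity of $\dom f$ in $\Gamma^{\mathsmaller\sqsubset}$ together with the closure property $\calE(\cdot,\cdot)\subseteq\dom f$ from Theorem~\ref{th:submodularity-characterization}.
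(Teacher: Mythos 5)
Your proposal splits naturally into two steps, and the two deserve separate verdicts.

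Your first step — the discrete midpoint inequality $f(p)+f(q)\ge f(p\wedge q)+f(p\vee q)$ for $\DELTANEIB$-neighbours, obtained by applying the bounded-pair submodularity inequality~\eqref{eq:submodular:bounded} to $([a,p],[a,q])$ inside $\calL^\ast_a$ with $a=p\wedge q$ — is correct, and the conclusion that a $\Gamma^{\mathsmaller\sqsubset}$-local minimizer is automatically a $\DELTANEIB$-local minimizer is a clean and legitimate reduction. (One checks, as you do, that $a\sqsubseteq p$, $p\sqsubseteq b$ and $a\sqsubseteq q$ all follow from condition~(\ref{def:ebp}a), so the four intervals really do sit in $\calL^\ast_a$ and the pair is bounded with meet $[a,a]$ and join $[a,b]$.)

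Your second step, however, contains a genuine gap. Everything rests on the ``global midpoint inequality'': for arbitrary $x,y\in\dom f$ there should exist $x',y'\in\dom f$ with $f(x)+f(y)\ge f(x')+f(y')$ and $\DELTA{d}_\Gamma(x,x'),\DELTA{d}_\Gamma(x,y')<\DELTA{d}_\Gamma(x,y)$ whenever $x\neq y$. You state what tools \emph{might} be used, but you do not construct $x',y'$ or prove the inequality, and the construction is far from routine: the natural candidates (points of $I(x,y)$ at the right $\DELTA{d}$-radius, or envelope points in $\calL^\ast_z$ for various $z$) do not obviously come packaged with both the value inequality and the two-sided $\DELTA{d}$-bound, and you would need the convexity of the balls $\DELTA{B}_k$ (Theorem~\ref{th:Bkball}) and the normal-path machinery to control distances — a large amount of structure the proof cannot yet assume. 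There is also a circularity hazard you should watch for: the paper's own ``midpoint-flavoured'' result closest in spirit to yours, Lemma~\ref{lemma:f-extremal}, is proved \emph{using} Theorem~\ref{th:Lopt} (via Lemma~\ref{lemma:Lopt:pq} and Lemma~\ref{lemma:levelset:connectivity}), so you cannot lean on that part of Section~\ref{sec:f-extremal} without first establishing the statement you are trying to prove.

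For comparison, the paper's proof takes a different route that avoids $\DELTANEIB$-neighbours entirely: it shows that the sub-level sets $\{\.p:f(p)\le\alpha\.\}$ are connected in $\Gamma^{\mathsmaller\sqsubset}$ (Lemma~\ref{lemma:levelset:connectivity}), by showing that any length-$2$ segment $(x,y,z)$ of a $\Gamma^{\mathsmaller\sqsubset}$-path with $xz\notin\Gamma^{\mathsmaller\sqsubset}$, $f(x)\le\alpha$, $f(z)<\alpha$ can be ``shortcut'' to a $\Gamma^{\mathsmaller\sqsubset}$-path whose interior points have value $<\alpha$ (Corollary~\ref{cor:ALSLFKSAHG}); the V-shaped case $x\sqsupset y\sqsubset z$ uses Hirai's Lemma~\ref{lemma:LASKFHASGASFASF} on the semilattice $\calL_y^+$, while the monotone case $x\sqsubset y\sqsubset z$ is handled by applying the envelope submodularity inequality~\eqref{eq:submodular:b} to $[x,y]$ and $[y,z]$ in $\calL^\ast_y$ (Lemma~\ref{lemma:ALISFHLAGSLASG}). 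Once sub-level connectivity is known, Theorem~\ref{th:Lopt} follows by a short minimal-path argument. This route never needs an explicit midpoint in $\DELTA{\Gamma}$ and never needs bounds on $\DELTA{d}$, which is precisely the machinery your sketch still has to supply. Your reduction in step one is a nice observation, but to turn your step two into a proof you would essentially need to re-derive Theorem~\ref{th:Bkball} and a midpoint version of Lemma~\ref{lemma:f-extremal} independently of Theorem~\ref{th:Lopt}, which is substantially more work than the paper's connectivity argument.
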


We use the same argument as in~\cite{Hirai:0ext} (slightly rearranged).

\begin{lemma}[{\cite[Lemma 4.20]{Hirai:0ext}}] \label{lemma:LASKFHASGASFASF}
Let $f$ be a submodular function on a modular semilattice $\calL$.
For $p,q\in\calL$ and $\alpha\in\mathbb R$ with $f(p)\le \alpha$, $f(q)<\alpha$,
there exists sequence $(u_0,u_1,\ldots,u_k)$ 
with $(u_0,u_k)=(p,q)$ such that for each $i\in[k]$ we have $f(u_i)<\alpha$
and elements $u_{i-1},u_i$ are comparable.
\end{lemma}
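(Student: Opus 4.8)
The plan is to argue by induction on the graph distance $d(p,q)$ in the Hasse diagram of $\calL$ (equivalently, on $\mu(p,q)$, the two being interchangeable here — as are the corresponding metric intervals — because the Hasse diagram of a valuated modular semilattice is oriented modular with orbit‑invariant weights, Theorem~\ref{th:orbits}(a)). Write $s=p\wedge q$, which exists since $\calL$ is a semilattice, and note $s\preceq p$, $s\preceq q$. If $p$ and $q$ are comparable I would simply take the one‑step sequence $(u_0,u_1)=(p,q)$ (this also covers the base case $p=q$). Otherwise $s\prec p$ and $s\prec q$, and if $f(s)<\alpha$ I would take $(p,s,q)$: both consecutive pairs are comparable and $f(s),f(q)<\alpha$. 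Thus the only substantive case is $p,q$ incomparable with $f(p)\le\alpha\le f(s)$ and $f(q)<\alpha$; here $f(s)$ is finite, since submodularity of $f$ (via~\eqref{eq:submodular:a}) already forces $f(p\wedge q)<\infty$ and $\calE(p,q)\subseteq\dom f$ whenever $f(p),f(q)<\infty$.

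For this case I would first observe that $(p,q)$ cannot be antipodal: otherwise the $\wedge$‑convexity inequality~\eqref{eq:submodular:antipodal} reads $f(s)\le\lambda f(p)+(1-\lambda)f(q)$ with $\lambda=v[s,q]/(v[s,p]+v[s,q])\in(0,1)$, and $f(p)\le\alpha$, $f(q)<\alpha$ then force $f(s)<\alpha$, contradicting $f(s)\ge\alpha$. Hence the $(p,q)$‑envelope $\calE(p,q)=\{u_0,\dots,u_k\}$ with $u_0=p$, $u_k=q$ has $k\ge2$, and the key step is to produce an $r\in\calE(p,q)$ with $p\ne r\ne q$ and $f(r)<\alpha$. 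Suppose no such $r$ exists, i.e.\ $f(u_i)\ge\alpha$ for all $0<i<k$. Feeding this into the submodularity inequality~\eqref{eq:submodular:b} and using $\sum_{i=0}^k w_i=1$ (with $w_i:=\theta_i-\theta_{i-1}\ge0$, $\theta_{-1}=0$, $\theta_k=1$) gives, after rearranging,
\[ (1-w_0)f(p)+(1-w_k)f(q)\ \ge\ f(s)+(1-w_0-w_k)\alpha . \]
Substituting $f(s)\ge\alpha$ and then $f(p)\le\alpha$ collapses this to $(1-w_k)f(q)\ge(1-w_k)\alpha$. But $w_k=1-\theta_{k-1}<1$: if $\theta_{k-1}=0$ then $u_{k-1}\preceq q$, whence by Lemma~\ref{lemma:semilattice:technical}(b) $u_{k-1}\in[s,q]$ with $u_{k-1}\wedge p=s$, so $v_{pq}(u_{k-1})=(0,v[s,u_{k-1}])$ would be strictly dominated by $v_{pq}(q)=(0,v[s,q])$, contradicting that $u_{k-1}$ is a maximal extreme point of $\Conv I(p,q)$. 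So $1-w_k>0$, and we get $f(q)\ge\alpha$, a contradiction; the claim follows. Finally, with such an $r$, since $r\in I(p,q)\setminus\{p,q\}$ we have $d(p,r),d(r,q)<d(p,q)$, so the induction hypothesis applied to $(p,r)$ (via $f(p)\le\alpha$, $f(r)<\alpha$) and to $(r,q)$ (via $f(r)\le\alpha$, $f(q)<\alpha$) yields two sequences which concatenate at $r$ into the required one (all interior values $<\alpha$, all consecutive pairs comparable).

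I expect the main obstacle to be exactly the production, in the last case, of an envelope element strictly between $p$ and $q$ on which $f$ is $<\alpha$: although submodularity gives a weighted‑average bound forcing the envelope values to be small on average, a priori this smallness could be ``absorbed'' entirely by the endpoints $p$ and $q$; ruling that out is precisely the role of the inequality $w_k<1$, equivalently $u_{k-1}\not\preceq q$, which in turn uses that envelope elements are maximal extreme points and receive unique coordinates (eq.~\eqref{eq:v:uniqueness} / the definition of $\calE(p,q)$). A minor point to keep straight is the interchange of $\mu$‑ and $d$‑distances and of metric intervals via Theorem~\ref{th:orbits}(a), together with the fact that $p,q\in\dom f$ forces $p\wedge q$ and the whole envelope into $\dom f$, so every quantity appearing above is a genuine real number.
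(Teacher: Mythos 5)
The paper does not prove this lemma; it simply cites \cite[Lemma~4.20]{Hirai:0ext}, so there is no in-paper proof to compare against. Your proof is correct and self-contained. The structure --- induct on $d(p,q)$; dispose of the cases where $p,q$ are comparable or $f(p\wedge q)<\alpha$; rule out antipodality via the $\wedge$-convexity inequality; then extract some $r\in\calE(p,q)\setminus\{p,q\}$ with $f(r)<\alpha$ from the full submodularity inequality~\eqref{eq:submodular:b} and recurse through $r$ --- is sound, and as you say the extraction step is the real content, with the pivot being $w_k<1$ (equivalently $\theta_{k-1}>0$). Your justification of this is the right one: $\theta_{k-1}=0$ would force $u_{k-1}\wedge q=u_{k-1}$, i.e.\ $u_{k-1}\preceq q$; then Lemma~\ref{lemma:semilattice:technical}(b) gives $u_{k-1}\wedge p\succeq s$ while $u_{k-1}\preceq q$ gives $u_{k-1}\wedge p\preceq p\wedge q=s$, so $u_{k-1}\wedge p=s$ and $v_{pq}(u_{k-1})=(0,v[s,u_{k-1}])$; since $u_{k-1}\prec q$ makes $v[s,u_{k-1}]<v[s,q]$, this is componentwise strictly dominated by $v_{pq}(q)=(0,v[s,q])$, so $u_{k-1}$ cannot be a \emph{maximal} extreme point of $\Conv I(p,q)$. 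Note the operative property here is maximality (from the definition of $\calE(p,q)$) plus strict positivity of the valuation on covering relations; eq.~\eqref{eq:v:uniqueness} is not actually needed.

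Two smaller points you handle correctly but which are worth keeping explicit: (i) the step that absorbs $f(p)\le\alpha$ silently uses $1-w_0\ge 0$, which holds because $w_0=\theta_0\in[0,1]$; (ii) finiteness of $f(s)$ and of every $f(u_i)$ is needed before the algebra is meaningful, and it follows from $f(p),f(q)\le\alpha<\infty$ together with Theorem~\ref{th:submodularity-characterization}(1) and the submodularity inequality at $p,q$. Termination of the induction is fine since $r\in I(p,q)\setminus\{p,q\}$ gives $d(p,r),d(r,q)<d(p,q)$, and concatenating the two inductive sequences at $r$ produces a sequence all of whose non-initial entries satisfy $f<\alpha$.
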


\begin{lemma}\label{lemma:ALISFHLAGSLASG}
Let $f$ be an $L$-convex function on an extended modular complex $\Gamma$. 
 Consider triplet $(x,y,z)$ and $\alpha\in\mathbb R$ with $x\sqsubseteq y\sqsubseteq z$, $x\not \sqsubseteq z$, $f(x)\le\alpha$, $f(y)<\infty$, $f(z)< \alpha$.
There exists sequence $(u_0,u_1,\ldots,u_k)$ with $(u_0,u_k)=(x,z)$
such that for each $i\in[k]$ we have  $f(u_i)<\alpha$ and $u_{i-1}\sqsubseteq u_i$.
\end{lemma}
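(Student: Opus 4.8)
The plan is to argue by induction on $d(x,z)$ (which equals $r[x,z]$ by Lemma~\ref{lemma:ascending}, since $[x,z]=I(x,z)$). Since $\sqsubseteq$ refines $\preceq$ we have $x\preceq y\preceq z$, and $x=y$ or $y=z$ would force $x\sqsubseteq z$; hence $x\prec y\prec z$ and $d(x,z)\ge 2$, so the base case is vacuous. Note also that any chain as in the statement lies entirely inside the modular lattice $[x,z]$ (Lemma~\ref{lemma:GammaIdeal}), carrying the valuation $a\mapsto\mu(x,a)$. For the inductive step, set $q^\ast:=\bigvee\{a\in[x,z]:x\sqsubseteq a\}$: this set is down-closed by (\ref{def:ebp}a) and closed under existing joins by (\ref{def:ebp}b), hence an ideal of the finite lattice $[x,z]$, so $\{a\in[x,z]:x\sqsubseteq a\}=[x,q^\ast]$ with $x\sqsubseteq q^\ast$. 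Dually, $\{a\in[x,z]:a\sqsubseteq z\}=[p^\ast,z]$ for $p^\ast:=\bigwedge\{a\in[x,z]:a\sqsubseteq z\}$, using (\ref{def:ebp}a),(\ref{def:ebp}c). Since $y$ lies in both sets, $x\prec p^\ast\preceq y\preceq q^\ast\prec z$ (strictness from $x\not\sqsubseteq z$). The crux is the sub-claim $(\star)$: there exists $u\in[x,q^\ast]\setminus\{x\}$ with $f(u)<\alpha$. (This is unavoidable: in any valid chain, each $u_i$ with $i\ge 1$ must lie in $[x,q^\ast]$.)

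Granting $(\star)$, the step is finished as follows. If $u\sqsubseteq z$, return $(x,u,z)$. Otherwise put $w:=u\vee y$, computed in $[x,q^\ast]$ (so $w\preceq q^\ast$ since $u,y\preceq q^\ast$). Then $u\sqsubseteq w$ by (\ref{def:ebp}a) applied in $[x,q^\ast]$ (recall $x\sqsubseteq q^\ast$), and $w\sqsubseteq z$ by (\ref{def:ebp}a) applied in $[y,z]$ (recall $y\sqsubseteq z$ and $y\preceq w\preceq z$). Moreover $f(w)<\infty$: the restriction of $f$ to $\calL^+_x$ is submodular (Theorem~\ref{th:ebp-addition}(f), and $[x,q^\ast]\subseteq\calL^+_x$), $(u,y)$ is a bounded pair with $u,y\in\dom f$, so by condition~(1) of Theorem~\ref{th:submodularity-characterization} we get $w=u\vee y\in\calE(u,y)\subseteq\dom f$. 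Now $(u,w,z)$ satisfies the hypotheses of the lemma, $u\not\sqsubseteq z$, and $d(u,z)<d(x,z)$ (as $u\in I(x,z)$ and $u\ne x$); the induction hypothesis gives an ascending $\sqsubseteq$-chain from $u$ to $z$ with $f<\alpha$ on all terms past the first, and prepending $x$ (using $x\sqsubseteq u$ and $f(x)\le\alpha$) completes the step.

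The main obstacle is $(\star)$, and this is where $L$-convexity, as opposed to mere submodularity of the restrictions $f|_{\calL^\sigma_p}$, is genuinely used. If $f(y)<\alpha$ then $(\star)$ holds trivially with $u=y$, so assume $f(y)\ge\alpha$. I would work in the $2$-subdivision: $[x,y]$ and $[y,z]$ are vertices of $\Gamma^\ast$ lying in the valuated modular semilattice $\calL^\ast_y=\calL^\uparrow_{[y,y]}(\Gamma^\ast)$ (Theorem~\ref{th:subdivision:modular}, Lemma~\ref{lemma:GammaIdeal}), on which $f^\ast$ is submodular by $L$-convexity; also $\mu_{\Gamma^\ast}$ splits as in Corollary~\ref{cor:mu:star}, $f^\ast([x,y])=f(x)+f(y)\le\alpha+f(y)$ and $f^\ast([y,z])=f(z)+f(y)<\alpha+f(y)$. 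Applying Lemma~\ref{lemma:LASKFHASGASFASF} in $\calL^\ast_y$ with threshold $\alpha+f(y)$ yields a $\subseteq$-comparable chain $[x,y]=w_0,\dots,w_m=[y,z]$ with $m\ge 2$ (the endpoints are incomparable) and $f^\ast(w_j)<\alpha+f(y)$ for $j\in[m]$; in particular no $w_j$ with $j\ge1$ can equal $[y,y]$, since $f(y)\ge\alpha$. One then rearranges this chain into ``inward-then-outward'' nesting shape (possible while staying below the threshold) and reads off the left endpoints up to the innermost interval, then the right endpoints, obtaining a $\sqsubseteq$-ascending sequence $x\sqsubseteq a_1\sqsubseteq\cdots\sqsubseteq b_m=z$ inside $[x,z]$ via (\ref{def:ebp}a) applied along nested intervals; the desired $u$ is its first non-trivial term. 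The delicate point — and the part I expect to be the real work — is to upgrade the per-interval bound $f(a_j)+f(b_j)<\alpha+f(y)$ to $f(a_j)<\alpha$ at the relevant index, which one does by locating the index where the right endpoint $b_j$ first drops below $y$ and invoking the admissibility axioms together with submodularity of $f$ on $\calL^+_x$ (resp.\ $\calL^-_z$); this parallels Hirai's treatment in~\cite{Hirai:0ext}.
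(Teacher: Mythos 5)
Your reduction step is correct and pleasantly clean: defining $q^\ast=\bigvee\{a\in[x,z]:x\sqsubseteq a\}$ via (\ref{def:ebp}a)--(\ref{def:ebp}b), showing $w=u\vee y\in\dom f$ by condition~(1) of Theorem~\ref{th:submodularity-characterization} applied to the bounded pair $(u,y)$ in $\calL^+_x$, verifying $u\sqsubseteq w\sqsubseteq z$ from (\ref{def:ebp}a), and recursing on $(u,w,z)$ with measure $d(\cdot,z)$ — all of this is sound, and the recursion measure is simpler than the paper's (which uses $d(x,z)+d(x,y)$).

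However, you have deferred the entire content of the lemma into $(\star)$ and then not proved $(\star)$. Your sketch for it has a concrete flaw in addition to the acknowledged hand-waving: Lemma~\ref{lemma:LASKFHASGASFASF} produces a sequence $w_0,\dots,w_m$ in $\calL^\ast_y$ with consecutive elements comparable, but it gives no control whatsoever on whether the $w_j$ stay inside $I([x,y],[y,z])$; the chain may wander to intervals $[a_j,b_j]$ with $a_j\notin[x,y]$ or $b_j\notin[y,z]$, at which point the ``rearrange into inward-then-outward nesting shape and read off endpoints'' step is not meaningful. Moreover even if the chain did stay inside the metric interval, the two steps you flag — converting a zigzag of nested intervals into a V-shaped one, and upgrading $f(a_j)+f(b_j)<\alpha+f(y)$ to $f(a_j)<\alpha$ — are exactly the content of the lemma and would themselves require a submodularity argument, so nothing has been saved. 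The paper instead applies the single numeric inequality~\eqref{eq:submodular:b} to $p=[x,y]$, $q=[y,z]$ in $\calL^\ast_y$ (using $f^\ast(p)+f^\ast(q)=f^\ast([y,y])+f(x)+f(z)$), extracts one $[a,b]\in\calE(p,q)\setminus\{p,q\}$ with $f(a)+f(b)\le f(x)+f(z)<2\alpha$, and branches directly — with the recursion on $d(x,z)+d(x,y)$ absorbing precisely the troublesome subcase $a\ne x$, $b=z$, $f(a)=f(x)=\alpha$, where your $(\star)$ cannot be concluded from one extraction and a further ``shrink $y$'' recursion is unavoidable. I'd suggest either proving $(\star)$ by its own induction on $d(x,z)+d(x,y)$ following this extraction argument, or abandoning the factorization through $(\star)$ and doing the case split inline as the paper does.
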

\begin{proof}
We denote the desired sequence as $P(x,y,z)$, if exists.
We prove that $P(x,y,z)$ exists using induction on $d(x,z)+d(x,y)$. 
We know that function $f^\ast$ is submodular on $\calL^\ast_y$. Let us apply inequality~\eqref{eq:submodular:b}
to elements $p=[x,y]$ and $q=[y,z]$ (with $p\wedge q=[y,y]$). Since $f^\ast(p) + f^\ast(q)=f(x)+f^\ast(p\wedge q)+f(z)$, 
there must exist $[a,b]\in \calE(p,q)-\{p,q\}$
with $f(a)+f(b)=f^\ast([a,b])\le f(x)+f(z)$.
Since $[a,b]\in I(p,q)$, we must have $a\in[x,y]$, $b\in[y,z]$ by Lemma~\ref{lemma:d:star} (and also $a\sqsubseteq b$).
Condition $x\not \sqsubseteq z$ implies that $(a,b)\ne (x,z)$. Therefore, three cases are possible.
\begin{itemize}
\item $a=x$, $y\prec b\prec z$. Then we have $f(b)\le f(z)< \alpha$ and $x\sqsubseteq b\sqsubseteq z$.
Thus, we can set $P(x,y,z)=(x,b,z)$.

\item $x\prec a\prec y$, $b=z$. Then we have $f(a)\le f(x)\le\alpha$  and $x\sqsubseteq a\sqsubseteq z$.
Thus, we can set $P(x,y,z)=P(x,a,z)$, where we used the induction hypothesis
(note that $d(x,a)<d(x,y)$).

\item $x\prec a\preceq b\prec z$. Since $f(a)+f(b)\le f(x)+f(z)<2\alpha$, one of the following must hold: 

\noindent \underline{$f(a)<\alpha$.} Then we set $P(x,y,z)=(x,a,z)$ if $a\sqsubseteq z$, and $P(x,y,z)=(x,P(a,y,z))$ otherwise.

\noindent \underline{$f(b)<\alpha$.} Then we set $P(x,y,z)=(x,b,z)$ if $x\sqsubseteq b$, and $P(x,y,z)=(P(x,y,b),z)$ otherwise.

\end{itemize}
\end{proof}

\begin{corollary}\label{cor:ALSLFKSAHG}
Consider elements $x,y,z$ in an extended modular complex $\Gamma$ such that $xy,yz\in \Gamma^{\mathsmaller\sqsubset}$,
$xz\notin \Gamma^{\mathsmaller\sqsubset}$, $f(x)\le \alpha$, $f(y)<\infty$, $f(z)<\alpha$.
Then there exists path $(u_0,u_1,\ldots,u_k)$ in $\Gamma^{\mathsmaller\sqsubset}$ with $(u_0,u_k)=(x,z)$ such that $f(u_i)<\alpha$ for all $i\in [k]$.
\end{corollary}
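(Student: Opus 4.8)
The plan is to argue by case analysis on the orientations of the two $\Gamma^{\mathsmaller\sqsubset}$-edges $xy$ and $yz$, i.e.\ on whether $x\sqsubseteq y$ or $y\sqsubseteq x$ and whether $y\sqsubseteq z$ or $z\sqsubseteq y$ (the degenerate case $x=z$ being handled by the length-$0$ path, for which the condition on $f$ is vacuous). Consider first the cases where $y$ dominates both endpoints the same way, say $x\sqsubseteq y$ and $z\sqsubseteq y$. Then $x,z\in\calL^-_y$, which by Lemma~\ref{lemma:GammaIdeal:ebp} is a modular semilattice, and by Lemma~\ref{lemma:restriction-submodular} the restriction of $f$ to it is submodular. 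I would apply Lemma~\ref{lemma:LASKFHASGASFASF} with $p=x$, $q=z$ (using $f(x)\le\alpha$, $f(z)<\alpha$) to get a sequence $x=u_0,u_1,\ldots,u_k=z$ with $f(u_i)<\alpha$ for $i\in[k]$ and with consecutive terms $\preceq$-comparable inside $\calL^-_y$; for each such pair, say $u_{i-1}\preceq u_i$, both members lie in $[u_{i-1},y]$ with $u_{i-1}\sqsubseteq y$, so Definition~\ref{def:ebp}(a) upgrades comparability to $u_{i-1}\sqsubseteq u_i$, making the sequence a path in $\Gamma^{\mathsmaller\sqsubset}$. The dual case $y\sqsubseteq x$, $y\sqsubseteq z$ is symmetric, working in $\calL^+_y$.

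Next, the ``chain-up'' case $x\sqsubseteq y\sqsubseteq z$ is literally Lemma~\ref{lemma:ALISFHLAGSLASG} (note $xz\notin\Gamma^{\mathsmaller\sqsubset}$ forces $x\not\sqsubseteq z$), whose output sequence is a $\Gamma^{\mathsmaller\sqsubset}$-path since $u_{i-1}\sqsubseteq u_i$. The remaining ``chain-down'' case $z\sqsubseteq y\sqsubseteq x$ is the delicate one: Lemma~\ref{lemma:ALISFHLAGSLASG} cannot be invoked verbatim, because it places the strict bound $f(\cdot)<\alpha$ on the \emph{top} of the chain, whereas here the strict bound sits on $z$ at the bottom. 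I would instead re-run the proof of Lemma~\ref{lemma:ALISFHLAGSLASG}, applying the submodular inequality~\eqref{eq:submodular} in $\calL^\ast_y$ to the pair $p=[y,x]$ and $q=[z,y]$ (both are in $\calL^\ast_y$). By Lemma~\ref{lemma:d:star} one still has $p\wedge q=[y,y]$, so the key identity $f^\ast(p)+f^\ast(q)=f(x)+f(z)+f^\ast(p\wedge q)$ driving that proof remains valid, now with $x$ playing the role of the non-strict endpoint ($f(x)\le\alpha$, attached to $p$) and $z$ the strict one ($f(z)<\alpha$, attached to $q$); thus the strictness bookkeeping lines up and the recursion yields a path from $x$ to $z$. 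Concretely: if $(p,q)$ is antipodal, the $\wedge$-convexity inequality~\eqref{eq:submodular:antipodal} gives $f(y)<\alpha$ (using $\mu(y,z)>0$) and the path $(x,y,z)$ works; otherwise one extracts $[\gamma,\beta]\in\calE(p,q)-\{p,q\}$ with $f(\gamma)+f(\beta)\le f(x)+f(z)$, where $\gamma\in[z,y]$, $\beta\in[y,x]$, $\gamma\sqsubseteq\beta$ (and $\gamma,\beta\in\dom f$ since $\calE(p,q)\subseteq\dom f^\ast$), and one splits into boundary cases (e.g.\ $\beta=x$ forces $f(\gamma)\le f(z)<\alpha$, and $(x,\gamma,z)$ works since $z\sqsubseteq\gamma\sqsubseteq x$ by Definition~\ref{def:ebp}(a)) versus interior cases that recurse on a strictly smaller value of $d(z,x)+d(x,y)$, the new middle vertex $\beta$ or $\gamma$ again sitting in a chain-down configuration by Definition~\ref{def:ebp}(a) and $d$-additivity along $z\preceq y\preceq x$.

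I expect the chain-down case to be the main obstacle: it is the one spot where Lemma~\ref{lemma:ALISFHLAGSLASG} is not reusable as a black box, so one has to re-derive its recursion in the reversed setup and check that every sub-case closes, in particular verifying the envelope structure of $I([y,x],[z,y])$ inside $\calL^\ast_y$, tracking the strict-vs-nonstrict inequality at the endpoint $x$, and exhibiting a decreasing termination measure. All other cases reduce cleanly to Lemmas~\ref{lemma:LASKFHASGASFASF} and~\ref{lemma:ALISFHLAGSLASG} together with the admissibility axioms of Definition~\ref{def:ebp}. This three-vertex statement is precisely the local ``shortcut'' that later lets one contract a $\Gamma^{\mathsmaller\sqsubset}$-path from $p$ to a cheaper point into one that drops below $f(p)$ at its first step, which is how Theorem~\ref{th:Lopt} is concluded.
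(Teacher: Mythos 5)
Your proof is correct and closely parallels the paper's, using the same two auxiliary lemmas (Lemma~\ref{lemma:LASKFHASGASFASF} for the ``both-above'' and ``both-below'' orientations of $y$, Lemma~\ref{lemma:ALISFHLAGSLASG} for the chain cases); your explicit upgrade of the $\preceq$-comparability from Lemma~\ref{lemma:LASKFHASGASFASF} to $\sqsubseteq$-edges via condition~(\ref{def:ebp}a) correctly supplies a step the paper leaves implicit. Where you diverge is the chain-down case $z\sqsubset y\sqsubset x$. You are right that Lemma~\ref{lemma:ALISFHLAGSLASG} cannot be applied to the triple $(z,y,x)$ verbatim, since that would require $f(x)<\alpha$ rather than $f(x)\le\alpha$, and your reversed re-derivation in $\calL^\ast_y$ with $p=[y,x]$, $q=[z,y]$ is sound in outline. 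But there is a shorter route, and it is what the paper's ``modulo symmetry'' is pointing at: if $(\Gamma,\sqsubseteq)$ is an extended modular complex, then so is the reversed pair $(\Gamma^{\mathrm{op}},\sqsupseteq)$ obtained by flipping all edge orientations and the relation $\sqsubseteq$ — conditions~(\ref{def:ebp}a)--(\ref{def:ebp}c) are self-dual, with (b) and (c) swapped — and $L$-convexity is invariant under this reversal because $(\Gamma^{\mathrm{op}})^\ast$ is isomorphic to $\Gamma^\ast$ as a weighted poset and $f^\ast([p,q])=f(p)+f(q)$ is symmetric in its two coordinates. In the reversed complex the chain-down configuration reads $x\sqsubseteq^{\mathrm{op}} y\sqsubseteq^{\mathrm{op}} z$ with $x\not\sqsubseteq^{\mathrm{op}} z$, so Lemma~\ref{lemma:ALISFHLAGSLASG} applies as stated, and its $\sqsubseteq^{\mathrm{op}}$-monotone output is still a path in $\Gamma^{\mathsmaller\sqsubset}$ (an undirected graph, hence unchanged by reversal). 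This duality saves you from the termination and envelope bookkeeping you flagged as the delicate part of your approach; your re-derivation would go through, but it is re-proving a dual statement rather than invoking the existing one.
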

\begin{proof}
Modulo symmetry, two cases are possible:
\begin{itemize}
\item $x\sqsupset y\sqsubset z$. The claim then follows from Lemma~\ref{lemma:LASKFHASGASFASF},
since function $f$ is submodular on $\calL_y^+$ by Lemma~\ref{lemma:restriction-submodular}.
\item $x\sqsubset y\sqsubset z$. The claim then follows from Lemma~\ref{lemma:ALISFHLAGSLASG}. \qedhere
\end{itemize}
\end{proof}

We now proceed with the proof of Theorem~\ref{th:Lopt}. 
For value $\alpha\in\mathbb R$
let $\Gamma^{\mathsmaller\sqsubset}_{\alpha}$ and $\Gamma^{\mathsmaller\sqsubset}_{<\alpha}$ be the subgraphs of $\Gamma^{\mathsmaller\sqsubset}$ induced by
nodes $p$ with $f(p)\le \alpha$ and with $f(p)< \alpha$, respectively.
\begin{lemma}\label{lemma:levelset:connectivity}
$\Gamma^{\mathsmaller\sqsubset}_{\alpha}$ is connected for any $\alpha\ge \alpha^\ast\eqdef \min_{p\in\Gamma} f(p)$.
\end{lemma}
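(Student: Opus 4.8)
The plan is to prove the equivalent statement: any two vertices $x,z$ with $f(x),f(z)\le\alpha$ are joined by a path in $\Gamma^{\mathsmaller\sqsubset}$ all of whose vertices have $f$-value $\le\alpha$ (nonemptiness of $\Gamma^{\mathsmaller\sqsubset}_\alpha$ is immediate from $\alpha\ge\alpha^\ast$). Since $\Gamma^{\mathsmaller\sqsubset}_\alpha\subseteq\dom f$ and $\dom f$ is connected in $\Gamma^{\mathsmaller\sqsubset}$ (this is part of the definition of ``$L$-convex''), there is at least one $x$-$z$ path in $\Gamma^{\mathsmaller\sqsubset}$ with every vertex in $\dom f$; loop-erasing such a path gives a simple one, so among all simple $x$-$z$ paths $(u_0,\dots,u_k)$ in $\Gamma^{\mathsmaller\sqsubset}$ with each $u_i\in\dom f$ I would fix one minimizing, in lexicographic order, the pair $(M,c)$ where $M=\max_i f(u_i)$ and $c=|\{i:f(u_i)=M\}|$ (loop-erasure does not increase this pair, so the minimum exists). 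It then suffices to show $M\le\alpha$.

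\textbf{The reduction step.} Suppose $M>\alpha$. Pick a maximal run $u_a,u_{a+1},\dots,u_b$ of consecutive vertices with $f=M$; since $f(u_0),f(u_k)\le\alpha<M$ we have $1\le a\le b\le k-1$, and by maximality of the run and of $M$ along the path, $f(u_{a-1}),f(u_{b+1})<M$. The goal is to replace the segment $(u_{a-1},u_a,\dots,u_b,u_{b+1})$ by a walk in $\Gamma^{\mathsmaller\sqsubset}$ from $u_{a-1}$ to $u_{b+1}$ with no vertex of value exceeding $M$ and with \emph{strictly fewer} than $b-a+1$ vertices of value $M$; after loop-erasure this yields a path of strictly smaller lexicographic measure, a contradiction. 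The engines are Corollary~\ref{cor:ALSLFKSAHG} and Lemmas~\ref{lemma:LASKFHASGASFASF} and~\ref{lemma:ALISFHLAGSLASG}, applied with the threshold $M$ in place of their parameter $\alpha$. A useful preliminary observation is that, by Definition~\ref{def:ebp}(a), $p\sqsubseteq t$ together with $p\preceq s\preceq t$ forces $p\sqsubseteq s$; consequently, whenever a sequence lies entirely inside some $\calL^\sigma_p$ and consecutive terms are $\preceq$-comparable they are automatically $\sqsubseteq$-comparable, so the ``comparable-step'' sequences produced by Lemma~\ref{lemma:LASKFHASGASFASF} are genuine paths in $\Gamma^{\mathsmaller\sqsubset}$.

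\textbf{Eroding the run; the main obstacle.} The erosion peels off one endpoint of the run, say $u_a$, by connecting $u_{a-1}$ to $u_{a+1}$ along a walk that avoids $u_a$ and, except possibly at $u_{a+1}$ itself, stays below $M$; the case analysis is on the $\sqsubseteq$-directions of the two edges $u_{a-1}\text{-}u_a$ and $u_a\text{-}u_{a+1}$. If $u_a$ is a local $\sqsubseteq$-source or $\sqsubseteq$-sink of this triple, then $u_{a-1},u_{a+1}$ both lie in $\calL^-_{u_a}$ or both in $\calL^+_{u_a}$, and Lemma~\ref{lemma:LASKFHASGASFASF} inside that semilattice does the job, the needed strict inequality being supplied by the flank $f(u_{a-1})<M$; if the triple is a descending chain $u_{a+1}\sqsubset u_a\sqsubset u_{a-1}$, then either $u_{a+1}\sqsubseteq u_{a-1}$ gives a direct shortcut or Lemma~\ref{lemma:ALISFHLAGSLASG} applies (flank at the ``$z$''-position, again giving strictness). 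The remaining configuration is the ascending chain $u_{a-1}\sqsubset u_a\sqsubset u_{a+1}$: there one shortcuts when $u_{a-1}\sqsubseteq u_{a+1}$, and when the run has length $1$ (so $u_{a+1}=u_{b+1}$ and $f(u_{a+1})<M$) Lemma~\ref{lemma:ALISFHLAGSLASG} applies with threshold $M$ directly --- this already settles length-$1$ runs completely. For longer runs the ascending-chain configuration is the genuine difficulty, and I expect it to be the hard part: the descent lemmas all need a strict inequality, which is available only at the low flanks, not in the interior of a plateau of value $M$. One must argue that not both ends of a run of length $\ge2$ can be ``stuck'' in this way --- e.g.\ $u_a\sqsubset u_{a+1}$ and $u_b\sqsubset u_{b-1}$ yield a $\prec$-contradiction when the run is short --- and dispose of the residual ``induced ascending chain'' case by a finer structural analysis; it is reassuring that for the two relations of primary interest this case is automatically tame, since for $\sqsubseteq\,=\,\preceq$ an ascending chain of length $\ge2$ always has a transitive shortcut, and for $\sqsubseteq\,=\,\BPrel$ it is controlled via Proposition~\ref{prop:complemented}. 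Once the reduction step is available for arbitrary runs, minimality of the chosen path forces $M\le\alpha$, so $x$ and $z$ are joined inside $\Gamma^{\mathsmaller\sqsubset}_\alpha$, which proves the lemma and feeds into the proof of Theorem~\ref{th:Lopt}.
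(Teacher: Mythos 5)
You take a genuinely different route from the paper, and as written it has a gap. The paper avoids plateau erosion entirely: it picks a transition threshold $\alpha$ at which $\Gamma^{\mathsmaller\sqsubset}_{\alpha}$ is connected but $\Gamma^{\mathsmaller\sqsubset}_{<\alpha}$ is not, then minimizes the \emph{length} of a path joining two components of $\Gamma^{\mathsmaller\sqsubset}_{<\alpha}$ inside $\Gamma^{\mathsmaller\sqsubset}_{\alpha}$; the terminal triple $(p_{k-2},p_{k-1},p_k)$ of such a minimal path automatically has $f(p_k)<\alpha$ at the endpoint, so a single appeal to Corollary~\ref{cor:ALSLFKSAHG} finishes the argument with no case analysis. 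Your approach minimizes the pair $(M,c)$ and peels plateaus at a flank, which in principle also works, but puts you in the position of having to supply a strict inequality at each erosion step.

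The gap is in the erosion step. You declare the ascending configuration $u_{a-1}\sqsubset u_a\sqsubset u_{a+1}$ with $f(u_{a+1})=M$ to be ``the genuine difficulty'' and defer it to ``a finer structural analysis'', resolving it only for run length $1$ and for the special relations $\preceq$ and $\BPrel$. But this case is handled by exactly the same tool as the others: Corollary~\ref{cor:ALSLFKSAHG} --- which you list among your engines --- makes \emph{no} hypothesis on the $\sqsubseteq$-directions of the two edges; it only asks that $f(x)\le\alpha$, $f(y)<\infty$, $f(z)<\alpha$ and $xz\notin\Gamma^{\mathsmaller\sqsubset}$. So apply it with $(x,y,z)=(u_{a+1},u_a,u_{a-1})$ and $\alpha=M$: the low flank $u_{a-1}$ sits in the $z$-slot where strictness is required, and $f(u_{a+1})\le M$ is enough for the $x$-slot. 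You instead fixed $(x,y,z)=(u_{a-1},u_a,u_{a+1})$ to match the ascending orientation of Lemma~\ref{lemma:ALISFHLAGSLASG}, which forces the $=M$ vertex into the $z$-slot and gets stuck; the corollary already covers the orientation $x\sqsupset y\sqsupset z$ ``modulo symmetry'' (i.e., by $\sqsubseteq$-reversal of $\Gamma$). With that one correction the case split collapses and your extremal argument closes; without it, the proposal does not establish the lemma for a general admissible relation $\sqsubseteq$.
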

\begin{proof}
Suppose the claim is false. For a sufficiently large $\alpha$ graph $\Gamma^{\mathsmaller\sqsubset}_{\alpha}$ is connected,
since $\dom f$ is connected in $\Gamma^{\mathsmaller\sqsubset}$.
Thus, there must exist $\alpha > \alpha^\ast$ such that $\Gamma^{\mathsmaller\sqsubset}_{\alpha}$ is connected
but $\Gamma^{\mathsmaller\sqsubset}_{<\alpha}$ is disconnected.
There must exist a pair of vertices $p,p'$ belonging to different components of $\Gamma^{\mathsmaller\sqsubset}_{<\alpha}$;
in particular, $f(p)<\alpha$ and $f(p')<\alpha$, and there exists path $P=(p_0,p_1,\ldots,p_k)$ in $\Gamma^{\mathsmaller\sqsubset}$
with $(p_0,p_k)=(p,p')$, $k\ge 2$ and $f(p_i)\le \alpha$ for all $i\in[k-1]$. Pick such $p,p',P$ so that $k\ge 2$ is minimum.
The minimality of $k$ implies that $p_{k-2}p_k\notin\Gamma^{\mathsmaller\sqsubset}$.
Apply Corollary~\ref{cor:ALSLFKSAHG} to elements $(x,y,z)=(p_{k-2},p_{k-1},p_k)$.
We obtain a path $(u_0,\ldots,u_\ell)$ in $\Gamma^{\mathsmaller\sqsubset}$ between $u_0=p_{k-2}$ and $u_\ell=p_k$
with $f(u_i)<\alpha$ for all $i\in[\ell]$. Note that $u_1,p'$ are in the same connected component of $\Gamma^{\mathsmaller\sqsubset}_{<\alpha}$
(which is different from that of $p$), and path $(p_0,\ldots,p_{k-2},u_1)$ has shorter length compared to $P$.
This contradicts the minimality of~$k$.
\end{proof}

Now consider element $p\in\Gamma$ which is not a global minimum, i.e.\ $f(p)>\alpha^\ast$.
Connectivity of $\Gamma^{\mathsmaller\sqsubset}_{\alpha}$ for $\alpha=f(p)$ implies
that there exists path $P=(p_0,p_1,\ldots,p_k)$ in $\Gamma^{\mathsmaller\sqsubset}$
with $p_0=p$, $f(p_i)\le \alpha$ for $i\in[k]$, and $f(p_k)<\alpha$.
Let us pick such $P$ so that $k$ is minimum. It suffices to show that $k=1$; clearly, this would imply Theorem~\ref{th:Lopt}.
Suppose not, i.e.\ $k\ge 2$. The minimality of $k$ implies that $p_{k-2}p_k\notin \Gamma^{\mathsmaller\sqsubset}$.
Applying Corollary~\ref{cor:ALSLFKSAHG} to elements $(x,y,z)=(p_{k-2},p_{k-1},p_k)$
gives element $u_1$ so that $f(u_1)<\alpha$ and $p_{k-2}u_1\in \Gamma^{\mathsmaller\sqsubset}$.
This contradicts the minimality of $k$.

We can strengthen Theorem~\ref{th:Lopt} as follows.
\begin{lemma}\label{lemma:Lopt:pq}
Let $f:V\rightarrow\overline{\mathbb R}$ be an $L$-convex function on an extended modular complex~$\Gamma$
and $p,q$ be distinct elements in $\dom f$.
{\rm (a)}~If $f(p)\ge f(q)$ then there exists element $u\in I(p,q)$ such that $f(u)\le f(p)$ and $pu\in \Gamma^{\mathsmaller\sqsubset}$.
Additionally, if $f(p)> f(q)$ then $f(u)<f(p)$.
{\rm (b)}~There exists element $u\in I(p,q)\cap\dom f$ such that $pu\in \Gamma^{\mathsmaller\sqsubset}$.
\end{lemma}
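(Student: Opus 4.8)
The statement is a strengthening of Theorem~\ref{th:Lopt}, and the plan is to reduce both parts to the sublevel‑set connectivity already proved for that theorem (Lemma~\ref{lemma:levelset:connectivity}), applied not to $f$ but to a penalized function whose low sublevel sets are forced into $I(p,q)$. Fix distinct $p,q\in\dom f$ and set $g=f+N\mu_{\Gamma,p}+N\mu_{\Gamma,q}$, where $\mu_{\Gamma,p}(x)=\mu_\Gamma(x,p)$, $\mu_{\Gamma,q}(x)=\mu_\Gamma(x,q)$, and $N>0$ is a constant to be calibrated. By Lemma~\ref{lemma:mu-is-Lconvex}(b) and Theorem~\ref{th:ebp-addition}(a), $g$ is $L$-convex on $\Gamma$, and $\dom g=\dom f$ since $\mu_\Gamma$ is finite on the connected graph $\Gamma$. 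Since $\mu_\Gamma(x,p)+\mu_\Gamma(x,q)=\mu_\Gamma(p,q)$ exactly when $x\in I(p,q)$, the restriction of $g$ to $I(p,q)\cap\dom f$ equals $f+N\mu_\Gamma(p,q)$, i.e.\ a constant shift of $f$ there.

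Next I would calibrate $N$. As $V$ is finite, either $\dom f\setminus I(p,q)=\varnothing$ (and any $N$ works), or $\delta_0:=\min\{\mu_\Gamma(x,p)+\mu_\Gamma(x,q)-\mu_\Gamma(p,q):x\in\dom f\setminus I(p,q)\}$ is a positive real; also $f$ is bounded on the finite set $\dom f$. Choosing $N>(\max_{\dom f}f-\min_{\dom f}f)/\delta_0$ guarantees $g(y)<g(x)$ for all $y\in I(p,q)\cap\dom f$ and $x\in\dom f\setminus I(p,q)$. Consequently, for every $\alpha\le\max_{y\in I(p,q)\cap\dom f}g(y)$ the set $\{x\in V:g(x)\le\alpha\}$ is contained in $I(p,q)\cap\dom f$, and on it $g$ and $f$ differ by the constant $N\mu_\Gamma(p,q)$.

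With this in place, part~(b) is immediate: take $\alpha=\max_{y\in I(p,q)\cap\dom f}g(y)$, so $\{g\le\alpha\}=I(p,q)\cap\dom f$; since $p$ lies in it we have $\alpha\ge\min_\Gamma g$, so Lemma~\ref{lemma:levelset:connectivity} applied to the $L$-convex function $g$ shows this set is connected in $\Gamma^{\mathsmaller\sqsubset}$; as it contains the two distinct vertices $p,q$, vertex $p$ has a neighbour $u$ there, and $u\in I(p,q)\cap\dom f$ with $pu\in\Gamma^{\mathsmaller\sqsubset}$. For the first assertion of part~(a) assume $f(p)\ge f(q)$ and put $\beta=g(p)=f(p)+N\mu_\Gamma(p,q)$; then $\{g\le\beta\}\subseteq I(p,q)\cap\dom f$, on which $g\le\beta$ is equivalent to $f\le f(p)$, so $p,q\in\{g\le\beta\}$; since $\beta\ge\min_\Gamma g$, this set is connected in $\Gamma^{\mathsmaller\sqsubset}$, and $p$ has a neighbour $u$ there with $u\in I(p,q)$, $f(u)\le f(p)$, $pu\in\Gamma^{\mathsmaller\sqsubset}$.

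Finally, to obtain the strict inequality in~(a) when $f(p)>f(q)$ I would mimic the end of the proof of Theorem~\ref{th:Lopt}: now $g(q)<\beta$, so using connectivity of $\{g\le\beta\}$ pick a $\Gamma^{\mathsmaller\sqsubset}$-path $(p_0,\dots,p_m)$ of minimum length with $p_0=p$, $g(p_i)\le\beta$ for all $i$, and $g(p_m)<\beta$; if $m\ge2$ then minimality forces $p_{m-2}p_m\notin\Gamma^{\mathsmaller\sqsubset}$, and Corollary~\ref{cor:ALSLFKSAHG} applied to $g$ and the triple $(p_{m-2},p_{m-1},p_m)$ with threshold $\beta$ produces a $\Gamma^{\mathsmaller\sqsubset}$-path from $p_{m-2}$ to $p_m$ along which $g<\beta$ throughout; replacing the tail of $(p_0,\dots,p_m)$ by the second vertex of that path yields a strictly shorter admissible path, a contradiction, so $m=1$ and $u=p_1$ works, with $g(u)<\beta$ (hence $f(u)<f(p)$), $u\in I(p,q)$, $pu\in\Gamma^{\mathsmaller\sqsubset}$. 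The only genuinely new ingredient beyond Theorem~\ref{th:Lopt} is the penalty $g$ together with the calibration of $N$; the one mildly delicate point is precisely that calibration, which uses the finiteness of $V$ (hence boundedness of $f$ on $\dom f$ and positivity of the gap $\delta_0$), while everything else reuses Lemmas~\ref{lemma:levelset:connectivity} and~\ref{lemma:restriction-submodular}, Corollary~\ref{cor:ALSLFKSAHG}, and the path‑straightening argument already in place.
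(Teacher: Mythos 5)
Your proof is correct and follows essentially the same strategy as the paper: penalize $f$ by a large multiple of $\mu_{\Gamma,p}+\mu_{\Gamma,q}$ (the paper uses $f_C(u)=f(u)+C(\mu_p(u)+\mu_q(u)-\mu(p,q))$, a constant shift of your $g$) so that low sublevel sets are forced into $I(p,q)$, then invoke the level-set connectivity of Lemma~\ref{lemma:levelset:connectivity} for the weak inequality and Theorem~\ref{th:Lopt} (which you re-derive inline) for the strict one. The only cosmetic differences are that the paper proves part~(b) by reducing to part~(a) via an added term $C\mu_q$, whereas you apply the connectivity argument to $g$ directly with threshold $\max_{I(p,q)\cap\dom f} g$; both work.
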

\begin{proof}
\myparagraph{(a)}
For element $x\in\Gamma$ define function $\mu_x:\Gamma\rightarrow\mathbb R$ via $\mu_x(u)=\mu(x,u)$.
By Lemma~\ref{lemma:mu-is-Lconvex}, 
function $\mu_x$ is $L$-convex on $\Gamma$. Now for value $C\ge 0$ define function $f_C:\Gamma\rightarrow\overline{\mathbb R}$ via $f_C(u)=f(u)+C(\mu_p(u)+\mu_q(u)-\mu(p,q))$.
It is straightforward to check that $f_C$ is $L$-convex on $\Gamma$. (Note that $\dom f_C=\dom f$, since functions $\mu_p,\mu_q$ are finite-valued).
We have the following properties: (i)~$f_C(u)=f(u)$ for all $u\in I(p,q)$;
(ii)~if $C$ is sufficiently large then $f_C(u)>f(p)$ for all $u\in V_\Gamma-I(p,q)$.
Applying Lemma~\ref{lemma:levelset:connectivity} to function $f_C$ gives the first claim,
while applying Theorem~\ref{th:Lopt} to function $f_C$ gives the second claim.

\myparagraph{(b)} We can assume w.l.o.g.\ that $f(p)>f(q)$, since adding function of the form $C\mu_q$, $C\ge 0$ to~$f$
preserves $L$-convexity of $f$ and does not affect the statement. The claim now follows from part~(a).
\end{proof}


\subsection{Graph thickening and operations $\UP,\DOWN,\diamond$}\label{sec:NormalPath}

Let us recall some definitions from Sections~\ref{sec:background}-\ref{sec:ebp}.
Elements $p,q$ of an extended modular complex $\Gamma$ are said to be $\DELTANEIB$-neighbors if $p,q\in[a,b]$ for some $a,b$ with $a\sqsubseteq b$.
Equivalently, $p,q$ are $\DELTANEIB$-neighbors if $a\wedge b,a\vee b$ exist and $a\wedge b\sqsubseteq a\vee b$.
Let $\DELTA{\Gamma}$ be an undirected unweighted graph on nodes $V_\Gamma$ such that $p,q$ are connected by an edge in $\DELTA{\Gamma}$ if and only if $p,q$ are $\Delta$-neighbors.
This graph is called a {\em thickening of $\Gamma$}. 
A path in $\DELTA{\Gamma}$ is called a {\em $\DELTANEIB$-path}.
The shortest path distance between $p$ and $q$ in $\DELTA{\Gamma}$ will be denoted as $\DELTA{d}(p,q)$.
These concepts were introduced in~\cite{Chalopin}
in the case when $\sqsubseteq\,=\BPrel$; we now use them for an arbitrary admissible relation $\sqsubseteq$.
In order to work with $\DELTA{\Gamma}$, in this section we introduce binary operations $\UP,\DOWN,\diamond$ on an extended modular complex $\Gamma$ and establish some of their properties.
They will be used, in particular, for proving Theorem~\ref{th:SDA}
(bound on the number of steps of $\DELTANEIB$-SDA), and for proving that the sum of $L$-convex functions is $L$-convex.

For vertices $p,q\in\Gamma$ let $\preceq_{pq}$ be a partial order on $I(p,q)$ defined as follows:
$x\preceq_{pq} y$ iff $(p,x,y,q)$ is a shortest subpath. For vertices $x,y\in I(p,q)$
this is equivalent to the condition $x\in I(p,y)$, or to the condition $y\in I(x,q)$.
Clearly, $(I(p,q),\preceq_{pq})$ is a poset with the minimal element $p$ and the maximal element $q$.
We will need the following result.
\begin{theorem}[{\cite[Theorem 6.1]{Chalopin}}]\label{th:Ipq:modular-lattice}
Let $\Gamma$ be a modular complex. For every $p,q\in\Gamma$ poset $\mbox{$(I(p,q),\preceq_{pq})$}$ is a modular lattice.
\end{theorem}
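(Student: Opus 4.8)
**Proof plan for Theorem~\ref{th:Ipq:modular-lattice} (as it appears here, i.e. the statement that $(I(p,q),\preceq_{pq})$ is a modular lattice for a modular complex $\Gamma$).**

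The plan is to reduce the claim to a statement about the valuated modular semilattices already furnished by Lemma~\ref{lemma:GammaIdeal} and Lemma~\ref{lemma:semilattice:technical}, together with the local characterization of modularity in Lemma~\ref{lemma:modular:characterization}. First I would fix $p,q$ and set $s=p\wedge q$, which exists because $p,q$ are lower-bounded (any median of $p,q$ works as a lower bound, or invoke Lemma~\ref{lemma:GammaIdeal}(a) after passing to a suitable ideal). By Lemma~\ref{lemma:semilattice:technical}(b), every $u\in I(p,q)$ has a unique decomposition $u=(u\wedge p)\vee(u\wedge q)$ with $(u\wedge p,u\wedge q)\in[s,p]\times[s,q]$, and this gives an order-embedding $\Psi\colon I(p,q)\to[s,p]\times[s,q]$, $u\mapsto(u\wedge p,u\wedge q)$. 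The key point is that $\Psi$ is a bijection onto its image and that $\preceq_{pq}$ corresponds under $\Psi$ to the product order on $[s,p]\times[s,q]$: indeed $x\preceq_{pq}y$ means $(p,x,y,q)$ is a shortest subpath, which by Lemma~\ref{lemma:semilattice:technical}(a) and additivity of the valuation translates into $x\wedge p\preceq y\wedge p$ and $x\wedge q\preceq y\wedge q$. Since $[s,p]$ and $[s,q]$ are modular lattices (intervals in the modular semilattices $\calL_p^\downarrow,\calL_q^\downarrow$, or directly by Lemma~\ref{lemma:GammaIdeal}(b)), their product is a modular lattice, so it suffices to show that $\Psi$ is surjective, i.e. that every pair $(a,b)\in[s,p]\times[s,q]$ arises as $\Psi(u)$ for some $u\in I(p,q)$ — but this is exactly the "existence" half of Lemma~\ref{lemma:semilattice:technical}(b), since $a\vee b$ exists ($a$ and $b$ are both $\preceq$-below an element on a shortest $p$–$q$ route... ) and lies in $I(p,q)$. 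Hence $\Psi$ is a poset isomorphism onto a modular lattice, and we are done.

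The one subtlety I would be careful about is justifying that $a\vee b$ exists for arbitrary $(a,b)\in[s,p]\times[s,q]$ and that $\Psi(a\vee b)=(a,b)$; this is precisely the content of Lemma~\ref{lemma:semilattice:technical}(b), which asserts that the map $(a,b)\mapsto a\vee b$ is a well-defined bijection $[s,p]\times[s,q]\to I(p,q)$ inverse to $\Psi$. So in fact the entire argument is: (i) $\Psi$ and its stated inverse are mutually inverse bijections by Lemma~\ref{lemma:semilattice:technical}(b); (ii) both maps are order-preserving for $\preceq_{pq}$ on $I(p,q)$ and the product order on $[s,p]\times[s,q]$, which follows from the valuation being additive along shortest subpaths (Lemma~\ref{lemma:semilattice:technical}(a) and Lemma~\ref{lemma:GammaIdeal}(b)); (iii) $[s,p]\times[s,q]$ is a modular lattice because a product of modular lattices is modular and each factor is a modular lattice by Lemma~\ref{lemma:GammaIdeal}(b); (iv) therefore $I(p,q)$ with $\preceq_{pq}$ is a modular lattice.

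The main obstacle I anticipate is step (ii): carefully verifying that $x\preceq_{pq}y$ is \emph{equivalent} to $x\wedge p\preceq y\wedge p$ \emph{and} $x\wedge q\preceq y\wedge q$, rather than merely implied by it. The forward direction is direct from $I(p,x)\subseteq I(p,y)$ plus Lemma~\ref{lemma:semilattice:technical}. For the reverse direction I would compute $\mu(p,x)+\mu(x,y)+\mu(y,q)$ using part (a) of Lemma~\ref{lemma:semilattice:technical} applied to the pairs $(p,x)$, $(x,y)$, $(y,q)$ and the decompositions given by $\Psi$, and check it telescopes to $\mu(p,q)=v[s,p]+v[s,q]$; this forces $(p,x,y,q)$ to be a shortest subpath. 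A minor point is confirming $s=p\wedge q$ is the meet in the \emph{global} poset and that the intervals $[s,p],[s,q]$ coincide whether computed in $\calL_p^\downarrow$, in $\calL_s^\uparrow$, or in $\Gamma$ itself — this follows from convexity of these semilattices in $\Gamma$ (Lemma~\ref{lemma:GammaIdeal}(b)). Everything else is bookkeeping.
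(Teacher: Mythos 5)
The paper does not prove Theorem~\ref{th:Ipq:modular-lattice}; it is imported as a black box from Chalopin, Chepoi, Hirai and Osajda (cited as \cite[Theorem 6.1]{Chalopin}). So there is no in-paper proof to compare against, but your proposal can still be assessed on its own terms, and it has a gap that is fatal rather than cosmetic.

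Your entire construction is anchored on $s=p\wedge q$, and you invoke Lemma~\ref{lemma:semilattice:technical} to build the map $\Psi:I(p,q)\to[s,p]\times[s,q]$. But Lemma~\ref{lemma:semilattice:technical} is stated for a valuated modular \emph{semilattice} $\calL$, in which the meet of any two elements exists by definition. Theorem~\ref{th:Ipq:modular-lattice} is about a modular \emph{complex} $\Gamma$, and there two elements $p,q$ need not be lower-bounded at all, so $p\wedge q$ need not exist. Your parenthetical justification — ``any median of $p,q$ works as a lower bound'' — is not correct: a median of $p,q,z$ lies in $I(p,q)\cap I(p,z)\cap I(q,z)$ and is in general incomparable with $p$ and $q$; it gives no common lower bound. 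A concrete counterexample: orient the path $a$–$b$–$c$–$d$ as $a\to b\leftarrow c\to d$. This is a modular complex (a tree with a trivially admissible orientation), $I(a,d)=\{a,b,c,d\}$ is a $4$-chain under $\preceq_{ad}$ and hence a modular lattice, but neither $a\wedge d$ nor $a\vee d$ exists, so $\Psi$ is undefined and Lemma~\ref{lemma:semilattice:technical} simply does not apply. Falling back on Lemma~\ref{lemma:GammaIdeal}(a) does not help either, since that lemma \emph{presupposes} that the pair is lower-bounded. So the argument does not get off the ground except in the special case where $p,q$ happen to have a common lower (or upper) bound, in which case one could work inside $\calL_r^\uparrow$ (or $\calL_r^\downarrow$) as you suggest.

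There is also a secondary sign error worth flagging even in the lower-bounded case: $\Psi$ sends $p\mapsto(p,s)$ and $q\mapsto(s,q)$, and since $p$ is the minimum of $(I(p,q),\preceq_{pq})$ while $q$ is the maximum, the correspondence is with the \emph{order-reversed} first factor, i.e.\ $[s,p]^{\mathrm{op}}\times[s,q]$, not the plain product order as you wrote (``$x\wedge p\preceq y\wedge p$'' should be ``$x\wedge p\succeq y\wedge p$''). This alone would not sink the proof — the opposite of a modular lattice is modular — but it indicates the order-translation step was not actually checked. In short, the proposal conflates the hypotheses of Lemma~\ref{lemma:semilattice:technical} (a modular semilattice) with those of the theorem (a modular complex), and the missing meet is precisely where the hard work in \cite{Chalopin} lies; it cannot be waved away as a ``minor point.''
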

We denote the meet and join operations in this poset as $\wedge_{pq}$ and $\vee_{pq}$, respectively.
Clearly, for each $x,y\in I(p,q)$ the meet $x \wedge_{pq}y$ is a median of $p,x,y$,
and in fact it is the unique median  
(since every median $m$ of $p,x,y$ belongs to $I(p,q)$ and satisfies $m\preceq_{pq}x$ and $m\preceq_{pq}y$).
Similarly, $x \vee_{pq}y$ is the unique median of $x,y,q$.

For elements $p,q\in\Gamma$
we define $p\UP q$ to be the gate of $q$ at  $\calL^+_p$ (which is a convex subset of $\Gamma$ by Lemma~\ref{lemma:GammaIdeal:ebp}):
$p\UP q=\Pr_{\calL^+_p}(q)$.
Similarly, we let $p\DOWN q=\Pr_{\calL^-_p}(q)$.
Clearly, by the definition of the gate and by Lemma~\ref{lemma:ascending} we have 
\begin{subequations}\label{eq:arrow}
\begin{align}
I(p,q)\cap \calL^+_p&=I(p,p\UP q)=[p,p\UP q]=\{u\:|\:(p,u,p\UP q,q)\mbox{ is a shortest subpath}\} \\
I(p,q)\cap \calL^-_p&=I(p,p\DOWN q)=[p\DOWN q,p]=\{u\:|\:(p,u,p\DOWN q,q)\mbox{ is a shortest subpath}\}
\end{align}
\end{subequations}
We also define $p\diamond q = (p\UP q)\vee_{pq}(p\DOWN q)$.

\begin{lemma}\label{lemma:diamond:pq}
 Consider elements $p,q$ of an extended modular complex $\Gamma$.
There holds $p\wedge(p\diamond q)=p\DOWN q\sqsubseteq p\UP q=p\vee(p\diamond q)$
(and consequently $p,p\diamond q$ are $\DELTANEIB$-neighbors).
If $p\ne q$ then $p\ne p\diamond q$.
\end{lemma}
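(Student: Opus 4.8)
The plan is to identify $p\wedge(p\diamond q)$ and $p\vee(p\diamond q)$ first, then prove the one genuinely new relation $p\DOWN q\sqsubseteq p\UP q$, and finally read off the remaining assertions. I begin with the basic geometry. Since $p\DOWN q=\Pr_{\calL^-_p}(q)$ and $p\UP q=\Pr_{\calL^+_p}(q)$, and $\sqsubseteq$ coarsens $\preceq$, we have $p\DOWN q\preceq p\preceq p\UP q$; moreover $p\in\calL^-_p\cap\calL^+_p$ (reflexivity of $\sqsubseteq$), so both gates lie in $I(p,q)$. By Theorem~\ref{th:Ipq:modular-lattice} the element $p\diamond q=(p\UP q)\vee_{pq}(p\DOWN q)$ is well defined, and being the median of $p\UP q,p\DOWN q,q$ it lies in $I(p\DOWN q,p\UP q)=[p\DOWN q,p\UP q]$ (Lemma~\ref{lemma:ascending}, as $p\DOWN q\preceq p\UP q$); thus $p\DOWN q\preceq p\diamond q\preceq p\UP q$. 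Since $p$ is the minimum of $(I(p,q),\preceq_{pq})$ while $p\diamond q\succeq_{pq}p\DOWN q$ and $p\diamond q\succeq_{pq}p\UP q$, the sequences $(p,p\DOWN q,p\diamond q,q)$ and $(p,p\UP q,p\diamond q,q)$ are shortest subpaths, hence so are $(p,p\DOWN q,p\diamond q)$ and $(p,p\UP q,p\diamond q)$. Now Proposition~\ref{prop:wedge-shortest}(a) applied with $p\succeq p\DOWN q\preceq p\diamond q$ gives $p\wedge(p\diamond q)=p\DOWN q$, and Proposition~\ref{prop:wedge-shortest}(b) applied with $p\preceq p\UP q\succeq p\diamond q$ gives $p\vee(p\diamond q)=p\UP q$.

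The heart of the argument is the relation $p\DOWN q\sqsubseteq p\UP q$. I would obtain it by applying Corollary~\ref{cor:projection-preserves-ebp} to the $\sqsubseteq$-comparable pair $p\sqsubseteq p\UP q$ (valid since $p\UP q\in\calL^+_p$) and the convex set $A:=[p\DOWN q,p\diamond q]$ (convex by Lemma~\ref{lemma:GammaIdeal}(b)). The key is that $\Pr_A(p)=p\DOWN q$ and $\Pr_A(p\UP q)=p\diamond q$. For the first: for $z\in A$ we have $\mu(p\DOWN q,p\diamond q)=\mu(p\DOWN q,z)+\mu(z,p\diamond q)$ (as $z\in[p\DOWN q,p\diamond q]=I(p\DOWN q,p\diamond q)$) together with $\mu(p,p\diamond q)=\mu(p,p\DOWN q)+\mu(p\DOWN q,p\diamond q)$ (previous paragraph), so the triangle inequality forces $\mu(p,z)=\mu(p,p\DOWN q)+\mu(p\DOWN q,z)$, and since $p\DOWN q\in A$ it is the gate of $p$. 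For the second: for $z\in A$ we have $z\preceq p\diamond q\preceq p\UP q$, hence $\mu(z,p\UP q)=\mu(z,p\diamond q)+\mu(p\diamond q,p\UP q)$, and since $p\diamond q\in A$ it is the gate of $p\UP q$. Corollary~\ref{cor:projection-preserves-ebp} then yields $p\DOWN q\sqsubseteq p\diamond q$. Finally, from $p\DOWN q\sqsubseteq p$, $p\DOWN q\sqsubseteq p\diamond q$, and the existence of $p\vee(p\diamond q)=p\UP q$, condition~(\ref{def:ebp}b) gives $p\DOWN q\sqsubseteq p\UP q$.

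Combining the two paragraphs, $p\wedge(p\diamond q)=p\DOWN q\sqsubseteq p\UP q=p\vee(p\diamond q)$, so $p$ and $p\diamond q$ are $\DELTANEIB$-neighbors by the characterization of $\DELTANEIB$-neighbors. For the last claim, suppose $p=p\diamond q$; then $p\DOWN q=p\wedge(p\diamond q)=p$ and $p\UP q=p\vee(p\diamond q)=p$, so $\Pr_{\calL^-_p}(q)=\Pr_{\calL^+_p}(q)=p$. If $p\ne q$, pick a vertex $u$ adjacent to $p$ on a shortest $p$--$q$ path, so $d(q,u)=d(q,p)-1$; the edge between $u$ and $p$ places $u$ in $\calL^-_p$ or in $\calL^+_p$, and in either case the gate property gives $d(q,u)=d(q,p)+1$, a contradiction. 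Hence $p\ne q$ forces $p\ne p\diamond q$. I expect the main obstacle to be exactly the relation $p\DOWN q\sqsubseteq p\UP q$: it cannot be derived by transitivity from $p\DOWN q\sqsubseteq p\sqsubseteq p\UP q$ since $\sqsubseteq$ need not be transitive, and the non-obvious move is to realize $p\DOWN q$ and $p\diamond q$ simultaneously as the gates of the comparable pair $(p,p\UP q)$ on the interval $[p\DOWN q,p\diamond q]$, so that Corollary~\ref{cor:projection-preserves-ebp} transports $p\sqsubseteq p\UP q$ down to $p\DOWN q\sqsubseteq p\diamond q$ and condition~(\ref{def:ebp}b) then closes the argument; the remaining verifications (the gate identities and the $\preceq_{pq}$-order facts) are routine.
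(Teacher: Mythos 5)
Your proof is correct and follows essentially the paper's route. The crucial step, $p\DOWN q\sqsubseteq p\diamond q$, which you obtain by applying Corollary~\ref{cor:projection-preserves-ebp} to the comparable pair $p\sqsubseteq p\UP q$ and the convex interval $[p\DOWN q,p\diamond q]$, is the same application of Lemma~\ref{lemma:romb} that the paper makes directly (the corollary is just a wrapper around that lemma applied to the shortest subpaths $(p,p\DOWN q,p\diamond q)$ and $(p\DOWN q,p\diamond q,p\UP q)$), and your closing step via condition~(\ref{def:ebp}b) with $p\vee(p\diamond q)=p\UP q$ matches the paper exactly; the final nonequality claim is handled by the same gate-property argument, just phrased as a contradiction rather than directly.
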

\begin{proof}
Denote $(a,b,c)=(p\DOWN q,p\UP q,p\diamond q)$.
We know that $a\sqsubseteq p\sqsubseteq b$
and $(p,a,c),(p,b,c)$, $(a,c,b)$ are shortest subpaths
(since $c$ is a median of $a,b,q$ and $a,b\in I(p,q)$).
From Lemma~\ref{lemma:romb} we conclude that $a\sqsubseteq c\sqsubseteq b$.
Since $p\succeq a \preceq c$ and $(p,a,c)$ is a shortest subpath, we must have $a=p\wedge c$.
Similarly, $b=p\vee c$. Condition~(\ref{def:ebp}b) gives $a\sqsubseteq b$.
Now suppose that $p\ne q$, and consider shortest $p$-$q$ path $(p,u,\ldots,q)$.
If $p\rightarrow u$ then $p\UP q\ne p$ and $p\diamond q\ne p$,
and if $p\leftarrow u$ then $p\DOWN q\ne p$ and $p\diamond q\ne p$.
\end{proof}

For elements $p,q$ an integer $k\ge 0$ define $p\diamond^k q$ as follows:
$p\diamond^0 q=p$, and $p\diamond^{k+1}q=(p\diamond^k q)\diamond q$ for $k\ge 0$.
Clearly, $p\diamond^k q=q$ for some index $k\ge 0$. 
If $k$ is the minimum such index
then the sequence $(p\diamond^0 q,p\diamond^1 q,\ldots,p\diamond^k q)$ will be called the {\em normal $p$-$q$ path}.
By the previous lemma, it is a $\DELTANEIB$-path (i.e.\ a path in $\DELTA{\Gamma}$).

For node $p\in \Gamma$ and integer $k\ge 0$  denote $\DELTA{B}_k(p)=\{q\:|\:\DELTA{d}(p,q)\le k\}$
(a ball of radius $k$ in~$\DELTA{\Gamma}$).
We will show later that $\DELTA{B}_k(p)$ is convex in $\Gamma$ for any $p$ and $k$. First, we establish this for $k=1$.
\begin{lemma}\label{lemma:B1ball}
For each $p\in\Gamma$, set $\DELTA{B}_1(p)$ is convex in $\Gamma$.
\end{lemma}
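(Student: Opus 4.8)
I want to show that $\DELTA{B}_1(p)$, the set of $\DELTANEIB$-neighbors of $p$ (together with $p$ itself), is convex in $\Gamma$. By Lemma~\ref{lemma:convex:gated}, it suffices to check that $\Gamma[\DELTA{B}_1(p)]$ is connected and that $I(x,y)\subseteq \DELTA{B}_1(p)$ for every $x,y\in\DELTA{B}_1(p)$ with $d_\Gamma(x,y)=2$. Connectivity is easy: if $q\in\DELTA{B}_1(p)$, so $p,q\in[a,b]$ with $a\sqsubseteq b$, then since $[a,b]$ is convex (Lemma~\ref{lemma:GammaIdeal:ebp}, applied to $\calL_a^+\cap\calL_b^-$, or directly as an interval) and the interval lies in $\DELTA{B}_1(p)$ (every element of $[a,b]$ is a $\DELTANEIB$-neighbor of $p$), any shortest $\Gamma$-path from $p$ to $q$ inside $[a,b]$ stays in $\DELTA{B}_1(p)$. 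So the real content is the second condition.

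\textbf{The local condition.} Fix $x,y\in\DELTA{B}_1(p)$ with $d_\Gamma(x,y)=2$, pick $z\in I(x,y)$, and set $m$ to be a median of $p,x,y$. Since $p,x$ are $\DELTANEIB$-neighbors we have intervals $[a_x,b_x]\ni p,x$ with $a_x\sqsubseteq b_x$; using condition~(\ref{def:ebp}a) we may assume $a_x=p\wedge x$, $b_x=p\vee x$ (these exist by Lemma~\ref{lemma:GammaIdeal:ebp}/Lemma~\ref{lemma:GammaIdeal}(a) since $p,x$ are lower- and upper-bounded inside $[a_x,b_x]$), so $p\wedge x\sqsubseteq p\vee x$; similarly $p\wedge y\sqsubseteq p\vee y$. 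The goal is to produce $a\sqsubseteq b$ with $p,z\in[a,b]$. The natural candidates are $a=(p\wedge x)\wedge(p\wedge y)$-type meets and $b=(p\vee x)\vee(p\vee y)$-type joins, adjusted to the median $m$; one shows these exist and are comparable using Lemma~\ref{lemma:GammaIdeal}(a), and that $a\sqsubseteq b$ using conditions~(\ref{def:ebp}b) and~(\ref{def:ebp}c) (join of two things $\sqsupseteq p$, meet of two things $\sqsubseteq\,\cdot$). The remaining work is to verify $z\in[a,b]$, which is a statement purely about the modular complex $\Gamma$ (forgetting $\sqsubseteq$): one checks that $I(x,y)\subseteq[a,b]$ by routine median/isometric-rectangle arguments (Proposition~\ref{prop:isometric-rectangle}, Lemma~\ref{lemma:ascending}), reducing to the case $d_\Gamma(x,y)=2$ where $I(x,y)$ is small and can be analyzed by cases on the $\Gamma$-edges at $x,y$.

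\textbf{Main obstacle.} I expect the delicate point to be the bookkeeping of which elements are guaranteed $\sqsubseteq$-related. The relation $\sqsubseteq$ is not transitive, so one cannot freely chain $p\wedge x\sqsubseteq p\vee x$ with $p\wedge y\sqsubseteq p\vee y$; one must carefully route through conditions~(\ref{def:ebp}a)--(\ref{def:ebp}c), possibly passing through $m$ as an intermediate vertex with $m\wedge x,m\vee x$ etc.\ sandwiched appropriately. In particular, the case analysis at $x,y$ when $d_\Gamma(x,y)=2$ (whether $x\to\cdot\to y$, $x\to\cdot\leftarrow y$, or $x\leftarrow\cdot\to y$ in $\Gamma$) mirrors the case analysis in Lemma~\ref{lemma:common-neighbor}, and one hopes that Lemma~\ref{lemma:common-neighbor} and Lemma~\ref{lemma:romb} can be invoked more or less directly for the $\sqsubseteq$ part, leaving only $\Gamma$-geometry to check. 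If a clean uniform choice of $a,b$ is elusive, the fallback is to argue $z$ is a $\DELTANEIB$-neighbor of $p$ directly: show $p\wedge z$ and $p\vee z$ exist and $p\wedge z\sqsubseteq p\vee z$, where $p\wedge z$ is obtained as a median of $p,x,y$-type point below $z$ and $p\vee z$ symmetrically, applying Lemma~\ref{lemma:romb} to the shortest subpaths $(p,p\wedge z, z)$ and $(p\wedge z, z, p\vee z)$ — provided one can first establish $(p\wedge z)\sqsubseteq(p\vee z)$ at the level of the already-known $\DELTANEIB$-neighbors $x,y$, which again comes down to~(\ref{def:ebp}b),(c) applied to the meets/joins with $x$ and $y$.
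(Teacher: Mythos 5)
Your high-level plan is the paper's: reduce via Lemma~\ref{lemma:convex:gated} to showing that if $x,x'$ are $\DELTANEIB$-neighbors of $p$ and $y$ is a common $\Gamma$-neighbor of $x,x'$ (with $d_\Gamma(x,x')=2$), then $p,y$ are $\DELTANEIB$-neighbors; and you correctly reduce the witness pairs to $(p\wedge x,\,p\vee x)$ and $(p\wedge x',\,p\vee x')$ via condition~(\ref{def:ebp}a). However, the proposal has a real gap at the crux of the argument.

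Your ``natural candidates'' $a=(p\wedge x)\wedge(p\wedge x')$ and $b=(p\vee x)\vee(p\vee x')$ are not guaranteed to exist: $\calL^\uparrow_p$ is only a meet-semilattice, so $(p\vee x)\vee(p\vee x')$ requires the pair to be upper-bounded, which is not a priori given. The paper does not use a uniform formula. It first reduces to $d(p,y)=d(p,x)+1=d(p,x')+1$ (otherwise $y$ lies in one of the convex intervals $[p\wedge x,p\vee x]$, $[p\wedge x',p\vee x']$ and we are done), and then splits into two genuinely different cases by the orientations at $y$. In the ``chain'' case $x\to y\to x'$ the witness is $[p\wedge x,\, p\vee x']$ and the argument is a pair of careful applications of Lemma~\ref{lemma:romb} (with $p$ and $p\vee x'$ as the outer $\sqsubseteq$-pair) followed by (\ref{def:ebp}b). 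In the ``join'' case $x\to y\leftarrow x'$ the decisive step is different: one first produces a common lower neighbor $z$ of $x,x'$, shows $p\wedge x=p\wedge z=p\wedge x'=:a$, and then invokes the \emph{modular semilattice axiom} in $\calL^+_a$ --- the existence of $p\vee x\vee x'$ given the three pairwise joins $p\vee x$, $p\vee x'$, $x\vee x'$ --- so that $\hat b = p\vee x\vee x'$ automatically lies in $\calL^+_a$, i.e.\ $a\sqsubseteq\hat b$. This use of the join-existence axiom of modular semilattices inside $\calL^+_a$ is the key ingredient your sketch does not identify.

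Your fallback (``show $p\wedge y\sqsubseteq p\vee y$ directly, applying Lemma~\ref{lemma:romb} to $(p,p\wedge y,y)$ and $(p\wedge y,y,p\vee y)$'') is also circular as written: to invoke Lemma~\ref{lemma:romb} on those subpaths you need to already know $p\sqsubseteq p\vee y$, and even then the conclusion would be $p\wedge y\sqsubseteq y$, not $p\wedge y\sqsubseteq p\vee y$. So while the general direction --- route the $\sqsubseteq$-relations through (\ref{def:ebp}a)--(\ref{def:ebp}c) and Lemma~\ref{lemma:romb} --- is correct, the proposal leaves the central step (establishing the needed $\sqsubseteq$-relation in the join case) unresolved; it is not a routine consequence of the tools you list.
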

\begin{proof}
By Lemma~\ref{lemma:convex:gated}, it suffices to show the following: if $x,x'$ are distinct $\DELTANEIB$-neighbors of $p$
and $y$ is a common neighbor of $x,x'$ in $\Gamma$ then $p,y$ are $\DELTANEIB$-neighbors.
Denote $(a,b)=(p\wedge x,p\vee x)$ and $(a',b')=(p\wedge x',p\vee x')$,
then $a\sqsubseteq b$ and $a'\sqsubseteq b'$.
Recall that sets $[a,b]$ and $[a',b']$ are convex in $\Gamma$ by Lemma~\ref{lemma:GammaIdeal}.
If $d(p,y)=d(p,x)-1$ then $y\in I(p,x)$, and so by convexity of $[a,b]$ we have $y\in[a,b]$,
implying that $p,y$ are $\DELTANEIB$-neighbors. We can thus assume that $d(p,y)=d(p,x)+1$,
and also that $d(p,y)=d(p,x')+1$ (by a symmetric argument). 
Modulo symmetry,
two cases are possible.
\begin{itemize}
\item $x\rightarrow y\rightarrow x'$.
Since $(p,a,x)$ and $(p,x,y)$ are shortest subpaths, so is $(p,a,y)$.
Since $(p,b',x')$ and $(p,x',y')$ are shortest subpaths, so is $(p,b',y)$.
$(a,p,b')$ and $(a,y,b')$ are also shortest subpaths (since $a\preceq p\prec b'$ and $a\preceq y\prec b'$).
Thus, we have $(a,b')=(p\wedge y,p\vee y)$ and $a\sqsubseteq y\sqsubseteq b'$ 
(by Lemma~\ref{lemma:romb}). Condition~(\ref{def:ebp}b) 
for pairs $a\sqsubseteq p$ and $a\sqsubseteq y$
gives $a\sqsubseteq p\vee y=b'$, and so $p,y$ are $\DELTANEIB$-neighbors.
\item $x\rightarrow y\leftarrow x'$.
By Lemma~\ref{lemma:modular:characterization}, nodes $x,x'$ have a common neighbor $z$ with $d(p,z)=d(p,x)-1=d(p,x')-1$. 
Since $z\in I(p,x)$, we must have $z\in[a,b]$ by convexity of $[a,b]$. Similarly, $z\in[a',b']$.
Since $(x,y,x',z)$ is a 4-cycle, we must have $x\leftarrow z\rightarrow x'$.
Since $a=p\wedge x$ and $z\in[a,x]$, 
we must have $a=p\wedge z$.
By a symmetric argument, $a'=p\wedge z$, and so $a=a'$.
We know that $p,x,x',b,b'\in\calL^+_a$. By Lemma~\ref{lemma:GammaIdeal:ebp}, $\calL^+_a$ is a modular semilattice which is convex in $\Gamma$.
Since $y\in I(x,x')$, we must have $y=x\vee x'\in \calL^+_a$ by convexity.
Since joins $p\vee x=b$, $p\vee x'=b'$, $x\vee x'=y$ exist, the join $\hat b=p\vee x\vee x'$ must also exist in $\calL^+_a$
by definition of modular semilattices. We obtain $a\sqsubseteq \hat b$ and $p,y\in[a,\hat b]$, and so $p,y$ are $\DELTANEIB$-neighbors.  \qedhere
\end{itemize}
\end{proof}

\begin{lemma}\label{lemma:DeltaNeighbors:1}
If $p,x$ are $\DELTANEIB$-neighbors then $(x,p\diamond q,q)$ is a shortest subpath.
\end{lemma}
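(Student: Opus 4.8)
The plan is to prove the equivalent statement $p\diamond q\in I(x,q)$ (the conclusion ``$(x,p\diamond q,q)$ is a shortest subpath'' just means $\mu(x,q)=\mu(x,p\diamond q)+\mu(p\diamond q,q)$, and the inequality $\le$ is automatic). Morally, the lemma says that $p\diamond q$ is the gate of $q$ at the convex set $\DELTA{B}_1(p)$, and the heart of the proof is the following claim, which I would isolate first:

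\smallskip
\noindent\emph{Claim.} If $w\in I(p,q)$ is a $\DELTANEIB$-neighbor of $p$, then $w\preceq_{pq}p\diamond q$.
\smallskip

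\noindent Granting the Claim, I would finish as follows. The set $I(p,q)$ is convex, hence gated (Lemma~\ref{lemma:convex:gated}), so let $x_0=\Pr_{I(p,q)}(x)$. Since $p,q\in I(p,q)$, the gate property gives $x_0\in I(x,p)$ and $\mu(x,q)=\mu(x,x_0)+\mu(x_0,q)$. Both $x$ and $p$ lie in $\DELTA{B}_1(p)$, which is convex by Lemma~\ref{lemma:B1ball}, so $I(x,p)\subseteq\DELTA{B}_1(p)$ and therefore $x_0\in I(p,q)\cap\DELTA{B}_1(p)$, i.e.\ $x_0$ is a $\DELTANEIB$-neighbor of $p$ lying in $I(p,q)$. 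The Claim gives $x_0\preceq_{pq}p\diamond q$, i.e.\ $(p,x_0,p\diamond q,q)$ is a shortest subpath, so $p\diamond q\in I(x_0,q)$. Combining,
\[
\mu(x,q)=\mu(x,x_0)+\mu(x_0,p\diamond q)+\mu(p\diamond q,q)\ \ge\ \mu(x,p\diamond q)+\mu(p\diamond q,q)\ \ge\ \mu(x,q),
\]
so equality holds throughout and $p\diamond q\in I(x,q)$, as required.

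To prove the Claim, the case $w=p$ is immediate since $p$ is the minimum of $(I(p,q),\preceq_{pq})$, so assume $w\neq p$ and fix $a\sqsubseteq b$ with $p,w\in[a,b]$. Put $c=p\wedge w$ and $c'=p\vee w$ (these exist by Lemma~\ref{lemma:GammaIdeal}(a)). As $a$ is a common lower bound and $b$ a common upper bound of $p,w$, we have $a\preceq c\preceq p\preceq c'\preceq b$ and $w\in[c,c']$, so condition~(\ref{def:ebp}a) yields $c\sqsubseteq p$, $p\sqsubseteq c'$ and $c\sqsubseteq c'$; in particular $c\in\calL^-_p$, $c'\in\calL^+_p$, and $[c,c']=I(c,c')$ by Lemma~\ref{lemma:ascending}. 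Proposition~\ref{prop:wedge-shortest} (applied to the pair $p,w$) shows that $(p,c,w)$ and $(p,c',w)$ are shortest subpaths, hence $c,c'\in I(p,w)\subseteq I(p,q)$. Therefore $c\in I(p,q)\cap\calL^-_p=[p\DOWN q,p]$ and $c'\in I(p,q)\cap\calL^+_p=[p,p\UP q]$ by~\eqref{eq:arrow}; since $p\DOWN q,p\UP q\in I(p,q)$, it follows that $(p,c,p\DOWN q,q)$ and $(p,c',p\UP q,q)$ are shortest subpaths, i.e.\ $c\preceq_{pq}p\DOWN q$ and $c'\preceq_{pq}p\UP q$. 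A short metric computation using $c\in I(p,w)$, $w\in I(p,q)$ and $c\in I(p,q)$ gives $\mu(c,q)=\mu(c,w)+\mu(w,q)$, i.e.\ $w\in I(c,q)$; together with $(p,c,w)$ being shortest this yields $c\preceq_{pq}w$, and symmetrically $c'\preceq_{pq}w$. Thus $w$ is a common $\preceq_{pq}$-upper bound of $c$ and $c'$, so $c\vee_{pq}c'\preceq_{pq}w$. On the other hand $w\in I(c,c')\cap I(c,q)\cap I(c',q)$, so $w$ is a median of $c,c',q$; since (by the discussion after Theorem~\ref{th:Ipq:modular-lattice}, applicable as $c,c'\in I(p,q)$) $c\vee_{pq}c'$ is the \emph{unique} median of $c,c',q$, we get $w=c\vee_{pq}c'$. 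Finally, monotonicity of $\vee_{pq}$ gives $w=c\vee_{pq}c'\preceq_{pq}(p\DOWN q)\vee_{pq}(p\UP q)=p\diamond q$, proving the Claim.

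The main obstacle is the Claim, and the delicate point throughout is keeping the order $\preceq$ of $\Gamma$ and the interval order $\preceq_{pq}$ apart: in the $\Gamma$-order $c\preceq p\preceq c'$, but in $\preceq_{pq}$ \emph{both} $c$ and $c'$ sit below $w$, and every step must be run by translating a relation $u\preceq_{pq}v$ into ``$(p,u,v,q)$ is a shortest subpath'' and back. The step that makes everything fit is the identification $w=c\vee_{pq}c'$ through uniqueness of the median of $c,c',q$; checking that $c,c'$ actually land in $I(p,q)$ (so that $\vee_{pq}$ and the median-uniqueness statement apply to them) and that $(p,c,p\DOWN q,q)$ and $(p,c',p\UP q,q)$ are genuinely shortest are the technical points one has to verify carefully.
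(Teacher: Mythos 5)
Your proof is correct, and the reduction you use is genuinely different from the paper's.

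The core computation in both proofs is the same: for a $\DELTANEIB$-neighbor $w$ of $p$ lying in $I(p,q)$, set $c=p\wedge w$, $c'=p\vee w$, show that $c\in[p\DOWN q,p]$ and $c'\in[p,p\UP q]$ via eq.~\eqref{eq:arrow}, identify $w=c\vee_{pq}c'$ as the unique median of $c,c',q$, and conclude $w\preceq_{pq}p\diamond q$ by monotonicity of $\vee_{pq}$. This is exactly your Claim, and it is the paper's ``direct'' case. Where you diverge is the reduction of the general statement to this special case. The paper inducts on $d(x,p\diamond q)$: it picks the median $x'$ of $x,p\diamond q,q$, uses convexity of $\DELTA{B}_1(p)$ to see $x'$ stays a $\DELTANEIB$-neighbor of $p$, and either recurses ($x'\ne x$) or, when $x'=x$, observes that $x\in I(p,q)$ and runs the direct argument. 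You instead avoid the induction altogether by projecting $x$ onto the gated set $I(p,q)$: the gate $x_0=\Pr_{I(p,q)}(x)$ lies in $I(x,p)\subseteq\DELTA{B}_1(p)$ by convexity, so it is a $\DELTANEIB$-neighbor of $p$ in $I(p,q)$; the Claim then puts $p\diamond q$ on $I(x_0,q)$, and the gate identity $\mu(x,q)=\mu(x,x_0)+\mu(x_0,q)$ collapses the triangle inequality to give $p\diamond q\in I(x,q)$. Your route is shorter and conceptually cleaner (a single projection replaces the inductive descent), and it makes more visible the fact, noted in Remark~\ref{remark:NormalPath}, that $p\diamond q$ plays the role of a gate; the paper's induction, on the other hand, is the same template it reuses for Theorem~\ref{th:NormalPath} and so keeps the presentation uniform. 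Both arguments depend on Lemma~\ref{lemma:B1ball} in an essential way, so neither is strictly more elementary than the other.
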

\begin{proof}
We use induction on $d(x,p\diamond q)$. Let $x'$ be a median of $x,p\diamond q,q$.
First, assume that $x'\ne x$.  By Lemma~\ref{lemma:B1ball}, $p,x'$ are $\DELTANEIB$-neighbors
(since $x'\in I(x,p\diamond q)$).
By the induction hypothesis, $(x',p\diamond q,q)$ is a shortest subpath.
$(x,x',q)$ is also a shortest subpath, and thus so is $(x,x',p\diamond q,q)$. Now assume that $x'=x$, i.e. $(p\diamond q, x, q)$ is a shortest subpath.
 $(p,p\diamond q,q)$ is a shortest subpath, and thus so is $(p,p\diamond q,x,q)$.
 Let $(a,b)=(p\wedge x,p\vee x)$. $(p,a,x)$ and $(p,b,x)$ are shortest subpaths, and thus so are $(p,a,x,q)$ and $(p,b,x,q)$,
implying $a,b\in I(p,q)$. $(a,x,b)$ is a shortest subpath since $a\preceq x\preceq b$,
and so $x$ is a median of $a,b,q$. Thus, we must have $x=a \vee_{pq} b$.
We also have $a\sqsubseteq b$, and thus $a\in[p\DOWN q,p]$ by eq.~\eqref{eq:arrow}.
 By a similar argument, $b\in[p,p\UP q]$.
We have $ a \preceq_{pq} p\DOWN q$
and $ b \preceq_{pq} p\UP q$,
and so $x=a \vee_{pq} b \preceq_{pq} (p\DOWN q) \vee_{pq} (p\UP q)=p\diamond q$.
This implies that $(x,p\diamond q,q)$ is a shortest subpath.
\end{proof}
\begin{lemma}\label{lemma:DeltaNeighbors:2}
If $(p,u,q)$ is a shortest subpath then the following sequences are also shortest subpaths:
{\rm (i)}~$(p,p\UP q,u\UP q,q)$; 
{\rm (ii)}~$(p,p\DOWN q,u\DOWN q,q)$; 
{\rm (iii)}~$(p,p\diamond q,u\diamond q,q)$.
\end{lemma}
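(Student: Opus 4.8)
The plan is to prove (i) by induction on $d(p,u)$, to deduce (ii) from (i) applied to the \emph{reversed} extended modular complex (the one obtained from $\Gamma$ by reversing every edge orientation and replacing $\sqsubseteq$ by its converse; one checks this is again an extended modular complex, in which $\UP$ and $\DOWN$ get interchanged while ``shortest subpath'' and metric intervals are unchanged), and to derive (iii) from (i) and (ii) using the modular‑lattice structure of $(I(p,q),\preceq_{pq})$.

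For (i) I would induct on $d(p,u)$. The base case $p=u$ is trivial since $(p,p\UP q,q)$ is a shortest subpath by definition of the gate $p\UP q=\Pr_{\calL^+_p}(q)$. For the step, pick a neighbour $p'$ of $p$ with $(p,p',u)$ a shortest subpath and $d(p',u)=d(p,u)-1$; then $(p,p',u,q)$, and hence $(p',u,q)$, is a shortest subpath, so by the induction hypothesis $(p',p'\UP q,u\UP q,q)$ is a shortest subpath, in particular $u\UP q\in I(p'\UP q,q)$. The heart of the argument is to show $p'\UP q\in I(p\UP q,q)$; granting this, composing intervals toward $q$ gives $u\UP q\in I(p\UP q,q)$, and together with the shortest subpath $(p,p\UP q,q)$ this yields $\mu(p,q)=\mu(p,p\UP q)+\mu(p\UP q,q)=\mu(p,p\UP q)+\mu(p\UP q,u\UP q)+\mu(u\UP q,q)$, i.e.\ $(p,p\UP q,u\UP q,q)$ is a shortest subpath.

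To prove $p'\UP q\in I(p\UP q,q)$ it suffices, since $p'\UP q=\Pr_{\calL^+_{p'}}(q)$ is the gate of $q$ at the convex set $\calL^+_{p'}$, to establish $p\UP q\in\calL^+_{p'}$; the gate property then gives $\mu(q,p\UP q)=\mu(q,p'\UP q)+\mu(p'\UP q,p\UP q)$. Now $p'\in I(p,q)$ with $d(p,p')=1$, and the edge between $p$ and $p'$ is oriented either $p\prec p'$ or $p'\prec p$: in the first case $p'\in I(p,q)\cap\calL^+_p=[p,p\UP q]$ by~\eqref{eq:arrow}, in the second $p'\preceq p\preceq p\UP q$; either way $p\DOWN q\preceq p'\preceq p\UP q$, so $p',p\UP q\in[p\DOWN q,p\UP q]$. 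Since $p\DOWN q\sqsubseteq p\UP q$ by Lemma~\ref{lemma:diamond:pq}, condition~(\ref{def:ebp}a) of admissibility gives $p'\sqsubseteq p\UP q$, i.e.\ $p\UP q\in\calL^+_{p'}$. This finishes (i), and (ii) follows by the symmetric statement as explained above.

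For (iii): by (i) and (ii) the sequences $(p,p\UP q,u\UP q,q)$ and $(p,p\DOWN q,u\DOWN q,q)$ are shortest subpaths, so $p\UP q,u\UP q,p\DOWN q,u\DOWN q\in I(p,q)$ with $p\UP q\preceq_{pq}u\UP q$ and $p\DOWN q\preceq_{pq}u\DOWN q$. Since $u\UP q,u\DOWN q$ also lie in $I(u,q)$, the unique median of $u\UP q,u\DOWN q,q$ equals both $(u\UP q)\vee_{pq}(u\DOWN q)$ and $(u\UP q)\vee_{uq}(u\DOWN q)=u\diamond q$, so $u\diamond q=(u\UP q)\vee_{pq}(u\DOWN q)$; as $\vee_{pq}$ is monotone in the modular lattice $(I(p,q),\preceq_{pq})$ (Theorem~\ref{th:Ipq:modular-lattice}), $p\diamond q=(p\UP q)\vee_{pq}(p\DOWN q)\preceq_{pq}(u\UP q)\vee_{pq}(u\DOWN q)=u\diamond q$, which is precisely the statement that $(p,p\diamond q,u\diamond q,q)$ is a shortest subpath. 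The main obstacle, as indicated, is the step $p\UP q\in\calL^+_{p'}$: because $\sqsubseteq$ need not be transitive one cannot simply chain $p'\sqsubseteq p\sqsubseteq p\UP q$, and the correct route is to sandwich $p'$ inside $[p\DOWN q,p\UP q]$ and invoke $p\DOWN q\sqsubseteq p\UP q$ together with admissibility; the remaining steps are routine interval arithmetic in modular complexes.
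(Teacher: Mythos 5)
Your proof is correct, but part~(i) takes a genuinely different route from the paper. The paper argues (i) in a single non-inductive step: it sets $z=(p\UP q)\vee_{pq}u$, notes that $(p,u,z)$ and $(u,z,p\UP q)$ are shortest subpaths with $p\sqsubseteq p\UP q$, applies Lemma~\ref{lemma:romb} to the quadruple $(p,\,p\UP q,\,u,\,z)$ to get $u\sqsubseteq z$, and then reads off the conclusion from $z\in\calL^+_u\cap I(u,q)=[u,u\UP q]$. You instead induct on $d(p,u)$, step to a single neighbour $p'$ of $p$ inside $I(p,q)$, and reduce the crux to $p'\sqsubseteq p\UP q$, which you obtain by sandwiching $p'$ and $p\UP q$ inside $[p\DOWN q,p\UP q]$ and invoking Lemma~\ref{lemma:diamond:pq} together with admissibility condition~(\ref{def:ebp}a). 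Both are valid; the paper's argument is shorter and global (it handles arbitrary $u\in I(p,q)$ at once, at the cost of directly invoking the full strength of Lemma~\ref{lemma:romb}, itself proved by induction), whereas yours is local and incremental, using only the packaged consequence $p\DOWN q\sqsubseteq p\UP q$. Your side remark that the naive chain $p'\sqsubseteq p\sqsubseteq p\UP q$ would be illicit because $\sqsubseteq$ is not transitive correctly identifies why the sandwiching step is needed. For~(ii) the paper simply says ``a symmetric argument''; your reformulation via the reversed extended modular complex is a clean way to make that precise (one checks axioms~(\ref{def:ebp}b) and~(\ref{def:ebp}c) swap and~(\ref{def:ebp}a) is self-dual). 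For~(iii) your argument and the paper's are essentially identical: both identify $(u\UP q)\vee_{uq}(u\DOWN q)$ with $(u\UP q)\vee_{pq}(u\DOWN q)$ via uniqueness of medians and then apply monotonicity of $\vee_{pq}$. One small elision: in the case $p'\prec p$ you should cite~\eqref{eq:arrow} again to conclude $p'\in[p\DOWN q,p]$ and hence $p\DOWN q\preceq p'$, dually to the case $p\prec p'$ you spelled out.
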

\begin{proof}
First, let us show {\rm (i)}.
For brevity, denote  $b=p\UP q\in I(p,q)$. Define $z=b\vee_{pq} u$.
$(b,z,u)$ is a shortest subpath (since $z$ is the median of $b,u,q$).
Also, $u \preceq_{pq} z$, meaning that $(p,u,z,q)$ is a shortest subpath. 
Lemma~\ref{lemma:romb} for elements $p,b,u,z$ gives that $u\sqsubseteq z$.
This shows that $z\in\calL^+_u\cap I(u,q)$, and thus $(u,z,u\UP q,q)$ is a shortest subpath by eq.~\eqref{eq:arrow}.
Since $(z,u\UP q,q)$ and $(p,b,z,q)$ are shortest
subpaths, so is $(p,b,z,u\UP q,q)$.

A symmetric argument gives {\rm (ii)}. We can now show {\rm (iii)} as follows:
$u\diamond q=(u\DOWN q)\vee_{uq}(u\UP q)={\tt median}(u\DOWN q,u\UP q,q)=(u\DOWN q)\vee_{pq}(u\UP q)\succeq_{pq} (p\DOWN q)\vee_{pq}(p\UP q)=p\diamond q$.
\end{proof}


\begin{theorem}\label{th:NormalPath}
If $(p_0,p_1,\ldots,p_k)$ is a $\DELTANEIB$-path with $p_0=p$
then $(p_i,p\diamond^i q,q)$ is a shortest subpath for any $i\in[0,k]$.
\end{theorem}
\begin{proof}
We use induction on $i$. For $i=0$ the claim holds by Lemma~\ref{lemma:DeltaNeighbors:1}.
Let us show it for $i\in[k]$.
Denote $\bar p=p_{i-1}$, $x=p_i$,
$u=p\diamond^{i-1} q$.
By the induction hypothesis, $(p_{i-1},p\diamond^{i-1} q,q)=(\bar p,u,q)$ is a shortest subpath.
By Lemma~\ref{lemma:DeltaNeighbors:1}, $(x,\bar p\diamond q,q)$ is a shortest subpath.
By Lemma~\ref{lemma:DeltaNeighbors:2}, $(\bar p \diamond q,u\diamond q,q)$ is a shortest subpath.
Thus, $(x,\bar p\diamond q,u\diamond q,q)$ and so $(x,u\diamond q,q)=(p_i,p\diamond^i q,q)$ are shortest subpaths.
\end{proof}

\begin{corollary}\label{cor:NormalPath}
The normal $p$-$q$ path $(p,p\diamond^1 q,p\diamond^2 q,\ldots, q)$ is a shortest $\DELTANEIB$-path from $p$ to $q$.
\end{corollary}
\begin{proof}
Let $(p_0,p_1,\ldots,p_k)$ be a shortest $\DELTANEIB$-path with $(p_0,p_k)=(p,q)$.
 By Theorem~\ref{th:NormalPath} with $q=p_k$, $(p_k,p\diamond^k q,q)=(q,p\diamond^k q,q)$ is a shortest subpath,
 therefore $p\diamond^k q=q$. This means that the length of the normal $p$-$q$ path is at most $k$.
\end{proof}

Our next goal is to prove  that set $\DELTA{B}_k(p)$ is convex in $\Gamma$ for any $p$ and $k$.
For that we will first need a technical result.

\begin{lemma}\label{lemma:GAKSFHALGHAS}
Suppose that $(p,p',q',q)$ is an isometric rectangle 
in an extended modular complex $\Gamma$
and $p,q$ are $\DELTANEIB$-neighbors. Then $p',q'$ are also $\DELTANEIB$-neighbors.
\end{lemma}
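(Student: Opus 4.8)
The statement concerns an isometric rectangle $(p,p',q',q)$ with $p,q$ being $\DELTANEIB$-neighbors, and I want to conclude $p',q'$ are $\DELTANEIB$-neighbors. Recall $p,q$ are $\DELTANEIB$-neighbors means $p\wedge q$ and $p\vee q$ exist and $p\wedge q\sqsubseteq p\vee q$; so $p,q$ both lie in the interval $[a,b]$ with $a=p\wedge q\sqsubseteq b=p\vee q$. The strategy is to project the interval $[a,b]$ onto a suitable convex set and use Corollary~\ref{cor:projection-preserves-ebp} (projection preserves $\sqsubseteq$) together with Lemma~\ref{lemma:pq-pq-prime} (projections of intervals are intervals, and they are isomorphic). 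Concretely, since $(p,p',q',q)$ is an isometric rectangle, $p'\in I(p,q')$ and more usefully $p'\in I(p,p')$ trivially but the key relation is $p'=\Pr_{I(\cdot)}$-type. The cleaner route: observe $d(p,p')=d(q,q')$ and $d(p,q)=d(p',q')$, and $p'$ is the gate of $p'$ in the convex set $C$ obtained by "sliding" $[a,b]$ along the rectangle.

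**Key steps.** First I would set $a=p\wedge q$, $b=p\vee q$, so $a\sqsubseteq b$ and $[a,b]$ is a convex (hence gated) set in $\Gamma$ containing both $p$ and $q$ (Lemma~\ref{lemma:GammaIdeal}, Lemma~\ref{lemma:convex:gated}). Second, I would use the isometric-rectangle structure: since $(p,p',q',q)$ is an isometric rectangle, the map $x\mapsto \Pr_{[a',b']}(x)$ for appropriate $a',b'$ realizes the "parallel translation" from the $pq$-side to the $p'q'$-side. More precisely, take any shortest path from $p$ to $p'$; the rectangle condition says that translating $p,q$ along the rectangle keeps distances, so there is a gated set $A'$ with $p'\in A'$, $q'\in A'$, and $\Pr_{A'}$ maps $[a,b]$ isometrically onto an interval $[\tilde a,\tilde b]$ containing $p'$ and $q'$ — this is exactly the content of Proposition~\ref{prop:isometric-rectangle} iterated, or directly of Lemma~\ref{lemma:pq-pq-prime} applied to the intervals $[a,b]$ and $[a',b']$ where $a'\preceq b'$ spans the $p'q'$ side (so $p',q'\in[a',b']$). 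Third, by Corollary~\ref{cor:projection-preserves-ebp}, since $a\sqsubseteq b$ and $\Pr_{A'}$ is the gate map onto a convex set, we get $\Pr_{A'}(a)\sqsubseteq \Pr_{A'}(b)$, i.e.\ $\tilde a\sqsubseteq \tilde b$. Fourth, I must check $p',q'\in[\tilde a,\tilde b]$: this follows because the isometry $\Pr_{A'}$ restricted to $[a,b]$ (Lemma~\ref{lemma:pq-pq-prime}(2)) sends $p\mapsto p'$ and $q\mapsto q'$, which in turn uses that $p'$ is the gate of $p$ at $A'$ and $q'$ is the gate of $q$ at $A'$ — a consequence of Theorem~\ref{th:gatedset}(b) together with the rectangle being isometric ($d(p,p')=d(A_{pq},A')$ etc.). Hence $p',q'\in[\tilde a,\tilde b]$ with $\tilde a\sqsubseteq \tilde b$, so $p',q'$ are $\DELTANEIB$-neighbors.

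**Main obstacle.** The delicate point is pinning down the gated set $A'$ containing $p'$ and $q'$ and verifying that $\Pr_{A'}$ carries $p\mapsto p'$, $q\mapsto q'$, and $[a,b]\mapsto$ an interval — i.e.\ making the "parallel translation" along the isometric rectangle rigorous. I would take $A' = [a',b']$ where $a'$ is a median of $p',p,q$ pushed appropriately; actually the cleanest choice is to let $a'\preceq b'$ be the endpoints of a maximal chain through $p',q'$ lying in $I(p',q')$ — but one needs $p',q'$ to be comparable-enough for an interval to contain both, which is \emph{not} automatic. So instead I would argue: set $a' = \Pr_{[\text{something convex through }p',q']}$... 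This circularity suggests the right approach is to directly use Proposition~\ref{prop:isometric-rectangle} to subdivide the rectangle into unit-width strips and induct on $d(p,p')$, at each step moving from one $\DELTANEIB$-neighbor pair to the next using Lemma~\ref{lemma:romb} or Lemma~\ref{lemma:common-neighbor}-style arguments exactly as in the proof of Lemma~\ref{lemma:B1ball}. This reduces the whole statement to the width-one case, which is essentially Lemma~\ref{lemma:B1ball} (convexity of $\DELTA{B}_1(p)$) applied along the rectangle, or a direct application of Lemma~\ref{lemma:romb} plus conditions~(\ref{def:ebp}b,c). I expect the induction-on-width approach to be the one that actually goes through cleanly, with the base case handled by the same case analysis as Lemma~\ref{lemma:B1ball}.
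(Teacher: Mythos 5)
Your proposal doesn't converge on a complete argument, and the piece you end up leaning on in the base case is the wrong one. Two concrete issues.

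First, the suggestion that the width-one case is ``essentially Lemma~\ref{lemma:B1ball}'' is not right. That lemma establishes convexity of $\DELTA{B}_1(p)$: it treats the configuration where $x,x'$ are $\DELTANEIB$-neighbors of a fixed $p$ and $y$ is a common $\Gamma$-neighbor of $x$ and $x'$, and concludes $y\in\DELTA{B}_1(p)$. In the width-one rectangle $(p,p',q',q)$ with $p\sim p'$ and $q\sim q'$, you would want to conclude that a pair $(p',q')$ at the \emph{same} $\DELTANEIB$-distance is again a $\DELTANEIB$-pair; note $d(p,q')=d(p,q)+1$ here, so $q'\notin I(p,q)$ and $B_1$-convexity does not touch the configuration. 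The correct tool for transferring the $\sqsubseteq$ relation across an isometric rectangle is Lemma~\ref{lemma:romb}, which you do mention as an alternative but do not commit to or develop.

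Second, the induction on $d(p,p')$ you propose is unnecessary, precisely because Lemma~\ref{lemma:romb} already handles rectangles of arbitrary width. The paper's proof therefore does it in one shot. The idea you are missing is an explicit transport map from $I(p,q)$ to $I(p',q')$: for $x\in I(p,q)$, set $x'$ to be the median of $x,p',q'$ (well-defined and unique since $x,p'\in I(p,q')$). One checks that $(p,p',x',x)$ and $(q,q',x',x)$ are isometric rectangles. Applying this to $a=p\wedge q$ and $b=p\vee q$ (which satisfy $a\sqsubseteq b$), Lemma~\ref{lemma:romb} gives $a'\sqsubseteq p'$, $a'\sqsubseteq q'$, $p'\sqsubseteq b'$, $q'\sqsubseteq b'$; since $(p',a',q')$ and $(p',b',q')$ are shortest subpaths, Proposition~\ref{prop:wedge-shortest} forces $(a',b')=(p'\wedge q',p'\vee q')$, and condition~(\ref{def:ebp}b) then yields $a'\sqsubseteq b'$. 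Your earlier route via gated projections (Theorem~\ref{th:gatedset}, Lemma~\ref{lemma:pq-pq-prime}) runs into exactly the circularity you flag: one has to know in advance a pair $a'\preceq b'$ with $p',q'\in[a',b']$ to even name the target interval, and that is the thing to be constructed.
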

\begin{proof}
Note that $I(p,q)\cup I(p',q')\subseteq I(p,q')\cap I(p',q)$.
For a node $x\in I(p,q)$ let $x'\in I(p',q')$ be the median of $x,p',q'$.
(This median is unique and equals $x\vee_{pq'} p'$, since $x,p'\in I(p,q')$.
Furthermore, $p'$ is a median of $p,p',q'$ and $q'$ is a median of $q,p',q'$,
so this notation is consistent).
Since $(p,x,q,q')$ and $(x,x',q')$ are shortest subpaths, so is $(p,x,x',q')$.
Analogously, since $(p',p,x,q)$ and $(p',x',x)$ are shortest subpaths, so is $(p',x',x,q)$.
We can now conclude that $(p,p',x',x)$ and $(q,q',x',x)$ are isometric rectangles.

Denote $(a,b)=(p\wedge q,p\vee q)$, then $a\sqsubseteq b$ and $a,b\in I(p,q)$.
Since $(p,p',a',a)$ is an isometric rectangle and $a\sqsubseteq p$,
we get $a'\sqsubseteq p'$ by Lemma~\ref{lemma:romb}.
By similar arguments we get $p'\sqsupseteq a'\sqsubseteq q'$ and 
$p'\sqsubseteq b'\sqsupseteq q'$.
$(p',a',q')$ and $(p',b',q')$ are shortest subpaths, and thus $(a',b')=(p'\wedge q',p'\vee q')$.
We have $a'\sqsubseteq b'$ by condition~(\ref{def:ebp}b), and so $p',q'$ are $\DELTANEIB$-neighbors.
\end{proof}

\begin{theorem}\label{th:Bkball}
For each $p\in\Gamma$ and $k\ge 0$, set $\DELTA{B}_k(p)$ is convex in $\Gamma$.
\end{theorem}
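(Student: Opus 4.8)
The plan is to apply the characterization in Lemma~\ref{lemma:convex:gated}(3): it suffices to show that $\Gamma[\DELTA{B}_k(p)]$ is connected and that $I(x,y)\subseteq\DELTA{B}_k(p)$ for every $x,y\in\DELTA{B}_k(p)$ with $d(x,y)=2$ (writing $d$ for $d_\Gamma$). The case $k=0$ is trivial, so assume $k\ge 1$. The engine is Theorem~\ref{th:NormalPath}: for any $\DELTANEIB$-path $(p_0,\dots,p_m)$ with $p_0=p$ and any $q$, every triple $(p_i,p\diamond^i q,q)$ is a shortest subpath; in particular the normal $p$--$q$ path is a shortest $\DELTANEIB$-path, so $\DELTA{d}(p,q)$ is the least $j$ with $p\diamond^j q=q$, and $p\diamond^k z=z$ implies $\DELTA{d}(p,z)\le k$. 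I will also use that any two $\Gamma$-adjacent vertices are $\DELTANEIB$-neighbors (an edge $a\rightarrow b$ gives $a\sqsubseteq b$, so $a,b\in[a,b]$ with $a\sqsubseteq b$), hence $\DELTA{d}(p,\cdot)$ changes by at most one along a $\Gamma$-edge.

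For connectivity, take $q\in\DELTA{B}_k(p)$ with $q\ne p$ and let $(r_0=p,r_1,\dots,r_m=q)$ be the normal $p$--$q$ path, of length $m=\DELTA{d}(p,q)\le k$. Consecutive $r_i,r_{i+1}$ are $\DELTANEIB$-neighbors, hence lie in a common interval $[a_i,b_i]$ with $a_i\sqsubseteq b_i$; by Lemma~\ref{lemma:GammaIdeal}(b) the set $[a_i,b_i]$ is a modular lattice that is convex, and hence connected, in $\Gamma$. Choose a $\Gamma$-path $P_i$ inside $[a_i,b_i]$ from $r_i$ to $r_{i+1}$. For any vertex $v$ of $P_i$, the meet $v\wedge r_i$ and join $v\vee r_i$ exist and lie in $[a_i,b_i]$ (as $[a_i,b_i]$ is a lattice containing both), so $v\wedge r_i\sqsubseteq v\vee r_i$ by condition~(\ref{def:ebp}a); thus $v,r_i$ are $\DELTANEIB$-neighbors and $\DELTA{d}(p,v)\le\DELTA{d}(p,r_i)+1\le i+1\le m\le k$, i.e.\ $v\in\DELTA{B}_k(p)$. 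Concatenating $P_0,\dots,P_{m-1}$ gives a $\Gamma$-path from $p$ to $q$ inside $\DELTA{B}_k(p)$.

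For the betweenness condition, let $x,y\in\DELTA{B}_k(p)$ with $d(x,y)=2$ and let $z\in I(x,y)$; the cases $z\in\{x,y\}$ hold by assumption, so take $z\notin\{x,y\}$, whence $d(x,z)=d(z,y)=1$. If $\DELTA{d}(p,x)\le k-1$ then $\DELTA{d}(p,z)\le\DELTA{d}(p,x)+1\le k$ and we are done, and likewise for $y$; so assume $\DELTA{d}(p,x)=\DELTA{d}(p,y)=k$. Let $(r_0=p,\dots,r_k=x)$ be the normal $p$--$x$ path (its length is $\DELTA{d}(p,x)=k$ by Theorem~\ref{th:NormalPath}). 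Then $(r_0,\dots,r_k,z)$ is a $\DELTANEIB$-path starting at $p$, and applying Theorem~\ref{th:NormalPath} to it with target $q=z$ and index $i=k$ shows $(x,p\diamond^k z,z)$ is a shortest subpath; since $d(x,z)=1$ this forces $p\diamond^k z\in\{x,z\}$. The same argument with the normal $p$--$y$ path gives $p\diamond^k z\in\{y,z\}$. As $x\ne y$, these force $p\diamond^k z=z$, hence $\DELTA{d}(p,z)\le k$ and $z\in\DELTA{B}_k(p)$, which completes the proof.

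I expect the betweenness condition to be the only real obstacle. The key idea there is that the normal path to $x$ can be extended by the single $\Gamma$-edge to $z$, after which Theorem~\ref{th:NormalPath} localizes $p\diamond^k z$ to $\{x,z\}$, and carrying this out for both $x$ and $y$ pins it to $z$. An alternative, more laborious route would be an induction on $k$ combined with Lemma~\ref{lemma:GAKSFHALGHAS} applied to a suitable isometric rectangle, but the argument above sidesteps that.
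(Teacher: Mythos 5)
Your proof is correct, and it takes a genuinely different route from the paper for the key betweenness step. The paper proves $I(q_1,q_2)\subseteq\DELTA{B}_k(p)$ (for $d(q_1,q_2)=2$) by induction on $k$: it sets $p_i=p\diamond^{k-1}q_i$, applies Theorem~\ref{th:NormalPath} to get that $(p_1,p_2,q_2)$ and $(p_2,p_1,q_1)$ are shortest subpaths, takes $p'$ the median of $p_1,p_2,q'$, uses the induction hypothesis (convexity of $\DELTA{B}_{k-1}(p)$) to place $p'\in\DELTA{B}_{k-1}(p)$, verifies that $(p_1,p',q',q_1)$ is an isometric rectangle, and then invokes Lemma~\ref{lemma:GAKSFHALGHAS} to conclude $p',q'$ are $\DELTANEIB$-neighbors. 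You sidestep both the induction and Lemma~\ref{lemma:GAKSFHALGHAS}: the observation that a normal path can be prolonged by one $\Gamma$-edge while remaining a $\DELTANEIB$-path lets Theorem~\ref{th:NormalPath} localize $p\diamond^k z$ inside $I(x,z)=\{x,z\}$, and running the argument on both the $p$-$x$ and $p$-$y$ sides forces $p\diamond^k z=z$, giving $\DELTA{d}(p,z)\le k$ directly. Both proofs hinge on Theorem~\ref{th:NormalPath}, but your application is more economical. You also address connectivity of $\Gamma[\DELTA{B}_k(p)]$ explicitly (the paper passes over it); a small simplification there is that once you note $v,r_i\in[a_i,b_i]$ with $a_i\sqsubseteq b_i$, they are $\DELTANEIB$-neighbors by definition, so the detour through $v\wedge r_i$, $v\vee r_i$ and condition~(\ref{def:ebp}a) is unnecessary.
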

\begin{proof}

We use induction on $k$. Consider $k\ge 1$. 
By Lemma~\ref{lemma:modular:characterization}, it suffices to show that for
distinct nodes $q_1,q_2\in \DELTA{B}_{k}(p)$  and a common neighbor $q'$ of $q_1,q_2$ we have $q'\in \DELTA{B}_{k}(p)$.
Denote $p_1=p\diamond^{k-1} q_1$ and $p_2=p\diamond^{k-1} q_2$. By Theorem~\ref{th:NormalPath}, $(p_1,p_2,q_2)$ and $(p_2,p_1,q_1)$
are shortest subpaths. Let $p'$ be a median of $p_1,p_2,q'$.
Since $p'\in I(p_1,p_2)$, we have $\DELTA{d}(p,p')\le k-1$ by the induction hypothesis.
It thus suffices to show that $p',q'$ are $\DELTANEIB$-neighbors.
We can assume that $p_1\ne p_2$ (otherwise $p_1=p_2=p'$ and the claim holds by Lemma~\ref{lemma:B1ball}).
By symmetry, we can assume that $p_1\ne p'$. We know that $(p_2,p',p_1,q_1)$ and $(p_1,p',q')$ are shortest subpaths.
We also have $D\eqdef d(p_1,p') \ge 1$, $d(p',q')+D\le d(p_1,q_1)+1$ and $d(p_1,q_1)+D\le d(p',q')+1$.
This implies that $D=1$, $d(p_1,q_1)=d(p',q')$ and $(p_1,p',q',q_1)$ is an isometric rectangle.
Lemma~\ref{lemma:GAKSFHALGHAS} now gives that $p',q'$ are $\DELTANEIB$-neighbors, as desired.
\end{proof}

To conclude this section, we state one property of operations $\UP,\DOWN,\diamond$ that will be needed later.
\begin{lemma}\label{lemma:circ-preserved}
Consider elements $p,u,q$ and operation $\circ\in\{\UP,\DOWN,\diamond\}$
such that $(p,p\circ q,u,q)$ is a shortest subpath.
Then $p\circ u=p\circ q$.
\end{lemma}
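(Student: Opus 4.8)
The statement says: if $(p, p\circ q, u, q)$ is a shortest subpath (for $\circ \in \{\UP, \DOWN, \diamond\}$), then $p \circ u = p \circ q$. I would handle the three cases $\circ \in \{\UP,\DOWN,\diamond\}$ largely in parallel, using the characterizations of $p\UP q$, $p\DOWN q$ in eq.~\eqref{eq:arrow} and of $p\diamond q$ via $p\diamond q = (p\DOWN q)\vee_{pq}(p\UP q)$. The key observation is that $p\circ q$ lies on a shortest $p$–$q$ path and, by hypothesis, also on a shortest $p$–$u$ path (since $(p, p\circ q, u)$ is a shortest subpath), so $p\circ q \in I(p,u)$. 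It remains to show $p\circ q$ is exactly the gate/median picked out by the $\circ$-construction applied to the pair $(p,u)$ instead of $(p,q)$.

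\textbf{Case $\circ = \UP$.} Here $p\UP q = \Pr_{\calL^+_p}(q)$, and by eq.~\eqref{eq:arrow} we have $I(p,q)\cap\calL^+_p = [p,p\UP q]$ with $(p,u',p\UP q,q)$ a shortest subpath for all $u'$ in this interval. Since $(p,p\UP q,u,q)$ is a shortest subpath, $p\UP q \in I(p,u)$, and $p\UP q \in \calL^+_p$, so $p\UP q \in I(p,u)\cap\calL^+_p = [p, p\UP u]$, giving $p\UP q \preceq p\UP u$. Conversely, $p\UP u \in I(p,u)\cap\calL^+_p \subseteq I(p,q)\cap\calL^+_p = [p,p\UP q]$ because $(p,u,q)$ is a shortest subpath (as $(p,p\UP q, u, q)$ is), hence $I(p,u)\subseteq I(p,q)$; so $p\UP u \preceq p\UP q$. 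Therefore $p\UP u = p\UP q$. The case $\circ = \DOWN$ is symmetric, using the dual part of eq.~\eqref{eq:arrow}.

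\textbf{Case $\circ = \diamond$.} Write $a = p\DOWN q$, $b = p\UP q$, so $p\diamond q = a\vee_{pq} b = \operatorname{median}(a,b,q)$. From $(p, p\diamond q, u, q)$ being a shortest subpath we get $I(p,u)\subseteq I(p,q)$, hence (by the $\UP,\DOWN$ cases, or directly) $p\DOWN u = p\DOWN q = a$ and $p\UP u = p\UP q = b$; here I use that $(p, p\DOWN q, p\diamond q)$ and $(p, p\UP q, p\diamond q)$ are shortest subpaths (Lemma~\ref{lemma:diamond:pq} / the fact that $p\diamond q$ is a median of $a,b,q$), so concatenating with $(p,p\diamond q, u)$ shows $(p, p\DOWN q, u)$ and $(p, p\UP q, u)$ are shortest subpaths, i.e.\ $a,b \in I(p,u)$; this is exactly the hypothesis needed to rerun the $\UP,\DOWN$ argument with $u$ in place of $q$. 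Then $p\diamond u = (p\DOWN u)\vee_{pu}(p\UP u) = a \vee_{pu} b = \operatorname{median}(a,b,u)$. It remains to show $\operatorname{median}(a,b,u) = \operatorname{median}(a,b,q)$: since $a,b \in I(p,u)$ and $u \in I(p\diamond q, q) = I(\operatorname{median}(a,b,q), q)$, the element $p\diamond q = \operatorname{median}(a,b,q)$ lies in $I(a,u)$ and in $I(b,u)$ (because $(a, p\diamond q, q)$, $(b,p\diamond q,q)$ are shortest subpaths and $(p\diamond q, u, q)$ is a shortest subpath, so $(a, p\diamond q, u)$ and $(b, p\diamond q, u)$ are shortest subpaths), hence $p\diamond q$ is a common lower bound of $a \wedge_{au} u$-type... more directly: $p\diamond q \in I(a,u)\cap I(b,u)\cap I(a,b)$ would force it to be the median of $a,b,u$; for $p\diamond q \in I(a,b)$ note $(a, p\diamond q, b)$ is a shortest subpath since $a \preceq p\diamond q \preceq b$. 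Thus $p\diamond q$ is a median of $a,b,u$, and medians are unique here (Theorem~\ref{th:Ipq:modular-lattice} and the remark after it), so $p\diamond u = p\diamond q$.

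\textbf{Main obstacle.} The $\UP$ and $\DOWN$ cases are routine interval-chasing. The real work is the $\diamond$ case: verifying that the median of $\{a,b,q\}$ is also the median of $\{a,b,u\}$ under the shortest-subpath hypothesis. The safest route is to establish the three shortest-subpath relations $(a, p\diamond q, u)$, $(b, p\diamond q, u)$, $(a, p\diamond q, b)$ carefully from the hypothesis plus Lemma~\ref{lemma:diamond:pq}, then invoke uniqueness of medians in $(I(p,u), \preceq_{pu})$ (Theorem~\ref{th:Ipq:modular-lattice}) — I would double-check that all three auxiliary points $a, b, u$ genuinely lie in $I(p,u)$ so that the median is taken in the right modular lattice.
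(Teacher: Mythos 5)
Your proof is correct and follows essentially the same strategy as the paper's. For $\UP$ (and symmetrically $\DOWN$) you use the interval characterization $I(p,\cdot)\cap\calL^+_p = [p,\,p\UP\cdot\,]$ from eq.~\eqref{eq:arrow} and a two-sided containment, whereas the paper verifies the gate property of $p\UP q$ at $\calL^+_p$ directly from the definition of a gate — these are interchangeable reformulations. For $\diamond$ you reduce to the $\UP,\DOWN$ cases to get $p\DOWN u=p\DOWN q$ and $p\UP u=p\UP q$ exactly as the paper does, and then close by showing $p\diamond q$ is a median of $\{a,b,u\}$, whereas the paper shows the mirror-image claim that $p\diamond u$ is a median of $\{a,b,q\}$; both conclude by uniqueness of medians inside a modular lattice $I(p,\cdot)$ (Theorem~\ref{th:Ipq:modular-lattice}). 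The brief false start (``common lower bound of $a\wedge_{au}u$-type\ldots'') should be deleted, and it is worth stating explicitly, as you do at the end, that $a,b,u$ all lie in $I(p,u)$ so that the uniqueness theorem applies — but the substance matches the paper's proof.
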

\begin{proof}
To see the claim for $\circ=\!\UP$, observe that for every $x\in\calL^+_p$ sequence $(p,x,p\UP q,u,q)$ is a shortest subpath
and thus there exists a shortest $x$-$u$ path going through $p\UP q$.
This means that $p \UP q$ is the gate of $u$ at $\calL^+_p$, i.e.\ $p\UP u=p\UP q$. A symmetric argument gives the claim for~$\circ=\!\DOWN$.
Now suppose that $(p,p\diamond q,u,q)$ is a shortest subpath.
We know that $(p,p\DOWN q,p\diamond q)$ and thus $(p,p\DOWN q,p\diamond q,u,q)$ are shortest subpaths,
so the previous result gives $a\eqdef p\DOWN u=p\DOWN q$. Similarly, we have $b\eqdef p\UP u=p\UP q$.
By definition, $m=p\diamond u$ is the unique median of $a,b,u$.
It can be seen that $m$ is also a median of $a,b,q$.
(Note that $(a,u,q)$, $(a,m,u)$ and thus $(a,m,u,q)$ are shortest subpaths).
Therefore, $m=p\diamond q$.
\end{proof}

\begin{remark}\label{remark:NormalPath}
For  nodes $p,q$ with $\DELTA{d}(p,q)=k\ge 1$ let us define $p \DIAMOND q=p\diamond^{k-1} q$.
It can be deduced from Theorem~\ref{th:NormalPath} that
$p\DIAMOND q$ is the gate of $q$ at the convex set $\DELTA{B}_{k-1}(p)$ where $k=\DELTA{d}(p,q)$.
This operation was introduced in~\cite{Chalopin} under the name ``$\Delta$-gate of $p$ at $q$'' (in the case when $\sqsubseteq\,=\BPrel$).
The normal $p$-$q$ path $(p_0,p_1,\ldots,p_k)$ was also used in~\cite{Chalopin} (and was shown to be a shortest $\DELTANEIB$-path from $p$ to $q$).
However, it was defined by a different construction, namely via a recursion $p_{i-1}=p\DIAMOND p_{i}$ for $i=k,k-1,\ldots,1$. 
To our knowledge, operations $\DOWN,\UP,\diamond$ and Theorem~\ref{th:NormalPath} have not appeared in~\cite{Chalopin};
instead, the proof techniques in~\cite{Chalopin} relied on the notion of ``Helly graphs''
and on the operation $\langle\langle x,y\rangle\rangle$ (the minimal convex set containing $x,y$),
which we do not use.
\end{remark}


\subsection{Connectivity of $\dom f$}

In this section we will show that following.
\begin{theorem}\label{th:domf:arrow}
If $f:\Gamma\rightarrow\overline{\mathbb R}$ is an $L$-convex function on an extended modular complex $\Gamma$ and $p,q\in\dom f$ then {\em $p\UP q,p\DOWN q\in \dom f$}.
\end{theorem}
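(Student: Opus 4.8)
The plan is to show that $p\UP q=\Pr_{\calL^+_p}(q)\in\dom f$ and, symmetrically, $p\DOWN q\in\dom f$; by symmetry it suffices to treat $p\UP q$. First I would reduce to a statement about a path in $\Gamma^{\mathsmaller\sqsubset}$ staying inside $\dom f$. Recall that by Lemma~\ref{lemma:Lopt:pq}(b), for any two distinct elements $a,b\in\dom f$ there exists $u\in I(a,b)\cap\dom f$ with $au\in\Gamma^{\mathsmaller\sqsubset}$. Iterating this, starting from $p$ and aiming at $q$, one obtains a path $p=u_0,u_1,u_2,\dots$ in $\Gamma^{\mathsmaller\sqsubset}$ with each $u_i\in\dom f$ and $u_{i+1}\in I(u_i,q)$, hence $d(u_i,q)$ strictly decreasing, so the path reaches $q$ in finitely many steps. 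So there is an $x$-$y$ path, call it $(u_0,\dots,u_m)$, inside $\dom f$ with $u_0=p$, $u_m=q$, and each consecutive pair comparable under $\sqsubseteq$ (i.e.\ an edge of $\Gamma^{\mathsmaller\sqsubset}$), and moreover a \emph{shortest} $p$-$q$ path in $\Gamma$ (each step lies in $I(u_i,q)$).

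The key step is then to track what happens to the gate $p\UP(\cdot)=\Pr_{\calL^+_p}(\cdot)$ along such a path. Since all $u_i$ lie on a shortest $p$-$q$ path, eq.~\eqref{eq:arrow} and the gate property give a clean description: $p\UP u_i$ is the top of the interval $I(p,u_i)\cap\calL^+_p$, and these gates themselves lie on the shortest $p$-$q$ path. I would prove by induction that $p\UP u_i\in\dom f$ for all $i$. For the inductive step, I have $u_i,u_{i+1}\in\dom f$ with $u_i\sqsubseteq u_{i+1}$ or $u_{i+1}\sqsubseteq u_i$ (edge of $\Gamma^{\mathsmaller\sqsubset}$), and I know $p\UP u_i\in\dom f$; I want $p\UP u_{i+1}\in\dom f$. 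Here is where I would use submodularity of $f$ on $\calL^+_p$ (Lemma~\ref{lemma:restriction-submodular}), which controls the domain via condition~(1) of Theorem~\ref{th:submodularity-characterization}: $\calE(a,b)\subseteq\dom f$ whenever $a,b\in\dom f\cap\calL^+_p$. The plan is to find suitable elements of $\calL^+_p\cap\dom f$ whose $(p\UP p)$-envelope (i.e.\ envelope in the valuated modular semilattice $\calL^+_p$, whose least element is $p$) contains or sandwiches $p\UP u_{i+1}$. Concretely, $p\UP u_i$ and $p\UP u_{i+1}$ are both in $\calL^+_p$; since $u_i,u_{i+1}$ are $\DELTANEIB$-neighbors lying on the shortest $p$-$q$ path, I expect $p\UP u_i$ and $p\UP u_{i+1}$ to be $\DELTANEIB$-neighbors as well (or differ by one covering step), so that $p\UP u_{i+1}$ appears on a short path between elements known to be in $\dom f$, and then $\calE$-closure of $\dom f$ on $\calL^+_p$ forces $p\UP u_{i+1}\in\dom f$.

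A cleaner alternative I would pursue in parallel: use the auxiliary function trick from the proof of Lemma~\ref{lemma:Lopt:pq}. For a large constant $C\ge 0$, the function $f_C(x)=f(x)+C(\mu_p(x)+\mu_q(x)-\mu(p,q))$ is $L$-convex on $\Gamma$ with $\dom f_C=\dom f$ and $f_C=f$ on $I(p,q)$, while $f_C$ is huge off $I(p,q)$. Restricting attention to $I(p,q)$, the set $I(p,q)\cap\calL^+_p=[p,p\UP q]$ (by eq.~\eqref{eq:arrow}) is a face, and $p\UP q$ is its top. Since $p,q\in\dom f$, I would show $q$ projects appropriately and then invoke condition~(1) of Theorem~\ref{th:submodularity-characterization} applied to the submodular restriction of $f$ (or $f_C$) on $\calL^+_p$ with the pair $(p,q')$ where $q'=p\UP q$ itself — but to use that I need $q'\in\dom f$ already, which is circular, so I'd instead apply it to a pair $(p, z)$ with $z\in\dom f\cap\calL^+_p$ whose envelope contains $p\UP q$. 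The honest difficulty, and the step I expect to be the main obstacle, is exactly this: producing an element $z\in\dom f\cap\calL^+_p$ such that $p\UP q\in\calE(p,z)$ (in $\calL^+_p$), i.e.\ exhibiting $p\UP q$ as forced into the domain by the $\calE$-closure property. I would resolve it by combining the path-based induction above (which funnels the problem down to a single covering step $p\UP u_i\prec p\UP u_{i+1}$ with both endpoints' preimages in $\dom f$) with a direct application of the submodularity/$\wedge$-convexity inequalities~\eqref{eq:submodular:special} on the valuated modular semilattice $\calL^+_p$, using Lemma~\ref{lemma:semilattice:technical} to identify the relevant envelope elements. Once $p\UP q\in\dom f$ is established, $p\DOWN q\in\dom f$ follows by the same argument applied to $\calL^-_p$.
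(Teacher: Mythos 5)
Your first paragraph is sound: iterating Lemma~\ref{lemma:Lopt:pq}(b) does produce a shortest $p$-$q$ path $p=u_0,\ldots,u_m=q$ in $\Gamma^{\mathsmaller\sqsubset}$ lying inside $\dom f$, and the gates $p\UP u_0 \preceq p\UP u_1\preceq\ldots\preceq p\UP u_m=p\UP q$ do form an ascending chain in $\calL^+_p$ (this follows from $u_{i+1}\in I(u_i,q)$ and the gate property). Where the proposal breaks down is exactly at the single step you flag as the ``honest difficulty.'' To apply condition~(1) of Theorem~\ref{th:submodularity-characterization} in $\calL^+_p$ you need a \emph{pair} $(a,z)$ with both $a,z\in\calL^+_p\cap\dom f$ whose envelope contains $p\UP u_{i+1}$. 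You have $a=p\UP u_i$, but the only other elements you know to be in $\dom f$ are $u_i,u_{i+1}$, and neither of these need lie in $\calL^+_p$ (we do not have $p\sqsubseteq u_{i+1}$ in general). So there is simply no second endpoint available, and the $\calE$-closure argument in $\calL^+_p$ cannot be launched. Your auxiliary-function trick with $f_C$ does not help here, since $\dom f_C=\dom f$: it modifies values but not the domain, so it cannot manufacture the missing element.

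The paper escapes this by two ideas you don't have. The first is the gate-composition law (Lemma~\ref{lemma:updown}): for $u\in[p,p\UP q]$ one has $p\UP q=p\UP(u\UP q)$ and $p\DOWN q=p\DOWN(u\DOWN q)$. This lets the induction on $d(p,q)$ be applied \emph{twice}: once to $(u,q)$ to get $u^+:=u\UP q\in\dom f$, and once to $(p,u^+)$ to get $p\UP u^+=p\UP q\in\dom f$, whenever $u^+\ne q$. The second idea handles the base case $u^+=q$, i.e.\ $p\sqsubseteq u\sqsubseteq q$: Lemma~\ref{lemma:GALKFAGAS}(b) then works in the 2-subdivision semilattice $\calL^\ast_u$, not in $\calL^+_p$, showing that $[u,p\UP q]\in\calE([p,u],[u,q])$ (both of which \emph{are} available in $\dom f^\ast$). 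That change of semilattice is precisely what resolves the circularity you ran into. Without a substitute for these two lemmas, your argument does not close.
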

\begin{corollary}\label{cor:domf:arrow}
For any $p,q\in\Gamma$ there exists a $p$-$q$ path $(p_0,p_1,\ldots,p_k)$ in $\Gamma^{\mathsmaller\sqsubset}$
with $(p_0,p_k)=(p,q)$
which is a shortest subpath and has the following property:
for any $L$-convex function $f$ on $\Gamma$ with $p,q\in \dom f$
one has $p_i\in\dom f$ for all $i\in[0,k]$.
(This path is obtained by setting either $p_{i+1}=p_i\DOWN q$ or $p_{i+1}=p_i\UP q$;
one of these two elements is distinct from $p_i$ if $p_i\ne q$).
\end{corollary}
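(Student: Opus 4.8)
The plan is to build the path greedily from $p$ towards $q$, making every choice using only the graph $\Gamma$ (so that a single path serves all functions at once), and then to obtain the ``$p_i\in\dom f$'' statement by a one‑line induction from Theorem~\ref{th:domf:arrow}. The real content is Theorem~\ref{th:domf:arrow}, which I assume; the corollary is essentially bookkeeping on top of it together with eq.~\eqref{eq:arrow}.

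First I would construct $p_0,p_1,\dots$ as follows. Put $p_0=p$. If $p_i=q$, stop with $k=i$. Otherwise fix an arbitrary shortest $p_i$-$q$ path in $\Gamma$ and let $u$ be its second vertex. If $p_i\rightarrow u$ then, since $p_i\sqsubseteq u$ (edges satisfy $\sqsubseteq$) and $u\in I(p_i,q)\cap\calL^+_{p_i}$, eq.~\eqref{eq:arrow} gives $u\in[p_i,p_i\UP q]$, so $p_i\UP q\ne p_i$ and I set $p_{i+1}=p_i\UP q$. If instead $u\rightarrow p_i$, then symmetrically $u\in[p_i\DOWN q,p_i]$, so $p_i\DOWN q\ne p_i$ and I set $p_{i+1}=p_i\DOWN q$. (If both $p_i\UP q$ and $p_i\DOWN q$ differ from $p_i$, either choice is fine.) This also verifies the parenthetical claim that at least one of the two candidates differs from $p_i$ whenever $p_i\ne q$.

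Next I would check the remaining structural claims. Taking $u=p_i$ in eq.~\eqref{eq:arrow} shows that $(p_i,p_{i+1},q)$ is a shortest subpath in either case, and $p_{i+1}\in\calL^+_{p_i}$ or $p_{i+1}\in\calL^-_{p_i}$, hence $p_i\sqsubseteq p_{i+1}$ or $p_{i+1}\sqsubseteq p_i$, i.e.\ $p_ip_{i+1}\in\Gamma^{\mathsmaller\sqsubset}$. Since $p_{i+1}\ne p_i$, the shortest‑subpath relation gives $d(p_{i+1},q)=d(p_i,q)-d(p_i,p_{i+1})<d(p_i,q)$, so the process halts with $k\le d(p,q)$, the $p_i$ are pairwise distinct, and telescoping $d(p_i,q)=d(p_i,p_{i+1})+d(p_{i+1},q)$ over $i$ shows $(p_0,\dots,p_k)$ is itself a shortest subpath. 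Finally, for an $L$-convex $f$ on $\Gamma$ with $p,q\in\dom f$ I induct on $i$: $p_0=p\in\dom f$, and if $p_i\in\dom f$ then applying Theorem~\ref{th:domf:arrow} to $p_i,q\in\dom f$ yields $p_i\UP q,p_i\DOWN q\in\dom f$, so $p_{i+1}\in\dom f$ whichever of the two it equals. There is no genuine obstacle beyond correctly invoking eq.~\eqref{eq:arrow} and Theorem~\ref{th:domf:arrow}; the only point that needs care is that each choice is made from $\Gamma$ alone, so the one path $(p_0,\dots,p_k)$ works simultaneously for all $f$.
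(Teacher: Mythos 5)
Your proof is correct and is exactly the argument the paper has in mind: the paper states the corollary without a separate proof, with the construction sketched in the parenthetical remark, and your write-up simply fills in the routine verifications (that the construction halts and the pieces concatenate into a shortest subpath, that each step stays in $\Gamma^{\mathsmaller\sqsubset}$, and that the choice of $p_{i+1}$ depends only on $\Gamma$ so the same path works for every $f$) before closing with the one-line induction on Theorem~\ref{th:domf:arrow}.
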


To prove Theorem~\ref{th:domf:arrow}, we start with technical observations.

\begin{lemma}\label{lemma:GALKFAGAS} Consider elements $p,q,u\in \dom f$. \\
{\rm (a)} If $p\sqsupseteq u \sqsubseteq q$ then $p\UP q\in\dom f$.
Similarly, if $p\sqsubseteq u \sqsupseteq q$ and $p,q\in \dom f$ then $p\DOWN q\in\dom f$. \\
{\rm (b)} If $p\sqsubseteq u \sqsubseteq q$ then $p\UP q\in\dom f$.
Similarly, if $p\sqsupseteq u \sqsupseteq q$ then $p\DOWN q\in\dom f$.
\end{lemma}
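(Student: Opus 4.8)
The plan is to prove the two statements concerning $\UP$; the statements concerning $\DOWN$ follow by the symmetric argument (exchanging $\calL^+_x$ with $\calL^-_x$ and joins with meets throughout, equivalently by passing to the opposite extended modular complex). The basic tool is Theorem~\ref{th:submodularity-characterization}(1): if $g$ is submodular on a valuated modular semilattice $\calL$ and $a,b\in\dom g$, then $\calE_\calL(a,b)\subseteq\dom g$. By Lemma~\ref{lemma:restriction-submodular} the restriction of $f$ to every $\calL^\pm_x$ is submodular, and by the definition of $L$-convexity $f^\ast$ is submodular on every $\calL^\ast_x=\calL^\uparrow_{[x,x]}(\Gamma^\ast)$, which is a valuated modular semilattice by Theorem~\ref{th:subdivision:modular} and Lemma~\ref{lemma:GammaIdeal}. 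Thus in each case it suffices to exhibit $p\UP q$ (resp.\ the node $[p,p\UP q]$) as an element of an appropriate envelope $\calE$.

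\textbf{Part (a).} Assume $u\sqsubseteq p$ and $u\sqsubseteq q$, so $p,q\in\calL^+_u$; this is a valuated modular semilattice, convex in $\Gamma$ (Lemma~\ref{lemma:GammaIdeal:ebp}), and I work inside it. Put $s=p\wedge q$ (which exists and lies in $\calL^+_u$) and $B=\{b\in[s,q]\::\:p\vee b\text{ exists}\}$. Using the defining closure property of modular semilattices (triple joins exist whenever the three pairwise joins do), $B$ is closed under join, hence has a largest element $\bar b=\bigvee B$, and $p\vee\bar b\in I(p,q)$ by Lemma~\ref{lemma:semilattice:technical}(b). I claim $p\UP q=p\vee\bar b$. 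For ``$\succeq$'': since $p\preceq p\UP q$, Lemma~\ref{lemma:semilattice:technical}(b) gives $p\UP q=p\vee(p\UP q\wedge q)$ with $p\UP q\wedge q\in[s,q]$ and this join existing, so $p\UP q\wedge q\in B$, hence $p\UP q\wedge q\preceq\bar b$ and $p\UP q\preceq p\vee\bar b$. For ``$\preceq$'': the crucial point is $p\sqsubseteq p\vee\bar b$. Indeed $u\sqsubseteq\bar b$ (apply (\ref{def:ebp}a) to $u,\bar b\in[u,q]$ using $u\sqsubseteq q$), so $u\sqsubseteq p\vee\bar b$ by (\ref{def:ebp}b); and then applying (\ref{def:ebp}a) to $p,p\vee\bar b\in[u,p\vee\bar b]$ (note $u\preceq p\preceq p\vee\bar b$) yields $p\sqsubseteq p\vee\bar b$. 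Hence $p\vee\bar b\in\calL^+_p\cap I(p,q)=[p,p\UP q]$ by \eqref{eq:arrow}, so $p\vee\bar b\preceq p\UP q$, proving the claim. Finally, $v_{pq}(p\UP q)=(v[s,p],\,v[s,\bar b])$ is a maximal extreme point of $\Conv I(p,q)$: its first coordinate $v[s,p]$ is the global maximum of first coordinates over $I(p,q)$, and among the elements $w\in I(p,q)$ with $w\wedge p=p$ — equivalently $w=p\vee(w\wedge q)$ with $w\wedge q\in B$, hence $w\wedge q\preceq\bar b$ — the second coordinate $v[s,w\wedge q]$ is maximized exactly at $w=p\UP q$. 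Therefore $p\UP q\in\calE_{\calL^+_u}(p,q)\subseteq\dom f$. (When $\bar b=s$ this just says $p\UP q=p\in\dom f$.)

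\textbf{Part (b).} Assume $p\sqsubseteq u\sqsubseteq q$; then $p\preceq u\preceq q$, so $p\UP q\in I(p,q)=[p,q]$, and $u\in[p,q]\cap\calL^+_p=[p,p\UP q]$ by \eqref{eq:arrow}, giving $u\preceq p\UP q$; moreover $p\sqsubseteq p\UP q$ (definition of the gate) and $u\sqsubseteq p\UP q$ (apply (\ref{def:ebp}a) to $u,p\UP q\in[p,p\UP q]$). Here no single $\calL^\pm_x$ contains both $p$ and $q$, so I pass to $\Gamma^\ast$. The nodes $[p,u]$ and $[u,q]$ lie in $\calL^\ast_u$ and in $\dom f^\ast$ (their $f^\ast$-values are $f(p)+f(u)$ and $f(u)+f(q)$). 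Using Lemma~\ref{lemma:d:star} to identify $I([p,u],[u,q])=\{[a,b]\::\:p\preceq a\preceq u\preceq b\preceq q,\ a\sqsubseteq b\}$ and Corollary~\ref{cor:mu:star} for the valuation, one computes $v_{[p,u],[u,q]}([a,b])=(\mu(u,a),\mu(u,b))$ on that interval, and $[p,u]\wedge[u,q]=[u,u]$. The node $[p,p\UP q]$ lies in this interval and has coordinate $(\mu(u,p),\mu(u,p\UP q))$, whose first coordinate is the global maximum, and among the admissible right endpoints $b$ (which all lie in $[u,p\UP q]$) the second coordinate $\mu(u,b)$ is maximized at $b=p\UP q$. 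Hence $[p,p\UP q]\in\calE_{\calL^\ast_u}([p,u],[u,q])\subseteq\dom f^\ast$, so $f(p)+f(p\UP q)<\infty$ and $p\UP q\in\dom f$.

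\textbf{The main obstacle} is the identity $p\UP q=p\vee\bar b$ in part (a) — precisely, the inclusion $p\sqsubseteq p\vee\bar b$, which is what links the ``submodular-extremal'' element $p\vee\bar b$ of $\calL^+_u$ (the one submodularity forces into $\dom f$) to the purely order-theoretic gate $p\UP q=\Pr_{\calL^+_p}(q)$. This step genuinely uses admissibility: it fails for arbitrary coarsenings of $\preceq$, and the delicate manipulation is the ``outside-in'' application of (\ref{def:ebp}a) on the interval $[u,p\vee\bar b]$. The envelope/extremality computations in both parts, and the explicit descriptions of $I([p,u],[u,q])$ and of the $v$-coordinates, are routine.
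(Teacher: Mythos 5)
Your proof is correct and follows the same two-step structure as the paper's: for (a) apply Theorem~\ref{th:submodularity-characterization}(1) to the restriction of $f$ to the modular semilattice $\calL^+_u$ and exhibit $p\UP q$ as an element of $\calE(p,q)$, and for (b) apply it to $f^\ast$ on $\calL^\ast_u$ with the envelope of the pair $([p,u],[u,q])$. The paper leaves both envelope memberships as ``it can be checked''; your identification $p\UP q = p\vee\bar b$ via the maximal element $\bar b$ of $B=\{b\in[s,q]\::\:p\vee b\text{ exists}\}$, together with the admissibility argument showing $p\sqsubseteq p\vee\bar b$ and the lexicographic-maximizer observation, is a correct way to carry out that check, so you are filling in rather than departing from the paper's argument. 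One discrepancy worth flagging: in part (b) the paper's printed proof names the envelope element as $[u,p\UP q]$, but its coordinate $v_{[p,u],[u,q]}([u,p\UP q])=(0,\mu(u,p\UP q))$ is dominated by $v_{[p,u],[u,q]}([u,q])=(0,\mu(u,q))$ whenever $p\UP q\neq q$, so $[u,p\UP q]$ cannot be a maximal extreme point of $\Conv I([p,u],[u,q])$; your choice $[p,p\UP q]$, with coordinate $(\mu(u,p),\mu(u,p\UP q))$ and the lexicographic-maximality argument, is the element that actually lies on the envelope, and this appears to be a typo in the paper that your write-up silently corrects.
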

\begin{proof}
By symmetry, it suffices to prove only the first claim in (a) and in (b). 
In both cases we denote $a=p\UP q$.

\myparagraph{(a)} Denote $s=p\wedge q$, then $p\sqsupseteq s \sqsubseteq q$.
We know that $f$ is submodular on $\calL^+_s$.
It suffices to show that $a\in\calE(p,q)$; this would imply that $a\in \dom f$.

Point $p$ has coordinates $v_{pq}(p)=(X,0)$ where $X=\mu(s,p)$.
We have $a\in I(p,q)$, $p\sqsubseteq a$ and $a\wedge p=p$, therefore $a$ has coordinates $v_{pq}(a)=(X,\ldots)$.
Suppose that $a\notin \calE(p,q)$, then there exists $a' \in I(p,q)$ with $v_{pq}(a')=(X,\ldots)$ and $\mu(p,a')>\mu(p,a)$.
We have $a'\wedge p\in[s,p]$ and $\mu(s,a'\wedge p)=X=\mu(s,p)$,
thus $a'\wedge p=p$ or equivalently $p\preceq a'$.
For any $x\in I(p,q)$ we have $s\sqsubseteq x$ 
(this follows from condition $p\sqsupseteq s \sqsubseteq q$, Lemma~\ref{lemma:semilattice:technical}(b)
and properties~(\ref{def:ebp}a,b)). Thus, we have $s\sqsubseteq a'$ and hence $p\sqsubseteq a'$ (since $s\preceq p\preceq a'$).
Eq.~\eqref{eq:arrow} gives $a'\in[p,p\UP q]=[p,a]$,
contradicting condition $\mu(p,a')>\mu(p,a)$.

\myparagraph{(b)}
We know that $f^\ast$ is submodular on $\calL^\ast_u$. Note that $[x,y]\in I([p,u],[u,q])$ if and only if $x\sqsubseteq y$, $x\in[p,u]$, $y\in[u,q]$
(by Lemma~\ref{lemma:d:star}). For such $[x,y]$ we have $v_{[p,u], [u,q]}([x,y])=(\mu(x,u),\mu(u,y))$.
It suffices to show that $[p,a]\in\calE([p,u],[u,q])$;
this would imply that $[p,a]\in \dom f^\ast$ and $a\in\dom f$.

From the lemma's assumption and eq.~\eqref{eq:arrow} we conclude that 
$p\sqsubseteq u\sqsubseteq a\sqsubseteq q$ and $p\sqsubseteq a$,
and hence $[p,a]\in \calL^\ast_u\cap I([p,u],[u,q])$.
Elements $[p,u]$ and $[p,a]$ have coordinates respectively
$(X,0)$ and $(X,Y)$ where $X=\mu(p,u)$ and $Y=\mu(u,a)$.
If $[p,a]\notin \calE([p,u],[u,q])$ then there exists element $[p,a']\in \calL^\ast_u\cap I([p,u],[u,q])$ whose
second coordinate is $Y'=\mu(u,a')>\mu(u,a)=Y$.
This contradicts eq.~\eqref{eq:arrow}.
\end{proof}

\begin{lemma}\label{lemma:updown}
Consider elements $p,q,u\in\Gamma$ such that $u\in[p,p\UP q]$. Define $u^+=u\UP q$ and $u^-=u\DOWN q$.
Then $p\UP q=p\UP u^+$ and $p\DOWN q=p\DOWN u^-$.
\end{lemma}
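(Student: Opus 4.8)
The plan is to exploit the gate characterizations in eq.~\eqref{eq:arrow} together with the fact that gates compose under nested convex sets (Theorem~\ref{th:gatedset}(c)), keeping in mind that $\calL^-_p\subseteq\calL^-_u$ and $\calL^+_u\subseteq\calL^+_p$ whenever $u\in[p,p\UP q]$ (since then $p\sqsubseteq u$, so every $a\sqsubseteq p$ satisfies $a\sqsubseteq u$ by condition~(\ref{def:ebp}a), and every $b\sqsupseteq u$ satisfies $b\sqsupseteq p$ by transitivity of $\preceq$ combined with... actually one must argue this carefully using that $u\in[p,p\UP q]$ and $p\sqsubseteq p\UP q$, so $u\sqsubseteq p\UP q$, hence $\calL^+_{p\UP q}\subseteq\calL^+_u\subseteq\calL^+_p$). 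I would first record these two inclusions of posets as a preliminary observation, as they drive both identities.

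For the identity $p\UP q = p\UP u^+$: by definition $p\UP q=\Pr_{\calL^+_p}(q)$ and $p\UP u^+=\Pr_{\calL^+_p}(u^+)$ where $u^+=u\UP q=\Pr_{\calL^+_u}(q)$. Since $\calL^+_u$ is a convex (hence gated) subset containing $\calL^+_p$ as a gated subset (using the inclusion $\calL^+_p\subseteq\calL^+_u$ established above, plus Lemma~\ref{lemma:GammaIdeal:ebp} for convexity/gatedness), Theorem~\ref{th:gatedset}(c) applied to the nested pair gives $\Pr_{\calL^+_p}=\Pr_{\calL^+_p}\circ\Pr_{\calL^+_u}$, so $\Pr_{\calL^+_p}(q)=\Pr_{\calL^+_p}(\Pr_{\calL^+_u}(q))=\Pr_{\calL^+_p}(u^+)$, which is exactly the claim. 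One should double-check the precise hypotheses of Theorem~\ref{th:gatedset}(c): it is stated for two gated sets $A,A'$ and concludes $\Pr_B=\Pr_A\circ\Pr_{A'}$ for $B=\Pr_A(A')$; here taking $A=\calL^+_p$, $A'=\calL^+_u$ with $\calL^+_p\subseteq\calL^+_u$ forces $B=\Pr_{\calL^+_p}(\calL^+_u)=\calL^+_p$ (a set projects onto itself), giving the composition identity. Alternatively one gives a direct proof via eq.~\eqref{eq:arrow}: $(p,\,p\UP q,\,q)$ is a shortest subpath and $u\in[p,p\UP q]$, so $(u,\,p\UP q,\,q)$ is a shortest subpath with $p\UP q\in\calL^+_u$, whence $u^+=u\UP q=p\UP q$ — wait, that would give $u^+=p\UP q$ directly, and then $p\UP u^+=p\UP(p\UP q)=p\UP q$ since $p\UP q\in\calL^+_p$ is its own gate. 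This second route is cleaner and I would use it.

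For the symmetric identity $p\DOWN q=p\DOWN u^-$ with $u^-=u\DOWN q$: here the subtlety is that $u\in[p,p\UP q]$ gives $p\sqsubseteq u$, so $\calL^-_p\subseteq\calL^-_u$, and I want $u\DOWN q$ to relate back to $p\DOWN q$. From $p\sqsubseteq u$ and eq.~\eqref{eq:arrow} applied to $p$ and $q$, we have $p\DOWN q\in[p\DOWN q,p]\subseteq\calL^-_p\subseteq\calL^-_u$, and $(p\DOWN q,\,p,\,q)$... I need $(u,\,p\DOWN q,\,q)$ to be a shortest subpath. This requires that $p\DOWN q\in I(u,q)$, i.e. $d(u,p\DOWN q)+d(p\DOWN q,q)=d(u,q)$; since $u\in I(p,q)$ and $p\DOWN q\in I(p,q)$ with $p\DOWN q\preceq_{pq} u$ (as $p\DOWN q\preceq p\preceq u$ gives... no, need $p\DOWN q\in I(p,u)$, which holds because $p\DOWN q\preceq p$ and $u\succeq p$ so $(p\DOWN q,p,u)$ is a shortest subpath by Proposition~\ref{prop:wedge-shortest}-type reasoning, hence $p\DOWN q\preceq_{pq}u\preceq_{pq}q$ and thus $p\DOWN q\in I(u,q)$). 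Then $p\DOWN q\in\calL^-_u\cap I(u,q)=[u\DOWN q,u]$, so $p\DOWN q\preceq u\DOWN q=u^-$; combined with $u^-\in I(u,q)\subseteq I(p,q)$ and $u^-\preceq u$, a projection-composition argument (or direct shortest-subpath bookkeeping using Theorem~\ref{th:gatedset}(c) for $\calL^-_p\subseteq\calL^-_u$) yields $p\DOWN u^-=p\DOWN q$. The main obstacle I anticipate is precisely this: carefully verifying the nesting $\calL^-_p\subseteq\calL^-_u$ and the shortest-subpath alignment $(u,p\DOWN q,q)$ from the hypothesis $u\in[p,p\UP q]$, since unlike the ``up'' direction the relevant point $p\DOWN q$ lies on the opposite side of $u$ and one must chase the $\preceq_{pq}$ order and the admissibility axioms to confirm it sits inside the right interval; once that is pinned down, both identities follow from the gate-composition property.
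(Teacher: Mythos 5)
Your proposal has several genuine gaps, both of which stem from treating $\sqsubseteq$ as if it were transitive and from conflating the order $\preceq$ with the interval order $\preceq_{pq}$.

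\textbf{The claimed inclusions between semilattices fail.} You assert variously that $\calL^+_u\subseteq\calL^+_p$, that $\calL^+_p\subseteq\calL^+_u$, and that $\calL^-_p\subseteq\calL^-_u$. All of these would require some form of transitivity of $\sqsubseteq$, but $\sqsubseteq$ is \emph{not} transitive (the paper says so explicitly just after Definition~\ref{def:ebp}, and Proposition~\ref{prop:complemented} shows $\BPrel$ is not transitive). For instance, $a\sqsubseteq p$ and $p\sqsubseteq u$ do not yield $a\sqsubseteq u$, so $\calL^-_p\subseteq\calL^-_u$ does not hold. Consequently the gate-composition route via Theorem~\ref{th:gatedset}(c) is not available.

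\textbf{The ``cleaner'' route for the first identity is wrong.} You claim $u^+=u\UP q=p\UP q$. This is false. Take $\Gamma$ to be the $3$-chain $0\prec 1\prec 2$ with $\sqsubseteq\,=\BPrel$, so that $0\sqsubseteq 1$, $1\sqsubseteq 2$ but $0\not\sqsubseteq 2$. With $p=0$, $q=2$ we get $\calL^+_p=\{0,1\}$ and $p\UP q=1$; taking $u=1\in[p,p\UP q]$ gives $\calL^+_u=\{1,2\}$ and $u\UP q=2\ne p\UP q$. (The lemma still holds: $p\UP u^+=p\UP 2=1=p\UP q$.) What \emph{is} true, and what the paper proves, is only the weaker fact $p\UP q\in[u,u\UP q]$, whence $(p,p\UP q,u^+,q)$ is a shortest subpath; the first identity then follows from Lemma~\ref{lemma:circ-preserved}, a result your proposal never invokes.

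\textbf{The second identity is based on a false intermediate claim.} You need $(u,p\DOWN q,q)$ to be a shortest subpath, i.e.\ $p\DOWN q\in I(u,q)$. But $p\DOWN q\preceq p\preceq u$, so $\mu(u,p\DOWN q)+\mu(p\DOWN q,q)=\mu(u,p)+\mu(p,p\DOWN q)+\mu(p\DOWN q,q)=\mu(u,p)+\mu(p,q)=2\mu(p,u)+\mu(u,q)$, which exceeds $\mu(u,q)$ unless $p=u$. The confusion in your parenthetical chain is that ``$(p\DOWN q,p,u)$ is a shortest subpath'' gives $p\in I(p\DOWN q,u)$, \emph{not} $p\DOWN q\in I(p,u)$, and $p\DOWN q\preceq p$ in the poset order does \emph{not} give $p\DOWN q\preceq_{pq}u$ (the latter means $(p,p\DOWN q,u,q)$ is shortest). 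In the $3$-chain example above, $p\DOWN q=0$ whereas $I(u,q)=\{1,2\}$, so the required containment fails. The paper sidesteps this by introducing the median $x=u\vee_{pq}(p\DOWN q)$, which \emph{does} lie in $I(u,q)$; it shows $x\sqsubseteq u$ (via a $\Diag$-type argument as in Lemma~\ref{lemma:diamond:pq}), deduces $x\in[u^-,u]$ from eq.~\eqref{eq:arrow}, splices shortest subpaths to get $(p,p\DOWN q,x,u^-,q)$ shortest, and again closes with Lemma~\ref{lemma:circ-preserved}. You would need that intermediate point $x$ (or something playing its role), and the invocation of Lemma~\ref{lemma:circ-preserved}, to make this direction work.
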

\begin{proof}
Clearly, we have $u \sqsubseteq p\UP q$, and sequence $(p,u,p\UP q,q)$
is a shortest subpath. Thus, we have $p\UP q\in \calL^+_u\cap I(u,q)$, and so $p\UP q\in[u,u\UP q]=[u,u^+]$ by eq.~\eqref{eq:arrow}.
Sequence $(p,u,p\UP q,u^+,q)$ is thus a shortest subpath, so Lemma~\ref{lemma:circ-preserved} gives the first claim: $p\UP q=p\UP u^+$.

To show the second claim, let $x=u\vee_{pq} (p\DOWN q)$ be the median of $u,p\DOWN q,q$.
We have $x\sqsubseteq u$ (by the same argument as in the proof of Lemma~\ref{lemma:diamond:pq})
and $x\in I(u,q)$, therefore $x\in[u\DOWN q,u]=[u^-,u]$  by eq.~\eqref{eq:arrow}.
Sequence $(u,x,u^-,q)$ is a thus shortest subpath.
$(p,p\DOWN q,x,q)$ is also a shortest subpath, and hence so is $(p,p\DOWN q,x,u^-,q)$.
Lemma~\ref{lemma:circ-preserved} now gives $p\DOWN q=p\DOWN u^-$.
\end{proof}

\begin{lemma}\label{lemma:GHAKDJGHALKSDFG}
If $p,q\in\dom f$ then $p\UP q,p\DOWN q\in \dom f$.
\end{lemma}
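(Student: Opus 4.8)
The statement to prove is Lemma~\ref{lemma:GHAKDJGHALKSDFG}: if $p,q\in\dom f$ for an $L$-convex function $f$ on an extended modular complex $\Gamma$, then $p\UP q,p\DOWN q\in\dom f$. By symmetry (swapping the roles of $\UP$ and $\DOWN$, which corresponds to reversing the orientation of $\Gamma$), it suffices to prove $p\UP q\in\dom f$. The natural strategy is induction on the distance $d(p,q)$ (equivalently, on $\DELTA d(p,q)$ if that turns out to be more convenient), reducing the general case to the two cases already handled in Lemma~\ref{lemma:GALKFAGAS}, where $p,q$ are connected through a common $\sqsubseteq$-comparable element $u$.

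\textbf{First step.} The base cases $p=q$ and $p\UP q=p$ are trivial. For the inductive step, I would use the normal $p$-$q$ path machinery from Section~\ref{sec:NormalPath}. Consider the element $u:=p\diamond q$, which by Lemma~\ref{lemma:diamond:pq} satisfies $p\DOWN q=p\wedge u\sqsubseteq p\vee u=p\UP q$, so in particular $p$ and $u$ are $\DELTANEIB$-neighbors and $u\ne p$ whenever $p\ne q$. The key is to first get $u=p\diamond q\in\dom f$: since $\DELTA d(p,u)=1<\DELTA d(p,q)$ is not immediately what induction on $d$ gives, I would instead argue that $u\in I(p,q)$ and use Lemma~\ref{lemma:Lopt:pq}(b), which guarantees the existence of \emph{some} $u'\in I(p,q)\cap\dom f$ with $pu'\in\Gamma^{\mathsmaller\sqsubset}$ — but more usefully, I expect the cleaner route is via Corollary~\ref{cor:domf:arrow}'s precursor: apply Lemma~\ref{lemma:Lopt:pq} repeatedly to walk from $p$ toward $q$ through $\dom f$ inside $I(p,q)$, staying on $\Gamma^{\mathsmaller\sqsubset}$-edges. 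Actually the most direct approach: take $u^+=p\UP q$ itself, set $r$ to be an out-neighbor of $p$ in $[p,p\UP q]\subseteq I(p,q)$, and note $p\sqsubseteq r$. Then apply the inductive hypothesis to the pair $(r,q)$ — but first one must know $r\in\dom f$.

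\textbf{Obtaining intermediate domain points.} To get such an $r\in\dom f$, I use Lemma~\ref{lemma:Lopt:pq}(b) applied to $f$ and the pair $(p,q)$: it yields $w\in I(p,q)\cap\dom f$ with $pw\in\Gamma^{\mathsmaller\sqsubset}$, i.e.\ either $p\sqsubset w$ or $w\sqsubset p$. If $p\sqsubset w$, then $w\in I(p,q)\cap\calL^+_p=[p,p\UP q]$ by eq.~\eqref{eq:arrow}; applying the inductive hypothesis to $(w,q)$ gives $w\UP q\in\dom f$, and Lemma~\ref{lemma:updown} (with $u:=w\in[p,p\UP q]$, $u^+=w\UP q$) gives $p\UP q=p\UP(w\UP q)$; now $p\sqsubseteq w\sqsubseteq w\UP q$, so Lemma~\ref{lemma:GALKFAGAS}(b) applied to the triple $p\sqsubseteq w\sqsubseteq w\UP q$ (all in $\dom f$) yields $p\UP(w\UP q)=p\UP q\in\dom f$. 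If instead $w\sqsubset p$, then $w\in[p\DOWN q,p]\subseteq I(p,q)$ and $w\sqsubseteq p$; applying the inductive hypothesis to $(w,q)$ gives $w\UP q\in\dom f$, and one checks $w\UP q=p\UP q$ (since $p\in I(w,q)$ and $w\sqsubseteq p$ forces the gate of $q$ at $\calL^+_w$ to lie above $p$; more carefully: $(w,p,p\UP q,q)$ is a shortest subpath and $p\UP q\in\calL^+_p\subseteq\calL^+_w$, while conversely $w\UP q\in\calL^+_w\cap I(w,q)$ with $w\sqsubseteq p$ and convexity of $[w,w\UP q]$ forces $p\in[w,w\UP q]$, hence $p\UP q=w\UP q$ by Lemma~\ref{lemma:circ-preserved} or eq.~\eqref{eq:arrow}). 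Either way $p\UP q\in\dom f$, completing the induction.

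\textbf{Main obstacle.} The delicate point is the case $w\sqsubset p$ (the ``wrong direction'' step): one must be careful that walking down to $w$ and then up toward $q$ really recovers $p\UP q$ and does not overshoot — this requires the identity $w\UP q=p\UP q$, which should follow from eq.~\eqref{eq:arrow} together with convexity of the interval $[w,w\UP q]$ (Lemma~\ref{lemma:GammaIdeal}) and Lemma~\ref{lemma:circ-preserved}, but needs to be written out carefully. A secondary subtlety is ensuring the induction parameter strictly decreases: $d(w,q)<d(p,q)$ holds because $pw\in\Gamma^{\mathsmaller\sqsubset}$ with $w\in I(p,q)$ forces $d(p,w)\ge 1$ and $w\in I(p,q)$ gives $d(w,q)=d(p,q)-d(p,w)<d(p,q)$. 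Once those two points are nailed down the rest is bookkeeping with Lemmas~\ref{lemma:GALKFAGAS} and~\ref{lemma:updown}.
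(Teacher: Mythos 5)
Your overall structure (induction on $d(p,q)$, then Lemma~\ref{lemma:Lopt:pq} to produce a $\Gamma^{\mathsmaller\sqsubset}$-neighbor $w\in I(p,q)\cap\dom f$, then Lemmas~\ref{lemma:updown} and~\ref{lemma:GALKFAGAS}) matches the paper, and your Case~1 ($p\sqsubset w$) is correct — in fact slightly cleaner than the paper, since you invoke Lemma~\ref{lemma:GALKFAGAS}(b) unconditionally instead of splitting on whether $w\UP q=q$. But Case~2 ($w\sqsubset p$) has a genuine gap: both the assertion ``$p\in I(w,q)$'' and the conclusion ``$w\UP q=p\UP q$'' are false. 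Since $w\in I(p,q)$ and $w\ne p$, we have $d(w,q)=d(p,q)-d(p,w)<d(w,p)+d(p,q)$, so $p\notin I(w,q)$ and $(w,p,p\UP q,q)$ is not a shortest subpath. Concretely, let $\Gamma$ be the oriented tree on $\{w,p,q\}$ with edges $w\to p$, $w\to q$ and $\sqsubseteq\,=\,\preceq$ (a legitimate extended modular complex). Then $w\in I(p,q)$ and $w\sqsubset p$, but $\calL^+_p=\{p\}$ gives $p\UP q=p$, while $\calL^+_w=\{w,p,q\}$ gives $w\UP q=q$. Walking down to $w$ and then up toward $q$ overshoots $p\UP q$.

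The problem is the order of your two symmetry reductions. You reduced the goal to ``$p\UP q\in\dom f$'' alone \emph{before} splitting on the orientation of the edge $pw$. But the orientation-reversal symmetry you would need to dispose of the case $w\sqsubset p$ swaps $\UP$ and $\DOWN$, so it is only available while both conclusions $p\UP q\in\dom f$ and $p\DOWN q\in\dom f$ are still in the goal. The paper keeps both, uses orientation reversal to assume WLOG $p\sqsubset w$, and then in that single case derives $p\DOWN q\in\dom f$ as well, via Lemma~\ref{lemma:updown} in the form $p\DOWN q=p\DOWN(w\DOWN q)$ together with either the induction hypothesis (if $w\DOWN q\ne q$) or Lemma~\ref{lemma:GALKFAGAS}(a) (if $w\DOWN q=q$, i.e.\ $p\sqsubseteq w\sqsupseteq q$). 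Reordering your two reductions in this way — keep both $\UP$ and $\DOWN$ in the induction, use orientation reversal to force $p\sqsubset w$, and then prove both — makes the rest of your argument go through.
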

\begin{proof}
We use induction on $d(p,q)$. Suppose that $d(p,q)>0$. By Lemma~\ref{lemma:Lopt:pq},
there exists $u\in I(p,q)\cap \dom f$ such that $pu\in\Gamma^{\mathsmaller\sqsubset}$. By symmetry, we can assume that $p\sqsubset u$.
We then have $u\in[p,p\UP q]-\{p\}$. Denote $u^+=u\UP q$ and $u^-=u\DOWN q$,
then we have $u^+,u^-\in\dom f$ by the induction hypothesis.

Let  us show that $p\DOWN q\in \dom f$. We have $p\DOWN q=p\DOWN u^-$ by Lemma~\ref{lemma:updown}. If
$u^-\ne q$ then the claim holds by the induction hypothesis.
If $u^-=q$ then $u\sqsupseteq q$, and the claim follows from Lemma~\ref{lemma:GALKFAGAS}(a).

Let us show that $p\UP q\in \dom f$. We have $p\UP q=p\UP u^+$ by Lemma~\ref{lemma:updown}. If
$u^+\ne q$ then the claim  holds by the induction hypothesis.
If $u^+=q$ then $u\sqsubseteq q$, and the claim follows from Lemma~\ref{lemma:GALKFAGAS}(b).
\end{proof}


\subsection{Proof of Theorem~\ref{th:ebp-addition}}

\renewcommand{\thetheoremRESTATED}{\ref{th:ebp-addition}}
\begin{theoremRESTATED}[restated]
Consider extended modular complexes $\Gamma,\Gamma'$ and functions $f,f':\Gamma\rightarrow \overline{\mathbb R}$ and $\tilde f:\Gamma\times\Gamma'\rightarrow \overline{\mathbb R}$. 
\begin{itemize}[noitemsep,topsep=0pt]
\item[{\rm (a)}] If $f,f'$ are $L$-convex  on $\Gamma$ then $f+f'$ and $c\cdot f$ for $c\in\mathbb R_+$ are also $L$-convex on $\Gamma$. 
\item[{\rm (b)}] If $f$ is $L$-convex on $\Gamma$ and $\tilde f(p,p')=f(p)$ for $(p,p')\in \Gamma\times\Gamma'$ then $\tilde f$ is $L$-convex on $\Gamma\times\Gamma'$. 
\item[{\rm (c)}] If $\tilde f$ is $L$-convex on $\Gamma\times\Gamma'$ and $f(p)=\tilde f(p,p')$ for fixed $p'\in\Gamma'$
then $f$ is $L$-convex on $\Gamma$. 
\item[{\rm (d)}]  The indicator function $\delta_U:V\rightarrow \{0,\infty\}$  is $L$-convex on $\Gamma$
in the following cases: (i) $U$ is a $d_\Gamma$-convex set; (ii) $U=\{p,q\}$ for elements $p,q$ with $p\sqsubseteq q$.
\item[{\rm (e)}]  Function $\mu_\Gamma:V\times V\rightarrow\mathbb R_+$ is $L$-convex on $\Gamma\times \Gamma$.
\item[{\rm (f)}]  If $f$ is $L$-convex on $\Gamma$ then the restriction of $f$ to $\calL^\sigma_p(\Gamma)$ is submodular on $\calL^\sigma_p(\Gamma)$ for every
$p\in\Gamma$ and $\sigma\in\{-,+\}$.
\end{itemize}
\end{theoremRESTATED}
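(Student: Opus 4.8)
The strategy is to dispose of parts (b), (e), (f) by citing results already proved in the excerpt, to obtain (c) from Lemma~\ref{lemma:ftilde}(b) plus a short connectivity argument, and to prove (a) and (d) directly from the definition of $L$-convexity on an extended modular complex. Concretely: (b) is Lemma~\ref{lemma:ftilde}(a) (the hypothesis there that $\dom f$ is connected in $\Gamma^{\mathsmaller\sqsubset}$ is part of the assumption that $f$ is $L$-convex); (e) is Lemma~\ref{lemma:mu-is-Lconvex}(a); and (f) is exactly Lemma~\ref{lemma:restriction-submodular}. For (c), Lemma~\ref{lemma:ftilde}(b) already yields that $f^\ast$ is submodular on each $\calL^\ast_p$, so the only missing piece is that $\dom f$ is connected in $\Gamma^{\mathsmaller\sqsubset}$. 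I would prove this as follows: given $p,q\in\dom f$, we have $(p,p'),(q,p')\in\dom\tilde f$; apply Corollary~\ref{cor:domf:arrow} in the extended modular complex $\Gamma\times\Gamma'$ to obtain a shortest $(p,p')$--$(q,p')$ subpath in $(\Gamma\times\Gamma')^{\mathsmaller\sqsubset}$ lying in $\dom\tilde f$; since $d_{\Gamma\times\Gamma'}((p,p'),(q,p'))=d_\Gamma(p,q)$ and a shortest subpath already spends this much length in the $\Gamma$-coordinate, the $\Gamma'$-coordinate is constant along the path, so the path projects to a $p$--$q$ path in $\Gamma^{\mathsmaller\sqsubset}$ inside $\dom f$.

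For (a): if $c>0$ then $\dom(c\cdot f)=\dom f$ and $(c\cdot f)^\ast=c\cdot f^\ast$, and submodularity on each $\calL^\ast_p$ is preserved because inequality~\eqref{eq:submodular:b} is positively homogeneous and $\calE(P,Q)$ depends only on the semilattice; for $c=0$ the constant function $0$ is $L$-convex, since $V_\Gamma$ is connected in $\Gamma^{\mathsmaller\sqsubset}$ (every arc of $\Gamma$ is an edge of $\Gamma^{\mathsmaller\sqsubset}$) and $0^\ast\equiv 0$ is submodular on every valuated modular semilattice. For $f+f'$ we have $(f+f')^\ast=f^\ast+{f'}^\ast$; a sum of two functions submodular on a valuated modular semilattice is submodular, because conditions (2), (3) of Theorem~\ref{th:submodularity-characterization} are linear inequalities in the function and condition (1) holds since $\calE(P,Q)\subseteq\dom f^\ast\cap\dom {f'}^\ast=\dom((f+f')^\ast)$ whenever $P,Q\in\dom((f+f')^\ast)$. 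Thus $f^\ast+{f'}^\ast$ is submodular on every $\calL^\ast_p$, and the one remaining point is connectivity of $\dom(f+f')=\dom f\cap\dom f'$ in $\Gamma^{\mathsmaller\sqsubset}$; here I use that Corollary~\ref{cor:domf:arrow} produces, for any pair $p,q$, a \emph{single} $p$--$q$ path in $\Gamma^{\mathsmaller\sqsubset}$ which lies in $\dom g$ for \emph{every} $L$-convex $g$ with $p,q\in\dom g$ --- applying it with $g=f$ and $g=f'$ at once keeps the path inside $\dom(f+f')$.

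For (d)(i): I would first show that $U^\ast:=\{[a,b]\in V^\ast:a,b\in U\}$ is $d_{\Gamma^\ast}$-convex. By Corollary~\ref{cor:mu:star}, $[x,y]$ lies on a shortest path between $[a,b]$ and $[a',b']$ in $\Gamma^\ast$ iff $x\in I_\Gamma(a,a')$ and $y\in I_\Gamma(b,b')$ (with $x\sqsubseteq y$), so $d_\Gamma$-convexity of $U$ forces $x,y\in U$. Hence, for each $p$, the restriction of $\delta_U^\ast=\delta_{U^\ast}$ to $\calL^\ast_p$ is the indicator of the convex subset $U^\ast\cap\calL^\ast_p$ of the valuated modular semilattice $\calL^\ast_p$, and the indicator $\delta_W$ of any convex subset $W$ of such a semilattice is submodular: $\calE(P,Q)\subseteq I(P,Q)\subseteq W$ for $P,Q\in W$ gives condition (1) of Theorem~\ref{th:submodularity-characterization}, while $P\wedge Q,P\vee Q\in I(P,Q)\subseteq W$ makes conditions (2), (3) read $0\ge 0$; if $P$ or $Q$ lies outside $W$ the inequalities are trivial. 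Connectivity of $U=\dom\delta_U$ in $\Gamma^{\mathsmaller\sqsubset}$ follows from $\Gamma[U]$ being connected (Lemma~\ref{lemma:convex:gated}) together with $E_\Gamma\subseteq E_{\Gamma^{\mathsmaller\sqsubset}}$. For the new case (d)(ii), with $U=\{p,q\}$, $p\sqsubseteq q$, $p\ne q$, the domain $\{p,q\}$ is connected in $\Gamma^{\mathsmaller\sqsubset}$ because $pq\in\Gamma^{\mathsmaller\sqsubset}$, and $\dom\delta_U^\ast=\{[p,p],[q,q],[p,q]\}$ (note $[q,p]\notin V^\ast$). For each $a\in V$ I would compute $\dom\delta_U^\ast\cap\calL^\ast_a$ using $[x,x]\in\calL^\ast_a\iff a=x$ and $[p,q]\in\calL^\ast_a\iff a\in[p,q]$: it equals $\{[p,p],[p,q]\}$ for $a=p$, $\{[q,q],[p,q]\}$ for $a=q$, $\{[p,q]\}$ for $a\in[p,q]\setminus\{p,q\}$, and $\varnothing$ otherwise. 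In the two two-element cases the surviving elements are comparable in $\calL^\ast_a$ (the element $[x,x]$ is the bottom of $\calL^\ast_x$), hence form a bounded pair whose envelope is the pair itself and for which the submodularity inequality is the trivial equality $f(P)+f(Q)\ge f(P\wedge Q)+f(P\vee Q)$ with $\{P\wedge Q,P\vee Q\}=\{P,Q\}$; the remaining cases are immediate. Thus $\delta_U^\ast$ is submodular on every $\calL^\ast_a$, so $\delta_U$ is $L$-convex.

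I expect the main obstacle to be twofold. First, case (d)(ii) has no counterpart in~\cite{Hirai:0ext,Chalopin,Hirai:Lconvexity} and requires the explicit bookkeeping above --- in particular identifying exactly when $[p,p],[q,q],[p,q]$ survive in $\calL^\ast_a$ and verifying that the surviving pair is always bounded, so that Theorem~\ref{th:submodularity-characterization} applies in its simplest form. Second, every $\Gamma^{\mathsmaller\sqsubset}$-connectivity claim used in (a) and (c) ultimately rests on Theorem~\ref{th:domf:arrow} (through Corollary~\ref{cor:domf:arrow}), which is the deepest input from the surrounding development; granting that, the rest is a faithful transcription of Hirai's proof of Theorem~\ref{th:addition:orig} with Boolean pairs ($\BPrel$) replaced by an arbitrary admissible relation $\sqsubseteq$.
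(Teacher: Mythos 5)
Your proof is correct and follows essentially the same route as the paper's: (b), (e), (f) are disposed of by Lemmas~\ref{lemma:ftilde}(a), \ref{lemma:mu-is-Lconvex}(a), \ref{lemma:restriction-submodular}; (c) reduces to the connectivity of $\dom f$ via Lemma~\ref{lemma:ftilde}(b) and the same $\Gamma'$-coordinate-constancy argument; (a) and (d) both go through Corollary~\ref{cor:domf:arrow} for connectivity and a verification that the starred function is submodular on each $\calL^\ast_p$.

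A few small points of divergence, none of which affect correctness. In (d)(i), where the paper invokes Lemma~\ref{lemma:convex:gated} (checking that common $\Gamma^\ast$-neighbors of pairs in $U^\ast$ stay in $U^\ast$, then citing Hirai's Lemma~3.7(4) for indicators of convex sets), you instead deduce $\Gamma^\ast$-convexity of $U^\ast$ directly from the additivity formula $\mu_{\Gamma^\ast}([a,b],[a',b'])=\mu(a,a')+\mu(b,b')$ of Corollary~\ref{cor:mu:star}, and you re-derive submodularity of a convex-set indicator from Theorem~\ref{th:submodularity-characterization} rather than citing it; both are fine and your route is arguably cleaner. In (d)(ii) your enumeration of $U^\ast_a$ is actually more precise than the paper's (the paper writes $U^\ast_x=\{[p,q]\}$ for $x\in[p,q]$, silently dropping $[p,p]$ at $x=p$ and $[q,q]$ at $x=q$; your explicit case split handles these correctly, and the surviving two-element sets are chains, hence still convex/submodular, so the paper's conclusion is unaffected). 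Your unpacking of the ``straightforward'' claims in (a) (positive homogeneity, additivity of conditions (2)--(3) of Theorem~\ref{th:submodularity-characterization}, preservation of condition (1) under intersection of domains, the $c=0$ edge case) is a faithful expansion of what the paper leaves implicit.
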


\myparagraph{(a)} We need to show that $\dom (f+f')=(\dom f)\cap (\dom f')$ is connected in $\Gamma^{\mathsmaller\sqsubset}$;
checking other conditions of $L$-convexity is straightforward.
Consider $p,q\in (\dom f)\cap (\dom f')$, and let $(p_0,p_1,\ldots,p_k)$ be the path in $\Gamma^{\mathsmaller\sqsubset}$
constructed in Corollary~\ref{cor:domf:arrow}. By the corollary, for each $i\in[0,k]$ we have $p_i\in \dom f$ and $p_i\in \dom f'$,
which implies the claim.

\myparagraph{(b)} The claim holds by Lemma~\ref{lemma:ftilde}(a).

\myparagraph{(c)} It suffices to prove that $\dom f$ is connected in $\Gamma^{\mathsmaller\sqsubset}$,
the claim will then follow from  Lemma~\ref{lemma:ftilde}(b).
Consider $p,q\in \dom f$. We know that $(p,p'),(q,p')\in \dom \tilde f$, and thus
there exists path $(\tilde p_0,\tilde p_1,\ldots,\tilde p_k)$ in $(\Gamma\times\Gamma')^{\mathsmaller\sqsubset}$
with $(\tilde p_0,\tilde p_k)=((p,p'),(q,p'))$ and $\tilde p_i\in\dom \tilde f$ for all $i$.
By Corollary~\ref{cor:domf:arrow}, such path can be chosen so that $\tilde p_i\in I((p,p'),(q,p'))$ for all $i$.
Thus, we have $\tilde p_i=(p_i,p')$ for all $i$.
Sequence $(p_0,p_1,\ldots,p_k)$ is now a path in $\Gamma^{\mathsmaller\sqsubset}$
satisfying $p_i\in\dom f$ for all $i$.

\myparagraph{(d)} Clearly, in both cases $U$ is connected in $\Gamma^{\mathsmaller\sqsubset}$;
in the case (i) this holds since $U$ is connected in $\Gamma$, and in the case (ii) the claim is trivial.

Let $U^\ast$ be the set of vertices $[p,q]$ in $\Gamma^\ast$ with $p,q\in U$, and for $x\in\Gamma$
denote $U^\ast_x=U^\ast\cap \calL^\ast_x$.
Note that $\delta_U^\ast=\delta_{U^\ast}$,
and the restriction of $\delta_{U^\ast}$ to $\calL^\ast_x$ equals $\delta_{U^\ast_x}$.
 It suffices to show that $U^\ast_x$ is convex in $\calL^\ast_x$;
the submodularity of $\delta_{U^\ast_x}$ on $\calL^\ast_x$ will then follow from~\cite[Lemma 3.7(4)]{Hirai:0ext}.

If $U=\{p,q\}$ where $p\sqsubseteq q$ then $U^\ast_x=\{[p,q]\}$ if $x\in[p,q]$, and $U^\ast_x=\varnothing$ otherwise;
in both cases $U^\ast_x$ is clearly convex.
Suppose that $U$ is a convex set in $\Gamma$. It suffices to show that $U^\ast$ is convex in $\Gamma^\ast$.
(Since $\calL^\ast_x$ is convex in $\Gamma^\ast$ by Lemma~\ref{lemma:GammaIdeal:ebp}, the intersection $U^\ast\cap\calL^\ast_x$ would then
be convex in $\Gamma^\ast$ and thus also in $\calL^\ast_x$).

We use Lemma~\ref{lemma:convex:gated} to show that $U^\ast$ is convex in $\Gamma^\ast$.
Let $[u,v]\in\Gamma^\ast$ be common neighbor of distinct $[p,q],[p',q']\in U^\ast$;
we need to show that $[u,v]\in U^\ast$. 
Modulo symmetry, 4 cases are possible:
\begin{itemize}[noitemsep,topsep=0pt]
\item $p=u=p'$ and $q\rightarrow v\rightarrow q'$.
\item $p=u=p'$ and $q\rightarrow v\leftarrow q'$.
\item $p=u=p'$ and $q\leftarrow v\rightarrow q'$.
\item $p\rightarrow u=p'$ and $q\rightarrow v=q'$.
\end{itemize} 
In each of these cases conditions $p,q,p',q'\in U$ and convexity of $U$ implies that $u,v\in U$, and so $[u,v]\in U^\ast$.

\myparagraph{(e)} The claim holds by Lemma~\ref{lemma:mu-is-Lconvex}(a).

\myparagraph{(f)} The claim holds by Lemma~\ref{lemma:restriction-submodular}.

\begin{corollary}\label{cor:NAGLKGLAKSGH}
If function $f$ is an $L$-convex on $\Gamma$ and $U$ is convex in $\Gamma$
then function $f+\delta_U$ is $L$-convex on $\Gamma$. In particular,
$f+\delta_{\DELTA{B}_k(x)}$ is $L$-convex for any $x\in\Gamma$ and $k\ge 0$.
\end{corollary}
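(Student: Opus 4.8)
The plan is to deduce this directly from the pieces of Theorem~\ref{th:ebp-addition} together with Theorem~\ref{th:Bkball}, so there is essentially no new work to do. First I would observe that, since $U$ is convex in $\Gamma$ (and by Theorem~\ref{th:orbits}(a) $d_\Gamma$-convexity and $\mu_\Gamma$-convexity coincide on a modular complex), the indicator function $\delta_U:V\rightarrow\{0,\infty\}$ is $L$-convex on $\Gamma$ by Theorem~\ref{th:ebp-addition}(d)(i).

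Next I would invoke Theorem~\ref{th:ebp-addition}(a): the sum of two functions that are $L$-convex on $\Gamma$ is again $L$-convex on $\Gamma$. Applying this to $f$ and $\delta_U$ yields that $f+\delta_U$ is $L$-convex on $\Gamma$, which is the main assertion. (Implicitly one uses that $\dom(f+\delta_U)=(\dom f)\cap U$ is connected in $\Gamma^{\mathsmaller\sqsubset}$, but this is exactly what part (a) establishes, so nothing extra is needed here.)

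For the ``in particular'' clause I would note that $\DELTA{B}_k(x)$ is convex in $\Gamma$ for every $x\in\Gamma$ and every $k\ge 0$ by Theorem~\ref{th:Bkball}; hence it is a legitimate choice of $U$ in the first part of the statement, and $f+\delta_{\DELTA{B}_k(x)}$ is $L$-convex on $\Gamma$.

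There is no real obstacle: the only point deserving a word of care is checking that the notion of ``convex in $\Gamma$'' used in the hypothesis matches the hypothesis ``$d_\Gamma$-convex'' of Theorem~\ref{th:ebp-addition}(d)(i), which is immediate since the two coincide for modular graphs, and that $\DELTA{B}_k(x)$ falls under this notion, which is precisely the content of Theorem~\ref{th:Bkball}.
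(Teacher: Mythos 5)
Your proposal is correct and matches the paper's (implicit) derivation: the corollary is stated immediately after the proof of Theorem~\ref{th:ebp-addition} with no separate argument, and it is indeed intended to follow by combining parts (a) and (d)(i) of that theorem with Theorem~\ref{th:Bkball}. Your side remark about connectivity of $\dom(f+\delta_U)$ being already handled by part (a) is also accurate, since that is exactly what the proof of Theorem~\ref{th:ebp-addition}(a) establishes via Corollary~\ref{cor:domf:arrow}.
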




\subsection{$f$-extremality}\label{sec:f-extremal}

Given an $L$-convex function $f:\Gamma\rightarrow\overline\RR$, we say that pair $(p,q)$ is {\em $f$-extremal}
if $p,q\in\dom f$ and $f(p)=\min \{f(x)\::\:x\in I(p,q)\}$.
We say that it is {\em strictly $f$-extremal} if $p,q\in\dom f$ and $\argmin \{f(x)\::\:x\in I(p,q)\}=\{p\}$.
In this section we establish some key facts about $f$-extremal pairs that
will be used in the analysis of the $\DELTANEIB$-SDA algorithm.


\begin{lemma}\label{lemma:OAUSGAISFHASGALSFHAKJSGHAKSFALGHAKJDSHAKDJFH}
Suppose that $(p,q)$ is $f$-extremal, $p\DOWN q\ne p$ and
 $s\in\argmin\{f(s)\::\:s\in[p\DOWN q,p]-\{p\}\}$. Then $(s,q)$ is $f$-extremal.
\end{lemma}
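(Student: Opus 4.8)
The goal is to show that if $(p,q)$ is $f$-extremal, $a := p\DOWN q \neq p$, and $s$ is a minimizer of $f$ over $[a,p]\setminus\{p\}$, then $(s,q)$ is $f$-extremal, i.e.\ $s,q\in\dom f$ and $f(s)=\min\{f(x):x\in I(s,q)\}$. The plan is first to record the structural facts about the configuration: by eq.~\eqref{eq:arrow} we have $[a,p]=I(p,q)\cap\calL^-_p\subseteq I(p,q)$, so $s\in I(p,q)\cap\dom f$ by $f$-extremality of $(p,q)$ (every element of $I(p,q)$ has finite value since $f(p)$ is the minimum there). Also, since $s\in I(p,q)$ and $s\in[a,p]$, we get $s\in I(s,q)$ and in fact $I(s,q)\subseteq I(p,q)$ (as $(p,s,q)$ is a shortest subpath). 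So $q\in\dom f$ is already assumed, $s\in\dom f$ is established, and it remains only to prove $f(s)\le f(x)$ for every $x\in I(s,q)$.

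The main step is the inequality $f(s)\le f(x)$ for $x\in I(s,q)$. I would split on whether $x\in[a,p]$ or not. If $x\in[a,p]$: either $x\ne p$, and then $f(s)\le f(x)$ by the choice of $s$ as a minimizer over $[a,p]\setminus\{p\}$; or $x=p$, but $p\notin I(s,q)$ unless $s=p$ (ruled out) — more carefully, if $p\in I(s,q)$ with $s\preceq p$ and $p\in I(s,q)$, then $(s,p,q)$ shortest forces $s\in I(p,q)$ already known, but $p\in I(s,q)$ would need $p$ on a shortest $s$–$q$ path; since $s\in[a,p]$ and $(p,a,q)$ is shortest (as $a=p\DOWN q$), we have $d(s,q)=d(s,p)+d(p,q)$ only if... actually $d(p,q)<d(s,q)$ is false since $s\in I(p,q)$ means $d(s,q)\le d(p,q)$, so $p\notin I(s,q)$ when $s\ne p$. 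The substantive case is $x\in I(s,q)$ with $x\notin[a,p]$. Here I would use the submodularity of $f$ restricted to $\calL^-_s$ together with the structure of the interval. The key idea: consider the element $x' := x\wedge p$ (which exists and lies in $[a,p]$ since $a\preceq x$ need not hold — hmm). Let me instead use operation $\DOWN$: set $x^- = s\DOWN q$; one shows via Lemma~\ref{lemma:updown}-type reasoning (with $s\in[a,p]=[p\DOWN q\,?,\,p]$, careful with orientation) that $s\DOWN q$ relates to $p\DOWN q$. Actually the cleanest route: apply Lemma~\ref{lemma:Lopt:pq}(a) to the $L$-convex function $f$ and the pair $(x,s)$ — if $f(x)<f(s)$ we would get a strict descent step inside $I(x,s)\subseteq I(p,q)$ landing below $f(s)$; I then need to chase this back into $[a,p]\setminus\{p\}$ to contradict minimality of $s$, or into a contradiction with $f$-extremality of $(p,q)$.

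Concretely I expect the argument to run: suppose for contradiction $f(x)<f(s)$ for some $x\in I(s,q)$. Among all such $x$ pick one minimizing $d(s,x)$. By Lemma~\ref{lemma:Lopt:pq}(a) applied to $(s,x)$ there is $u\in I(s,x)$ with $su\in\Gamma^{\sqsubset}$ and $f(u)<f(s)$ (strict since $f(x)<f(s)\le f(s)$). Now $u\in I(s,x)\subseteq I(s,q)\subseteq I(p,q)$, so $f(u)\ge f(p)$; also $u$ is a $\sqsubset$-neighbor of $s$ lying in $I(p,q)$. If $u\in\calL^-_s$ then $u\in[s\DOWN q, s]\subseteq[a,p]$ (using that $s\in[a,p]$ and convexity/the interval structure from eq.~\eqref{eq:arrow}), and $u\ne p$ since $f(u)<f(s)\le f(p)$ would contradict... wait $f(p)$ could be large; rather $u\in[a,p]\setminus\{p\}$ would give $f(u)\ge f(s)$, contradiction. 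If $u\in\calL^+_s$, then $u\in[s,s\UP q]$; here I'd use that $(p,s,u)$ is shortest with $p\succeq s\preceq u$ forcing $s=p\wedge u$, and then $p\vee u$ exists in the modular semilattice $\calL^+_a$ (all of $p,u\in\calL^+_a$), so consider $p'=p\vee u\in I(p,q)$; submodularity/$\wedge$-convexity on a suitable semilattice relates $f(u)+f(p)$ and $f(s)+f(p')$, and since $f(p)\le f(p')$ (as $p'\in I(p,q)$ and $(p,q)$ is $f$-extremal) we'd derive $f(u)\ge f(s)$, the desired contradiction.

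The main obstacle I anticipate is exactly this last subcase $u\in\calL^+_s$: getting the right semilattice on which $f$ is submodular and the right bounded/antipodal pair to which to apply eq.~\eqref{eq:submodular:special}, so that the inequality $f(u)+f(p)\ge f(s)+f(p\vee u)$ (or its $\wedge$-convexity analogue) falls out cleanly; one must check $(s,p\vee u)$ or $(u,p)$ is the relevant special pair and that all four points lie in a common valuated modular semilattice (namely $\calL^-_{?}$ or $\calL^+_a$, with the valuation from Lemma~\ref{lemma:GammaIdeal:ebp}). I would handle this by working inside $\calL^-_u$ or by invoking the submodularity of $f$ on $\calL^-_s$ restricted appropriately, analogously to the proof of Lemma~\ref{lemma:GHAKDJGHALKSDFG}. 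The rest — membership in $\dom f$, the interval inclusions, and the routine case $x\in[a,p]$ — is bookkeeping.
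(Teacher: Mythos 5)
There is a genuine gap in the case $u \in \calL^-_s$ (i.e.\ $u \sqsubset s$). You claim $u \in [s\DOWN q, s] \subseteq [a,p]$, where $a = p\DOWN q$. But that containment requires $a \preceq s\DOWN q$, which you have no reason to believe: in fact one can check the \emph{opposite} inequality $s\DOWN q \preceq a$ (since $a$ lies in $\calL^-_s \cap I(s,q) = [s\DOWN q,s]$, using that $a \sqsubseteq s$ via condition~(\ref{def:ebp}a)), and equality $s\DOWN q = a$ would need $s\DOWN q \sqsubseteq p$, which does not follow since $\sqsubseteq$ is not transitive. More to the point, from $u \sqsubset s \sqsubset p$ you cannot conclude $u \sqsubseteq p$, so $u$ need not lie in $[a,p] = I(p,q) \cap \calL^-_p$, and the immediate contradiction with the minimality of $s$ does not go through. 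What the paper does instead is apply Lemma~\ref{lemma:Lopt:pq}(a) \emph{repeatedly} to the pair $(u,p)$ (using $f(u) \ge f(p)$, which holds because $u \in I(p,q)$ and $(p,q)$ is $f$-extremal): each application produces a $\sqsubset$-neighbor strictly closer to $p$ in $[u,p]$ with non-increasing $f$-value, and the last such neighbor before reaching $p$ is an element $x \in [u,p] \subseteq I(p,q)$ with $x \sqsubset p$; then $x \in I(p,q) \cap \calL^-_p = [a,p]$, $x \ne p$, and $f(x) \le f(u) < f(s)$, which is the contradiction. This iterative descent toward $p$ is the missing idea.

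Two smaller issues. First, your justification for $s \in \dom f$ (``every element of $I(p,q)$ has finite value since $f(p)$ is the minimum there'') is simply wrong: $f(p) = \min_{I(p,q)} f$ is an inequality $f(p) \le f(x)$, which says nothing about finiteness of $f(x)$. The paper instead observes $p\DOWN q \in \dom f$ by Theorem~\ref{th:domf:arrow}, hence $f(s) \le f(p\DOWN q) < \infty$. Second, in the case $u \sqsupset s$ your sketch relies on $p \vee u$ existing in $\calL^+_a$ (and you hedge toward ``the $\wedge$-convexity analogue''); but $p$ and $u$ need not be upper-bounded, and the clean route is the one the paper takes: apply the full envelope form of the submodularity inequality for $(p,u)$ in the semilattice $\calL^+_s$ (not $\calL^+_a$), using $p \wedge u = s$, which yields $f(p) + f(u) \ge f(s) + \sum_{a \in \calE(p,u)} \lambda_a f(a)$; every $a \in \calE(p,u) \subseteq I(p,q)$ has $f(a) \ge f(p)$, and $f(u) < f(s)$ gives the contradiction without any existence assumption on $p \vee u$.
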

\begin{proof}
Note that $p\DOWN q\in\dom f$ by Theorem~\ref{th:domf:arrow}, and thus $s\in\dom f$.
Suppose the claim is false, then there exists $q'\in I(s,q)-\{s\}$ with $f(q')<f(s)$.
By Lemma~\ref{lemma:Lopt:pq}, there exists $u\in I(s,q')\subseteq I(p,q)$ with $us\in \Gamma^{\mathsmaller\sqsubset}$ and $f(u)<f(s)$.
First, suppose that $u\sqsubset s$. 
Note that $u\preceq s\preceq p$ and $f(u)\ge f(p)$.
We claim that there exists $x\in I(u,p)=[u,p]$ with $x\sqsubset p$ and $f(x)\le f(u)$.
Indeed, define sequence $u_0\sqsubset u_1 \sqsubset \ldots \sqsubset u_k\sqsubset p$ 
 with $u_0=u$ and $f(u_0)\ge f(u_1)\ge \ldots f(u_k)\ge f(p)$ via the following rule:
 if $u_i\sqsubset p$ then set $k=i$ and stop, otherwise let $u_{i+1}$ be an element in $[u_i,p]$
 with $u_i\sqsubset u_{i+1}$ and $f(u_i)\ge f(u_{i+1})$,
 which exists by Lemma~\ref{lemma:Lopt:pq}.
 Taking $x=u_k$ now proves the claim.

Clearly, $x\in I(p,q)$, and thus $x\in[p\DOWN q,p]$ and $f(x)\le f(u)<f(s)$, which contradicts the choice of $s$.
Thus, we must have $u\sqsupset s$.
Note that $p\wedge u=s$ since $p\succeq s\preceq u$ and $(p,s,u)$ is a shortest subpath.
We know that $f$ is submodular on a modular semilattice $\calL_s^+$, and $p,u\in\calL_s^+$.
Thus, $f(p)+f(u)\ge f(s)+\sum_{a\in\calE(p,u)}\lambda_a f(a)$
where $\lambda_a$ are nonnegative numbers with $\sum_{a\in\calE(p,u)}\lambda_a =1$.
This gives a contradiction since $f(u)<f(s)$  and $f(p)\le f(a)$ for all $a\in\calE(p,u)\subseteq I(p,q)$.
\end{proof}
\begin{corollary}\label{cor:GLKAJS}
If $(p,q)$ is $f$-extremal and $p\wedge q$ exists then $(p\wedge q,q)$ is $f$-extremal.
\end{corollary}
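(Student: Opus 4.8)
The plan is to derive Corollary~\ref{cor:GLKAJS} from Lemma~\ref{lemma:OAUSGAISFHASGALSFHAKJSGHAKSFALGHAKJDSHAKDJFH} by a descent argument, inducting on $d(p,q)$. If $p\wedge q=p$ (equivalently $p\preceq q$), then $(p\wedge q,q)=(p,q)$ is $f$-extremal by hypothesis and there is nothing to prove; this also covers the base case $d(p,q)=0$. So assume $p\wedge q\prec p$. First I would check that this already forces $p\DOWN q\neq p$, so that the parent lemma can be invoked: choose a maximal chain $p\wedge q=c_0\prec c_1\prec\cdots\prec c_m=p$ inside the interval $[p\wedge q,p]$, so that $m\geq 1$ and, by Lemma~\ref{lemma:ascending}, $c_{m-1}\to p$ is an edge of $\Gamma$, whence $c_{m-1}\sqsubseteq p$, i.e. $c_{m-1}\in\calL^-_p$. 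Since $p\wedge q\in I(p,q)$, the chain element $c_{m-1}$ lies on a shortest $p$-$q$ path, so $c_{m-1}\in I(p,q)\cap\calL^-_p=[p\DOWN q,p]$ by eq.~\eqref{eq:arrow}; as $c_{m-1}\neq p$, this gives $p\DOWN q\neq p$. Now Lemma~\ref{lemma:OAUSGAISFHASGALSFHAKJSGHAKSFALGHAKJDSHAKDJFH} applies: picking $s\in\argmin\{f(s):s\in[p\DOWN q,p]\setminus\{p\}\}$, the pair $(s,q)$ is $f$-extremal, and since $s\in[p\DOWN q,p]\subseteq I(p,q)$ with $s\neq p$ we have $d(s,q)=d(p,q)-d(p,s)<d(p,q)$.

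The step on which the whole argument turns is showing that this descent preserves the meet, i.e. $s\wedge q=p\wedge q$. The key observation I would use is that the up-set $\calL^\uparrow_{p\wedge q}$ is convex in $\Gamma$ (Lemma~\ref{lemma:GammaIdeal}(b)) and contains both $p$ and $q$, hence $I(p,q)\subseteq\calL^\uparrow_{p\wedge q}$; in particular our $s\in[p\DOWN q,p]\subseteq I(p,q)$ satisfies $p\wedge q\preceq s$, while $s\preceq p$ holds by construction. Thus $p\wedge q$ is a common lower bound of $s$ and $q$, so $s\wedge q$ exists (Lemma~\ref{lemma:GammaIdeal}(a)) and $p\wedge q\preceq s\wedge q$; conversely $s\wedge q\preceq s\preceq p$ and $s\wedge q\preceq q$ give $s\wedge q\preceq p\wedge q$, whence $s\wedge q=p\wedge q$. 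Applying the induction hypothesis to the $f$-extremal pair $(s,q)$, whose meet $s\wedge q=p\wedge q$ exists and which has $d(s,q)<d(p,q)$, then yields that $(s\wedge q,q)=(p\wedge q,q)$ is $f$-extremal, completing the proof.

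I do not expect a deep obstacle here; the proof is short once Lemma~\ref{lemma:OAUSGAISFHASGALSFHAKJSGHAKSFALGHAKJDSHAKDJFH} is available. The two points that genuinely need care are (i) extracting the edge $c_{m-1}\to p$ from the cover chain so as to certify $p\DOWN q\neq p$ (without this the parent lemma cannot be applied at all), and (ii) the two-sided squeeze $p\wedge q\preceq s\wedge q\preceq p\wedge q$, which is what keeps the recursion anchored at the same second coordinate $q$ with the meet unchanged as the first coordinate descends. Both rely only on convexity of $[p\wedge q,p]$ and of $\calL^\uparrow_{p\wedge q}$, on Lemma~\ref{lemma:ascending}, and on the description of $[p\DOWN q,p]$ in eq.~\eqref{eq:arrow}, so no new machinery is needed beyond what is already set up for extended modular complexes.
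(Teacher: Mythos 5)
Your proof is correct and follows the same strategy as the paper, which disposes of this corollary in one line (``repeatedly applying Lemma~\ref{lemma:OAUSGAISFHASGALSFHAKJSGHAKSFALGHAKJDSHAKDJFH} with an induction argument''). You have supplied precisely the details the paper leaves implicit — certifying $p\DOWN q\neq p$ via a cover of $p$ in $[p\wedge q,p]$, and the squeeze $s\wedge q=p\wedge q$ that anchors the induction — and both steps check out.
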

\begin{proof}
The claim follows by repeatedly applying Lemma~\ref{lemma:OAUSGAISFHASGALSFHAKJSGHAKSFALGHAKJDSHAKDJFH} 
(with an induction argument).
\end{proof}

\begin{lemma}\label{lemma:f-extremal}
If $(p,q)$  is $f$-extremal then  $f(p\diamond q)\le f(q)$.
\end{lemma}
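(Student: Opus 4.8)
The natural approach is induction on $d(p,q)$; if $p=q$ then $p\diamond q=q$ and the claim holds trivially, so assume $p\ne q$.

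\textbf{Key reduction.} It suffices to produce $u\in I(p,q)\cap\dom f$ with $u\ne q$, $f(u)\le f(q)$, and such that $(p,p\diamond q,u,q)$ is a shortest subpath. Indeed, then $(p,u,q)$ is a shortest subpath, so $I(p,u)\subseteq I(p,q)$, the pair $(p,u)$ is again $f$-extremal, and $d(p,u)<d(p,q)$; Lemma~\ref{lemma:circ-preserved} (with $\circ=\diamond$) gives $p\diamond u=p\diamond q$; and the induction hypothesis applied to $(p,u)$ yields $f(p\diamond q)=f(p\diamond u)\le f(u)\le f(q)$.

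\textbf{A descending (or ascending) move.} Since $p\ne q$, at least one of $p\DOWN q\ne p$ and $p\UP q\ne p$ holds; assume $p\DOWN q\ne p$ (the other case is symmetric, using the order-reversed analogue of Lemma~\ref{lemma:OAUSGAISFHASGALSFHAKJSGHAKSFALGHAKJDSHAKDJFH} with $t\in[p,p\UP q]\setminus\{p\}$ in place of $s$). If $p\DOWN q=q$ then already $p\diamond q=q$ and we are done, so assume $p\DOWN q\ne q$. Pick $s\in\argmin\{f(s):s\in[p\DOWN q,p]\setminus\{p\}\}$; this set contains $p\DOWN q$, which lies in $\dom f$ by Theorem~\ref{th:domf:arrow}, so $s$ is well defined, $s\in\dom f$, and $s\ne q$ (since $p\DOWN q\ne q$ implies $q\notin[p\DOWN q,p]$). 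By Lemma~\ref{lemma:OAUSGAISFHASGALSFHAKJSGHAKSFALGHAKJDSHAKDJFH} the pair $(s,q)$ is $f$-extremal, and $d(s,q)<d(p,q)$, so the induction hypothesis gives $f(s\diamond q)\le f(q)$ — in particular $s\diamond q\in\dom f$. Since $(p,s,q)$ is a shortest subpath, Lemma~\ref{lemma:DeltaNeighbors:2}(iii) shows that $(p,p\diamond q,s\diamond q,q)$ is a shortest subpath. Hence $u:=s\diamond q$ meets all requirements of the reduction, \emph{provided} $s\diamond q\ne q$.

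\textbf{The degenerate case.} There remains the possibility that the move always lands on $q$. Since $p$ and $s$ both lie in $[p\DOWN q,p]$ and $p\DOWN q\sqsubseteq p$, the vertices $p,s$ are $\DELTANEIB$-neighbors, so $s\diamond q=q$ forces $\DELTA{d}_\Gamma(p,q)\le\DELTA{d}_\Gamma(p,s)+\DELTA{d}_\Gamma(s,q)\le 2$ (and the ascending move gives the same bound). As $\DELTA{d}_\Gamma(p,q)\le 1$ already implies $p\diamond q=q$, the only remaining case is $\DELTA{d}_\Gamma(p,q)=2$; here $p\diamond q$ and $q$ are $\DELTANEIB$-neighbors and $p\diamond q=\Pr_{[p\DOWN q,\,p\UP q]}(q)$ with $p\DOWN q\sqsubseteq p\diamond q\sqsubseteq p\UP q$ by Lemma~\ref{lemma:diamond:pq}. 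Suppose, for contradiction, $f(p\diamond q)>f(q)$. Applying Lemma~\ref{lemma:Lopt:pq}(a) to the pair $(p\diamond q,q)$ gives $u\in I(p\diamond q,q)$ with $f(u)<f(p\diamond q)$ and $u\,(p\diamond q)\in\Gamma^{\mathsmaller\sqsubset}$; then $(p,p\diamond q,u,q)$ is a shortest subpath, so the key reduction forces $f(p\diamond q)=f(p\diamond u)\le f(u)<f(p\diamond q)$ — absurd — unless $u=q$, i.e.\ unless $p\diamond q$ and $q$ are $\sqsubseteq$-comparable. This last sub-case is ruled out by combining the $f$-extremality of $(p,q)$ with the submodularity of $f$ on the modular semilattices $\calL^{-}_{p\diamond q},\calL^{+}_{p\diamond q}$ (equivalently on the modular lattice $[p\DOWN q,p\UP q]$, which is convex in $\Gamma$ by Lemma~\ref{lemma:GammaIdeal}), applied to an appropriate pair involving $q$. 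I expect this final sub-case — excluding that $p\diamond q$ and $q$ are comparable while $f(p\diamond q)>f(q)$ — to be the main obstacle; the rest is a clean induction powered by the identity $p\diamond u=p\diamond q$ of Lemma~\ref{lemma:circ-preserved}.
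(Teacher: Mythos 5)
Your setup matches the paper's: the same induction on $d(p,q)$, the same reduction via the criterion ``there exists $y\ne q$ with $f(y)\le f(q)$ and $(p,p\diamond q,y,q)$ a shortest subpath'' (the paper's $(\star)$), the same choice of $s\in\argmin\{f(s):s\in[p\DOWN q,p]\setminus\{p\}\}$ invoking Lemma~\ref{lemma:OAUSGAISFHASGALSFHAKJSGHAKSFALGHAKJDSHAKDJFH}, and the same dispatch of the ``easy'' case $s\diamond q\ne q$ via Lemma~\ref{lemma:DeltaNeighbors:2} and the induction hypothesis for $(s,q)$. That part is correct and is the paper's Case~1.

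The real issue is that you have not proved the degenerate case $s\diamond q=q$, and you acknowledge this yourself. This is not a small tail; it is where the bulk of the work lies. The paper spends three nontrivial subcases (its Cases 2--4) on it, splitting on the relationship between $p$, $a=s\wedge q$ and $q$ and invoking the full strength of submodularity of $f^\ast$ on $\calL^\ast_s$ with a careful envelope/coordinate argument in the last subcase. Moreover the paper prepares for these subcases at the outset with a normalization you omit: after noting that one of $p\DOWN q\ne p$, $p\UP q\ne p$ holds, it further chooses the orientation so that the condition $p\sqsupseteq p\wedge q\sqsubseteq q$ fails whenever both hold; this normalization is used in the case split. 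Your route --- observe $\DELTA{d}(p,q)=2$, apply Lemma~\ref{lemma:Lopt:pq}(a) to $(p\diamond q,q)$, and reduce to ``$p\diamond q$ and $q$ are $\sqsubseteq$-comparable with $f(p\diamond q)>f(q)$'' --- is a plausible alternative framing, but you do not close it, and your hint that it follows from submodularity on $\calL^\pm_{p\diamond q}$ is only a conjecture. I do not see a short way to finish along that line, and the paper's handling suggests the remaining case is genuinely intricate (it is Case~4, where the conclusion is obtained by contradiction via an envelope argument in $\calL^\ast_s$, not by a single submodularity inequality).

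There is also a secondary gap in the degenerate case: you apply Lemma~\ref{lemma:Lopt:pq}(a) to $(p\diamond q,q)$, which requires $p\diamond q\in\dom f$. You are arguing by contradiction from $f(p\diamond q)>f(q)$, which a priori includes $f(p\diamond q)=\infty$, and Theorem~\ref{th:domf:arrow} only gives $p\UP q,p\DOWN q\in\dom f$, not $p\diamond q\in\dom f$. You would need to establish membership of $p\diamond q$ in $\dom f$ before invoking that lemma (or restructure to avoid it); the paper sidesteps this by never applying Lemma~\ref{lemma:Lopt:pq} to $p\diamond q$ and instead working directly with submodularity inequalities whose conclusions yield finiteness where needed.
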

\begin{proof}
We use induction on $d(p,q)$. We can assume that $p\diamond q\ne q$, otherwise the claim is trivial. 
It would suffice to show the following:
\begin{itemize}[noitemsep,topsep=0pt]
\item[{($\star$)}] {\em There exists $y\ne q$ such that $f(y)\le f(q)$ and $(p\diamond q,y,q)$ is a shortest subpath.}
\end{itemize}
Indeed, by Lemma~\ref{lemma:circ-preserved} we would then have $p\diamond q=p\diamond y$,
and so the induction hypothesis for $(p,y)$ would give $f(p\diamond q)=f(p\diamond y)\le f(y)\le f(q)$.

One of $p\DOWN q,p\UP q$ must be different from $p$. By symmetry, we can assume that $p\DOWN q\ne p$.
We can further assume the following: if $p\DOWN q\ne p$ and $p\UP q\ne p$
then condition $p\sqsupseteq p\wedge q \sqsubseteq q$ does {\bf not} hold,
where the existence of $p\wedge q$ is a part of the condition.
Indeed, conditions $p\sqsupseteq p\wedge q \sqsubseteq q$ and $p\sqsubseteq p\vee q \sqsupseteq q$
cannot hold simultaneuosly (otherwise we would have $p\diamond q= q$),
so we can pick the condition that does not hold and then change the orientation if necessary.

Denote $x=p\DOWN q$, and pick $s\in\argmin\{f(s)\::\:s\in[x,p]-\{p\}\}$.
By Lemma~\ref{lemma:OAUSGAISFHASGALSFHAKJSGHAKSFALGHAKJDSHAKDJFH}, $(s,q)$ is $f$-extremal. 
Four cases are possible.

\vspace{5pt}\noindent
\fbox{Case 1: $\DELTA{d}(s,q)\ge 2$}
Denote $y=s\diamond q$, then $y\ne q$. By Lemma~\ref{lemma:DeltaNeighbors:2}, $(p\diamond q,y,q)$ is a shortest subpath.
By the induction hypothesis for $(s,q)$ we have $f(y)\le f(q)$.
Thus, $(\star)$ holds.

\medskip

In the remaining cases we have $\DELTA{d}(s,q)\le 1$. We will denote $a=s\wedge q=p\wedge q$.
Note that $p\sqsupseteq s\sqsupseteq a\sqsubseteq q$.
Furthermore, $x\in[a,s]$. 
Indeed, $(p,s,x,q)$ and thus $(s,x,q)$ are shortest subpaths
and $s\sqsupseteq x$, so $x\succeq s\DOWN q=s\wedge q=a$ by eq.~\eqref{eq:arrow}.
One can check that $(p,s,x,a,q)$ is a shortest subpath
(since $(p,s,q)$, $(s,a,q)$, $(s,x,a)$ are shortest subpaths).
Furthermore, $(p\wedge q,q)=(a,q)$ is $f$-extremal by Corollary~\ref{cor:GLKAJS}.

\vspace{5pt}\noindent
\fbox{Case 2: $\DELTA{d}(s,q)\le 1$, $p\sqsupseteq a$}
We have $p\sqsupseteq a\sqsubseteq q$, which implies that $p\UP q=p$ and thus $p\diamond q=p\DOWN q=x$.
$f$-extremality of $(a,q)$ gives $f(a)\le f(q)$. We have $a\ne q$ since we assumed that $p\diamond q\ne q$.
Thus, element $y=a$ satisfies condition ($\star$).

\vspace{5pt}\noindent
\fbox{Case 3: $\DELTA{d}(s,q)\le 1$, $p\not\sqsupseteq a\prec q$}
We have $p\diamond a=p\DOWN a=p\DOWN q=x$ where the first two equalities hold since $p\succeq a$ and by Lemma~\ref{lemma:circ-preserved}, respectively.
Since $(a,q)$ is $f$-extremal, we get  $a\in \dom f$.
Since $a\ne q$, we can apply the induction hypothesis for $(p,a)$  and get $f(x)=f(p\diamond a)\le f(a)$.
 Conditions $\DELTA{d}(s,q)\le 1$ and  $x\in[a,s]$
imply that $y\eqdef x\vee q$ exists  and $a\sqsubseteq x$.
 Note that $x\wedge q=a$.
By submodularity inequality for $x,q$ in modular semilattice $\calL_a^+$ we get 
$f(x)+f(q)\ge f(a)+f(y)$, and so $f(y)\le f(q)$.
Clearly, $(p,x,y,q)$ is a shortest subpath.
By applyng Lemma~\ref{lemma:semilattice:technical} to modular semilattice $\calL_a^\uparrow$ 
we obtain that $(p,p\UP q,y,q)$ is a shortest subpath
(since $y\in I(p,q)$ and $y\succeq q$).
Since $y\succeq_{pq} x=p\DOWN q$ and $y\succeq_{pq} p\UP q$,
we get $y \succeq_{pq} (p\DOWN q)\vee_{pq} (p\UP q)=p\diamond q$,
i.e.\ $(p,p\diamond q,y,q)$ is a shortest subpath.
We have $x\ne a$ (since $p\sqsupseteq x$ and $p\not\sqsupseteq a$)
and thus $y\ne q$. 
Therefore, $(\star)$ holds.

\vspace{5pt}\noindent
\fbox{Case 4: $\DELTA{d}(s,q)\le 1$, $p\not\sqsupseteq a=q$}
We have $q\preceq x\preceq s\preceq p$, $q\sqsubseteq s$ (since $\DELTA{d}(s,q)\le 1$)
and $x\sqsubseteq p$.
Note that $p\diamond q=p\DOWN q=x$. Also, $q\ne x$ (since $q\not\sqsubseteq p$ and $x\sqsubseteq p$).
By Theorem~\ref{th:domf:arrow} we have $x\in \dom f$.
We claim that $(\star)$ holds. Indeed, suppose this is false, then
 $f(y)>f(q)$ for all $y\in[q,x]-\{q\}$.
Denote $\alpha=[q,s]$, $\beta=[x,s]$, $\gamma=[x,p]$,
then $\alpha,\beta,\gamma\in \calL^\ast_s$ and $\alpha\wedge \gamma=\alpha\cap \gamma=\beta$.
We know that function $f^\ast:\calL^\ast_s\rightarrow\overline\RR$ is submodular on $\calL^\ast_s$.
Condition $f(q)<f(x)$ implies that $f^\ast(\alpha)<f^\ast(\beta)$.
Condition $f(p)\le f(s)$ implies that $f^\ast(\gamma)\le f^\ast(\beta)$.

Consider $\delta=[u,v]\in \calE(\alpha,\gamma)-\{\gamma\}$.
By Lemma~\ref{lemma:d:star} we must have $u\in[q,x]$ and $v\in[s,p]$.
We must have $v\ne p$
(otherwise condition $\delta\ne\gamma$ would imply that $u\in[q,x]-\{x\}$;
we would also have $u\sqsubseteq v=p$ which is impossible since $x=p\DOWN q$).
Conditions $u\in[q,x]$ and $v\in[s,p]-\{p\}$ have  two implications: (i) the second coordinate of $v_{\alpha\gamma}(\delta)$ is strictly smaller 
than the second coordinate of $v_{\alpha\gamma}(\gamma)$;
(ii) $f(u)\ge f(q)$, $f(v)\ge f(s)$ and hence $f^\ast(\delta)\ge f^\ast(\alpha)$.

By submodularity, $f^\ast(\alpha)+f^\ast(\gamma)\ge f^\ast(\beta)+\sum_{\delta\in\calE(\alpha,\gamma)}\lambda_\delta f^\ast(\delta)$
where $\lambda_\delta$ are nonnegative numbers with $\sum_{\delta\in\calE(\alpha,\gamma)}\lambda_\delta =1$.
By implication (i), we have $\lambda_\gamma>0$. 
The submodularity inequality can be rewritten as
$$
[f^\ast(\gamma)-f^\ast(\beta)]+\sum_{\delta\in\calE(\alpha,\gamma):\lambda_\delta>0}\lambda_\delta [f^\ast(\alpha)-f^\ast(\delta)]\ge 0
$$
All numbers in square brackets are non-positive, thus they must all be 0. This implies that $f^\ast(\gamma)=f^\ast(\beta)$
and $f^\ast(\alpha)=f^\ast(\gamma)$, which contradicts condition $f^\ast(\alpha)<f^\ast(\beta)$.

%

\end{proof}

We conclude this section with a couple of results that will help to deal with cases when values of $f$ are not unique.
\begin{lemma}\label{lemma:f-extremal:1}
If $(p,q)$ is strictly $f$-extremal and $p\diamond q\ne q$ then $f(p\diamond q)<f(q)$.
\end{lemma}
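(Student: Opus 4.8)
The plan is to deduce the statement from Lemma~\ref{lemma:f-extremal} by a small perturbation that turns the hypothesis ``strictly'' into the strict inequality in the conclusion. First I would record a few elementary observations. Since $p\diamond q\ne q$ we must have $p\ne q$: otherwise $p\UP q=p\DOWN q=p$, hence $p\diamond q=p=q$. Next, $p\diamond q\in I(p,q)$, because $p\UP q$ and $p\DOWN q$ lie in $I(p,q)$ by eq.~\eqref{eq:arrow} and $\vee_{pq}$ is the join operation of the lattice $(I(p,q),\preceq_{pq})$ (Theorem~\ref{th:Ipq:modular-lattice}). Finally $\mu(q,p\diamond q)>0$, since $p\diamond q\ne q$ and $\mu$ is a metric.

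The main step is the perturbation. For $\epsilon\ge 0$ set $f_\epsilon=f+\epsilon\,\mu_{\Gamma,q}$, where $\mu_{\Gamma,q}(x)=\mu(x,q)$. By Lemma~\ref{lemma:mu-is-Lconvex}(b) the function $\mu_{\Gamma,q}$ is $L$-convex on $\Gamma$, so by Theorem~\ref{th:ebp-addition}(a) $f_\epsilon$ is $L$-convex on $\Gamma$; moreover $\dom f_\epsilon=\dom f$ because $\mu_{\Gamma,q}$ is finite-valued, so $p,q\in\dom f_\epsilon$. For $x\in I(p,q)$ one has $\mu(x,q)=\mu(p,q)-\mu(p,x)$, hence
\[
f_\epsilon(p)-f_\epsilon(x)=\bigl(f(p)-f(x)\bigr)+\epsilon\,\mu(p,x).
\]
Since $(p,q)$ is strictly $f$-extremal, $f(x)-f(p)>0$ for every $x\in I(p,q)-\{p\}$, and as $I(p,q)$ is finite we may fix $\epsilon>0$ small enough that $\epsilon\,\mu(p,x)<f(x)-f(p)$ for all such $x$. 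Then $f_\epsilon(p)\le f_\epsilon(x)$ for every $x\in I(p,q)$, i.e.\ $(p,q)$ is $f_\epsilon$-extremal.

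Now I would apply Lemma~\ref{lemma:f-extremal} to $f_\epsilon$, which gives $f_\epsilon(p\diamond q)\le f_\epsilon(q)=f(q)$ (using $\mu_{\Gamma,q}(q)=0$). On the other hand, since $p\diamond q\in I(p,q)$ and $\epsilon>0$,
\[
f_\epsilon(p\diamond q)=f(p\diamond q)+\epsilon\,\mu(q,p\diamond q)>f(p\diamond q),
\]
because $\mu(q,p\diamond q)>0$. Combining the two displays yields $f(p\diamond q)<f(q)$, as required. The only point that needs care is the verification that the perturbed pair $(p,q)$ is still $f_\epsilon$-extremal, which relies on finiteness of $I(p,q)$ together with the strictness hypothesis; everything else is bookkeeping.
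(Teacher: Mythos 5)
Your proof is correct and takes essentially the same approach as the paper: perturb $f$ by a small multiple of a distance function (using the strictness hypothesis to keep $(p,q)$ extremal) and then apply Lemma~\ref{lemma:f-extremal}. Your choice of perturbation $f+\epsilon\,\mu_{\Gamma,q}$ is the one that actually makes the final strict inequality come out the right way; the paper states the perturbation as $f_\varepsilon(x)=f(x)+\varepsilon\mu(p,x)$, but its chain of (in)equalities is only consistent with the sign/argument you chose (on $I(p,q)$ your $f+\epsilon\mu(\cdot,q)$ and $f-\varepsilon\mu(p,\cdot)$ differ by a constant), so your write-up is the clean reading of the same idea.
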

\begin{proof}
For $\varepsilon>0$ define function $f_\varepsilon:\Gamma\rightarrow\overline\RR$ via
$f_\varepsilon(x)=f(x)+\varepsilon\mu(x,q)$. Note that $f_\varepsilon$ is $L$-convex on $\Gamma$.
Clearly, there exists $\varepsilon>0$ such that $(p,q)$ is $f_\varepsilon$-extremal.
Using Lemma~\ref{lemma:f-extremal}, we obtain 
$f(p\diamond q)=f_\varepsilon(p\diamond q)-\varepsilon\mu(p\diamond q,q)<f_\varepsilon(p\diamond q)\le f_\varepsilon(q)=f(q)$.
\end{proof}

\begin{lemma}\label{lemma:f-extremal:2}
Suppose that $(p,q)$ is $f$-extremal, $p\preceq q$, $p\not\sqsubseteq q$, $q\diamond p\preceq p\diamond q$ and
$f(q')\ge f(q)$ for all $q'\in[p\diamond q,q]$. Then there exists $p'\in[p,q\diamond p]-\{p\}$ with $f(p')=f(p)$.
\end{lemma}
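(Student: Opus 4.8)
The plan is to reduce the hypotheses to a concrete configuration and then run the ``submodularity of $f^\ast$'' machinery on a single well-chosen semilattice. First I would make the operations explicit. Since $p\preceq q$, the gate of $q$ in $\calL^-_p$ is $p$ itself, so $p\DOWN q=p$ and hence $b:=p\diamond q=p\UP q$; by eq.~\eqref{eq:arrow}, $b$ is the $\preceq$-largest element of $[p,q]$ with $p\sqsubseteq b$. Symmetrically $q\UP p=q$, so $a:=q\diamond p=q\DOWN p$ is the $\preceq$-smallest element of $[p,q]$ with $a\sqsubseteq q$. Because $p\not\sqsubseteq q$ we get $p\prec a$ and $b\prec q$, and the hypothesis $q\diamond p\preceq p\diamond q$ is exactly $a\preceq b$; thus $p\prec a\preceq b\prec q$ with $p\sqsubseteq b$ and $a\sqsubseteq q$. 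Applying condition~(\ref{def:ebp}a) inside the interval $[p,b]$ (using $p\sqsubseteq b$) then yields $p\sqsubseteq a$ and $a\sqsubseteq b$. Consequently $[p,b],[a,b],[a,q]\in V^\ast$, and all three contain $[b,b]$, i.e.\ they lie in the valuated modular semilattice $\calL^\ast_b=\calL^\uparrow_{[b,b]}(\Gamma^\ast)$. Finally $a,b\in\dom f$ by Theorem~\ref{th:domf:arrow}, and $p,q\in\dom f$ since $(p,q)$ is $f$-extremal.

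Next I would apply submodularity of $f^\ast$ on $\calL^\ast_b$ to the pair $\alpha:=[p,b]$, $\gamma:=[a,q]$. One checks directly that $\alpha\wedge\gamma=[a,b]=:\beta$ (the largest interval contained in both), whereas $\alpha$ and $\gamma$ have no common upper bound in $V^\ast$: any $[w,z]\in V^\ast$ with $[p,b],[a,q]\subseteq[w,z]$ satisfies $w\preceq p\preceq q\preceq z$ and $w\sqsubseteq z$, which forces $p\sqsubseteq q$ by condition~(\ref{def:ebp}a), a contradiction. Hence $(\alpha,\gamma)$ is not bounded; since also $\alpha,\gamma\in\dom f^\ast$, Theorem~\ref{th:submodularity-characterization} gives $\calE(\alpha,\gamma)\subseteq\dom f^\ast$, and inequality~\eqref{eq:submodular:b} reads
\[
f^\ast(\alpha)+f^\ast(\gamma)\ \ge\ f^\ast(\beta)+\sum_{i=0}^{k}(\theta_i-\theta_{i-1})f^\ast(u_i),
\]
where $u_0=\alpha,u_1,\dots,u_k=\gamma$ are the sorted elements of $\calE(\alpha,\gamma)$ and $0=\theta_{-1}<\theta_0<\dots<\theta_k=1$ (in the antipodal case $k=1$ and this is just the $\wedge$-convexity inequality~\eqref{eq:submodular:antipodal}). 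By Corollary~\ref{cor:mu:star} together with Lemma~\ref{lemma:ascending}, a vertex $[\pi,\rho]\in V^\ast$ lies in $I(\alpha,\gamma)$ iff $\pi\in I(p,a)=[p,a]$ and $\rho\in I(b,q)=[b,q]$; writing $u_i=[\pi_i,\rho_i]$, we have $(\pi_0,\rho_0)=(p,b)$ and $(\pi_k,\rho_k)=(a,q)$.

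Substituting $f^\ast([x,y])=f(x)+f(y)$ and cancelling $f(a)+f(b)$ from both sides, the displayed inequality becomes
\[
f(p)+f(q)\ \ge\ \sum_{i=0}^{k}(\theta_i-\theta_{i-1})f(\pi_i)\ +\ \sum_{i=0}^{k}(\theta_i-\theta_{i-1})f(\rho_i).
\]
Each $\rho_i$ lies in $[b,q]=[p\diamond q,q]$, so by hypothesis $f(\rho_i)\ge f(q)$, and the second sum is $\ge f(q)$; hence $f(p)\ge\sum_i(\theta_i-\theta_{i-1})f(\pi_i)$. On the other hand each $\pi_i$ lies in $[p,a]\subseteq[p,q]=I(p,q)$, so $f$-extremality of $(p,q)$ gives $f(\pi_i)\ge f(p)$, whence $\sum_i(\theta_i-\theta_{i-1})f(\pi_i)\ge f(p)$. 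Both inequalities must therefore be equalities, so $\sum_i(\theta_i-\theta_{i-1})\bigl(f(\pi_i)-f(p)\bigr)=0$ with every summand nonnegative; since all coefficients are strictly positive, $f(\pi_i)=f(p)$ for all $i$. Taking $i=k$ gives $f(a)=f(\pi_k)=f(p)$, and since $a=q\diamond p\ne p$ lies in $[p,q\diamond p]$, the element $p':=a$ proves the lemma.

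The main obstacle is not this chain of inequalities (which is routine once set up) but the setup itself: recognizing that $q\diamond p\preceq p\diamond q$ translates into the clean picture $p\sqsubseteq a\sqsubseteq b$, $a\sqsubseteq q$, $p\not\sqsubseteq q$ with $p\prec a\preceq b\prec q$, and then choosing the ambient semilattice $\calL^\ast_b$ and the non-bounded pair $([p,b],[a,q])$ so that its meet is $[a,b]$ and the two ``coordinates'' of $\calE(\alpha,\gamma)$ (the $\pi_i$'s and the $\rho_i$'s) are exactly the places where the two given one-sided bounds --- $f\ge f(p)$ on $[p,q]$ and $f\ge f(q)$ on $[p\diamond q,q]$ --- are available. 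Everything after that (membership of the relevant intervals in $V^\ast$ and $\calL^\ast_b$, and the description of $I(\alpha,\gamma)$) is bookkeeping with condition~(\ref{def:ebp}a) and Corollary~\ref{cor:mu:star}.
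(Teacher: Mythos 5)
Your proof is correct and takes essentially the same approach as the paper's: set $(a,b)=(q\diamond p,\.p\diamond q)$, form the triple $\alpha=[p,b]$, $\beta=[a,b]$, $\gamma=[a,q]$ inside a single $\calL^\ast_x$, apply the submodularity inequality~\eqref{eq:submodular:b} for $f^\ast$, cancel $f^\ast(\beta)$, and squeeze the two coordinate sums against the one-sided bounds $f\ge f(p)$ on $[p,a]\subseteq I(p,q)$ and $f\ge f(q)$ on $[b,q]=[p\diamond q,q]$. The only differences are cosmetic: the paper works in $\calL^\ast_a$ where you work in $\calL^\ast_b$ (both contain all three intervals), and at the final step the paper extracts a generic $\delta=[u,v]\in\calE(\alpha,\gamma)\setminus\{\alpha\}$ with $\lambda_\delta>0$ and rules out $u=p$ by a short case check, whereas you simply take $\delta=\gamma$ itself (which always has positive weight), arriving at the slightly sharper conclusion $f(q\diamond p)=f(p)$.
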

\begin{proof}
Denote $(a,b)=(q\diamond p,p\diamond q)$, then $p\preceq a\preceq b\preceq q$, $p\sqsubseteq b$, $a\sqsubseteq q$ and hence $a\sqsubseteq b$.
By Theorem~\ref{th:domf:arrow}, we have $a,b\in\dom f$.
Denote $\alpha=[p,b]$, $\beta=[a,b]$, $\gamma=[a,q]$, then $\alpha,\beta,\gamma\in\calL^\ast_a$ and $\alpha\wedge \gamma=\beta$.
Note that $f^\ast(\alpha)+f^\ast(\gamma)=f^\ast(\beta)+f(p)+f(q)$.
By submodularity, $f^\ast(\alpha)+f^\ast(\gamma)\ge f^\ast(\beta)+\sum_{\delta\in\calE(\alpha,\gamma)}\lambda_\delta f^\ast(\delta)$,
or equivalently $f(p)+f(q)\ge \sum_{\delta\in\calE(\alpha,\gamma)}\lambda_\delta f^\ast(\delta)$
where $\lambda_\delta$ are nonnegative numbers with $\sum_{\delta\in\calE(\alpha,\gamma)}\lambda_\delta =1$.
For every $\delta=[u,v]\in\calE(\alpha,\gamma)$ we have $u\in[p,a]$, $v\in[b,q]$ by Lemma~\ref{lemma:d:star}.
Furthermore, $f(u)\ge f(p)$ and $f(v)\ge f(q)$ by the lemma's assumption, and so $f^\ast(\delta)\ge f(p)+f(q)$.
This implies that $f(u)=f(p)$, $f(v)=f(q)$ for every $\delta=[u,v]\in\calE(\alpha,\gamma)$ with $\lambda_\delta>0$.
Note that $[p,q]\notin\calL^\ast_a$ since $p\not\sqsubseteq q$, and so $\alpha\vee\gamma$ does not exist in $\calL^\ast_a$.
This implies that $\alpha\ne \beta\ne\gamma$.
Therefore, there exists $\delta=[u,v]\in\calE(\alpha,\gamma)-\{\alpha\}$ with $\lambda_\delta>0$ (and hence with $f(u)=f(p)$).
We claim that $u\ne p$ (and so $u\in[p,a]-\{p\}$, implying the lemma).
Indeed, if $u=p$ then condition $\delta\ne\alpha$ gives that $v\ne b$, and hence $v\in[b,q]-\{b\}$.
This is impossible since $p=u\sqsubseteq v$ and $b=p\diamond q=p\UP q$.
\end{proof}


\subsection{Proof of Theorem~\ref{th:SDA}: analysis of the $\DELTANEIB$-SDA algorithm}

\renewcommand{\thetheoremRESTATED}{\ref{th:SDA}}
\begin{theoremRESTATED}[restated]
Let $\Gamma$ be an extended modular complex and 
 $f:\Gamma^n\rightarrow\overline\RR$ be an $L$-convex function on $\Gamma^n$.
$\DELTANEIB$-SDA algorithm applied to function $f$ terminates after generating exactly
 $1+\max\limits_{i\in [n]}\DELTA{d}_\Gamma(x_i,{\tt opt}_i(f))$ distinct points,
where~$x$ is the initial vertex and ${\tt opt}_i(f)$ is as defined in Theorem~\ref{th:SDA:orig}.
\end{theoremRESTATED}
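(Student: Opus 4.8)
The plan is to reduce the $n$-dimensional statement to a coordinatewise bound, then analyze a single coordinate via the $f$-extremality machinery of Section~\ref{sec:f-extremal} and the normal-path results of Section~\ref{sec:NormalPath}. First I would fix the initial point $x$, let $k=1+\max_i\DELTA d(x_i,{\tt opt}_i(f))$, and show two things: (i) $\DELTANEIB$-SDA generates at most $k$ distinct points, and (ii) it generates at least $k$ distinct points. For the upper bound, the key invariant is that after $t$ iterations the current iterate $x^{(t)}$ satisfies $\DELTA d_\Gamma(x^{(t)}_i, {\tt opt}_i(f)) \le \max(0, \DELTA d_\Gamma(x_i,{\tt opt}_i(f)) - t)$ for every $i$; equivalently, $x^{(t)}_i$ lies on (or past) a shortest $\DELTANEIB$-path from $x_i$ toward the optimum set. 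Then once all coordinates are within $\DELTA$-distance $0$ of ${\tt opt}_i(f)$—which happens after at most $\max_i\DELTA d(x_i,{\tt opt}_i(f))$ steps—one more evaluation confirms optimality (by Theorem~\ref{th:Lopt}, a $\DELTA$-local minimizer is global, since $\calL^\pm_x\subseteq\DELTA B_1(x)$), giving the ``$+1$''.

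The heart of the argument is the per-iteration progress claim, and here I would work one coordinate at a time after the following setup. Let $x^\ast\in\argmin f$ realize ${\tt opt}_i(f)$ in the relevant coordinate; more carefully, one wants a single global minimizer that is simultaneously $\DELTA$-closest to $x$ in all coordinates. To get this I would first show that the map sending a minimizer $x^\ast$ to its normal-path projection $\DIAMOND$-gate toward $x$ stays inside $\argmin f$ — this uses that $\argmin f$ is convex in $\Gamma^n$ (which follows from $L$-convexity via Lemma~\ref{lemma:levelset:connectivity} applied with a perturbation, or directly from Theorem~\ref{th:Lopt}) together with Theorem~\ref{th:Bkball} (the balls $\DELTA B_k$ are convex). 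Having fixed such a minimizer $x^\ast$, in each coordinate $i$ the pair $(x_i, x^\ast_i)$ is $f$-extremal in the appropriate restricted sense: by Lemma~\ref{lemma:f-extremal} (suitably lifted to the product, using Theorem~\ref{th:ebp-addition}(f) so that $f$ restricted to $\calL^\sigma_x(\Gamma^n)$ is submodular and the minimizations in lines 3--4 behave coordinatewise via Lemma~\ref{lemma:ebp-cartesian}(b)), the point $x^-$ computed from $\calL^-_x$, the point $x^+$ computed from $\calL^+_x$, and then $\DELTA x\in\calL^+_{x^-}\cap\calL^-_{x^+}$ together satisfy $\DELTA x_i = x_i\diamond(\text{something on the way to }x^\ast_i)$; more precisely I claim $(x_i)^{-}=x_i\DOWN x^\ast_i$, $(x_i)^+ = x_i\UP x^\ast_i$ up to the freedom in choosing $x^\ast$, and $\DELTA x_i$ equals $x_i\diamond x^\ast_i$, so that $\DELTA x_i$ advances exactly one step along the normal $x_i$-$x^\ast_i$ path (Lemma~\ref{lemma:diamond:pq}, Theorem~\ref{th:NormalPath}). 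Lemma~\ref{lemma:circ-preserved} then guarantees the remaining normal path from $\DELTA x_i$ is a sub-path of the original one, so $\DELTA d$ drops by exactly $1$ in every coordinate not already at distance $0$, and is unchanged (stays $0$) otherwise — in particular the generated points are genuinely distinct until termination, giving the matching lower bound.

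The main obstacle I anticipate is precisely the coordination across coordinates: the minimizations in lines 3 and 4 pick \emph{some} minimizer, not a canonical one, and the values of $f$ need not be unique, so $x^-,x^+,\DELTA x$ are not literally determined. To handle this I would invoke the ``strict'' and perturbed versions of $f$-extremality — Lemmas~\ref{lemma:f-extremal:1} and~\ref{lemma:f-extremal:2} — to argue that whatever minimizer the algorithm selects, its $i$-th coordinate still lies in the interval $I(x_i, x^\ast_i)$ for a \emph{common} minimizer $x^\ast$, and hence still makes one step of progress toward it in the $\DELTA$-metric; the convexity of $\argmin f$ and of the $\DELTA$-balls is what lets me re-select $x^\ast$ after each iteration so the induction carries through. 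The secondary technical point is checking that $\calL^+_{x^-}\cap\calL^-_{x^+}$ is non-empty and convex (so line 4's minimization is well-posed) — this follows from Lemma~\ref{lemma:diamond:pq}, which places $x\diamond q$ in $[x^-, x^+]$, together with convexity of $\calL^\pm$ from Lemma~\ref{lemma:GammaIdeal:ebp} and the definition of $\DELTANEIB$-neighbors. Once these bookkeeping issues are settled, combining the per-step decrease with Theorem~\ref{th:Lopt} for the final certifying iteration yields exactly $1+\max_i\DELTA d_\Gamma(x_i,{\tt opt}_i(f))$ distinct generated points.
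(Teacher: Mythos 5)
Your overall intuition is close to the paper's, and you correctly identify the tools (reduction to $n=1$ via Lemma~\ref{lemma:max-d}, convexity of $\DELTA{B}_k$, $f$-extremality machinery), but there is a genuine gap at the heart of the argument.

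The central claim you make — that $\DELTA{x}$ equals $x\diamond x^\ast$ for a suitable minimizer $x^\ast$, so that the iterate ``advances exactly one step along the normal path'' and $\DELTA{d}(x^{(t)}_i,{\tt opt}_i(f))$ decreases by exactly one in each nonzero coordinate — is neither proven by you nor what the paper establishes, and it is not clear that it is literally true. The algorithm computes $\DELTA{x}$ as a minimizer of $f$ over $\calL^+_{x^-}\cap\calL^-_{x^+}$, a set that depends on the (non-canonical) choices of $x^-$ and $x^+$; there is no reason a priori that this coincides with $x\diamond x^\ast$. What must be shown, and what the paper proves in Lemma~\ref{lemma:GALJSGAKSG} and Corollary~\ref{cor:NLAKJDNA}, is the weaker but sufficient fact that whatever $x^-,x^+$ the algorithm picks, the resulting $\DELTA{x}$ is still a global minimizer of $f$ over the full ball $\DELTA{B}_1(x)$. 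That lemma is the technical crux of the entire theorem: its proof carefully combines Corollary~\ref{cor:GLKAJS}, Lemma~\ref{lemma:f-extremal:1}, and Lemma~\ref{lemma:f-extremal:2} to show that the closest element $p$ of $\argmin\{f(y):y\in\DELTA{B}_1(x)\}$ to $x$ necessarily satisfies $x^-\sqsubseteq p\sqsubseteq x^+$. You flag this as ``the main obstacle'' and gesture at the strict/perturbed extremality lemmas, but the mere invocation of convexity of $\argmin f$ and of the $\DELTA$-balls does not close the gap — the issue is not selecting the target $x^\ast$, it is controlling the algorithm's intermediate choices.

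Consequently your invariant is also not quite the right one. Rather than tracking coordinatewise distance to ${\tt opt}_i(f)$, the paper's invariant is that after $k$ iterations the iterate $x^{(k)}$ lies in $\argmin f_k$ where $f_k=f+\delta_{\DELTA{B}_k(\bar x)}$; this follows inductively from Corollary~\ref{cor:NLAKJDNA} together with Lemma~\ref{lemma:NGAGASGASGA} (whose proof is itself a non-trivial application of Theorem~\ref{th:NormalPath}, Theorem~\ref{th:Bkball}, and Lemma~\ref{lemma:f-extremal:1}). The termination count then falls out from Theorem~\ref{th:Lopt} and the observation that $\min f_k$ strictly decreases until it reaches $\min f$, which happens precisely at $k=\max_i\DELTA{d}_\Gamma(x_i,{\tt opt}_i(f))$. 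To turn your sketch into a proof you would need to either establish your ``exactly-one-step'' claim (likely false as stated) or switch to the paper's $\argmin f_k$ invariant, and in either case you would need to actually prove the analogue of Lemma~\ref{lemma:GALJSGAKSG}.
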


First, we show the following fact.
\begin{lemma}\label{lemma:max-d}
Suppose that $\Gamma$ is a Cartesian product of extended modular complexes: $\Gamma=\Gamma_1\times\ldots\times\Gamma_n$.
Then $\DELTA{d}_\Gamma(x,y)=\max_{i\in[n]} \DELTA{d}_{\Gamma_i}(x_i,y_i)$.
\end{lemma}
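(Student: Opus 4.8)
The plan is to prove the two inequalities separately, relying on the structural description of $\DELTANEIB$-neighbors in a Cartesian product. First I would observe that a single $\DELTANEIB$-step in $\Gamma$ projects to a $\DELTANEIB$-step (or a trivial step) in each factor $\Gamma_i$: if $p,q\in[a,b]$ with $a\sqsubseteq_\times b$, then writing $a=(a_1,\ldots,a_n)$, $b=(b_1,\ldots,b_n)$ we have $a_i\sqsubseteq_i b_i$ by the definition of $\sqsubseteq_\times$, and $p_i,q_i\in[a_i,b_i]$ since intervals in a Cartesian product decompose coordinatewise (this uses that $\preceq_\times$ is the product order, so $[a,b]=[a_1,b_1]\times\cdots\times[a_n,b_n]$). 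Hence $p_i$ and $q_i$ are $\DELTANEIB$-neighbors in $\Gamma_i$ (possibly equal). Applying this to each edge of a shortest $\DELTANEIB$-path from $x$ to $y$ in $\Gamma$ and projecting to the $i$-th coordinate yields a $\DELTANEIB$-walk from $x_i$ to $y_i$ in $\Gamma_i$ of length at most $\DELTA{d}_\Gamma(x,y)$; therefore $\DELTA{d}_{\Gamma_i}(x_i,y_i)\le\DELTA{d}_\Gamma(x,y)$ for every $i$, giving $\max_i\DELTA{d}_{\Gamma_i}(x_i,y_i)\le\DELTA{d}_\Gamma(x,y)$.

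For the reverse inequality, set $k=\max_{i\in[n]}\DELTA{d}_{\Gamma_i}(x_i,y_i)$ and, for each $i$, fix a shortest $\DELTANEIB$-path $(x_i=z^i_0,z^i_1,\ldots,z^i_{k_i}=y_i)$ in $\Gamma_i$ with $k_i=\DELTA{d}_{\Gamma_i}(x_i,y_i)\le k$; pad it with repetitions so that it has exactly $k$ edges, i.e.\ set $z^i_j=y_i$ for $j>k_i$ (a repeated vertex is trivially a $\DELTANEIB$-neighbor of itself). Then define $w_j=(z^1_j,\ldots,z^n_j)$ for $j=0,\ldots,k$, so $w_0=x$ and $w_k=y$. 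I claim each consecutive pair $(w_j,w_{j+1})$ consists of $\DELTANEIB$-neighbors in $\Gamma$: for each $i$ there are $a_i\sqsubseteq_i b_i$ with $z^i_j,z^i_{j+1}\in[a_i,b_i]$, and taking $a=(a_1,\ldots,a_n)$, $b=(b_1,\ldots,b_n)$ gives $a\sqsubseteq_\times b$ (by definition of $\sqsubseteq_\times$) with $w_j,w_{j+1}\in[a,b]=[a_1,b_1]\times\cdots\times[a_n,b_n]$. Hence $(w_0,\ldots,w_k)$ is a $\DELTANEIB$-path of length $k$ from $x$ to $y$ in $\Gamma$, so $\DELTA{d}_\Gamma(x,y)\le k$. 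Combining the two inequalities proves the claim.

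The only slightly delicate point — and the step I would write out with care — is the coordinatewise decomposition $[a,b]=[a_1,b_1]\times\cdots\times[a_n,b_n]$ used in both directions; this is immediate from the definition of the product poset (Section~\ref{sec:background}, ``$(p,p')\preceq(q,q')$ iff $p\preceq p'$ and $q\preceq q'$''), extended to $n$ factors. Everything else is bookkeeping: the padding argument to equalize path lengths, and the observation that a vertex is its own $\DELTANEIB$-neighbor (which follows from $p\sqsubseteq p$ for every $p$, part of admissibility). I do not anticipate a genuine obstacle here; the lemma is essentially a direct consequence of the product structure of intervals and of the relation $\sqsubseteq_\times$.
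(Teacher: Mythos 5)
Your proposal is correct and follows essentially the same route as the paper: the crux in both is that a single $\DELTANEIB$-step in the product decomposes into (possibly trivial) $\DELTANEIB$-steps in each factor, which you justify via the coordinatewise decomposition of intervals and of $\sqsubseteq_\times$, and which the paper justifies via the coordinatewise meet/join characterization of $\DELTANEIB$-neighbors. The paper leaves the resulting ``distance in a strong product is the max of factor distances'' step implicit, whereas you spell it out with the padding argument, but this is only a difference in level of detail, not in approach.
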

\begin{proof}
It suffices to show the following fact for extended modular complexes $\Gamma,\Gamma'$.
\begin{itemize}[noitemsep,topsep=0pt]
\item[{($\ast$)}] {\em The following conditions are equivalent for elements $x,y\in\Gamma$ and $x',y'\in\Gamma'$: \\
{\rm (a)} $(x,x'),(y,y')$ are $\DELTANEIB$-neighbors; \\
{\rm (b)} $x,y$ are $\DELTANEIB$-neighbors and $x',y'$ are $\DELTANEIB$-neighbors.
 }
\end{itemize}
Let us define $(a,b)=(x\wedge y,x\vee y)$ and $(a',b')=(x'\wedge y',x'\vee y')$.
(These expressions are defined if either {\rm (a)} or {\rm (b)} holds).
By the definition of the Cartesian product $\Gamma\times\Gamma'$, we have the following implications:
\begin{align*}
{\rm (a)}
	&\quad\Leftrightarrow\quad
(x,x')\sqsupseteq(a,a')\sqsubseteq(y,y') \mbox{ and }(x,x')\sqsubseteq(b,b')\sqsupseteq(y,y') \\
	&\quad\Leftrightarrow\quad
x\sqsupseteq a\sqsubseteq y \mbox{ and }x'\sqsupseteq a'\sqsubseteq y' \mbox{ and }x\sqsubseteq b\sqsupseteq y \mbox{ and }x'\sqsubseteq b'\sqsupseteq y' 
	\quad\Leftrightarrow\quad
{\rm (b)} \qedhere
\end{align*} 
\end{proof}
In the light of this lemma, it suffices to prove Theorem~\ref{th:SDA}
in the case when $n=1$. Accordingly, from on we assume that $\DELTANEIB$-SDA  is applied to minimize function $f:\Gamma\rightarrow\overline\RR$
which is $L$-convex on an extended modular complex $\Gamma$.

Let $\bar x$ be the initial vertex of the $\DELTANEIB$-SDA algorithm, and for $k\ge 0$ define function
$f_k=f+\delta_{\DELTA{B}_k(\bar x)}$. By Corollary~\ref{cor:NAGLKGLAKSGH}, $f_k$ is $L$-convex on $\Gamma$ for any $k\ge 0$.
\begin{lemma}\label{lemma:NGAGASGASGA}
For any $q\in\argmin f_{k-1}$ there exists $p\in\argmin f_{k}$ with $\DELTA{d}(p,q)\le 1$.
\end{lemma}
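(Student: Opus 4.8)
The plan is to choose a minimizer $p^{*}$ of $f_{k}$ that is as close as possible to $q$ in the thickening $\DELTA{\Gamma}$ and to show $\DELTA{d}_\Gamma(p^{*},q)\le 1$ by contradiction; this proves the lemma with $p=p^{*}$. Before that I would dispose of a trivial case. Since the $\DELTANEIB$-SDA algorithm starts from $\bar x\in\dom f$, we have $\bar x\in\DELTA{B}_{k-1}(\bar x)\cap\dom f$, so $\min f_{k-1}=f(q)<\infty$; moreover $\dom f_{k-1}\subseteq\dom f_{k}$ and $f_{k-1}\ge f_{k}$ give $\min f_{k}\le f(q)$. If some minimizer of $f_{k}$ lies in $\DELTA{B}_{k-1}(\bar x)$, it lies in $\dom f_{k-1}$ with value $\min f_{k}\le\min f_{k-1}$, forcing $f(q)=\min f_{k}$, so $q\in\argmin f_{k}$ and $p=q$ works. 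Hence we may assume no minimizer of $f_{k}$ meets $\DELTA{B}_{k-1}(\bar x)$; then $f(q)>\min f_{k}$ and $\DELTA{d}_\Gamma(\bar x,p)=k$ for all $p\in\argmin f_{k}$. Fix $p^{*}\in\argmin f_{k}$ with $m:=\DELTA{d}_\Gamma(p^{*},q)$ minimal and suppose, for contradiction, that $m\ge 2$.

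Next I would bring in the $f$-extremality toolkit. By Theorem~\ref{th:Bkball} the ball $\DELTA{B}_{k}(\bar x)$ is convex in $\Gamma$, so $I(p^{*},q)\subseteq\DELTA{B}_{k}(\bar x)$; since $f_{k}$ agrees with $f$ there and $p^{*}$ minimizes $f_{k}$, the pair $(p^{*},q)$ is $f$-extremal, and Lemma~\ref{lemma:f-extremal} gives $f(p^{*}\diamond q)\le f(q)$. Also $p^{*}\diamond q\in I(p^{*},q)\subseteq\DELTA{B}_{k}(\bar x)$ gives $f(p^{*}\diamond q)\ge\min f_{k}$, and by Theorem~\ref{th:NormalPath} the normal $p^{*}$--$q$ path is a shortest $\DELTANEIB$-path, so $\DELTA{d}_\Gamma(p^{*}\diamond q,q)=m-1$. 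If $f(p^{*}\diamond q)=\min f_{k}$ then $p^{*}\diamond q\in\argmin f_{k}$ is strictly closer to $q$, contradicting minimality. To handle the remaining possibility $\min f_{k}<f(p^{*}\diamond q)\le f(q)$, I would use the gate identity of Remark~\ref{remark:NormalPath}: $\hat q:=\Pr_{\DELTA{B}_{k-1}(\bar x)}(p^{*})=\bar x\DIAMOND p^{*}$, and since $\DELTA{d}_\Gamma(\bar x,p^{*})=k$ we get $\hat q\in\DELTA{B}_{k-1}(\bar x)$ with $\DELTA{d}_\Gamma(p^{*},\hat q)=1$. Being a gate, $\hat q\in I(p^{*},q)$; as $p^{*},\hat q$ are $\DELTANEIB$-neighbors, Lemma~\ref{lemma:DeltaNeighbors:1} shows $(\hat q,p^{*}\diamond q,q)$ is a shortest subpath, i.e.\ $p^{*}\diamond q\in I(\hat q,q)$. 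Since $\hat q,q\in\DELTA{B}_{k-1}(\bar x)$ and that ball is convex, $p^{*}\diamond q\in\DELTA{B}_{k-1}(\bar x)$, whence $f(p^{*}\diamond q)\ge\min f_{k-1}=f(q)$, and therefore $f(p^{*}\diamond q)=f(q)$.

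The hard part is to rule out this last "stuck" configuration, where $p^{*}\diamond q$ has just entered $\argmin f_{k-1}$ (value $f(q)$) but not $\argmin f_{k}$ (value $\min f_{k}<f(q)$). Iterating the previous paragraph along the normal path shows that the whole tail $p^{*}\diamond q,\,p^{*}\diamond^{2}q,\dots,q$ then lies in $\argmin f_{k-1}$, so in particular $q$ minimizes $f$ over $[p^{*}\diamond q,q]$; what remains is to upgrade $f(p^{*}\diamond q)\le f(q)$ to $f(p^{*}\diamond q)\le\min f_{k}$. I expect this to need the sharp submodularity of $f^{\ast}$ on the valuated modular semilattice $\calL^{\ast}_{p^{*}}=\calL^{\uparrow}_{[p^{*},p^{*}]}(\Gamma^{\ast})$ (Theorem~\ref{th:submodularity-characterization}, together with the identity $\mu_{\Gamma^{\ast}}([p,q],[p',q'])=\mu(p,p')+\mu(q,q')$ of Corollary~\ref{cor:mu:star}): after replacing $p^{*}$ by a local $f$-minimum in the direction of $q$ via Lemma~\ref{lemma:OAUSGAISFHASGALSFHAKJSGHAKSFALGHAKJDSHAKDJFH} (so that the $\DOWN$/$\UP$ gates of the new base point toward $q$ are again minimizers of $f_{k}$) and reducing to a comparable base point $\preceq q$ via Corollary~\ref{cor:GLKAJS}, one lands in exactly the setting of Lemma~\ref{lemma:f-extremal:2}, whose hypothesis "$f(q')\ge f(q)$ on $[p\diamond q,q]$" is what we just verified; its conclusion furnishes a point of $\argmin f_{k}$ distinct from $p^{*}$ on the segment between $p^{*}$ and $q$, which either contradicts minimality of $m$ or directly produces a minimizer of $f_{k}$ at $\DELTA{d}_\Gamma$-distance $1$ from $q$.

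Thus the overall skeleton is: (i) trivial case handing back $q$ itself; (ii) choose the $\DELTA{d}_\Gamma$-closest minimizer $p^{*}$ and reduce to $m\ge 2$; (iii) establish $f$-extremality of $(p^{*},q)$ from convexity of the $\DELTANEIB$-balls and extract $f(p^{*}\diamond q)\le f(q)$; (iv) the gate identity forces $p^{*}\diamond q$ back into $\DELTA{B}_{k-1}(\bar x)$ unless it is already a minimizer of $f_{k}$ closer to $q$; and (v) the genuinely technical step — which I expect to be the main obstacle — is pinning down, via the $\wedge$-convexity/submodularity inequalities for the relevant special pair in $\calL^{\ast}_{p^{*}}$, that the value $f(p^{*}\diamond q)$ cannot sit strictly between $\min f_{k}$ and $f(q)$.
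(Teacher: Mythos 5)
Your skeleton reproduces steps of the paper's proof up through showing $p^*\diamond q\in\DELTA{B}_{k-1}(\bar x)$, but step (v) — which you yourself flag as "the genuinely technical step" and "the main obstacle" — is a genuine gap, not just an omitted detail, and your proposed route to close it does not work as stated. The root cause is your choice of $p^*$: you pick a minimizer of $f_k$ that is $\DELTA{d}$-closest to $q$, which only yields that $(p^*,q)$ is $f$-extremal (not strictly so, since other minimizers of $f_k$ may lie in $I(p^*,q)$). Consequently Lemma~\ref{lemma:f-extremal} only gives the non-strict bound $f(p^*\diamond q)\le f(q)$, and since $p^*\diamond q\in\DELTA{B}_{k-1}(\bar x)$ forces $f(p^*\diamond q)\ge \min f_{k-1}=f(q)$, you land in the equality case with no contradiction. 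Your sketch for resolving it — base-point replacement via Lemma~\ref{lemma:OAUSGAISFHASGALSFHAKJSGHAKSFALGHAKJDSHAKDJFH}, Corollary~\ref{cor:GLKAJS}, and Lemma~\ref{lemma:f-extremal:2} — produces a point $p'\ne p^*$ in $\argmin f_k$ lying in $I(p^*,q)$, but this does not imply $\DELTA{d}(p',q)<m$: closeness in $\mu$ or $d$ does not control $\DELTA{d}$, so the claimed contradiction with minimality of $m$ is unjustified.

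The paper's proof sidesteps the equality case entirely by choosing $p$ to be the element of $\argmin f_k$ closest to $q$ in the ordinary path metric (not $\DELTA{d}$). With that choice, $(p,q)$ is \emph{strictly} $f_k$-extremal — any $x\in I(p,q)\setminus\{p\}$ is strictly closer to $q$, so cannot also minimize $f_k$ — and Lemma~\ref{lemma:f-extremal:1} then yields the strict inequality $f_k(p\diamond q)<f_k(q)$ whenever $p\diamond q\ne q$. Combined with $p\diamond q\in\DELTA{B}_{k-1}(\bar x)$ (via Lemma~\ref{lemma:DeltaNeighbors:1} and convexity of $\DELTA{B}_{k-1}(\bar x)$), this gives $f_{k-1}(p\diamond q)<f_{k-1}(q)$, immediately contradicting $q\in\argmin f_{k-1}$ rather than the choice of $p$. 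No case analysis on whether $f(p\diamond q)$ equals $\min f_k$, $f(q)$, or something in between is needed; the whole argument is the four sentences of the paper's proof. You should replace your choice of $p^*$ with the $d$-closest minimizer and invoke Lemma~\ref{lemma:f-extremal:1} instead of Lemma~\ref{lemma:f-extremal}; then step (v) disappears.
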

\begin{proof}
Let $p$ be an element of $\argmin f_{k}$ which is closest to $q$.
Clearly, $(p,q)$ is $f_k$-extremal, and in fact
 strictly $f_{k}$-extremal.
 (If there exists $p'\in I(p,q)-\{p\}$ with $f_k(p')=f_k(p)$
 then $p'$ also belongs to $\argmin f_{k}$
 and is closer to $q$ than $p$, contradicting the choice of $p$).
Since $\DELTA{d}(\bar x,p)\le k$, there exists $u\in \DELTA{B}_{k-1}(\bar x)$ with $\DELTA{d}(u,p)\le 1$.
 $(u,p\diamond q,q)$ is a shortest subpath by Theorem~\ref{th:NormalPath}.
 We have $u,q\in \DELTA{B}_{k-1}(\bar x)$ and thus $p\diamond q\in \DELTA{B}_{k-1}(\bar x)$ by convexity of $\DELTA{B}_{k-1}(\bar x)$ (Theorem~\ref{th:Bkball}).
If $p\diamond q\ne q$ then $f_k(p\diamond q)<f_k(q)$ by Lemma~\ref{lemma:f-extremal:1}, which contradicts the choice of $q$.
Thus, we must have $p\diamond q=q$ and so $\DELTA{d}(p,q)\le 1$.

\end{proof}

\begin{lemma}\label{lemma:GALJSGAKSG}
Consider $x\in\dom f$, and let $p$ be an element of $\argmin \{f(y)\:|\:y\in \DELTA{B}_1(x)\}$
which is closest to $x$.
Let $x^-\in\argmin \,\{f(y)\:|\:y\in\calL^-_x(\Gamma)\} $
and $x^+\in\argmin \,\{f(y)\:|\:y\in\calL^+_x(\Gamma)\} $.
Then $x^-\sqsubseteq p \sqsubseteq x^+$.
\end{lemma}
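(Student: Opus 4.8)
The plan is to prove $p\sqsubseteq x^+$; the inequality $x^-\sqsubseteq p$ then follows by the symmetric argument, exchanging the roles of $\calL^+_x$ and $\calL^-_x$, of $\UP$ and $\DOWN$, and of joins and meets. Write $C:=\DELTA B_1(x)$; by Lemma~\ref{lemma:B1ball} $C$ is convex in $\Gamma$, and since any $q$ with $x\sqsubseteq q$ satisfies $q,x\in[x,q]$ with $x\sqsubseteq q$, we have $\calL^+_x\subseteq C$ (and likewise $\calL^-_x\subseteq C$); in particular $x^+\in C$, so $f(p)\le f(x^+)$. First I would reduce to the case where $p$ is the \emph{unique} minimiser of $f$ over $C$: replacing $f$ by $f+\varepsilon\mu(x,\cdot)$ for a sufficiently small $\varepsilon>0$ keeps it $L$-convex (Lemma~\ref{lemma:mu-is-Lconvex}(b), Theorem~\ref{th:ebp-addition}(a)), with the same domain, the same sets $C,\calL^\pm_x$ and the same relation $\sqsubseteq$, makes $p$ the unique minimiser over $C$, and keeps $x^+$ (or a suitable replacement) a minimiser over $\calL^+_x$. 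After this reduction, for every $y\in C$ we have $I(p,y)\subseteq C$ by convexity, hence $(p,y)$ is \emph{strictly} $f$-extremal; in particular $(p,x^+)$ and $(p,x)$ are strictly $f$-extremal.

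The next step is to show $\DELTA d(p,x^+)\le 1$. Since $p,x^+\in\DELTA B_1(x)$, the sequence $(p,x,x^+)$ is a $\DELTANEIB$-path, so Theorem~\ref{th:NormalPath} (case $i=1$, endpoint $q=x^+$) gives that $(x,p\diamond x^+,x^+)$ is a shortest subpath; hence $p\diamond x^+\in I(x,x^+)=[x,x^+]$ (Lemma~\ref{lemma:ascending}), and $[x,x^+]\subseteq\calL^+_x$ by Definition~\ref{def:ebp}(a). Therefore $f(p\diamond x^+)\ge\min_{\calL^+_x}f=f(x^+)$, while Lemma~\ref{lemma:f-extremal} applied to the $f$-extremal pair $(p,x^+)$ gives $f(p\diamond x^+)\le f(x^+)$, so $f(p\diamond x^+)=f(x^+)$. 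If $p\diamond x^+\ne x^+$ then Lemma~\ref{lemma:f-extremal:1} (strict extremality) would force $f(p\diamond x^+)<f(x^+)$, a contradiction. Hence $p\diamond x^+=x^+$; if $p=x^+$ we are done, and otherwise $p,x^+$ are $\DELTANEIB$-neighbours, so Lemma~\ref{lemma:diamond:pq} applied to $(p,x^+)$ and to $(p,x)$ yields $p\DOWN x^+=p\wedge x^+=:c$, $p\UP x^+=p\vee x^+=:d$ with $c\sqsubseteq d$, and $p\wedge x=p\DOWN x=:a$, $p\vee x=p\UP x=:b$ with $a\sqsubseteq b$.

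It then remains to show $p\preceq x^+$; this is the main obstacle, and it is exactly where the hypothesis that $p$ is \emph{closest} to $x$ enters. Suppose $c\prec p$. From $x\preceq x^+$ we get $a=p\wedge x\preceq p\wedge x^+=c$, so $c\in[a,p]=I(p,x)\cap\calL^-_p$ by eq.~\eqref{eq:arrow}; in particular $c\in I(p,x)$ with $c\ne p$, hence $d(x,c)<d(x,p)$. Moreover $x\sqsubseteq b$, $x\sqsubseteq x^+$ and $b\vee x^+=p\vee x^+=d$ give $x\sqsubseteq d$ by Definition~\ref{def:ebp}(b), so $d\in\calL^+_x\subseteq C$, $(p,d)$ is strictly $f$-extremal and $f(p)<f(z)$ for all $z\in[p,d]\setminus\{p\}$; also $(c,x^+)$ is $f$-extremal by Corollary~\ref{cor:GLKAJS}. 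I would then argue, along the lines of the proofs of Lemma~\ref{lemma:OAUSGAISFHASGALSFHAKJSGHAKSFALGHAKJDSHAKDJFH} and Lemma~\ref{lemma:f-extremal:2}, that the submodularity of $f$ (on the semilattices $\calL^-_p$, resp.\ on $\calL^\ast_c$, using the intervals $[c,p]$ and $[c,d]$) together with $f(p)<f$ on $[p,d]\setminus\{p\}$ forces the existence of an element $c'\in I(c,x)\cap\dom f$ with $c'\ne c$ and $f(c')\le f(p)$; since $c'\in I(c,x)\subseteq I(p,x)\subseteq C$ and $d(x,c)<d(x,p)$, this $c'$ is a minimiser of $f$ over $C$ strictly closer to $x$ than $p$ — contradicting the choice of $p$. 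Hence $c=p$, i.e.\ $p\preceq x^+$. Finally, since $p,x^+\in[p\wedge x^+,p\vee x^+]$ with $p\wedge x^+\sqsubseteq p\vee x^+$ and $p\preceq x^+$, Definition~\ref{def:ebp}(a) yields $p\sqsubseteq x^+$, as required.
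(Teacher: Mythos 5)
Your high-level plan is appealing — show $p$ and $x^+$ are $\DELTANEIB$-neighbours (so $p\wedge x^+\sqsubseteq p\vee x^+$), then show $p\preceq x^+$, and finish by the easy observation that these two facts combine to give $p\sqsubseteq x^+$. That last observation is in fact a neat shortcut that the paper does not use (the paper, after establishing $p\preceq x^+$, still has to run a separate argument via Lemma~\ref{lemma:f-extremal:2} to upgrade $\preceq$ to $\sqsubseteq$). But two steps in your chain have genuine gaps.

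First, the reduction to strict extremality via $f\mapsto f+\varepsilon\mu(x,\cdot)$ does not work. That perturbation singles out, among $\argmin_C f$, the elements at minimal $\mu$-distance from $x$; it does not make $p$ unique if several such elements tie, so ``$(p,y)$ is strictly $f$-extremal for all $y\in C$'' is not obtained. Worse, the perturbation also changes $\argmin_{\calL^+_x}f$: the statement is for an \emph{arbitrary} $x^+\in\argmin_{\calL^+_x}f$ (this is exactly how Corollary~\ref{cor:NLAKJDNA} uses it), so replacing $x^+$ by ``a suitable replacement'' after perturbing changes the claim rather than proving it. The paper obtains the strictness it needs without perturbation, by the ``choose $p'$ in $I(p,q)$ with $f(p')=f(p)$ closest to $q$'' trick, which you do not use. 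Second, and more seriously, the step that must carry the weight — deducing a contradiction from $c=p\wedge x^+\prec p$ by producing $c'\in I(c,x)\cap\dom f$ with $c'\ne c$ and $f(c')\le f(p)$ — is left as ``I would then argue, along the lines of~…''. This is precisely the part where the paper does careful work with $s=p\wedge q$, $y=s\vee x$, $p\vee y=b$, the point $p'$, $p'\diamond q$, and submodularity in $\calL^\ast_s$; it is not a routine adaptation. Until that step is actually written, and until the strictness issue is resolved without perturbation (e.g.\ by the paper's $p'$-selection device), the proposal does not constitute a proof.
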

\begin{proof}
By symmetry, it suffices to prove that $p\sqsubseteq q$ where we denoted $q=x^+$.
Define $(a,b)=(p\wedge x,p\vee x)$. Note that $[a,b]$ is a modular lattice.
Since $q\succeq x$, the meet $s=p\wedge q$ exists and satisfies $s\in[a,p]$.
Define $y=s\vee x$, then $y\in[s,q]$ and $p\vee y=b$.
Since $p\sqsupseteq a\sqsubseteq x$, we get $p\sqsupseteq s\sqsubseteq y$
(using, in particular, Lemma~\ref{lemma:romb}).
This means that $p\DOWN q=s$, $p\UP q\succeq b$ and $p\diamond q\in[y,q]\subseteq[x,q]$.

Note that $p,q\in\DELTA{B}_1(x)$ and thus $I(p,q)\subseteq \DELTA{B}_1(x)$ by convexity of $\DELTA{B}_1(x)$.
The choice of $p$ thus implies  that $(p,q)$ is $f$-extremal.
We claim that $f(p)=f(s)$.
Indeed, suppose that $f(p)<f(s)$. Pair $(s,q)$ is $f$-extremal by Corollary~\ref{cor:GLKAJS},
and thus $f(p)<f(s)\le f(q)$.
Let $p'$ be an element of $I(p,q)$ with $f(p')=f(p)$ which is closest element to $q$.
By construction, $p'\ne q$ and $(p',q)$ is strictly $f$-extremal. If $p'\diamond q\ne q$
then $f(p'\diamond q)<f(q)$ by Lemma~\ref{lemma:f-extremal:1},
and $(p,p\diamond q,p'\diamond q,q)$ is a shortest subpath by Lemma~\ref{lemma:DeltaNeighbors:2}.
Condition $p\diamond q\in[x,q]$ thus implies that $x\preceq p\diamond q\preceq p'\diamond q\preceq q$.
Condition $x\sqsubseteq q$ now gives $x\sqsubseteq p'\diamond q$, contradicting the choice of $q$.
Thus, we must have $p'\diamond q=q$.
By the choice of $p'$ we have $f(p')<f(p'\wedge q)$, so the submodularity inequality
for $(p',q)$ gives that $f(p'\vee q)<f(q)$
where  $p'\vee q$ exists since $p'\diamond q=q$.
 Note that $x\preceq q\preceq p'\vee q$ 
and $p'\vee q\in I(p,q)\subseteq\DELTA{B}_1(x)$. This implies that $x\sqsubseteq p'\vee q$,
contradicting the choice of $q$. We proved that $f(p)=f(s)$,
and thus $p=s$ by the choice of $p$, since $(p,s,x)$ is a shortest subpath.

We now know that $p\sqsubseteq y\sqsubseteq q$ where $y=b=x\vee p$.
Suppose that $p\not \sqsubseteq q$.
Clearly, we have $p\diamond q\in[y,q]$, $q\diamond p\in[p,y]$ and $q\diamond p\preceq y\preceq p\diamond q$.
By choice of $q$ we must have $f(q')\ge f(q)$ for any $q'\in[p\diamond q,q]\subseteq [y,q]\subseteq[x,q]$.
By Lemma~\ref{lemma:f-extremal:2},
there exists $p'\in[p,q\diamond p]-\{p\}\subseteq [p,y]-\{p\}$ with $f(p')=f(p)$.
This contradicts the choice of $p$, since $(p,p',y,x)$ is a shortest subpath.
\end{proof}

\begin{corollary}\label{cor:NLAKJDNA}
Element $\DELTA{x}$ computed in line 4 of Algorithm~\ref{alg:zigzagSDA} satisfies $\DELTA{x}\in\argmin \{f(y)\:|\:y\in \DELTA{B}_1(x)\}$.
\end{corollary}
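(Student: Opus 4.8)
The plan is to compare the feasible set of line~4 of Algorithm~\ref{alg:zigzagSDA}, namely $S:=\calL^+_{x^-}(\Gamma)\cap\calL^-_{x^+}(\Gamma)$, with the ball $\DELTA{B}_1(x)$; throughout we are in the reduced case $n=1$, and by construction $x^-\sqsubseteq x\sqsubseteq x^+$ (since $x^-\in\calL^-_x(\Gamma)$ and $x^+\in\calL^+_x(\Gamma)$). I will establish two facts: \emph{(i)} $S\subseteq\DELTA{B}_1(x)$, and \emph{(ii)} $S$ contains an element of $\argmin\{f(y):y\in\DELTA{B}_1(x)\}$. Granting these, the minimum of $f$ over $S$ equals the minimum of $f$ over $\DELTA{B}_1(x)$ (fact~(ii) gives the inequality $\min\{f(y):y\in S\}\le\min\{f(y):y\in\DELTA{B}_1(x)\}$, and fact~(i) gives the reverse), so since the point $\DELTA{x}$ computed in line~4 lies in $S\subseteq\DELTA{B}_1(x)$ and attains this common minimum, it lies in $\argmin\{f(y):y\in\DELTA{B}_1(x)\}$, which is the assertion.

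For \emph{(i)}, fix $y\in S$, i.e.\ $x^-\sqsubseteq y$ and $y\sqsubseteq x^+$. Then $x^-$ is a common lower bound and $x^+$ a common upper bound of $x,y$ in the order $\preceq$, so by Lemma~\ref{lemma:GammaIdeal}(a) the meet $x\wedge y$ and the join $x\vee y$ exist, and $x^-\preceq x\wedge y\preceq x\vee y$. The delicate point here is that $\sqsubseteq$ need not be transitive, so one cannot simply say that $x\wedge y$ and $x\vee y$ both lie in $[x^-,x^+]$ with $x^-\sqsubseteq x^+$. Instead I will first apply condition~(\ref{def:ebp}b) to $x^-\sqsubseteq x$ and $x^-\sqsubseteq y$ to obtain $x^-\sqsubseteq x\vee y$, and then apply condition~(\ref{def:ebp}a) inside the interval $[x^-,x\vee y]$, which contains both $x\wedge y$ and $x\vee y$, to conclude $x\wedge y\sqsubseteq x\vee y$. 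This says exactly that $x$ and $y$ are $\DELTANEIB$-neighbors, i.e.\ $y\in\DELTA{B}_1(x)$.

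For \emph{(ii)}, I will invoke Lemma~\ref{lemma:GALJSGAKSG} directly: letting $p$ be the element of $\argmin\{f(y):y\in\DELTA{B}_1(x)\}$ closest to $x$, that lemma applied to the very $x^-,x^+$ computed in line~3 gives $x^-\sqsubseteq p\sqsubseteq x^+$, i.e.\ $p\in S$. Assembling \emph{(i)}, \emph{(ii)}, and the reduction in the first paragraph completes the argument. The one non-routine step is \emph{(i)}; once the failure of transitivity of $\sqsubseteq$ is circumvented by chaining conditions~(\ref{def:ebp}b) and~(\ref{def:ebp}a) as above, the rest is bookkeeping.
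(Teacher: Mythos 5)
Your proof is correct and follows the same two-step structure as the paper's proof: show $\calL^+_{x^-}\cap\calL^-_{x^+}\subseteq\DELTA{B}_1(x)$ and then invoke Lemma~\ref{lemma:GALJSGAKSG}. The only difference is that where the paper dismisses the inclusion with ``Clearly,'' you spell out the (correct) argument via conditions~(\ref{def:ebp}b) and~(\ref{def:ebp}a), carefully circumventing the non-transitivity of $\sqsubseteq$.
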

\begin{proof}
Clearly, any $y\in\calL^+_{x^-}(\Gamma)\cap\calL^-_{x^+}(\Gamma)$ satisfies
$y\in \DELTA{B}_1(x)$. The claim now follows directly from Lemma~\ref{lemma:GALJSGAKSG}.
\end{proof}

Lemma~\ref{lemma:NGAGASGASGA} and Corollary~\ref{cor:NLAKJDNA} imply that after $k$ iterations point $x$ in $\DELTANEIB$-SDA
satisfies $x\in\argmin f_k$ (via an induction argument).
This yields Theorem~\ref{th:SDA}.


\section{VCSP proofs}\label{sec:VCSP}
In this section we prove two results: Theorem~\ref{th:BLPsolvesGamma} (BLP relaxation solves language $\Phi_\Gamma$ for an extended modular complex~$\Gamma$)
and the hardness direction of Theorem~\ref{th:main}.
Below we give some background on Valued Constraint Satisfaction Problems (VCSPs) which will be needed for these proofs.

Let us fix finite set $D$. Let $\calO^{(m)}$ be the set of operations $g:D^m\rightarrow D$.
Operation $g\in \calO^{(m)}$ is called {\em symmetric} if $g(x_1,\ldots,x_m)=g(x_{\pi(1)},\ldots,x_{\pi(m)})$ for any tuple $(x_1,\ldots,x_m)\in D^m$ and any permutation $\pi:[m]\rightarrow[m]$.
A {\em fractional operation of arity $m$} is a probability distribution over $\calO^{(m)}$, i.e.\ vector $\omega\in[0,1]^{\calO^{(m)}}$ with $\sum_g \omega(g)=1$.
Fractional operation $\omega$ is called {\em symmetric} if all operations in $\supp(\omega)=\{g\in\calO^{(m)}\:|\:\omega(g)>0\}$ are symmetric.

A cost function $f:D^n\rightarrow\overline\RR$ is said to {\em admit $\omega$} (or $\omega$ is a {\em fractional polymorphism of $f$}) if
$$
\sum_{g\in\supp(\omega)} \omega(g) f(g(x^1,\ldots,x^m))\le \frac{1}{m} \sum_{i=1}^m f(x^i)\qquad\forall x^1,\ldots,x^m\in D^n
$$
where operation $g(\cdot)$ is applied componentwise, i.e.\ 
$$g(x^1,\ldots,x^m)=(g(x^1_1,\ldots,x^m_1),\ldots,g(x^1_n,\ldots,x^m_n))\in D^n$$
Language $\Phi$ over $D$ is said to admit $\omega$ if all functions $f\in\Phi$ admit $\omega$.
In this case $\omega$ is called a {\em fractional polymorphism of $\Phi$}.

The {\em expressive power $\langle\Phi\rangle$ of language $\Phi$} is defined as the set of all cost functions $f:D^n\rightarrow\overline\RR$
of the form
$
f(x)=\min_{y\in D^k} f_\calI(x,y)
$ where $\calI$ is a $\Phi$-instance with $n+k$ variables. It is known that if $\Phi$ admits a fractional polymorphism $\omega$ then  $\langle\Phi\rangle$ also admits $\omega$.

\begin{definition}
Language $\Phi$ on domain $D$ is said to satisfy condition (MC) if there exist distinct $a,b\in D$
and binary function $f\in\langle\Phi\rangle$ such that $\argmin f=\{(a,b),(b,a)\}$.
\end{definition}

\begin{theorem}[{\cite[Theorem 3.4]{tz16:jacm},\cite[Theorem 5]{kolmogorov15:power}}]\label{th:MC}
Let $\Phi$ be a finite-valued language on domain $D$ such that for every $a\in D$
there exists a unary cost function $g_a\in\langle\Phi\rangle$ with $\argmin g_a=\{a\}$.
If $\Phi$ does not satisfy (MC) then $\Phi$ admits a symmetric fractional polymorphism of every arity $m\ge 2$.
\end{theorem}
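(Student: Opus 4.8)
The plan is to reduce the statement to a claim about \emph{binary} symmetric fractional polymorphisms and then argue by linear-programming duality. By Theorem~\ref{th:KHDAHLKDAG}, together with the fact---also proved in~\cite{kolmogorov15:power}---that for a finite-valued language BLP-tightness is equivalent to admitting symmetric fractional polymorphisms of \emph{every} arity $m\ge 2$, it suffices to prove the contrapositive: if $\Phi$ does not admit a binary symmetric fractional polymorphism, then $\Phi$ satisfies (MC). A routine compactness step lets us assume $\Phi$ is finite: the binary symmetric (i.e.\ commutative-supported) fractional operations on $D$ form a compact polytope inside the simplex over the finite set of commutative operations, and the fractional-polymorphism inequality for each $f\in\Phi$ and each pair of tuples cuts out a closed subset; if the intersection over all of $\Phi$ is empty, it is already empty for a finite $\Phi_0\subseteq\Phi$, and enlarging $\Phi_0$ by the finitely many members of $\Phi$ used to express the unary functions $g_a$ keeps all $g_a$ in $\langle\Phi_0\rangle\subseteq\langle\Phi\rangle$, so that (MC) for $\Phi_0$ gives (MC) for $\Phi$.

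So assume $\Phi$ is finite with no binary symmetric fractional polymorphism. The feasibility LP for such a polymorphism---variables $\omega(g)\ge0$ over commutative $g\in\calO^{(2)}$, normalization $\sum_g\omega(g)=1$, and one inequality $\sum_g\omega(g)f(g(x^1,x^2))\le\tfrac12(f(x^1)+f(x^2))$ for each $f\in\Phi$ and each pair $x^1,x^2$ of tuples of the arity of $f$---is infeasible, so Farkas' lemma supplies rational weights $\lambda_c\ge0$ on a finite family of constraints $c=(f_c,x^{c,1},x^{c,2})$ with
\[
\sum_c\lambda_c f_c\big(g(x^{c,1},x^{c,2})\big)\;>\;C:=\sum_c\lambda_c\cdot\tfrac12\big(f_c(x^{c,1})+f_c(x^{c,2})\big)\qquad\text{for every commutative }g .
\]
Replacing each $c$ by $c$ and its swap $(f_c,x^{c,2},x^{c,1})$, each with half the weight, preserves this and makes the certificate symmetric in the two tuples, so that $\sum_c\lambda_c f_c(x^{c,1})=\sum_c\lambda_c f_c(x^{c,2})=C$.

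Next I would turn this certificate into a binary function in $\langle\Phi\rangle$ whose minimizer set is exactly an off-diagonal orbit. Build a $\Phi$-instance $\calI^\ast$ whose variables are ordered pairs from $D$: for each constraint $c$ put a term $\lambda_c f_c$ on the tuple $\big((x^{c,1}_1,x^{c,2}_1),\dots,(x^{c,1}_{n_c},x^{c,2}_{n_c})\big)$. An assignment $\sigma$ of $\calI^\ast$ is just a binary operation $g_\sigma$ on $D$, and $f_{\calI^\ast}(\sigma)=\sum_c\lambda_c f_c(g_\sigma(x^{c,1},x^{c,2}))$; thus $f_{\calI^\ast}$ equals $C$ at the two ``projection'' assignments (which correspond to the non-commutative operations $e_1,e_2$), but exceeds $C$ at every \emph{symmetric} assignment (which corresponds to a commutative operation). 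Fix an off-diagonal coordinate $z_0=(a_0,b_0)$ with mirror $z_0'=(b_0,a_0)$, add a large multiple of the appropriate $g_a$ to each remaining coordinate (the $g_a$'s exist by hypothesis) to anchor those coordinates, and set $\rho(s,t):=\min\{f_{\calI^\ast}(\sigma):\sigma(z_0)=s,\ \sigma(z_0')=t\}\in\langle\Phi\rangle$. The projection assignments give $\rho(a_0,b_0)=\rho(b_0,a_0)=C$, while the anchoring forces the optimal assignment for a diagonal input $s=t$ to be essentially symmetric, hence $\rho(s,s)>C$; a final lexicographic perturbation of $\rho$ by the $g_a$'s removes any remaining ties among off-diagonal minimizers, producing $f\in\langle\Phi\rangle$ with $\argmin f=\{(a,b),(b,a)\}$, i.e.\ (MC).

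I expect the last paragraph to be the main obstacle. A raw Farkas certificate need not have its paired tuples differing only in coordinates valued in $\{a_0,b_0\}$, which is exactly what makes the step ``$\rho(s,s)>C$'' clean, and combining several ``pair-symmetric'' certificates into one genuinely commutative certificate is not automatic either. The standard remedy, following~\cite{tz16:jacm}, is to run the argument not on an arbitrary certificate but on a $\Phi$-instance witnessing $\mathrm{BLP}<\mathrm{OPT}$ chosen minimal (first in the number of variables, then in a secondary parameter): minimality forces every variable's optimal BLP marginal to be uniform on a two-element set $\{a_v,b_v\}$, permits these sets to be aligned at the glued coordinate, and supplies the required strictness, while the $g_a$'s again perform the anchoring. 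Once (MC) is seen to fail we conclude that $\Phi$ admits a binary symmetric fractional polymorphism, and hence, by~\cite{kolmogorov15:power}, symmetric fractional polymorphisms of every arity $m\ge2$.
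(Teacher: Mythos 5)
This theorem is cited in the paper (from~\cite{tz16:jacm} and~\cite{kolmogorov15:power}) and is not proved there, so there is no paper-internal proof to compare against; I evaluate your sketch on its own terms.

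Your high-level strategy is the right one: use the~\cite{kolmogorov15:power} equivalence to reduce ``symmetric fractional polymorphisms of every arity'' to the binary case, pass to a finite sublanguage by compactness, and then argue that failure of a binary symmetric fractional polymorphism forces (MC). The compactness step and the LP/Farkas setup are sound. However, the central step --- extracting a binary $f\in\langle\Phi\rangle$ with $\argmin f=\{(a,b),(b,a)\}$ from the infeasibility certificate --- has a genuine gap that your sketch does not close. The Farkas certificate only guarantees $\sum_c\lambda_c f_c(g(x^{c,1},x^{c,2}))>C$ for \emph{commutative} $g$; for arbitrary non-commutative, non-projection $g$ nothing is controlled. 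Consequently in your construction $\rho(s,t)=\min\{f_{\calI^\ast}(\sigma):\sigma(z_0)=s,\sigma(z_0')=t\}$, there is no a priori bound $\rho(s,t)\ge C$ for off-diagonal pairs other than $(a_0,b_0),(b_0,a_0)$, so the minimizer set need not be what you want. The ``anchoring'' with the $g_a$'s does not obviously fix this either: if you anchor $e_1$-style (so that the two projection assignments both realize value $C$), then for $s=t$ the induced operation is not commutative and the strict inequality is unavailable; if you anchor symmetrically (so that $s=t$ yields a commutative operation, giving $\rho(s,s)>C$), then the projection assignments are no longer optimal and $\rho(a_0,b_0)=\rho(b_0,a_0)=C$ is lost. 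These two requirements pull in opposite directions and cannot both be met by a single naive anchoring.

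You recognize this yourself and point to the minimality argument in~\cite{tz16:jacm} as ``the standard remedy''; that is indeed where the real content lies, and it is substantial --- choosing a minimal BLP-gap instance, showing every optimal BLP marginal is uniform on a two-element set, and aligning those sets so that the glued coordinate witnesses (MC). As written, your proposal is an honest outline that reduces the theorem to its hardest step and then cites that step, rather than a self-contained proof.
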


\begin{theorem}[{\cite[Theorem 1, Proposition 8]{kolmogorov15:power}}]\label{th:BLP:characterization}
Let $\Phi$ be a general-valued language.
BLP solves $\Phi$ if and only if $\Phi$ admits a symmetric fractional polymorphism of every arity $m\ge 2$.
If this condition holds then an optimal solution of any $\Phi$-instance can be computed in polynomial time.
\end{theorem}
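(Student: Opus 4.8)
The plan is to prove the two implications of the stated equivalence and then the algorithmic addendum. Write $\BLP(\calI)$ for the optimum of the BLP relaxation of a $\Phi$-instance $\calI$ (with the convention that an infeasible relaxation has value $\infty$); recall that it is always a lower bound, $\BLP(\calI)\le\min_{x}f_\calI(x)$.

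\textbf{Sufficiency.} Suppose $\Phi$ admits a symmetric fractional polymorphism of every arity $m\ge2$, and let $\calI$ be a $\Phi$-instance with variables $[n]$ and terms $T$. If its BLP is infeasible there is nothing to prove, so assume it is feasible with finite optimum $v=\BLP(\calI)$; as the LP has rational data it has a rational optimal vertex $(\lambda,\mu)$, so there is an integer $m$ (we may take $m\ge2$, else $(\lambda,\mu)$ is integral and already gives an optimal assignment) with $m\lambda_i(a),\,m\mu_t(\mathbf x)\in\ZZ_{\ge0}$, and finiteness of $v$ forces $\mu_t(\mathbf x)=0$ whenever $f_t(\mathbf x)=\infty$. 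For each variable $i$ let $M_i$ be the $m$-element multiset over $D$ in which $a$ has multiplicity $m\lambda_i(a)$, and for each term $t$ let $\mathbf x^{t,1},\dots,\mathbf x^{t,m}\in D^{n_t}$ list the support of $\mu_t$ with multiplicities $m\mu_t(\mathbf x)$; the BLP consistency constraints say exactly that the multiset of $k$-th coordinates of $\mathbf x^{t,1},\dots,\mathbf x^{t,m}$ equals $M_{v(t,k)}$. Taking a symmetric $m$-ary fractional polymorphism $\omega$ of $\Phi$, define for $g\in\supp(\omega)$ the assignment $x^{\ast,g}\in D^n$ by $x^{\ast,g}_i:=g(M_i)$; this is well defined because $g$ is symmetric, and symmetry also gives $f_t\big(x^{\ast,g}_{v(t,1)},\dots,x^{\ast,g}_{v(t,n_t)}\big)=f_t\big(g(\mathbf x^{t,1},\dots,\mathbf x^{t,m})\big)$ for every $t$ (the right-hand side being $g$ applied coordinatewise to the $m$ tuples). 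Summing the fractional-polymorphism inequality for $f_t$ over $t\in T$ gives $\sum_{g}\omega(g)\,f_\calI(x^{\ast,g})\le\sum_{t\in T}\tfrac1m\sum_{j=1}^{m}f_t(\mathbf x^{t,j})=v$, so some $x^{\ast,g}$ has cost $\le v$; with $v\le\min_x f_\calI(x)$ this yields equality, so BLP is tight (and $\dom f_\calI\ne\varnothing$).

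\textbf{Necessity.} Fix $m\ge2$ and first assume $\Phi$ finite. For a symmetric $g\colon D^m\to D$ and a pair $(f,\bar{\mathbf x})$ with $f\in\Phi$ of arity $n$ and $\bar{\mathbf x}=(x^1,\dots,x^m)\in(D^n)^m$, set $\mathrm{def}(g;f,\bar{\mathbf x}):=f(g(x^1,\dots,x^m))-\tfrac1m\sum_j f(x^j)$ (coordinatewise $g$). A symmetric $m$-ary fractional polymorphism of $\Phi$ is precisely a probability distribution $\omega$ over symmetric operations with $\sum_g\omega(g)\,\mathrm{def}(g;f,\bar{\mathbf x})\le0$ for all $(f,\bar{\mathbf x})$; by the minimax theorem (finitely many operations and pairs), none exists iff there is a probability distribution $\nu$ over the pairs with $\min_g\sum_{(f,\bar{\mathbf x})}\nu(f,\bar{\mathbf x})\,\mathrm{def}(g;f,\bar{\mathbf x})>0$, the minimum over symmetric $g$. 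After a small rational perturbation preserving this strict inequality and clearing denominators, take $\nu$ to be $\ZZ_{\ge0}$-valued, and build a $\Phi$-instance $\calI$ whose variables are the $m$-element multisets over $D$: an integral assignment $\sigma$ is the same thing as a symmetric operation $g_\sigma(a_1,\dots,a_m):=\sigma(\{a_1,\dots,a_m\})$, since $g_\sigma$ is forced symmetric. For each $(f,\bar{\mathbf x})$ include $\nu(f,\bar{\mathbf x})$ copies of a term with cost function $f$ on the scope $(S_1,\dots,S_n)$, where $S_k$ is the multiset of $k$-th coordinates of $x^1,\dots,x^m$. Then $\min_\sigma f_\calI(\sigma)=\min_g\sum_{(f,\bar{\mathbf x})}\nu(f,\bar{\mathbf x})\,f(g(x^1,\dots,x^m))$, which by the choice of $\nu$ strictly exceeds $c:=\sum_{(f,\bar{\mathbf x})}\nu(f,\bar{\mathbf x})\,\tfrac1m\sum_j f(x^j)$. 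On the other hand, the BLP solution that assigns to each variable $S$ the uniform distribution on the multiset $S$, and to the term for $(f,\bar{\mathbf x})$ the distribution $\mathbf y\mapsto\tfrac1m\,|\{j:x^j=\mathbf y\}|$ on $D^n$ (with $x^j\in D^n$ the $j$-th column of the matrix $(x^j_k)_{k,j}$), is feasible---its position-$k$ marginal is the uniform distribution on $S_k$---with objective value $c$; hence $\BLP(\calI)\le c<\min_\sigma f_\calI(\sigma)$, contradicting that BLP solves $\Phi$. For infinite $\Phi$, the set of symmetric $m$-ary fractional polymorphisms of each finite $\Phi'\subseteq\Phi$ is, by the finite case, a nonempty closed (hence compact) subset of the simplex over symmetric operations $D^m\to D$, and these sets have the finite-intersection property, so their intersection---a symmetric $m$-ary fractional polymorphism of $\Phi$---is nonempty.

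\textbf{Algorithmic part and main obstacle.} Given that the conditions hold, $\BLP(\calI)$ is the optimum of a polynomial-size linear program and equals $\min_x f_\calI(x)$; an optimal assignment is recovered by the standard self-reduction that fixes the variables to values of $D$ one at a time and re-solves, for which one needs the reduction to stay inside a BLP-solvable language---this is where one invokes that the symmetric fractional polymorphisms produced above can be taken idempotent, so that they still admit the crisp unary functions introduced when fixing a variable. The substantive step of the whole proof is the necessity direction: constructing the ``universal'' instance on multiset-variables so that its BLP optimum is exactly $c$, thereby converting a failure of the minimax inequality into a genuine integrality gap; the compactness argument for infinite $\Phi$ and the idempotency point for the self-reduction are the remaining places that demand care.
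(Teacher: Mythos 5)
This theorem is not proved in the paper at all: it is imported verbatim from \cite{kolmogorov15:power} (their Theorem~1 and Proposition~8), so there is no internal proof to compare against, and your proposal is in effect a reconstruction of the cited work. Your two directions of the equivalence do follow the standard Thapper--\v{Z}ivn\'y/Kolmogorov route and are correct in outline: the rounding argument for sufficiency is fine (note the rational optimal vertex comes from the $0$--$1$ constraint matrix of the BLP polytope, not from rationality of the costs, which may be irrational or infinite; the conclusion still holds), and necessity via LP duality plus the ``universal'' instance whose variables are $m$-element multisets is exactly the right construction. The one point there that needs more care than you give it is the treatment of infinities: the minimax/Farkas step must be restricted to pairs $(f,\bar{\mathbf x})$ with all $x^j\in\dom f$ (otherwise payoff entries are $-\infty$ and, worse, the value $c$ of your certificate could be $+\infty$, destroying the contradiction $\BLP(\calI)\le c<\min_\sigma f_\calI(\sigma)$), and with entries in $\RR\cup\{+\infty\}$ the duality itself needs a truncation-plus-compactness argument. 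This is fixable and is handled carefully in the cited paper, but it is glossed over here.

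The genuine gap is the algorithmic clause. Your self-reduction needs the BLP to remain tight after forcing $\lambda_i(a)=1$, i.e.\ it needs BLP to solve $\Phi\cup\{\delta_a\}$, and you dispose of this by asserting that the symmetric fractional polymorphisms ``can be taken idempotent''. Nothing in your construction delivers this, and it is not a cosmetic normalization: the minimax argument may well return fractional polymorphisms supported on constant operations, and there is no step that forces $g(a,\ldots,a)=a$ for every $a\in D$. The honest version of this step requires a further argument — for instance, composing fractional polymorphisms so that the diagonal unary maps become a retraction onto a core, showing that some optimal assignment can be taken with values in the core (so only $\delta_a$ for core elements $a$ is ever needed in the self-reduction), or whatever Proposition~8 of \cite{kolmogorov15:power} actually does. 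Without it, the invariant ``the LP with the current fixings still has value $v$'' is not known to be maintainable, because maintaining it at the next variable is precisely the tightness statement for the enlarged language that is in question; so the last sentence of the theorem is not established by the proposal, even though the ``if and only if'' part is essentially right.
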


\subsection{Proof of Theorem~\ref{th:BLPsolvesGamma}}\label{sec:VCSP1}

Recall that $\Phi_\Gamma$ is the language over domain $D=V_\Gamma$ that consists of all functions $f:D^n\rightarrow\overline{\mathbb R}$
such that $f$ is $L$-convex on $\Gamma^n$.

\begin{theorem}\label{th:GNLAKSFLASF}
If $\Gamma$ is an extended modular complex then $\Phi_\Gamma$ does not satisfy (MC).
\end{theorem}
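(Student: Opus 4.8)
The plan is to show that $\Phi_\Gamma$ cannot express a binary cost function $f\in\langle\Phi_\Gamma\rangle$ with $\argmin f=\{(a,b),(b,a)\}$ for distinct $a,b\in D=V_\Gamma$. Suppose for contradiction that such an $f$ exists. Since $\langle\Phi_\Gamma\rangle$ consists of functions obtained by partial minimization of $\Phi_\Gamma$-instances, and (by Theorem~\ref{th:ebp-addition} together with the discussion preceding Theorem~\ref{th:BLPsolvesGamma}) every $\Phi_\Gamma$-instance gives an $L$-convex function on the appropriate power $\Gamma^{n+k}$, I first need a lemma that $L$-convexity is preserved under partial minimization: if $\tilde f:\Gamma^{n+k}\to\overline\RR$ is $L$-convex then $g(x)=\min_{y\in V^k}\tilde f(x,y)$ is $L$-convex on $\Gamma^n$. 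This should follow by the same kind of argument used for Theorem~\ref{th:ebp-addition}(c) (which handles minimization over a single fixed coordinate) iterated; the main point to verify is that $\dom g$ stays connected in $(\Gamma^n)^{\mathsmaller\sqsubset}$, using Corollary~\ref{cor:domf:arrow} to route $\sqsubset$-paths through the domain, and that the submodularity of $g^\ast$ on each $\calL^\ast_p$ follows from submodularity of $\tilde f^\ast$ together with the fact that partial minimization preserves submodularity on valuated modular semilattices (this is the analogue of \cite[Lemma 3.7]{Hirai:0ext} for minimization over a coordinate, and it is exactly the ingredient already invoked in the proof of Lemma~\ref{lemma:ftilde}). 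Thus $f$ itself is $L$-convex on $\Gamma\times\Gamma$.

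Now I would apply the local-to-global optimality theorem (Theorem~\ref{th:Lopt}), or rather its sharper consequence Lemma~\ref{lemma:Lopt:pq}, to the $L$-convex function $f$ on the extended modular complex $\Gamma\times\Gamma$. Let $p=(a,b)$ and $q=(b,a)$; these are the two distinct global minimizers of $f$, so $p,q\in\dom f$ and $f(p)=f(q)=\min f$. By Lemma~\ref{lemma:Lopt:pq}(b) there is an element $u\in I(p,q)\cap\dom f$ with $pu\in(\Gamma\times\Gamma)^{\mathsmaller\sqsubset}$, i.e.\ $p\sqsubset u$ or $u\sqsubset p$. In particular $u\neq p,q$. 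Applying Lemma~\ref{lemma:Lopt:pq}(a) with the roles of $p,q$ chosen so that $f(p)\ge f(q)$ (they are equal, so either order works) gives $f(u)\le f(p)=\min f$, hence $f(u)=\min f$ and $u$ is a \emph{third} minimizer of $f$ — contradicting $\argmin f=\{p,q\}$, provided we know $u\notin\{p,q\}$. Since $u$ is a $\sqsubset$-neighbor of $p$ it is distinct from $p$; and $u\in I(p,q)$ with $u\neq p$; if $u=q$ we'd need $p\sqsubset q$ or $q\sqsubset p$, i.e.\ $(a,b)$ and $(b,a)$ comparable under $\preceq_\times$, which forces $a\preceq b$ and $b\preceq a$ simultaneously (or $a\succeq b$ and $b\succeq a$), hence $a=b$, contradiction. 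So $u\notin\{p,q\}$ and we are done.

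Actually, to be safe I would run the argument a little differently to avoid even needing $u\neq q$: since $I(p,q)$ must consist only of points where $f$ is minimized once it contains two minimizers at its endpoints — no wait, that is false in general. The cleanest route is the one above: $u$ is forced to be a minimizer lying strictly between $p$ and $q$ and distinct from both. The one genuine subtlety, and the step I expect to be the main obstacle, is establishing preservation of $L$-convexity under partial minimization cleanly — specifically checking that the submodular-function machinery of Section~\ref{sec:submodularity-on-semilattice} behaves well under minimizing out a coordinate of a Cartesian factor, and that $\dom g$ remains connected in $\Gamma^{\mathsmaller\sqsubset}$. Everything else is a direct invocation of Theorem~\ref{th:Lopt}/Lemma~\ref{lemma:Lopt:pq} plus the elementary observation that $(a,b)$ and $(b,a)$ are $\preceq_\times$-incomparable when $a\neq b$.

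Once this theorem is in hand, Theorem~\ref{th:BLPsolvesGamma} follows immediately: by the reduction in the footnote, one may pass to a finite-valued version of $\Phi_\Gamma$ with singleton-argmin unary functions available (the functions $\mu_{\Gamma,p}$ from Lemma~\ref{lemma:mu-is-Lconvex}(b) are $L$-convex and, after a strictly-convex perturbation, have unique minimizers), so Theorem~\ref{th:MC} applies and yields symmetric fractional polymorphisms of all arities $m\ge2$, whence Theorem~\ref{th:BLP:characterization} gives that BLP solves $\Phi_\Gamma$ in polynomial time.
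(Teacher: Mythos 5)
Your core observation is exactly right --- that $(a,b)$ and $(b,a)$ cannot be $\sqsubset_{\times}$-comparable when $a\ne b$, since that would force $a\sqsubseteq b$ and $b\sqsubseteq a$ simultaneously --- and the reliance on connectivity of a sublevel set in $\Gamma^{\mathsmaller\sqsubset}$ is also the right engine. But the route you take to get there has a genuine gap, and you correctly flag it yourself: you need $g(x,y)=\min_{z}\tilde f(x,y,z)$ to be $L$-convex on $\Gamma\times\Gamma$, i.e.\ that $L$-convexity (in the precise sense of this paper: $\dom$ connected in $\Gamma^{\mathsmaller\sqsubset}$ \emph{and} $f^{\ast}$ submodular on every $\calL^{\ast}_p$) is preserved under partial minimization. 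This is \emph{not} an iterate of Theorem~\ref{th:ebp-addition}(c), which only handles restriction to a fixed coordinate, and it is \emph{not} what \cite[Lemma 3.7]{Hirai:0ext} is invoked for in Lemma~\ref{lemma:ftilde} (that is about Cartesian lifting/restriction, not about min-marginalization). The submodularity condition of Section~\ref{sec:submodularity-on-semilattice} is a structural inequality involving the $(p,q)$-envelope $\calE(p,q)$ and the $\theta$-weighted averages; there is no result in the paper stating that this is closed under minimizing out a coordinate, and it is not at all obvious. Indeed the paper's proof of Theorem~\ref{th:BLPsolvesGamma} deliberately does \emph{not} claim $\langle\Phi_\Gamma\rangle$ consists of $L$-convex functions --- it only transfers fractional polymorphisms --- precisely because closure under expressibility is not established at the level of $L$-convexity.

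The paper sidesteps this entirely. Instead of pushing $L$-convexity through the partial minimization, it works directly with the original $\Phi_\Gamma$-instance $f\colon\Gamma\times\Gamma\times\Gamma^n\to\overline\RR$, which \emph{is} $L$-convex on $\Gamma^{n+2}$ by Theorem~\ref{th:ebp-addition}(a,b). It then applies Lemma~\ref{lemma:levelset:connectivity} to conclude that the minimizer set $A=\argmin f$ is connected in $(\Gamma^{n+2})^{\mathsmaller\sqsubset}$. Since minimizers of $f$ project to minimizers of $g$, $A$ splits into the two nonempty fibers $A_{ab}$ and $A_{ba}$ over $(a,b)$ and $(b,a)$; connectivity forces a $\sqsubset$-edge between the two fibers, whose first two components then give $a\sqsubset b$ and $b\sqsubset a$, a contradiction. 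This is the same incomparability punchline you found, applied one level up so that no new preservation lemma is needed. Your use of Lemma~\ref{lemma:Lopt:pq} to extract a third minimizer is fine as a substitute for the explicit connectivity argument, but only once $L$-convexity of $g$ is in hand --- and that is exactly the step that is missing. If you replace ``first show $g$ is $L$-convex'' by ``apply Lemma~\ref{lemma:levelset:connectivity} to the original instance on $\Gamma^{n+2}$'', your argument collapses to the paper's proof.
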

\begin{proof}
Suppose the claim is false, then there exists an instance $f:\Gamma\times\Gamma\times\Gamma^n\rightarrow\overline{\mathbb R}$ of $\Phi_\Gamma$
with $n+2\ge 2$ variables such that function $g:\Gamma\times\Gamma\rightarrow\overline{\mathbb R}$
defined via $g(x,y)=\min_{z\in\Gamma^{n}} f(x,y,z)$ satisfies $\argmin g=\{(a,b),(b,a)\}$ for some distinct $a,b\in\Gamma$.
Denote $A=\argmin f\subseteq \Gamma\times\Gamma\times\Gamma^n$, $A_{ab}=\{(a,b,z)\:|\:z\in\Gamma^n\}$ and $A_{ba}=\{(b,a,z)\:|\:z\in\Gamma^n\}$.
By construction, $A\subseteq A_{ab}\cup A_{ba}$, $A\cap A_{ab}\ne\varnothing$ and $A\cap A_{ba}\ne\varnothing$.
By Lemma~\ref{lemma:levelset:connectivity}, the subgraph of $(\Gamma\times\Gamma\times\Gamma^n)^{\mathsmaller\sqsubset}$ induced by set $A$ is connected.
Thus, there must exist $p\in A_{ab}$, $q\in A_{ba}$ such that $pq\in (\Gamma\times\Gamma\times\Gamma^n)^{\mathsmaller\sqsubset}$.
By symmetry, we can assume that $p\sqsubset q$.
We have $p=(a,b,x)$ and $q=(b,a,y)$ for some $x,y\in\Gamma^n$.
Condition $p\sqsubset q$ thus implies that $a\sqsubset b$ and $b\sqsubset a$, which is impossible.
\end{proof}

We can finally prove that BLP solves $\Phi_\Gamma$.

\begin{theorem}
Let $\Gamma$ be an extended modular complex.
Language $\Phi_\Gamma$ admits a symmetric fractional polymorphism of every arity $\mbox{$m\ge 2$}$.
Consequently, BLP relaxation solves $\Phi_\Gamma$, and an optimal solution of any $\Phi_\Gamma$-instance can be computed in polynomial time.
\end{theorem}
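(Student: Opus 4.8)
The plan is to reduce to the finite-valued setting so that Theorem~\ref{th:MC} applies, then transfer the resulting symmetric fractional polymorphisms from the finite-valued part to the (general-valued) language $\Phi_\Gamma$ by an approximation argument, and finally invoke Theorem~\ref{th:BLP:characterization}. Two facts are already in hand: $\Phi_\Gamma$ does not satisfy (MC), and by Lemma~\ref{lemma:mu-is-Lconvex}(b) it contains, for every $a\in V_\Gamma$, the finite-valued unary cost function $\mu_{\Gamma,a}(x)=\mu_\Gamma(x,a)$, whose unique minimizer is $a$. Let $\Phi^\circ\subseteq\Phi_\Gamma$ be the sublanguage of all finite-valued functions in $\Phi_\Gamma$. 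Then $\Phi^\circ$ is finite-valued, it contains each $\mu_{\Gamma,a}$, and it does not satisfy (MC) because a binary witness $f\in\langle\Phi^\circ\rangle\subseteq\langle\Phi_\Gamma\rangle$ would contradict what we proved for $\Phi_\Gamma$. Hence Theorem~\ref{th:MC} provides, for every arity $m\ge 2$, a symmetric fractional polymorphism $\omega^{(m)}$ of $\Phi^\circ$.

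It remains to show that each $\omega^{(m)}$ is in fact a fractional polymorphism of all of $\Phi_\Gamma$, i.e.\ that every $f:\Gamma^n\to\overline\RR$ that is $L$-convex on $\Gamma^n$ admits $\omega^{(m)}$. We may assume $\dom f\ne\varnothing$, since otherwise $f$ admits every fractional operation vacuously. For $C\ge 0$ introduce the $C$-Lipschitz minorant $f_C(x)=\min_{y\in V^n}\bigl(f(y)+C\,\mu_{\Gamma^n}(x,y)\bigr)$, where $\mu_{\Gamma^n}(x,y)=\sum_{i=1}^n\mu_\Gamma(x_i,y_i)$. Each $f_C$ is finite-valued, nondecreasing in $C$, satisfies $f_C\le f$ pointwise, and $f_C(x)\nearrow f(x)$ as $C\to\infty$ for every $x$ (on $\dom f$ the minimum is attained at $y=x$ once $C$ is large, using that distinct points of $V^n$ lie at $\mu_{\Gamma^n}$-distance bounded below by a positive constant and $f$ is bounded below on the finite set $\dom f$; off $\dom f$ the value tends to $+\infty$). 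Crucially, $f_C$ is $L$-convex on $\Gamma^n$: the function $h(x,y)=f(y)+C\sum_{i}\mu_\Gamma(x_i,y_i)$ is $L$-convex on $\Gamma^n\times\Gamma^n$ by Theorem~\ref{th:ebp-addition}(a),(b),(e) (the required connectivity of $\dom h=V^n\times\dom f$ in $(\Gamma^n\times\Gamma^n)^{\mathsmaller\sqsubset}$ is routine), and $f_C$ is obtained from $h$ by minimizing out the second block of variables; thus $f_C\in\Phi^\circ$ and so $f_C$ admits $\omega^{(m)}$. Now for any $x^1,\dots,x^m\in\dom f$ we have $\sum_{g}\omega^{(m)}(g)\,f_C(g(x^1,\dots,x^m))\le\tfrac1m\sum_i f_C(x^i)\le\tfrac1m\sum_i f(x^i)<\infty$; letting $C\to\infty$ and applying monotone convergence over the finite set $\supp\omega^{(m)}$ gives $\sum_g\omega^{(m)}(g)\,f(g(x^1,\dots,x^m))\le\tfrac1m\sum_i f(x^i)$, whence $g(x^1,\dots,x^m)\in\dom f$ for every $g\in\supp\omega^{(m)}$ and the defining inequality holds. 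Thus $f$ admits $\omega^{(m)}$, and $\Phi_\Gamma$ admits a symmetric fractional polymorphism of every arity $m\ge 2$. By Theorem~\ref{th:BLP:characterization}, BLP relaxation solves $\Phi_\Gamma$ and an optimal solution of any $\Phi_\Gamma$-instance is computable in polynomial time, which is exactly Theorem~\ref{th:BLPsolvesGamma}.

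\textbf{Main obstacle.} The one non-routine ingredient above is the assertion that partial minimization preserves $L$-convexity on an extended modular complex, which is what places $f_C$ in $\Phi^\circ$. If this closure property is not available in the form we need, it must be proved directly for $f_C$: since $\dom f_C=V^n$ is trivially $\Gamma^{\mathsmaller\sqsubset}$-connected, it suffices to verify that for each $x\in\Gamma^n$ the lift $(f_C)^\ast$ is submodular on the valuated modular semilattice $\calL^\ast_x(\Gamma^n)$, which should follow by analyzing how minimizers of $h$ interact with the metric $\mu$ and the envelopes $\calE(\cdot,\cdot)$, along the lines of the submodularity arguments used in Sections~\ref{sec:proofs}--\ref{sec:f-extremal}. (One may instead carry the argument out with $\langle\Phi^\circ\rangle$ in place of $\Phi^\circ$, but this still requires knowing that $f_C$ itself is $L$-convex, so it does not avoid the point.)
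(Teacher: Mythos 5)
Your overall strategy matches the paper's: invoke Theorem~\ref{th:MC} on a finite-valued auxiliary language to get symmetric fractional polymorphisms, transfer them to $\Phi_\Gamma$ via the Lipschitz minorants $f_C$ and a limiting argument, and finish with Theorem~\ref{th:BLP:characterization}. However, there is a genuine gap, and it is precisely the one you flag as the ``main obstacle.'' You define $\Phi^\circ$ as the finite-valued part of $\Phi_\Gamma$ itself, so to place $f_C$ in $\Phi^\circ$ you need $f_C$ to be $L$-convex on $\Gamma^n$. This would require a closure-under-partial-minimization theorem for $L$-convex functions on extended modular complexes, which is nowhere in the paper and is not routine; Theorem~\ref{th:ebp-addition} gives addition, scaling, lifting, and restriction, but not minimizing out a block of coordinates. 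Your sketch that this ``should follow by analyzing how minimizers of $h$ interact with $\mu$ and the envelopes'' is not a proof, and as stated the argument does not go through.

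The fix is to change the definition: the paper takes $\Phi^\circ_\Gamma=\{f\in\langle\Phi_\Gamma\rangle : f\text{ finite-valued}\}$, the finite-valued part of the \emph{expressive power} of $\Phi_\Gamma$, not of $\Phi_\Gamma$ itself. With this definition, $f_C\in\Phi^\circ_\Gamma$ is immediate and requires no $L$-convexity claim at all: $h(x,y)=f(y)+C\,\mu_{\Gamma^n}(x,y)$ is a sum of two functions in $\Phi_\Gamma$ (so it is a $\Phi_\Gamma$-instance), $f_C=\min_y h(x,y)$ is therefore in $\langle\Phi_\Gamma\rangle$ by the very definition of expressive power, and $f_C$ is finite-valued. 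All the hypotheses of Theorem~\ref{th:MC} remain satisfied for this $\Phi^\circ_\Gamma$: it is finite-valued, it contains each $\mu_{\Gamma,a}$, and it does not satisfy (MC) because $\langle\Phi^\circ_\Gamma\rangle\subseteq\langle\langle\Phi_\Gamma\rangle\rangle=\langle\Phi_\Gamma\rangle$ and $\Phi_\Gamma$ does not satisfy (MC). Your parenthetical remark about replacing $\Phi^\circ$ by $\langle\Phi^\circ\rangle$ is the wrong substitution: since $f\notin\Phi^\circ$ when $f$ is infinity-valued, you cannot express $h$ over $\Phi^\circ$, so $f_C\in\langle\Phi^\circ\rangle$ is also not clear. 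The correct move is to use $\langle\Phi_\Gamma\rangle$ (the expressive power of the full, general-valued language), restricted to finite-valued functions. The rest of your write-up --- the monotone-convergence limit $C\to\infty$ and the appeal to Theorem~\ref{th:BLP:characterization} --- is correct and matches the paper.
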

\begin{proof}
If language $\Phi_\Gamma$ were finite-valued then the claim would immediately follow
from Theorems~\ref{th:GNLAKSFLASF} and~\ref{th:MC}. The main concern will thus be dealing with non finite-valued languages.

Let us define $\Phi^\circ_\Gamma=\{f\in \langle\Phi_\Gamma\rangle\:|\:f\mbox{ is finite-valued}\:\}$.
Since language $\Phi_\Gamma$ does not satisfy (MC), languages $\langle\Phi_\Gamma\rangle$ and $\Phi^\circ_\Gamma$ also do not satisfy (MC).
Clearly, for each $a\in\Gamma$ language $\Phi^\circ_\Gamma$ contains unary function $g_a$ with $\argmin g_a=\{a\}$, namely $g_a(x)=\mu(a,x)$.
By Theorem~\ref{th:MC},  $\Phi^\circ_\Gamma$ admits a symmetric fractional polymorphism of every arity $m\ge 2$.
Let $\omega_m$ be such fractional polymorphism. We claim that $\Phi_\Gamma$ also admits $\omega_m$.
Indeed, consider function $f:\Gamma^n\rightarrow\overline{\mathbb R}$ in $\Phi_\Gamma$. 
We can assume w.l.o.g.\ that $f(x)\ge 0$ for all $x\in \dom f$, and $\mu(x,y)\ge 1$ for distinct $x,y\in \Gamma$.
Define function $\mu_{n}:\Gamma^n\times\Gamma^n\rightarrow\mathbb R$
via $\mu_n(x,y)=\mu_{\Gamma^n}(x,y)$. By Theorem~\ref{lemma:mu-is-Lconvex}, $\mu_n$ is $L$-convex on $\Gamma^n\times\Gamma^n$, and thus $\mu_n\in \Phi_\Gamma$.
For value $C>0$ define function $f_C:\Gamma^n\rightarrow\mathbb R$ via $f_C(x)=\min_{y\in\Gamma^n} (f(y)+C\mu_n(x,y))$,
then $f_C\in\Phi^\circ_\Gamma$ and thus $f_C$ admits $\omega_m$. 
Clearly, for any $C>\max_{x\in\dom f}f(x)$ the following holds: $f_C(x)=f(x)$ if $x\in\dom f$,
and $f_C(x)\ge C$ if $f(x)=\infty$.
By taking the limit $C\rightarrow \infty$ we conclude that $f$ also admits $\omega_m$.
\end{proof}


\subsection{Proof of the hardness direction of Theorem~\ref{th:main}}\label{sec:proofs:NPhardness}

We will use a technique from~\cite{kz13:jacm}.

Consider language $\Phi$ on domain $D$. A pair of elements $(a,b)\in D\times D$ is called {\em conservative}
if there exists a unary function $g_{ab}\in \langle\Phi\rangle$ with $\argmin g_{ab}=\{a,b\}$.
Let $\calS(\Phi)\subseteq D\times D$ be the set of conservative pairs in $\Phi$.
For a tuple $p=(a,b)\in\calS(\Phi)$, we denote $\bar p=(b,a)$; clearly, $\bar p\in\calS(\Phi)$.
Now consider two tuples $p=(a,b)$ and $q=(c,d)$. We say that pair $(p,q)$ is {\em strictly submodular}
if there exists binary cost function $f\in\langle\Phi\rangle$ such that
$$
f(a,c)+f(b,d)<f(a,d)+f(b,c)
$$
Clearly, if $(p,q)$ is strictly submodular, then $(q,p)$ is also strictly submodular, since function $f'$ defined via $f'(x,y)=f(y,x)$ also belongs to $\langle\Phi\rangle$.
We thus say that $\{p,q\}$ is strictly submodular if $(p,q)$ is strictly submodular (or equivalently, if $(q,p)$ is strictly submodular).
Let $\calE(\Phi)\subseteq\binom{\calS(\Phi)}2$ be the set of strictly submodular pairs $\{p,q\}$,
and define undirected graph $\calG(\Phi)=(\calS(\Phi),\calE(\Phi))$.

\begin{theorem}[{\cite[Theorem 3.2(a) and Lemma 5.1(b)]{kz13:jacm}}]\label{th:conservative}
Suppose that language $\Phi$ is finite-valued and that, for each $a\in D$, there exists unary cost function $g_a\in\langle\Phi\rangle$ with $\argmin g_a=\{a\}$.
\begin{itemize}[topsep=1mm]
\setlength\itemsep{0.01mm}
\item[{\rm (a)}] If $\{p,q\},\{q,r\}\in\calE(\Phi)$, then $\{p,r\}\in\calE(\Phi)$.
\item[{\rm (b)}] If $\{p,\bar p\}\in\calE(\Phi)$ for some $p\in\calS(\Phi)$, then $\Phi$ is NP-hard.
\end{itemize}
\end{theorem}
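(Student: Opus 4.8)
The plan is to prove the two parts by explicit gadget constructions inside $\langle\Phi\rangle$, using that $\langle\Phi\rangle$ is closed under sums (hence under multiplication by positive integers), under adding unary functions and constants, and under minimization over auxiliary variables, together with the conservative unary functions — $g_a$ for each $a\in D$, and $g_p$ with $\argmin g_p=\{a,b\}$ for a conservative pair $p=(a,b)$ — both to pin variables to two‑element subdomains and to normalize the available binary witnesses into canonical forms. Note first that for a pair to lie in $\calE(\Phi)$ both of its members must be non‑diagonal (have distinct coordinates), so no degenerate cases arise; in part (a) I also assume $p,q,r$ pairwise distinct.

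For part (a), write $p=(a_1,b_1)$, $q=(a_2,b_2)$, $r=(a_3,b_3)$ and let $g,h\in\langle\Phi\rangle$ be binary witnesses for $\{p,q\}$ and $\{q,r\}$. The first step is normalization: by replacing $g$ by a large integer multiple of itself and then adding unary functions built from $g_{a_1},g_{b_1},g_{a_2},g_{b_2}$, one obtains $\tilde g\in\langle\Phi\rangle$ that is ``anti‑diagonal'' on the block $\{a_1,b_1\}\times\{a_2,b_2\}$, namely $\tilde g(a_1,a_2)=\tilde g(b_1,b_2)=0$ and $\tilde g(a_1,b_2),\tilde g(b_1,a_2)>0$; the system defining this normalization is solvable with the column‑tilt ranging over a nonempty open interval \emph{precisely because} $g$ is strictly submodular on that block, and the preliminary rescaling makes that interval wide enough to be hit by the (coarse) penalties available from the $g_a$'s. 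Normalize $h$ the same way, anti‑diagonal on $\{a_2,b_2\}\times\{a_3,b_3\}$, and then set
\[
 f(x,z)\ =\ \min_{y\in D}\bigl[\,\tilde g(x,y)+\tilde h(y,z)+\lambda\,g_q(y)\,\bigr]\ \in\ \langle\Phi\rangle ,
\]
with $\lambda$ large enough (possible since all these functions are finite‑valued on a finite domain) that for $x\in\{a_1,b_1\}$, $z\in\{a_3,b_3\}$ the minimizing $y$ lies in $\{a_2,b_2\}$. Evaluating the four entries directly: the aligned ones vanish, $f(a_1,a_3)=f(b_1,b_3)=0$ (realized through $y=a_2$, resp.\ $y=b_2$), while every route through $\{a_2,b_2\}$ for $f(a_1,b_3)$ and for $f(b_1,a_3)$ passes through a strictly positive anti‑diagonal entry of $\tilde g$ or of $\tilde h$, so both are $>0$. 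Hence $f(a_1,a_3)+f(b_1,b_3)<f(a_1,b_3)+f(b_1,a_3)$, i.e.\ $f$ witnesses $\{p,r\}\in\calE(\Phi)$.

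For part (b), write $p=(a,b)$ with $a\ne b$ and let $h\in\langle\Phi\rangle$ witness $\{p,\bar p\}\in\calE(\Phi)$; unfolding the definition this says $h(a,b)+h(b,a)<h(a,a)+h(b,b)$, i.e.\ $h$ is strictly supermodular on $\{a,b\}$. Using $g_p$ to pin variables into $\{a,b\}$ and unary tilts from $g_a,g_b$ (again after rescaling $h$ if necessary), produce $h'\in\langle\Phi\rangle$ whose restriction to $\{a,b\}^2$ is the ``disagreement‑reward'' edge function $h'(a,a)=h'(b,b)=\varepsilon$, $h'(a,b)=h'(b,a)=0$ with $\varepsilon=\tfrac12\bigl(h(a,a)+h(b,b)-h(a,b)-h(b,a)\bigr)>0$, and which is very large once an argument leaves $\{a,b\}$. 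Now reduce from \textsc{Max-Cut}: for a graph $G=(V,E)$ form the $\langle\Phi\rangle$‑instance on variables $\{x_v\}_{v\in V}$ with a term $h'(x_u,x_v)$ per edge and a term $\lambda g_p(x_v)$ per vertex; any optimum has all $x_v\in\{a,b\}$ and its value is $\varepsilon$ times the number of monochromatic edges, minimized exactly when the cut is maximum. Since \textsc{Max-Cut} is NP‑hard and $h'$ and $g_p$ are each defined by a $\Phi$‑instance with finitely many terms, unfolding these gadgets (with fresh minimized variables, in polynomial time) reduces this problem to a VCSP over a finite subset of $\Phi$, which is therefore NP‑hard; hence $\Phi$ is NP‑hard.

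I expect the main obstacle to be the normalization step in both parts. It is exactly what upgrades the \emph{automatic weak} (sub/super)modularity of the composed (part (a)) or restricted (part (b)) gadget to the \emph{strict} inequality that is needed, and carrying it out rigorously amounts to checking that the closure operations of $\langle\Phi\rangle$ — in particular realizing fine‑enough unary penalties on a two‑element subdomain out of the functions $g_a$, after a preliminary integer rescaling that widens the relevant tolerance interval — genuinely suffice. The remaining ingredients are routine bookkeeping together with the off‑the‑shelf NP‑hardness of \textsc{Max-Cut} and the standard preservation of VCSP complexity under $\langle\,\cdot\,\rangle$.
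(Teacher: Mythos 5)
This statement is imported, not proved, in the paper: Theorem~\ref{th:conservative} is cited verbatim from [Theorem 3.2(a) and Lemma 5.1(b)]{kz13:jacm}, and the paper's only ``proof'' is the remark following it, which observes (i) that the arguments in \cite{kz13:jacm} only use the weaker preconditions stated here rather than requiring all $\{0,1\}$-valued unary functions in $\Phi$, and (ii) that the edge set $\calE(\Phi)$ here differs from the graph in \cite{kz13:jacm} by the involution $q\mapsto\bar q$. So there is no in-paper proof to compare against; what you have written is a reconstruction of the cited argument. As such, the overall shape is right and matches the standard route: for (a), a two-step chain gadget $f(x,z)=\min_y[\tilde g(x,y)+\tilde h(y,z)+\lambda g_q(y)]$ after normalising $\tilde g,\tilde h$ to be ``anti-diagonal'' on the relevant $2\times2$ blocks; for (b), a normalised strictly supermodular binary gadget combined with $g_p$-pinning to encode \textsc{Max-Cut}.

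Two remarks on the places where your write-up is thin. First, the normalisation in (a) needs a slightly stronger pointwise form than the paper's definition of $\calE(\Phi)$ gives you directly: you should arrange $\max\bigl(\tilde g(a_1,a_2),\tilde g(b_1,b_2)\bigr)<\min\bigl(\tilde g(a_1,b_2),\tilde g(b_1,a_2)\bigr)$, and similarly for $\tilde h$. With the shifts $x=c_{b_2}-c_{a_2}$ and $y=r_{b_1}-r_{a_1}$ one needs $x\in(A-C,\,D-B)$ and $y\in(A-D,\,C-B)$, two independent open intervals of width $C+D-A-B>0$; once $\tilde g,\tilde h$ have this form, all four branches of the two inner $\min$'s are dominated in the right direction and the strict inequality for $\{p,r\}$ follows without any case juggling. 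Your interval-plus-rescaling remark is exactly the right fix for the lattice issue: the unary shifts realizable from integer cone combinations $\alpha g_{a}+\beta g_{b}$ with $\alpha,\beta\in\ZZ_{\ge0}$ form an (at worst) discrete subgroup of $\RR$, so you indeed need to blow up $g$ and $h$ by a large integer $N$ so that the target intervals (which scale with $N$) contain achievable values; that concern is real and you handle it correctly. Second, in (b) there is no need to achieve the symmetric form $h'(a,a)=h'(b,b)$ exactly: one can instead symmetrise for free inside $\langle\Phi\rangle$ by taking $h''(x,y)=h'(x,y)+h'(y,x)$, which automatically has $h''(a,a)+h''(b,b)>h''(a,b)+h''(b,a)$ and is symmetric; this sidesteps any worry about hitting an exact equality with the available integer cone of unary tilts. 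With those two adjustments the reconstruction is a sound proof of both parts.
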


\begin{remark}
In~\cite{kz13:jacm} Kolmogorov and \v{Z}ivn\'y
 formulated Theorem~\ref{th:conservative} only in the case
when $\Phi$ contains all possible $\{0,1\}$-valued unary functions (and thus $\calS(\Phi)=D\times D$).
However, the proofs of the results above use only weaker preconditions stated in Theorem~\ref{th:conservative}.

Note that the graph defined in~\cite{kz13:jacm} had an edge $\{p,q\}$ if and only if our graph $\calG(\Phi)$ has an edge~$\{p,\bar q\}$.
We translated the results from~\cite{kz13:jacm} accordingly.
\end{remark}

We now apply these results to the generalized minimum $0$-extension problem
for metric space $(V,\mu)$ and subset $F\subseteq\binom{V}2$.
Let us define the following language over domain $D=V$:
$$
\Phi \;=\; \{\mu\} \;\cup\; \{\mu_a\::\:a\in V\} \;\cup\; \{\mu_{ab}\::\:\{a,b\}\in F\}
$$
where unary function $\mu_a$, $\mu_{ab}$ are defined via $\mu_a(x)=\mu(x,a)$ and $\mu_{ab}(x)=\min\{\mu(x,a),\mu(x,b)\}$.
Clearly, we have $\Phi\subseteq\langle\ZeroExt{\mu,F}\rangle$.

Let $H=H_\mu=(V,E,w)$ be the graph corresponding to $\mu$. Suppose that $H$ is not $F$-orientable; our goal is to show that $\Phi$ is NP-hard.
We can assume w.l.o.g.\ that $H$ is modular, otherwise $\Phi$ is NP-hard by Theorem~\ref{th:Karzanov:NP-hard}.
Define 
$$
\vec E \,=\, \{(a,b)\::\:\{a,b\}\in E\} \qquad\quad
\vec F \,=\, \{(a,b)\::\:\{a,b\}\in F\}
$$
Let us introduce relations $\parallel$, $\lhd$, $\approx$ for tuples $p=(a,b)$ and $q=(c,d)$ as follows:
\begin{itemize}[topsep=1mm]
\setlength\itemsep{0.01mm}
\item  $p\parallel q$ \;if $p,q\in\vec E$ and $(a,b,d,c)$ is a 4-cycle in $H$;
\item  $p\lhd q$ \:if $p\in\vec E$, $q\in\vec F$, $p\ne q$, and $(c,a,b,d)$ is a shortest subpath in $H$;
\item  $p\approx q$ \,if at least one of the following holds: (i) $p\parallel q$; (ii) $p\lhd q$; (iii) $q\lhd p$.
\end{itemize} \smallskip
Note that relation $\approx$ is symmetric, and accordingly, $(\vec E\cup\vec F,\,\approx)$ is an undirected graph.
\begin{lemma}\label{lemma:asdfgasdg}
Let $p,q \in \calS(\Phi)$.
\begin{itemize}[topsep=1mm]
\setlength\itemsep{0.01mm}
\item[{\rm (a)}] $\vec E\cup\vec F\subseteq \calS(\Phi)$.
\item[{\rm (b)}] If $p\parallel q$, then $\{p,q\}\in\calE(\Phi)$.
\item[{\rm (c)}] If $p\lhd q$, then $\{p,q\}\in\calE(\Phi)$.
\item[{\rm (d)}] If $p\approx q$, then $\{p,q\}\in\calE(\Phi)$.
\end{itemize}
\end{lemma}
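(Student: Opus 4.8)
The plan is to verify the four parts by constructing, for each relevant pair, an explicit binary cost function in $\langle\Phi\rangle$ witnessing the required property, using only the unary functions $\mu_a$, $\mu_{ab}$ and the binary function $\mu$ together with minimization over auxiliary variables. For part (a), I would first observe that for $(a,b)\in\vec E$ the unary function $g(x) = \mu_a(x)+\mu_b(x)$ lies in $\langle\Phi\rangle$ and attains its minimum $w(ab)=\mu(a,b)$ exactly on the metric interval $I(a,b)$; since $ab\in E$, Theorem~\ref{th:orbits}(a) (shortest paths are combinatorial) gives $I(a,b)\cap V = \{a,b\}$ when $d_H(a,b)=1$, so $\argmin g=\{a,b\}$ and $(a,b)\in\calS(\Phi)$. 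For $(a,b)\in\vec F$ with $ab\in F$, I would note that $\mu_{ab}\in\Phi$ already has $\argmin\mu_{ab}\supseteq\{a,b\}$, but the minimum value $0$ may be attained on the whole shortest path between $a$ and $b$; to cut this down I would add a small multiple of $\mu_a+\mu_b$ (which is constant equal to $\mu(a,b)$ on that shortest path and strictly larger off it) — more carefully, the function $g(x)=\mu_{ab}(x)+\varepsilon(\mu_a(x)+\mu_b(x))$ has minimizer set exactly $\{a,b\}$ for suitable $\varepsilon$, so $(a,b)\in\calS(\Phi)$. Part (d) is then immediate from (b) and (c) and the definition of $\approx$.

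For part (b), given $p=(a,b)$, $q=(c,d)$ with $p\parallel q$, i.e. $(a,b,d,c)$ a $4$-cycle in $H$, I would take $f(x,y)=\mu(x,y)$ itself (which is in $\Phi$) and check $f(a,c)+f(b,d) < f(a,d)+f(b,c)$: here $f(a,c)=\mu(a,c)=w(ac)$ and $f(b,d)=w(bd)$ are both single edge-weights, while $f(a,d)=\mu(a,d)$ and $f(b,c)=\mu(b,c)$ are the two ``diagonals'' of the $4$-cycle, each of length $2$ in $H$ (since $H$ is bipartite by Lemma~\ref{lemma:modular:characterization} there is no edge $ad$ or $bc$). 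By orbit-invariance of $w$ (Theorem~\ref{th:orbits}), $w(ac)=w(bd)$ and the diagonals each have length $w(ac)+w(bd)$; hence the left side is $w(ac)+w(bd)$ and the right side is $2(w(ac)+w(bd))$, giving strict inequality. So $\{p,q\}\in\calE(\Phi)$.

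For part (c), given $p=(a,b)\in\vec E$, $q=(c,d)\in\vec F$ with $(c,a,b,d)$ a shortest subpath in $H$ and $p\ne q$, I would build $f\in\langle\Phi\rangle$ by minimizing over one auxiliary variable: set $f(x,y)=\min_{z}\big(\mu(x,z)+\mu_{cd}(z)+\mu(z,y)\big)$ or, if needed, a variant that also uses $\mu_c,\mu_d$ to control ties, and then evaluate $f$ at the four points $(a,c),(b,d),(a,d),(b,c)$. The point is that $\mu_{cd}(z)=\min(\mu(z,c),\mu(z,d))$ lets the ``serving facility'' $z$ choose whichever of $c,d$ is closer; when the two endpoints are on the same side ($a$ pairs with $c$, $b$ pairs with $d$) the auxiliary vertex can be chosen on the shortest $c$–$d$ path so that both the $\mu(x,z)$ term and the $\mu_{cd}(z)$ term are cheap simultaneously, whereas in the crossed assignment $(a,d),(b,c)$ no single choice of $z$ is good for both, producing the strict submodular gap. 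I expect to need to compute these four values using that $(c,a,b,d)$ is a shortest subpath of length $1+w(ab)+1$ (after rescaling, or in the combinatorial metric $d_H$), and to verify the strict inequality by a short case analysis of where the optimal $z$ lies; a small perturbation by $\varepsilon(\mu_c+\mu_d)$ may be required to break ties and make the inequality strict rather than merely non-strict.

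The main obstacle I anticipate is part (c): unlike part (b), where the inequality is a clean statement about a single $4$-cycle and follows from orbit-invariance, here one must reason about a function obtained by projecting out an auxiliary variable, and carefully check that the ``min over the serving facility'' genuinely favors the parallel assignment over the crossed one by a positive amount. Getting the right auxiliary gadget — possibly $f(x,y)=\min_z(\alpha\,\mu(x,z)+\mu_{cd}(z)+\alpha\,\mu(z,y))$ for a well-chosen weight $\alpha$, or adding $\varepsilon(\mu_c+\mu_d)$ to enforce strictness — and verifying the four evaluations via the shortest-subpath hypothesis and modularity of $H$ is where the real work lies. Everything else (parts (a), (b), (d)) should be routine once the witnessing functions are written down.
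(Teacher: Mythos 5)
Your parts (a), (b), (d) follow essentially the paper's route, but for part (c) you miss the key simplification and overcomplicate: just as in (b), the plain binary function $f=\mu\in\Phi$ already witnesses strict submodularity, and no auxiliary variable or $\mu_{cd}$-gadget is needed. The hypothesis that $(c,a,b,d)$ is a shortest subpath immediately gives $\mu(b,c)=\mu(c,a)+\mu(a,b)$ and $\mu(a,d)=\mu(a,b)+\mu(b,d)$, whence $\mu(a,d)+\mu(b,c)=\mu(a,c)+\mu(b,d)+2\mu(a,b)>\mu(a,c)+\mu(b,d)$ since $\mu(a,b)>0$. Your proposed gadget $\min_z(\mu(x,z)+\mu_{cd}(z)+\mu(z,y))$ would in fact collapse to $\mu(x,y)$ at all four evaluation points (take $z\in\{c,d\}$ each time), so it happens to give the same inequality — but the intuition you offer, that in the crossed assignment ``no single choice of $z$ is good for both,'' is not what is happening, and the anticipated case analysis and $\varepsilon$-perturbation are unnecessary. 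The function $\mu_{cd}$ plays no role in the strict-submodularity witness for (c); it enters only through part (a), to put $(c,d)$ in $\calS(\Phi)$.

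Two smaller issues. In part (a) for $\vec F$, your worry that ``the minimum value $0$ may be attained on the whole shortest path between $a$ and $b$'' is unfounded: $\mu_{ab}(x)=\min\{\mu(x,a),\mu(x,b)\}=0$ exactly when $x\in\{a,b\}$, so $\argmin\mu_{ab}=\{a,b\}$ directly and no perturbation is needed (this is what the paper means by $\vec F\subseteq\calS(\Phi)$ holding ``by the definition of $\Phi$''). In part (b), the statement that ``the diagonals each have length $w(ac)+w(bd)$'' is incorrect: by orbit-invariance $w(ab)=w(dc)$ and $w(bd)=w(ca)$, so $\mu(a,d)=\mu(b,c)=w(ab)+w(bd)$, making the right-hand side $2(w(ab)+w(bd))$ and the left-hand side $2w(bd)$; the gap is $2w(ab)>0$, so your final inequality survives, but the intermediate identity does not. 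The paper instead deduces this from the fact that $(a,b,d)$ and $(b,d,c)$ are shortest subpaths (because $ad,bc\notin E$ by bipartiteness), which is cleaner and avoids the arithmetic altogether.
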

\begin{proof}
\myparagraph{(a)} 
First, $\vec F\subseteq \calS(\Phi)$ holds by the definition of $\Phi$. 

As for $\vec E$, for each $\{a,b\}\in E$, we construct a function $g_{ab} \in \langle\Phi\rangle$ by $g_{ab}(x) := \mu_a(x)+\mu_b(x)$. From the minimality of $H$, we have $I(a,b) = \{a,b\}$. We therefore see that $\argmin g_{ab}=\{a,b\}$, and thus, we obtain $(a,b)\in\calS(\Phi)$.

\myparagraph{(b)} Suppose that $p=(a,b)\in\vec E$, $q=(c,d)\in\vec E$, and $(a,b,d,c)$ is a 4-cycle in $H$. The modularity of $H$ implies that  $\{a,d\}\notin E$; otherwise $a,b,d$ would not have a median. 
Similarly, $\{b,c\}\notin E$; otherwise $b,d,c$ would not have a median.
By Theorem~\ref{th:orbits}(a), sequences $(a,b,d)$ and $(b,d,c)$ are shortest subpaths in $H$. This
implies that
$
\mu(a, c)+\mu(b, d) < \mu(a, d)+\mu(b, c)
$. We thereby obtain $\{p,q\}\in\calE(\Phi)$.

\myparagraph{(c)}
Suppose that $p=(a,b)\in\vec E$, $q=(c,d)\in\vec F$, and $(c,a,b,d)$ is a shortest subpath in $H$.
Then $\mu(a, c)+\mu(b, d) < (\mu(a, c)+\mu(a,b)) + (\mu(a,b)+\mu(b, d))=
\mu(b,c) + \mu(a, d)$, and so, we again obtain $\{p,q\}\in\calE(\Phi)$.

\myparagraph{(d)} We decompose $p\approx q$ into cases. If $p\parallel q$, we apply (b). If $p\lhd q$, we apply (c). If $q\lhd p$, we utilize $\{p,q\} = \{q,p\}$ and apply (c) again. We conclude that $\{p,q\}\in\calE(\Phi)$.
\end{proof} \smallskip

\noindent We can finally finish the proof of Theorem~\ref{th:main}.

We claim that there exists an element $p\in\vec E\cup \vec F$ such that $p,\bar p$ are connected by a path in 
$(\vec E\cup\vec F,\,\approx)$. Indeed, if no such $p$ exists, then there exists a mapping $\sigma:\vec E\cup \vec F\rightarrow\{-1,+1\}$
with the following properties: (i) if $q\approx r$, then $\sigma(q)=\sigma(r)$; (ii) $\sigma(q)=-\sigma(\bar q)$ for each $q\in\vec E\cup \vec F$.
Such mapping can be constructed by greedily assigning connected components of graph $(\vec E\cup\vec F,\.\approx)$;
the assumption ensures that no conflicts arise.
Using the mapping $\sigma$, we can define an orientation of $(H,F)$ by orienting $\{a,b\} \in E \cup F$ as $\.a \!\.\rightarrow\.\! b\.$ if $\sigma(a,b)=+1$ and as $\.a \!\.\leftarrow\.\! b\.$ if $\sigma(a,b)=-1$.
Clearly, this orientation is admissible, which contradicts the assumption that $H$ is not $F$-orientable.

From Theorem~\ref{th:conservative}(a) and Lemma~\ref{lemma:asdfgasdg}, we conclude that $\{p,\bar p\}\in\calE(\Phi)$.
Theorem~\ref{th:conservative}(b) now gives that $\Phi$ is NP-hard.

\section{Open questions}\label{sec:open}
In this section we state some open problems that we find interesting.

One of the motivations for this work was trying to find the ``structure''
of tractable finite-valued languages. This leads to the following question.
\begin{itemize}
\item Can it be case that for every tractable finite-valued language $\Phi$
there exists an extended modular complex $\Gamma$ such all functions $f\in\Phi$
are $L$-convex on $\Gamma$?

A negative answer could yield new interesting classes of functions that are not yet known in the area of Discrete Convex Analysis.
\end{itemize}
To simplify the problem, one may a consider a restricted class of languages.

\begin{itemize}

\item Resolve the question above for (a) languages of the form $\ZeroExt{\mu,F}$ where $\mu$ is a metric on $V$ and $F$ is a subset of $2^V$;
(b) languages corresponding to directed metrics, as in~\cite{HiraiMizutani}.

Note that~\cite{HiraiMizutani} characterized tractable directed metrics on a star,
with some complicated fractional polymorphisms. 
As a special case, it would be interesting to verify whether these tractable classes
are captured by the theory developed in this paper. (This question was suggested by an anonymous reviewer).

\end{itemize}
There are also some open problems concerning $L$-convex functions on extended modular complexes.
\begin{itemize}

\item
Is it possible to show that the normal SDA algorithm terminates after a polynomial number of iterations?
\item
As shown in~\cite{Hirai:0ext,Hirai:Lconvexity}, if $f$ an an $L$-convex function on a modular complex $\Gamma$ then $f^\ast$ is $L$-convex on $\Gamma^\ast$.
Is this also true for extended modular complexes?
\item
 Let $\sqsubseteq,\sqsubseteq'$ be admissible relations for modular complex $\Gamma$
such that $\sqsubseteq$ coarsens $\sqsubseteq'$. 
Is it true that if function $f$ is $L$-convex on $(\Gamma,\sqsubseteq)$ then it is also 
$L$-convex on $(\Gamma,\sqsubseteq')$? This is known to hold when $\Gamma$ is an oriented path
and $(\sqsubseteq,\sqsubseteq')=(\BPrel,\preceq)$:
any $L^\natural$-convex function is submodular on an integer lattice.
However, we were unable to prove or disprove it in the general case.
\item Theorem~\ref{th:BLPsolvesGamma} shows that $L$-convex functions on extended modular complexes
admit a binary symmetric fractional polymorphism, however the proof of this fact is non-constructive.
Is there a more explicit construction of this polymorphism?
As remarked in Section 6 of~\cite{Hirai:0ext}, answering this question might require a further
thorough investigation of orientable modular graphs.
\end{itemize}

\section*{Acknolwedgements} 
We thank the anonymous reviewers for their careful reading of our manuscript and their many insightful comments and suggestions. 

\bibliographystyle{plain}
\bibliography{VCSP}

\end{document}